\newdimen\proofrulebreadth \proofrulebreadth=.05em
\newdimen\proofdotseparation \proofdotseparation=1.25ex
\newdimen\proofrulebaseline \proofrulebaseline=2ex
\let\then\relax
\def\hfi{\hskip0pt plus.0001fil}
\mathchardef\squigto="3A3B
\newif\ifinsideprooftree\insideprooftreefalse
\newif\ifonleftofproofrule\onleftofproofrulefalse
\newif\ifproofdots\proofdotsfalse
\newif\ifdoubleproof\doubleprooffalse
\let\wereinproofbit\relax
\newdimen\shortenproofleft
\newdimen\shortenproofright
\newdimen\proofbelowshift
\newbox\proofabove
\newbox\proofbelow
\newbox\proofrulename
\def\shiftproofbelow{\let\next\relax\afterassignment\setshiftproofbelow\dimen0 }
\def\shiftproofbelowneg{\def\next{\multiply\dimen0 by-1 }%
\afterassignment\setshiftproofbelow\dimen0 }
\def\setshiftproofbelow{\next\proofbelowshift=\dimen0 }
\def\setproofrulebreadth{\proofrulebreadth}
\def\prooftree{
%
\ifnum  \lastpenalty=1
\then   \unpenalty
\else   \onleftofproofrulefalse
\fi
%
\ifonleftofproofrule
\else   \ifinsideprooftree
        \then   \hskip.5em plus1fil
        \fi
\fi
%
\bgroup
\setbox\proofbelow=\hbox{}\setbox\proofrulename=\hbox{}%
\let\justifies\proofover\let\leadsto\proofoverdots\let\Justifies\proofoverdbl
\let\using\proofusing\let\[\prooftree
\ifinsideprooftree\let\]\endprooftree\fi
\proofdotsfalse\doubleprooffalse
\let\thickness\setproofrulebreadth
\let\shiftright\shiftproofbelow \let\shift\shiftproofbelow
\let\shiftleft\shiftproofbelowneg
\let\ifwasinsideprooftree\ifinsideprooftree
\insideprooftreetrue
%
\setbox\proofabove=\hbox\bgroup$\displaystyle 
\let\wereinproofbit\prooftree
%
\shortenproofleft=0pt \shortenproofright=0pt \proofbelowshift=0pt
%
\onleftofproofruletrue\penalty1
}
\def\eproofbit{
%
\ifx    \wereinproofbit\prooftree
\then   \ifcase \lastpenalty
        \then   \shortenproofright=0pt  
        \or     \unpenalty\hfil         
        \or     \unpenalty\unskip       
        \else   \shortenproofright=0pt  
        \fi
\fi
%
\global\dimen0=\shortenproofleft
\global\dimen1=\shortenproofright
\global\dimen2=\proofrulebreadth
\global\dimen3=\proofbelowshift
\global\dimen4=\proofdotseparation
\global\count255=\proofdotnumber
%
$\egroup  
%
\shortenproofleft=\dimen0
\shortenproofright=\dimen1
\proofrulebreadth=\dimen2
\proofbelowshift=\dimen3
\proofdotseparation=\dimen4
\proofdotnumber=\count255
}
\def\proofover{
\eproofbit 
\setbox\proofbelow=\hbox\bgroup 
\let\wereinproofbit\proofover
$\displaystyle
}%
\def\proofoverdbl{
\eproofbit 
\doubleprooftrue
\setbox\proofbelow=\hbox\bgroup 
\let\wereinproofbit\proofoverdbl
$\displaystyle
}%
\def\proofoverdots{
\eproofbit 
\proofdotstrue
\setbox\proofbelow=\hbox\bgroup 
\let\wereinproofbit\proofoverdots
$\displaystyle
}%
\def\proofusing{
\eproofbit 
\setbox\proofrulename=\hbox\bgroup 
\let\wereinproofbit\proofusing
\kern0.3em$
}
\def\endprooftree{
\eproofbit 
  \dimen5 =0pt
%
\dimen0=\wd\proofabove \advance\dimen0-\shortenproofleft
\advance\dimen0-\shortenproofright
%
\dimen1=.5\dimen0 \advance\dimen1-.5\wd\proofbelow
\dimen4=\dimen1
\advance\dimen1\proofbelowshift \advance\dimen4-\proofbelowshift
%
\ifdim  \dimen1<0pt
\then   \advance\shortenproofleft\dimen1
        \advance\dimen0-\dimen1
        \dimen1=0pt
        \ifdim  \shortenproofleft<0pt
        \then   \setbox\proofabove=\hbox{%
                        \kern-\shortenproofleft\unhbox\proofabove}%
                \shortenproofleft=0pt
        \fi
\fi
%
\ifdim  \dimen4<0pt
\then   \advance\shortenproofright\dimen4
        \advance\dimen0-\dimen4
        \dimen4=0pt
\fi
%
\ifdim  \shortenproofright<\wd\proofrulename
\then   \shortenproofright=\wd\proofrulename
\fi
%
\dimen2=\shortenproofleft \advance\dimen2 by\dimen1
\dimen3=\shortenproofright\advance\dimen3 by\dimen4
%
\ifproofdots
\then
        \dimen6=\shortenproofleft \advance\dimen6 .5\dimen0
        \setbox1=\vbox to\proofdotseparation{\vss\hbox{$\cdot$}\vss}%
        \setbox0=\hbox{%
                \advance\dimen6-.5\wd1
                \kern\dimen6
                $\vcenter to\proofdotnumber\proofdotseparation
                        {\leaders\box1\vfill}$%
                \unhbox\proofrulename}%
\else   \dimen6=\fontdimen22\the\textfont2 
        \dimen7=\dimen6
        \advance\dimen6by.5\proofrulebreadth
        \advance\dimen7by-.5\proofrulebreadth
        \setbox0=\hbox{%
                \kern\shortenproofleft
                \ifdoubleproof
                \then   \hbox to\dimen0{%
                        $\mathsurround0pt\mathord=\mkern-6mu%
                        \cleaders\hbox{$\mkern-2mu=\mkern-2mu$}\hfill
                        \mkern-6mu\mathord=$}%
                \else   \vrule height\dimen6 depth-\dimen7 width\dimen0
                \fi
                \unhbox\proofrulename}%
        \ht0=\dimen6 \dp0=-\dimen7
\fi
%
\let\doll\relax
\ifwasinsideprooftree
\then   \let\VBOX\vbox
\else   \ifmmode\else$\let\doll=$\fi
        \let\VBOX\vcenter
\fi
\VBOX   {\baselineskip\proofrulebaseline \lineskip.2ex
        \expandafter\lineskiplimit\ifproofdots0ex\else-0.6ex\fi
        \hbox   spread\dimen5   {\hfi\unhbox\proofabove\hfi}%
        \hbox{\box0}%
        \hbox   {\kern\dimen2 \box\proofbelow}}\doll%
%
\global\dimen2=\dimen2
\global\dimen3=\dimen3
\egroup 
\ifonleftofproofrule
\then   \shortenproofleft=\dimen2
\fi
\shortenproofright=\dimen3
%
\onleftofproofrulefalse
\ifinsideprooftree
\then   \hskip.5em plus 1fil \penalty2
\fi
}
\newcommand{\indrulename}[1]{\textsc{#1}}
\newcommand{\indrule}[3]{
\ensuremath{
\begin{array}{c}
  \prooftree #2
    \justifies #3
    \thickness=0.05em
    \using \indrulename{#1}
  \endprooftree
\end{array}}}
\newcommand{\indruleNPos}[4]{
\begin{array}[#1]{c@{}r}
\hspace{-.2cm}
 #3
\hspace{-.2cm}
\vspace{-.1cm}
\\
& \,#2\!\hspace{-.5cm}\vspace{-.2cm} \\
\cline{1-1}\vspace{-.3cm} \\
  #4 \hspace{.5cm}\,
\end{array}
}
\newcommand{\indruleN}[3]{{\small\indruleNPos{b}{#1}{#2}{#3}}}
\newcommand{\indruleNParen}[3]{{\small\left(\indruleNPos{t}{#1}{#2}{#3}\HS\right)}}
\newcommand{\derivdots}[1]{
\begin{array}[b]{c@{}r}
\vdots 
\\
#1
\end{array}
}
\renewcommand{\theenumi}{\arabic{enumi}}
\renewcommand{\theenumii}{\arabic{enumii}}
\renewcommand{\theenumiii}{\arabic{enumiii}}
\renewcommand\p@enumii{\theenumi.}
\renewcommand\p@enumiii{\theenumi.\theenumii.}
\renewcommand\p@enumiv{\theenumi.\theenumii.\theenumiii.}
\theoremstyle{break}
\newtheorem{dummythm}{dummythm}
\newtheorem{lemma}[dummythm]{Lemma}
\newtheorem{proposition}[dummythm]{Proposition}
\newtheorem{theorem}[dummythm]{Theorem}
\newtheorem{example}[dummythm]{Example}
\newtheorem{corollary}[dummythm]{Corollary}
\theoremstyle{definition}
\newtheorem{definition}[dummythm]{Definition}
\theoremstyle{remark}
\newtheorem{remark}[dummythm]{Remark}
\newcommand{\llem}[1]{\label{lemma:#1}}
\newcommand{\rlem}[1]{Lem.~\ref{lemma:#1}}
\newcommand{\ldef}[1]{\label{def:#1}}
\newcommand{\rdef}[1]{Def.~\ref{def:#1}}
\newcommand{\lprop}[1]{\label{prop:#1}}
\newcommand{\rprop}[1]{Prop.~\ref{prop:#1}}
\newcommand{\lthm}[1]{\label{thm:#1}}
\newcommand{\rthm}[1]{Thm.~\ref{thm:#1}}
\newcommand{\lcoro}[1]{\label{coro:#1}}
\newcommand{\rcoro}[1]{Coro.~\ref{coro:#1}}
\newcommand{\lsec}[1]{\label{section:#1}}
\newcommand{\rsec}[1]{Section~\ref{section:#1}}
\newcommand{\lexample}[1]{\label{example:#1}}
\newcommand{\rexample}[1]{Ex.~\ref{example:#1}}
\renewcommand{\emptyset}{\varnothing}
\newcommand{\Nat}{\mathbb{N}}
\newcommand{\Hs}{\hspace{.3cm}}
\newcommand{\HS}{\hspace{.5cm}}
\newcommand{\ST}{\ |\ }
\newcommand{\ie}{{\em i.e.}\xspace}
\newcommand{\eg}{{\em e.g.}\xspace}
\newcommand{\ih}{IH\xspace}
\newcommand{\set}[1]{\{#1\}}
\newcommand{\eqdef}{\,\mathrel{\overset{\mathrm{def}}{=}}\,}
\newcommand{\symdiff}{\triangle}
\newcommand{\sub}[2]{[#1\!:=\!#2]}
\newcommand{\fv}[1]{\mathsf{fv}(#1)}
\newcommand{\under}{\underline{\,\,\,}}
\newcommand{\imp}{\Rightarrow}
\newcommand{\Bot}{\bot}
\newcommand{\lam}[2]{\lambda#1.\,#2}
\newcommand{\var}{x}
\newcommand{\vartwo}{y}
\newcommand{\varthree}{z}
\newcommand{\tm}{t}
\newcommand{\tmtwo}{s}
\newcommand{\tmthree}{u}
\newcommand{\btyp}{\alpha}
\newcommand{\btyptwo}{\beta}
\newcommand{\btypthree}{\gamma}
\newcommand{\typ}{A}
\newcommand{\typtwo}{B}
\newcommand{\typthree}{C}
\newcommand{\ev}{P}
\newcommand{\evtwo}{Q}
\newcommand{\evthree}{R}
\newcommand{\PP}{{}^\oplus}
\newcommand{\NN}{{}^\ominus}
\newcommand{\pp}{{}^+}
\newcommand{\nn}{{}^-}
\newcommand{\OP}{{}^{\sim}}
\newcommand{\emptyctx}{\emptyset}
\newcommand{\tctx}{\Gamma}
\newcommand{\tctxtwo}{\Delta}
\newcommand{\rulename}[1]{\indrulename{\text{#1}}}
\newcommand{\PRK}{\textsc{prk}}
\newcommand{\lambdaC}{\lambda^{\PRK}}
\newcommand{\lambdaCeta}{\lambda^{\PRK}_\eta}
\newcommand{\classem}[1]{c(#1)}
\newcommand{\strongabssym}[1]{{\RHD\!\!\!\LHD_{#1}}}
\newcommand{\strongabs}[3]{#2 \mathrel{\strongabssym{#1}} #3}
\newcommand{\strongabstable}[3]{
  \begin{array}[t]{l@{}l}
                    & #2 \\
  \strongabssym{#1} & #3
  \end{array}
}
\newcommand{\abs}[3]{#2 \mathrel{\bowtie_{#1}} #3}
\newcommand{\trivsym}{\star}
\newcommand{\triv}{\trivsym}
\newcommand{\trivF}{\trivsym}
\newcommand{\abortsym}{\mathcal{E}}
\newcommand{\abort}[2]{\abortsym_{#1}(#2)}
\newcommand{\abortF}[2]{\abortsym_{#1}(#2)}
\newcommand{\pair}[2]{\langle#1,#2\rangle}
\newcommand{\pairp}[2]{\pair{#1}{#2}\pp}
\newcommand{\pairn}[2]{\pair{#1}{#2}\nn}
\newcommand{\pairpn}[2]{\pair{#1}{#2}^{\pm}}
\newcommand{\pairnp}[2]{\pair{#1}{#2}^{\mp}}
\newcommand{\pairc}[2]{\pair{#1}{#2}^\clasym}
\newcommand{\pairF}[2]{\pair{#1}{#2}}
\newcommand{\projisym}[1][i]{\pi_{#1}}
\newcommand{\projip}[2][i]{\projisym[#1]^+(#2)}
\newcommand{\projin}[2][i]{\projisym[#1]^-(#2)}
\newcommand{\projipn}[2][i]{\projisym[#1]^{\pm}(#2)}
\newcommand{\projic}[2][i]{\projisym[#1]^{\clasym}(#2)}
\newcommand{\projiF}[2][i]{\projisym[#1](#2)}
\newcommand{\inisym}[1][i]{\mathsf{in}_{#1}}
\newcommand{\inip}[2][i]{\inisym[#1]\!\!\pp(#2)}
\newcommand{\inin}[2][i]{\inisym[#1]\!\!\nn(#2)}
\newcommand{\inipn}[2][i]{\inisym[#1]^{\pm}(#2)}
\newcommand{\ininp}[2][i]{\inisym[#1]^{\mp}(#2)}
\newcommand{\inic}[2][i]{\inisym[#1]^{\clasym}(#2)}
\newcommand{\iniF}[2][i]{\inisym[#1](#2)}
\newcommand{\casesym}{\delta}
\newcommand{\caseto}{.}
\newcommand{\casep}[5]{\casesym\pp#1\,[_{#2} \caseto #3][_{#4} \caseto #5]}
\newcommand{\caseptablex}[5]{\begin{array}[t]{l@{\ }l}
  \casesym\pp & #1 \\
              & [_{#2} \caseto #3] \\
              & [_{#4} \caseto #5] \\
  \end{array}
}
\newcommand{\casen}[5]{\casesym\nn#1\,[_{#2} \caseto #3][_{#4} \caseto #5]}
\newcommand{\casepn}[5]{\casesym^{\pm}#1\,[_{#2} \caseto #3][_{#4} \caseto #5]}
\newcommand{\casec}[5]{\casesym^{\clasym}#1\,[_{#2} \caseto #3][_{#4} \caseto #5]}
\newcommand{\caseF}[5]{\casesym#1\,[_{#2} \caseto #3][_{#4} \caseto #5]}
\newcommand{\caseFtable}[5]{\begin{array}[t]{l}\casesym#1 \\
    \HS[_{#2} \caseto #3] \\
    \HS[_{#4} \caseto #5] \\
  \end{array}
}
\newcommand{\caseFtablex}[5]{\begin{array}[t]{l@{}l}
    \casesym#1 & [_{#2} \caseto #3] \\
               & [_{#4} \caseto #5] \\
  \end{array}
}
\newcommand{\negisym}{\nu}
\newcommand{\negip}[1]{\negisym\pp#1}
\newcommand{\negin}[1]{\negisym\nn#1}
\newcommand{\negipn}[1]{\negisym^{\pm}#1}
\newcommand{\neginp}[1]{\negisym^{\mp}#1}
\newcommand{\negesym}{\mu}
\newcommand{\negep}[1]{\negesym\pp#1}
\newcommand{\negen}[1]{\negesym\nn#1}
\newcommand{\negepn}[1]{\negesym^{\pm}#1}
\newcommand{\neglamsym}{\Lambda}
\newcommand{\negapsym}{\texttt{\textup{\#}}}
\newcommand{\neglamc}[2]{\neglamsym^{\clasym}_{#1}.\,#2}
\newcommand{\negapc}[2]{#1 \negapsym^{\clasym} #2}
\newcommand{\clasym}{\mathcal{C}}
\newcommand{\clasapsym}{\bullet}
\newcommand{\claslamp}[2]{\mathsf{IC}\pp_{#1}.\,#2}
\newcommand{\claslamn}[2]{\mathsf{IC}\nn_{#1}.\,#2}
\newcommand{\claslampn}[2]{\mathsf{IC}^{\pm}_{#1}.\,#2}
\newcommand{\claslamptable}[2]{
  \begin{array}[t]{l}
  \mathsf{IC}\pp_{#1}.\\
  \ #2
  \end{array}
}
\newcommand{\clasapp}[2]{#1 \clasapsym\!\!\pp\, #2}
\newcommand{\clasapn}[2]{#1 \clasapsym\!\!\nn\, #2}
\newcommand{\clasappn}[2]{#1 \clasapsym\!\!^{\pm}\, #2}
\newcommand{\clasapptable}[2]{
  \left(\begin{array}{l@{}l}
                      & #1 \\
  \clasapsym\!\!\pp\, & #2
  \end{array}\right)
}
\newcommand{\Ax}{\rulename{Ax}}
\newcommand{\Abs}{\rulename{Abs}}
\newcommand{\Iandp}{\rulename{I$\land\pp$}}
\newcommand{\Iorn}{\rulename{I$\lor\nn$}}
\newcommand{\Eandp}[1][i]{\rulename{E$\land\pp_{#1}$}}
\newcommand{\Eorn}[1][i]{\rulename{E$\lor\nn_{#1}$}}
\newcommand{\Iorp}[1][i]{\rulename{I$\lor\pp_{#1}$}}
\newcommand{\Iandn}[1][i]{\rulename{I$\land\nn_{#1}$}}
\newcommand{\Eorp}{\rulename{E$\lor\pp$}}
\newcommand{\Eandn}{\rulename{E$\land\nn$}}
\newcommand{\Inotp}{\rulename{I$\lnot\pp$}}
\newcommand{\Inotn}{\rulename{I$\lnot\nn$}}
\newcommand{\Enotp}{\rulename{E$\lnot\pp$}}
\newcommand{\Enotn}{\rulename{E$\lnot\nn$}}
\newcommand{\Icp}{\rulename{IC$\pp$}}
\newcommand{\Icn}{\rulename{IC$\nn$}}
\newcommand{\Ecp}{\rulename{EC$\pp$}}
\newcommand{\Ecn}{\rulename{EC$\nn$}}
\newcommand{\Iallp}{\rulename{I$\forall\pp$}}
\newcommand{\Eallp}{\rulename{E$\forall\pp$}}
\newcommand{\Weakening}{\rulename{W}}
\newcommand{\Cut}{\rulename{Cut}}
\newcommand{\Contrapose}{\rulename{Contra}}
\newcommand{\contrapose}[3]{{\updownarrow_{#1}^{#2}(#3)}}
\newcommand{\ClStr}{\rulename{CS}}
\newcommand{\PrCon}{\rulename{PC}}
\newcommand{\lemN}[1]{\pitchfork^-_{#1}}
\newcommand{\lemP}[1]{\pitchfork^+_{#1}}
\newcommand{\lemC}[1]{\pitchfork^{\clasym}_{#1}}
\newcommand{\lemNinner}[2]{\Delta^-_{#1,#2}}
\newcommand{\lemPinner}[2]{\Delta^+_{#1,#2}}
\newcommand{\rewritingRuleName}[1]{\mathsf{#1}}
\newcommand{\toa}[1]{\xrightarrow{#1}}
\newcommand{\rtoa}[1]{\mathrel{\xrightarrow{#1}\!\!{}^*\,\,}}
\newcommand{\ptoa}[1]{\mathrel{\xrightarrow{#1}\!\!{}^+\,\,}}
\newcommand{\rto}{\mathrel{\rightarrow^*}}
\newcommand{\ruleProj}{\rewritingRuleName{proj}}
\newcommand{\ruleCase}{\rewritingRuleName{case}}
\newcommand{\ruleNeg}{\rewritingRuleName{neg}}
\newcommand{\ruleBeta}{\rewritingRuleName{beta}}
\newcommand{\ruleAbsPairInj}{\rewritingRuleName{absPairInj}}
\newcommand{\ruleAbsInjPair}{\rewritingRuleName{absInjPair}}
\newcommand{\ruleAbsNeg}{\rewritingRuleName{absNeg}}
\newcommand{\ruleEta}{\rewritingRuleName{eta}}
\newcommand{\ruleAnon}{\rewritingRuleName{r}}
\newcommand{\ctxhole}{\Box}
\newcommand{\ctxof}[1]{\langle#1\rangle}
\newcommand{\gctx}{\mathtt{C}}
\newcommand{\gctxof}[1]{\gctx\ctxof{#1}}
\newcommand{\elctx}{\mathtt{E}}
\newcommand{\elctxof}[1]{\elctx\ctxof{#1}}
\newcommand{\casectx}{\mathtt{K}}
\newcommand{\casectxof}[1]{\casectx\ctxof{#1}}
\newcommand{\nf}{N}
\newcommand{\neu}{S}
\newcommand{\typeConstraints}{\mathcal{C}}
\newcommand{\typeConstraintsPosNeg}{\mathcal{C}_{\mathbf{pn}}}
\newcommand{\posvars}[1]{\textsf{p}(#1)}
\newcommand{\negvars}[1]{\textsf{n}(#1)}
\newcommand{\wposvars}[1]{\textsf{p}^{\mathtt{w}}(#1)}
\newcommand{\wnegvars}[1]{\textsf{n}^{\mathtt{w}}(#1)}
\newcommand{\Pos}[2]{\mathbf{p}_{#1,#2}}
\newcommand{\Neg}[2]{\mathbf{n}_{#1,#2}}
\newcommand{\semF}[1]{[\![#1]\!]}
\newcommand{\tunit}{\mathbf{1}}
\newcommand{\tzero}{\mathbf{0}}
\newcommand{\funabsF}[2]{\mathsf{abs}^{#1}_{#2}}
\newcommand{\compl}[1]{||#1||}
\newcommand{\derivation}{\pi}
\newcommand{\derivationtwo}{\xi}
\newcommand{\derivationthree}{\rho}
\newcommand{\typingDerivation}{\pi}
\newcommand{\all}[2]{\forall #1.\,#2}
\newcommand{\lamtp}[2]{\lambda\pp #1.\,#2}
\newcommand{\lamc}[2]{\lambda^{\clasym}_{#1}.\,#2}
\newcommand{\apptp}[2]{#1 \bullet\!\!\pp #2}
\newcommand{\appc}[2]{#1 \,\texttt{\textup{@}}^{\clasym}\, #2}
\newcommand{\kripmod}{\mathcal{M}}
\newcommand{\kripstruct}{(\wlset,\wleq,\wlpos{},\wlneg{})}
\newcommand{\wlset}{\mathcal{W}}
\newcommand{\wl}{w}
\newcommand{\wleq}{\mathrel{\leq}}
\newcommand{\wgeq}{\mathrel{\geq}}
\newcommand{\wlint}[1]{\mathcal{V}_{#1}}
\newcommand{\wlpos}[1]{\mathcal{V}^+_{#1}}
\newcommand{\wlneg}[1]{\mathcal{V}^-_{#1}}
\newcommand{\kripforce}[3][\kripmod]{#1,#2 \Vdash #3}
\newcommand{\kripnotforce}[3][\kripmod]{#1,#2 \nVdash #3}
\newcommand{\kripforcefull}[3][\kripmod]{#1,#2 \Vvdash #3}
\newcommand{\kripentails}[2]{#1 \Vvdash #2}
\newcommand{\trunc}[1]{\ocircle{#1}}
\colorlet{darkblue}{blue!60!black}
\newcommand{\SeeAppendix}[1]{\textcolor{darkblue}{[#1]}}
\newcounter{alphasect}
\def\alphainsection{0}
\let\oldsection=\section
\def\section{%
  \ifnum\alphainsection=1%
    \addtocounter{alphasect}{1}%
  \fi%
\oldsection}%
\renewcommand\thesection{%
  \ifnum\alphainsection=1%
    \Alph{alphasect}%
  \else%
    \arabic{section}%
  \fi%
}%
\newenvironment{alphasection}{%
  \ifnum\alphainsection=1%
    \errhelp={Let other blocks end at the beginning of the next block.}
    \errmessage{Nested Alpha section not allowed}
  \fi%
  \setcounter{alphasect}{0}
  \def\alphainsection{1}
}{%
  \setcounter{alphasect}{0}
  \def\alphainsection{0}
}%
\newcommand{\Case}[1]{\smallskip\noindent$\bullet$\,#1}
\newcommand{\IMP}{\Rightarrow}
\newcommand{\extwith}[1]{\langle#1\rangle}
\begin{document}
%
\title{
  A Constructive Logic with Classical Proofs and Refutations\\
  (Extended Version)
}

\author{\IEEEauthorblockN{Pablo Barenbaum}
\IEEEauthorblockA{
Departamento de Computación, \\
Facultad de Ciencias Exactas y Naturales, \\
Universidad de Buenos Aires, Argentina.\\
Universidad Nacional de Quilmes, Argentina.\\
Email: pbarenbaum@dc.uba.ar}
\and
\IEEEauthorblockN{Teodoro Freund}
\IEEEauthorblockA{
Departamento de Computación, \\
Facultad de Ciencias Exactas y Naturales, \\
Universidad de Buenos Aires, Argentina.\\
Email: tfreund95@gmail.com}
}


%


\maketitle

\begin{abstract}
We study a conservative extension of classical propositional logic
distinguishing between four modes of statement:
a proposition may be affirmed or denied, and it may be strong or classical.
Proofs of strong propositions must be constructive in some sense,
whereas proofs of classical propositions proceed by contradiction.
The system, in natural deduction style, is shown to be sound and complete
with respect to a Kripke semantics.
We develop the system from the perspective of the propositions-as-types correspondence
by deriving a term assignment system with confluent reduction.
The proof of strong normalization relies on a translation to System F with Mendler-style recursion.
\end{abstract}


%
\IEEEpeerreviewmaketitle

\section{Introduction}

Intuitionistic logic was born out of Brouwer's remark that
the {\em law of excluded middle} ($\typ \lor \neg\typ$)
allows one to prove propositions in a seemingly non-constructive way.
But what constitutes a {\em constructive} proof, exactly?
A possible answer to this question may be found in
the {\em realizability} interpretation,
also known as the Brouwer--Heyting--Kolmogorov interpretation,
which establishes what kinds of mathematical constructions
can be regarded as a {\em realizer} or {\em canonical proof}
of a proposition.
For example, a canonical proof of a conjunction $(\typ\land\typtwo)$
is given by a pair $\langle{p,q}\rangle$,
where $p$ and $q$ are in turn canonical proofs of $\typ$ and $\typtwo$
respectively.
From the works of Gentzen~\cite{gentzen1935untersuchungen}
and Prawitz~\cite{prawitz1965natural} we know that,
in intuitionistic natural deduction, an arbitrary proof of a proposition $\typ$
can always be {\em normalized} to a canonical proof of~$\typ$.

These ideas culminate in the {\em propositions-as-types} correspondence,
the realization that a proposition $\typ$ may be understood
as a {\em type} that expresses the specification of a program.
A proof $p$ of $\typ$ may be understood as a program fulfilling
the specification $\typ$.
Running the program corresponds to applying a computational procedure
that normalizes the proof $p$ to obtain a canonical proof $p'$ of~$\typ$.
Under this paradigm, proofs in intuitionistic natural deduction
can be identified with programs in the {\em simply typed $\lambda$-calculus}.
This correspondence has been extended to encompass many logical systems,
including first-order~\cite{debruijn1970mathematical,martinlof1971theory}
and second-order intuitionistic logic~\cite{thesisgirard,reynolds1974towards},
linear logic~\cite{girard1987linear},
classical logic~\cite{griffin1989formulae,Curien00theduality,symmetric-Barbanera-berardi,lambdamu-parigot},
and modal logic~\cite{bierman2000intuitionistic,DBLP:journals/jacm/DaviesP01}.
These developments unveil the deep connection
between logic and computer science, and they have practical applications
in the development of programming languages and proof assistants
based in type theory such as \textsc{Coq} and \textsc{Agda}.

In this paper, we define a logical system $\PRK$,
presented in natural deduction style,
that distinguishes between four ``modes'' of stating a proposition $\typ$,
which are written $\typ\pp$ (strong affirmation), $\typ\nn$ (strong denial),
$\typ\PP$ (classical affirmation), and $\typ\NN$ (classical denial).
As the name implies, strong affirmation is stronger than classical affirmation,
\ie from $\typ\pp$ one may deduce $\typ\PP$,
and likewise from $\typ\nn$ one may deduce $\typ\NN$.
Affirmation and denial are contradictory,
\ie from $\typ\pp$ and $\typ\nn$ one may derive any conclusion,
and similarly for $\typ\PP$ and $\typ\NN$.
This logic turns out to be a {\em conservative extension}
of classical propositional logic,
in the sense that a proposition $\typ$ is classically valid
if and only if $\typ\PP$ is valid in $\PRK$.

System $\PRK$ is then shown to be {\em sound} and {\em complete}
with respect to a Kripke-style semantics.
This helps to elucidate the difference between the four modes of statement.
In particular, strong affirmation and denial have a constructive
``flavor''---for example, the law of excluded middle holds
classically, \ie $(\typ\lor\neg\typ)\PP$ is valid,
but it does not hold strongly, \ie $(\typ\lor\neg\typ)\pp$ is not valid.

Furthermore, following the propositions-as-types paradigm,
we derive an associated calculus $\lambdaC$ and we show that
it enjoys the expected meta-theoretical properties:
{\em confluence}, {\em subject reduction} and {\em strong normalization}.
Besides, we characterize the set of {\em normal forms}.
This sheds a new light on the structure of classical proofs,
and it may form the basis of the type systems for future programming
languages and proof assistants.

{\bf Classical Proofs and Refutations.}
It is well-known that
intuitionistic propositional logic enjoys the {\em disjunctive property},
that is, a canonical proof of a disjunction $(\typ \lor \typtwo)$
is given by either a canonical proof of $\typ$
or a canonical proof of $\typtwo$.
In particular, the proof {\em contains one bit of information},
indicating whether it encloses a proof of $\typ$ or of $\typtwo$.
In contrast, the intuitionistic notion of {\em refutation} (proof of a negation)
is not dual to the notion of proof.
For example,
the set of refutations of a conjunction
$(\typ \land \typtwo)$ is {\em not} the disjoint union
of the set of refutations of $\typ$ and the set of refutations of $\typtwo$.
This is related to the fact that one of De~Morgan's laws, namely
$\neg(\typ \land \typtwo) \to (\neg\typ \lor \neg\typtwo)$,
which is classically valid, does not hold intuitionistically.
The reason is that the proof of a negation in intuitionistic logic
proceeds by contradiction,
\ie the equivalence $\neg\typ \equiv (\typ \to \bot)$ holds.
As a matter of fact, a refutation in intuitionistic logic
{\em contains no information}\footnote{As attested for example by
the fact that, in {\em homotopy type theory}, a type of the form $\neg\typ$
can always be shown to be a mere proposition,
\ie if it is inhabited, it is equivalent to the unit type;
see for instance~\cite[Section~3.6]{hottbook}.}.

The attempt to recover the symmetry between the notions of
proof and refutation in a constructive setting lead Nelson
to study logical systems with {\em strong negation}~\cite{nelson1949constructible}.
One way to formulate Nelson's system is to distinguish
between two modes to state a proposition $\typ$,
that we may call affirmation~($\typ\pp$) and denial~($\typ\nn$),
whose witnesses are called {\em proofs} and {\em refutations} of $\typ$
respectively\footnote{These are called ``P-realizers'' and ``N-realizers'' by Nelson.}.
The following (informal) equations suggest
a realizability interpretation
for affirmations and denials of
conjunction, disjunction and negation,
found in Nelson's system:
\[
{\small
  \begin{array}{rcl@{\hspace{.5cm}}rcl}
    (\typ\land\typtwo)\pp & \approx & \typ\pp \times \typtwo\pp
  &
    (\typ\land\typtwo)\nn & \approx & \typ\nn \uplus \typtwo\nn
  \\
    (\typ\lor\typtwo)\pp & \approx & \typ\pp \uplus \typtwo\pp
  &
    (\typ\lor\typtwo)\nn & \approx & \typ\nn \times \typtwo\nn
  \\
    (\neg\typ)\pp & \approx & \typ\nn
  &
    (\neg\typ)\nn & \approx & \typ\pp
  \end{array}
}
\]
These equations state, for example,
that the set of proofs of a conjunction $(\typ\land\typtwo)$
is the cartesian product of the set of proofs of $\typ$
and the set of proofs of $\typtwo$,
while the set of refutations of a conjunction $(\typ\land\typtwo)$ is
the disjoint union of the set of refutations of $\typ$
and the set of refutations of $\typtwo$.

This paper was conceived with the goal in mind of providing a
{\bf realizability interpretation for classical logic} based
on this strong notion of negation.
Long-established embeddings of classical logic into intuitionistic logic,
such as G\"odel's, are based on {\em double-negation translations}.
These translations rely on the equivalence $\typ \equiv \neg\neg\typ$,
which is classically, but not intuitionistically, valid.

Our starting point is a different equivalence, namely $\typ \equiv (\neg\typ \to \typ)$,
which is again classically, but not intuitionistically, valid.
To formulate the interpretation, we introduce a further distinction,
according to which a proposition~$\typ$ may be qualified
as {\em strong} or {\em classical}, resulting in four possible modes:
\[
  \begin{array}{l|ll}
    & \text{affirmation} & \text{denial} \\
  \hline
  \text{strong}    & \typ\pp & \typ\nn \\
  \text{classical} & \typ\PP & \typ\NN \\
  \end{array}
\]
As before, the witness of an affirmation (resp. denial)
is called a proof (resp. refutation).
Our first intuition is that a classical proof of a proposition $\typ$
should be given by a construction that
transforms a {\em strong} refutation of $\typ$ into
a strong proof of $\typ$.
Hence, informally speaking,
the realizability interpretation should include an equation
like $\typ\PP \approx (\typ\nn \to \typ\pp)$,
where $(X \to Y)$ is expected to denote the set of ``transformations'',
from $X$ to $Y$ in some suitable sense.

In a preliminary version of this work, we explored a
realizability interpretation based on such an equation,
and its dual equation, $\typ\NN \approx (\typ\pp \to \typ\nn)$.
But, unfortunately, we were not able to formulate
a well-behaved system from the computational point of view\footnote{The
difficulty is that it is not obvious how to normalize a proof of falsity
derived from $\typ\PP$ and $\typ\NN$, \ie a contradiction obtained from
combining a classical proof and a classical refutation of a proposition $\typ$.}.
The study of proof normalization suggests that the ``right'' equations should
instead be
$\typ\PP \approx (\typ\NN \to \typ\pp)$ and its dual,
$\typ\NN \approx (\typ\PP \to \typ\nn)$.
This means that a classical proof of a proposition $\typ$
should be given by a transformation that
takes a {\em classical} refutation of $\typ$ as an input
and produces a strong proof of $\typ$ as an output.
This is indeed the path that we follow.

The complete set of equations that suggest the realizability
interpretation that we study in this paper is:
\[
{\small
  \begin{array}{rcl@{\hspace{.5cm}}rcl}
    (\typ\land\typtwo)\pp & \approx & \typ\PP \times \typtwo\PP
  &
    (\typ\land\typtwo)\nn & \approx & \typ\NN \uplus \typtwo\NN
  \\
    (\typ\lor\typtwo)\pp & \approx & \typ\PP \uplus \typtwo\PP
  &
    (\typ\lor\typtwo)\nn & \approx & \typ\NN \times \typtwo\NN
  \\
    (\neg\typ)\pp & \approx & \typ\NN
  &
    (\neg\typ)\nn & \approx & \typ\PP
  \\
    \typ\PP & \approx & \typ\NN \to \typ\pp
  &
    \typ\NN & \approx & \typ\PP \to \typ\nn
  \\
  \end{array}
}
\]
Observe that a strong proof of a conjunction
is given by a pair of {\em classical} (and not strong) proofs.
Similarly for the other connectives,
\eg a strong refutation of $\neg\typ$ is given by a
{\em classical} (and not a strong) proof of $\typ$.
For the sake of brevity, in this paper we will only consider
three logical connectives: conjunction, disjunction, and negation.
Extending our results and techniques to incorporate other
propositional connectives,
such as implication, and truth and falsity constants, should not
present major obstacles.

One technical difficulty that we confront is the fact that
the last two equations are mutually recursive.
This means, in particular, that these equations cannot be understood
as a translation from formulae of $\PRK$ to formulae of other systems
(such as the simply typed $\lambda$-calculus), at least not in the
naive sense. However, as we shall see, these recursive equations
do fulfill Mendler's {\em positivity requirement}~\cite{mendler1991inductive},
which allows us to give a translation from $\lambdaC$
to System~F extended with (non-strictly positive) recursive type constraints.
\smallskip

{\bf Structure of This Paper.}
The remainder of this paper is organized as follows.
In \rsec{prk_natural_deduction} ({\bf Natural Deduction})
we present the proof system $\PRK$ in natural deduction style,
and we study some basic facts,
such as weakening and substitution.
In \rsec{prk_kripke_semantics} ({\bf Kripke Semantics})
we define an ad~hoc notion of Kripke model,
and we show that $\PRK$ is sound and complete
with respect to this Kripke semantics, \ie, a sequent $\tctx \vdash \typ$
is provable in $\PRK$ if and only if it holds in every Kripke model.
In \rsec{prk_lambdaC} ({\bf The $\lambdaC$-calculus})
we derive a term assignment for $\PRK$, and we endow it
with a small-step reduction semantics.
We show that the system is confluent and that it enjoys subject reduction.
To show that $\lambdaC$ is strongly normalizing, we rely on the
aforementioned translation to System~F extended with recursive type constraints.
We also provide an inductive characterization of the set of normal forms,
and we show that an extensionality rule akin to $\eta$-reduction may
be incorporated to the system.
In \rsec{prk_classical_logic} ({\bf Relation with Classical Logic})
we show that $\PRK$ is a conservative extension of
classical logic. We show how this provides a new computational
interpretation for classical logic.
In \rsec{prk_conclusion} ({\bf Conclusion}) we conclude, and we
discuss related and future work.

\section{The Natural Deduction System $\PRK$}
\lsec{prk_natural_deduction}

In this section, we define the logical system $\PRK$,
formulated in natural deduction style~(\rdef{system_prk}).
We then prove that some typical reasoning principles,
namely weakening, cut, and substitution,
as well as some principles specific to this system,
are admissible in $\PRK$~(\rlem{admissible_rules_logic}).
An important result in this section is the
projection lemma~(\rlem{projection_lemma}).
We also formulate an explicit duality principle~(\rlem{duality_principle}).
\smallskip

We suppose given a denumerable set
of {\em propositional variables} $\btyp,\btyptwo,\btypthree,\hdots$.
The set of {\em pure propositions} is given by the abstract syntax:
\[
\begin{array}{rrll}
  \typ,\typtwo,\typthree,\hdots
    & ::= & \btyp
    & \text{propositional variable}
  \\
    & \mid & \typ \land \typtwo
    & \text{conjunction}
  \\
    & \mid & \typ \lor \typtwo
    & \text{disjunction}
  \\
    & \mid & \neg\typ
    & \text{negation}
  \\
\end{array}
\]
The set of {\em moded propositions} (or just {\em propositions})
is given by the abstract syntax:
\[
\begin{array}{rrll}
  \ev,\evtwo,\evthree,\hdots ::=
    & ::=  & \typ\pp & \text{strong affirmation}
  \\
    & \mid & \typ\nn & \text{strong denial}
  \\
    & \mid & \typ\PP & \text{classical afirmation}
  \\
    & \mid & \typ\NN & \text{classical denial}
\end{array}
\]
As mentioned in the introduction, propositions are classified into
four modes, which arise from discriminating two dimensions.
The first dimension (called {\em sign}) distinguishes between
{\em affirmations} ($\typ\pp$ and $\typ\PP$)
and {\em denials} ($\typ\nn$ and $\typ\NN$),
sometimes also called {\em positive} and {\em negative} propositions.
The second dimension (called {\em strength})
distinguishes between
{\em strong propositions} ($\typ\pp$ and $\typ\nn$)
and {\em classical propositions} ($\typ\PP$ and $\typ\NN$).
Note that modes cannot be nested,
\eg $(\typ\pp \land \typtwo\pp)\nn$
is not a well-formed proposition.

The {\em opposite proposition} $\ev\OP$ of a given proposition $\ev$
is defined by flipping the sign, but preserving the strength:
\[
  \begin{array}{rcl@{\hspace{1cm}}rcl}
    (\typ\pp)\OP & \eqdef & \typ\nn
  &
    (\typ\nn)\OP & \eqdef & \typ\pp
  \\
    (\typ\PP)\OP & \eqdef & \typ\NN
  &
    (\typ\NN)\OP & \eqdef & \typ\PP
  \end{array}
\]
The {\em classical projection} of a given proposition $\ev$
is written $\trunc{\ev}$ and defined by preserving the sign
and making the strength classical:
\[
  \begin{array}{rcl@{\HS}rcl}
    \trunc{(\typ\pp)} & \eqdef & \typ\PP &
    \trunc{(\typ\nn)} & \eqdef & \typ\NN \\
    \trunc{(\typ\PP)} & \eqdef & \typ\PP &
    \trunc{(\typ\NN)} & \eqdef & \typ\NN \\
  \end{array}
\]
Note that $\ev\OP\OP = \ev$, $\trunc{\trunc{\ev}} = \trunc{\ev}$,
and $\trunc{(\ev\OP)} = (\trunc{\ev})\OP$.

\begin{definition}[System $\PRK$]
\ldef{system_prk}
Judgments in $\PRK$ are of the form $\tctx \vdash \ev$,
where $\tctx$ is a finite {\em set} of moded propositions,
\ie we work implicitly up to structural rules of
contraction and exchange.
Derivability of judgments is defined inductively by the 
following inference schemes.

Except for the first two rules,
the system is defined following the realizability interpretation
of propositions discussed in the introduction. For instance,
rules $\Iandp$ and $\Eandp$ embody the equation for the
strong affirmation of a conjunction,
$(\typ\land\typtwo)\pp \approx (\typ\PP \times \typtwo\PP)$.

\[
{\small
\indrule{\Ax}{
}{
  \tctx,\ev \vdash \ev
}
\indrule{\Abs}{
  \tctx \vdash \ev
  \HS
  \tctx \vdash \ev\OP
  \HS
  \text{$\ev$ strong}
}{
  \tctx \vdash\evtwo
}
}
\]
\[
{\small
\indrule{\Iandp}{
  \tctx \vdash \typ\PP
  \HS
  \tctx \vdash \typtwo\PP
}{
  \tctx \vdash (\typ \land \typtwo)\pp
}
\indrule{\Iorn}{
  \tctx \vdash \typ\NN
  \HS
  \tctx \vdash \typtwo\NN
}{
  \tctx \vdash (\typ \lor \typtwo)\nn
}
}
\]
\[
{\small
\indrule{\Eandp}{
  \tctx \vdash (\typ_1 \land \typ_2)\pp
  \HS i \in \set{1, 2}
}{
  \tctx \vdash \typ_i\PP
}
}
\]
\[
{\small
\indrule{\Eorn}{
  \tctx \vdash (\typ_1 \lor \typ_2)\nn
  \HS i \in \set{1, 2}
}{
  \tctx \vdash \typ_i\NN
}
}
\]
\[
{\small
\indrule{\Iorp}{
  \tctx \vdash \typ_i\PP
  \HS i \in \set{1, 2}
}{
  \tctx \vdash (\typ_1 \lor \typ_2)\pp
}
\indrule{\Iandn}{
  \tctx \vdash \typ_i\NN
  \HS i \in \set{1, 2}
}{
  \tctx \vdash (\typ_1 \land \typ_2)\nn
}
}
\]
\[
{\small
\indrule{\Eorp}{
  \tctx \vdash (\typ \lor \typtwo)\pp
  \HS
  \tctx, \typ\PP \vdash \ev
  \HS
  \tctx, \typtwo\PP \vdash \ev
}{
  \tctx \vdash \ev
}
}
  \] %
  \[ %
{\small
\indrule{\Eandn}{
  \tctx \vdash (\typ \land \typtwo)\nn
  \HS
  \tctx, \typ\NN \vdash \ev
  \HS
  \tctx, \typtwo\NN \vdash \ev
}{
  \tctx \vdash \ev
}
}
\]
\[
{\small
\indrule{\Inotp}{
  \tctx \vdash \typ\NN
}{
  \tctx \vdash (\neg\typ)\pp
}
\indrule{\Inotn}{
  \tctx \vdash \typ\PP
}{
  \tctx \vdash (\neg\typ)\nn
}
}
\]
\[
{\small
\indrule{\Enotp}{
  \tctx \vdash (\neg\typ)\pp
}{
  \tctx \vdash \typ\NN
}
\indrule{\Enotn}{
  \tctx \vdash (\neg\typ)\nn
}{
  \tctx \vdash \typ\PP
}
}
\]
\[
{\small
\indrule{\Icp}{
  \tctx, \typ\NN \vdash \typ\pp
}{
  \tctx \vdash \typ\PP
}
\indrule{\Icn}{
  \tctx, \typ\PP \vdash \typ\nn
}{
  \tctx \vdash \typ\NN
}
}
\]
\[
{\small
\indrule{\Ecp}{
  \tctx \vdash \typ\PP
  \HS
  \tctx \vdash \typ\NN
}{
  \tctx \vdash \typ\pp
}
\indrule{\Ecn}{
  \tctx \vdash \typ\NN
  \HS
  \tctx \vdash \typ\PP
}{
  \tctx \vdash \typ\nn
}
}
\]
\end{definition}
\medskip

Rule $\Ax$ is the standard axiom rule.
Rule $\Abs$ is the {\em absurdity} rule, which allows one to derive
any proposition $\evtwo$ from a strong proposition $\ev$ and its opposite $\ev\OP$.
Rules $\Iandp$ and $\Eandp$ are introduction and elimination rules for
the strong affirmation of a conjunction.
Rules $\Iorp$ and $\Eorp$ are introduction and elimination rules for
the strong affirmation of a disjunction;
note that $\Eorp$ allows one to conclude {\em any} proposition.
Rules $\Inotp$ and $\Enotp$ are introduction and elimination rules for
the strong affirmation of a negation.
Rules $\Icp$ and $\Ecp$ are introduction and elimination rules for
classical affirmation (resembling introduction and elimination rules
for implication: $\Icp$ resembles the deduction theorem, while
$\Ecp$ resembles {\em modus ponens}).
The negative rules are dual to the positive ones,
\ie the rules for an affirmation of a given connective
have the same structure as the rules for the denial of the dual connective.
Note that conjunction is dual to disjunction and
negation and classical proposition are dual to themselves.


In the rest of this paper, we frequently use the following lemma
without explicit mention. It establishes a number of basic reasoning
principles that are valid in $\PRK$.

\begin{lemma}
\llem{admissible_rules_logic}
\llem{projection_of_conclusions}
\llem{classical_strengthening}
The following inference schemes are admissible in $\PRK$:
\begin{enumerate}
\item {\bf Weakening} ($\Weakening$):
  if $\tctx \vdash \ev$ then $\tctx,\evtwo \vdash \ev$.
\item {\bf Cut} ($\Cut$):
  if $\tctx,\ev \vdash \evtwo$ and $\tctx \vdash \ev$
  then $\tctx \vdash \evtwo$.
\item {\bf Substitution}:
  if $\tctx \vdash \evtwo$
  then $\tctx\sub{\btyp}{\typ} \vdash \evtwo\sub{\btyp}{\typ}$,
  where $-\sub{\btyp}{\typ}$ denotes the substitution of the
  propositional variable $\btyp$ for the pure proposition~$\typ$.
\item {\bf Generalized absurdity} ($\Abs'$):
  if $\tctx \vdash \ev$ and $\tctx \vdash \ev\OP$,
  where $\ev$ is not necessarily strong,
  then $\tctx \vdash \evtwo$.
\item {\bf Projection of conclusions} ($\PrCon$):
  if $\tctx \vdash \ev$ then $\tctx \vdash \trunc{\ev}$.
\item {\bf Contraposition} ($\Contrapose$):
  if $\ev$ is classical and
  $\tctx,\ev \vdash \evtwo$
  then $\tctx,\evtwo\OP \vdash \ev\OP$.
\item {\bf Classical strengthening} ($\ClStr$):
  if $\ev$ is classical and 
  $\tctx,\ev\OP \vdash \ev$ then $\tctx \vdash \ev$.
\end{enumerate}
\end{lemma}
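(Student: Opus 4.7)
The plan is to prove the seven admissible rules in the order stated, since the later ones rely freely on the earlier ones. The first three are routine structural meta-theorems proved by induction on derivations; the last four are direct derivations in $\PRK$ that chain the introduction/elimination rules $\Icp/\Ecp$, $\Icn/\Ecn$, and $\Abs$.

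\textbf{Weakening, Cut, Substitution.} I would prove these by straightforward induction on the derivation. For weakening, each rule scheme still applies after enlarging $\tctx$ with an extra moded proposition $\evtwo$. For cut, I induct on the derivation of $\tctx,\ev\vdash\evtwo$: the $\Ax$ case either concludes $\ev$ (use the second hypothesis) or some other $\evthree\in\tctx$ (use $\Ax$ directly); every other rule commutes with the cut by the inductive hypothesis, using weakening to carry $\tctx\vdash\ev$ into any extended context introduced by $\Icp$, $\Icn$, $\Eorp$, or $\Eandn$. For substitution, I observe that the rules of $\PRK$ are schematic in pure propositions, so applying $-\sub{\btyp}{\typ}$ uniformly to a derivation yields a new derivation of the substituted sequent. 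No step here is a real obstacle.

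\textbf{Generalized absurdity and Projection.} For ($\Abs'$), the strong case is immediate. If $\ev=\typ\PP$, from $\tctx\vdash\typ\PP$ and $\tctx\vdash\typ\NN$ the rule $\Ecp$ yields $\tctx\vdash\typ\pp$ and $\Ecn$ yields $\tctx\vdash\typ\nn$, whence $\Abs$ on the now-strong pair gives any $\evtwo$; the case $\ev=\typ\NN$ is symmetric. For ($\PrCon$), the classical case is trivial, and for $\ev=\typ\pp$ I apply $\Icp$ to the sequent $\tctx,\typ\NN\vdash\typ\pp$ obtained by weakening; the case $\ev=\typ\nn$ is dual via $\Icn$.

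\textbf{Contraposition and Classical strengthening.} For $\Contrapose$, take $\ev=\typ\PP$, so the hypothesis is $\tctx,\typ\PP\vdash\evtwo$ and the goal is $\tctx,\evtwo\OP\vdash\typ\NN$. I apply $\Icn$, reducing to $\tctx,\evtwo\OP,\typ\PP\vdash\typ\nn$; weakening the hypothesis and using $\Ax$ on $\evtwo\OP$ give both $\evtwo$ and $\evtwo\OP$ in the same context, and $\Abs'$ (already established) yields the desired $\typ\nn$. The case $\ev=\typ\NN$ is dual via $\Icp$. For $\ClStr$ with $\ev=\typ\PP$, the hypothesis is $\tctx,\typ\NN\vdash\typ\PP$; I apply $\Icp$, so it suffices to derive $\tctx,\typ\NN\vdash\typ\pp$, which follows from the hypothesis together with $\Ax$ on $\typ\NN$ via $\Ecp$. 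The case $\ev=\typ\NN$ is again dual.

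\textbf{Main obstacle.} The only subtle point is getting the order right: $\Abs'$ must be proved before $\Contrapose$, because the derivation of $\Contrapose$ needs to combine an affirmation and a denial whose strength is classical, which is precisely what $\Abs'$ supplies. Everything else is then a short, schematic derivation, so no deep obstacle is expected.
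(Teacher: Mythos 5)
Your proposal is correct and follows essentially the same route as the paper: routine inductions for weakening, cut, and substitution, then $\Ecp$/$\Ecn$ followed by $\Abs$ for generalized absurdity, $\Icp$/$\Icn$ after weakening for projection, $\Icn$ plus $\Abs'$ for contraposition, and $\Icp$ plus $\Ecp$ with $\Ax$ for classical strengthening. The dependency ordering you flag (proving $\Abs'$ before $\Contrapose$) is exactly how the paper organizes the argument.
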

\begin{proof}
{\bf Weakening}, {\bf cut}, and {\bf substitution}
are routine proofs
by induction on the derivation of the first judgment.

For {\bf generalized absurdity},
suppose that $\tctx \vdash \ev$ and $\tctx \vdash \ev\OP$. 
If $\ev$ is strong, applying the $\Abs$ rule we may conclude
$\tctx \vdash \evtwo$.
If $\ev$ is classical, there are two cases, depending on whether
$\ev$ is positive or negative. If $\ev$ is positive, \ie $\ev = \typ\PP$
then:
\[
  \indruleN{\Abs}{
    \indruleN{\Ecp}{
      \tctx \vdash \typ\PP
      \HS
      \tctx \vdash \typ\NN
    }{
      \tctx \vdash \typ\pp
    }
    \HS
    \indruleN{\Ecn}{
      \tctx \vdash \typ\NN
      \HS
      \tctx \vdash \typ\PP
    }{
      \tctx \vdash \typ\nn
    }
  }{
    \tctx \vdash \evtwo
  }
\]
If $\ev$ is negative, \ie $\ev = \typ\NN$, the proof is symmetric.

For {\bf projection of conclusions},
if $\ev$ is classical, \ie of the form $\typ\PP$ or $\typ\NN$, we are done.
If $\ev$ is strong, \ie of the form $\typ\pp$ or $\typ\nn$,
we conclude by applying the $\Icp$ or the $\Icn$ rule respectively.
For example, if $\ev = \typ\pp$:
\[
  \indruleN{\Icp}{
    \indruleN{\Weakening}{
      \tctx \vdash \typ\pp
    }{
      \tctx,\typ\NN \vdash \typ\pp
    }
  }{
    \tctx \vdash \typ\PP
  }
\]

For {\bf contraposition} we only study the case when $\ev$ is positive,
\ie $\ev = \typ\PP$; the negative case is symmetric.
So let $\tctx,\typ\PP \vdash \evtwo$. Then:
\[
  \indruleN{\Icn}{
    \indruleN{\Abs'}{
      \indruleN{\Weakening}{
        \tctx,\typ\PP \vdash \evtwo
      }{
        \tctx,\evtwo\OP,\typ\PP \vdash \evtwo
      }
      \HS
      \indruleN{\Ax}{
      }{
        \tctx,\evtwo\OP,\typ\PP \vdash \evtwo\OP
      }
    }{
      \tctx,\evtwo\OP,\typ\PP \vdash \typ\nn
    }
  }{
    \tctx,\evtwo\OP \vdash \typ\NN
  }
\]

For {\bf classical strengthening} we only study the case
when $\ev$ is positive, \ie $\ev = \typ\PP$; the negative case is symmetric.
So let $\tctx,\typ\NN \vdash \typ\PP$. Then:
\[
  \indruleN{\Icp}{
    \indruleN{\Ecp}{
      \tctx,\typ\NN \vdash \typ\PP
      \HS
      \indruleN{\Ax}{}{\tctx,\typ\NN \vdash \typ\NN}
    }{
      \tctx,\typ\NN \vdash \typ\pp
    }
  }{
    \tctx \vdash \typ\PP
  }\qedhere
\]
\end{proof}

\begin{example}[Law of excluded middle]
\lexample{lem_and_noncontr}
The law of excluded middle holds classically in $\PRK$,
that is, $\vdash (\typ\lor\neg\typ)\PP$.
Indeed, let $\tctx = \set{(\typ\lor\neg\typ)\NN, (\neg\typ)\NN}$,
and let $\derivation$ be the following derivation:
\[
  \begin{array}{c}
    \indruleN{\Icp}{
      \indruleN{\Iorp[1]}{
        \indruleN{\Icp}{
          \indruleN{\Abs'}{
            \indruleN{\Ax}{
            }{
              \tctx, \typ\NN \vdash (\neg\typ)\NN
            }
            \HS
            \indruleN{\Icp}{
              \indruleN{\Inotp}{
                \indruleN{\Ax}{
                }{
                  \tctx, \typ\NN, (\neg\typ)\NN \vdash \typ\NN
                }
              }{
                \tctx, \typ\NN, (\neg\typ)\NN \vdash (\neg\typ)\pp
              }
            }{
              \tctx, \typ\NN \vdash (\neg\typ)\PP
            }
          }{
            \tctx, \typ\NN \vdash \typ\pp
          }
        }{
          \tctx \vdash \typ\PP
        }
      }{
        \tctx \vdash (\typ\lor\neg\typ)\pp
      }
    }{
      \tctx \vdash (\typ\lor\neg\typ)\PP
    }
  \end{array}
\]
Then we have that:
\[
  \indruleN{\Icp}{
    \indruleN{\Iorp[2]}{
      \indruleN{\Icp}{
        \indruleN{\Inotp}{
          \indruleN{\Eorn}{
            \indruleN{\Ecn}{
              \indruleN{\Ax}{
              }{
                (\typ\lor\neg\typ)\NN, (\neg\typ)\NN \vdash (\typ\lor\neg\typ)\NN
              }
              \HS\HS
              \derivdots{\derivation}
              \HS
            }{
              (\typ\lor\neg\typ)\NN, (\neg\typ)\NN \vdash (\typ\lor\neg\typ)\nn
            }
          }{
            (\typ\lor\neg\typ)\NN, (\neg\typ)\NN \vdash \typ\NN
          }
        }{
          (\typ\lor\neg\typ)\NN, (\neg\typ)\NN \vdash (\neg\typ)\pp
        }
      }{
        (\typ\lor\neg\typ)\NN \vdash (\neg\typ)\PP
      }
    }{
      (\typ\lor\neg\typ)\NN \vdash (\typ\lor\neg\typ)\pp
    }
  }{
    \vdash (\typ\lor\neg\typ)\PP
  }
\]
Dually, the law of non-contradiction holds classically in $\PRK$,
that is, $\vdash (\typ\land\neg\typ)\NN$ holds.
Results from the following section will entail that
the strong law of excluded middle, $\vdash (\typ\lor\neg\typ)\pp$,
does not hold in $\PRK$ (see~\rexample{counter_model_lem}).
The reader may attempt to derive this judgment to convince
herself that it does not hold.
\end{example}

{\bf Projection Lemma.}
The proof of the following lemma is subtle.
It will be a key tool in order to prove completeness of $\PRK$
with respect to the Kripke semantics:
\begin{lemma}
\llem{projection_lemma}
If $\tctx,\ev \vdash \evtwo$
then $\tctx,\trunc{\ev} \vdash \trunc{\evtwo}$.
\end{lemma}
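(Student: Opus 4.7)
The plan is to proceed by induction on the derivation of $\tctx, \ev \vdash \evtwo$. A crucial observation is that every rule of $\PRK$ that extends the hypotheses does so with a \emph{classical} proposition---$\typ\PP$ and $\typtwo\PP$ in $\Eorp$, $\typ\NN$ and $\typtwo\NN$ in $\Eandn$, $\typ\NN$ in $\Icp$, and $\typ\PP$ in $\Icn$---which coincides with its own projection. Consequently, when the inductive hypothesis is applied to a subderivation, the newly bound hypothesis is untouched, and the IH delivers a derivation whose conclusion is the projection of the original one.

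The easy cases dispatch as follows. For $\Ax$, either $\evtwo = \ev$, giving $\tctx, \trunc{\ev} \vdash \trunc{\ev}$ directly; or $\evtwo \in \tctx$, giving $\tctx, \trunc{\ev} \vdash \evtwo$ by $\Ax$ and then $\tctx, \trunc{\ev} \vdash \trunc{\evtwo}$ by $\PrCon$. For $\Abs$, the IH provides $\tctx, \trunc{\ev} \vdash \trunc{\evthree}$ and $\tctx, \trunc{\ev} \vdash \trunc{\evthree}\OP$; since $\trunc{\evthree}$ is classical rather than strong, we invoke the admissible generalized absurdity $\Abs'$. For each introduction rule ($\Iandp$, $\Iorp$, $\Inotp$, and their duals) the premises already concern classical propositions, so the IH lets us reapply the rule and then $\PrCon$ projects the strong conclusion. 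The rule $\Ecp$ (dually $\Ecn$) is immediate from the IH followed by $\PrCon$. The rules $\Icp$ and $\Icn$ are handled via $\ClStr$: for $\Icp$, the IH yields $\tctx, \trunc{\ev}, \typ\NN \vdash \typ\PP$, and $\ClStr$ discharges $\typ\NN$.

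For the projection-style eliminations $\Eandp$ and $\Eorn$ an extra step is required, since the IH furnishes only $\tctx, \trunc{\ev} \vdash (\typ_1 \land \typ_2)\PP$ while we must obtain $\tctx, \trunc{\ev} \vdash \typ_i\PP$. I would apply $\ClStr$ and, under the extra hypothesis $\typ_i\NN$, use $\Iandn$ followed by $\PrCon$ to build $(\typ_1 \land \typ_2)\NN$, combine this with the weakened $(\typ_1 \land \typ_2)\PP$ via $\Ecp$ to recover $(\typ_1 \land \typ_2)\pp$, and finally extract $\typ_i\PP$ by $\Eandp$.

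The main obstacle lies in the disjunctive eliminations $\Eorp$ and $\Eandn$, where both the scrutinee mode has changed from strong to classical \emph{and} the conclusion has been projected, yet no primitive case-analysis rule for a classical disjunction is available. For $\Eorp$ the IH yields $\tctx, \trunc{\ev} \vdash (\typ \lor \typtwo)\PP$ together with $\tctx, \trunc{\ev}, \typ\PP \vdash \trunc{\evthree}$ and $\tctx, \trunc{\ev}, \typtwo\PP \vdash \trunc{\evthree}$. I would apply $\ClStr$ to reduce the goal to $\tctx, \trunc{\ev}, \trunc{\evthree}\OP \vdash \trunc{\evthree}$. Within each branch, combining the IH's conclusion with the axiom $\trunc{\evthree}\OP$ via $\Abs'$ produces $\typ\nn$ (resp.\ $\typtwo\nn$) under the extra hypothesis $\typ\PP$ (resp.\ $\typtwo\PP$); applying $\Icn$ discharges that hypothesis to yield $\typ\NN$ (resp.\ $\typtwo\NN$); $\Iorn$ then assembles $(\typ \lor \typtwo)\nn$, which $\PrCon$ lifts to $(\typ \lor \typtwo)\NN$; finally $\Abs'$ against the weakened $(\typ \lor \typtwo)\PP$ delivers $\trunc{\evthree}$. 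The $\Eandn$ case is symmetric.
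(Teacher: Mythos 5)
Your proposal is correct and follows essentially the same strategy as the paper's proof: induction on the derivation, with the problematic elimination rules handled by $\ClStr$ together with contraposition of the projected branch hypotheses. Your $\Eandp$/$\Eorn$ case is identical to the paper's. Your $\Eorp$/$\Eandn$ case differs only in how it concludes: after assembling $(\typ\lor\typtwo)\NN$ from the contraposed branches, you derive the goal by $\Abs'$ against the weakened $(\typ\lor\typtwo)\PP$, whereas the paper first recovers the strong $(\typ\lor\typtwo)\pp$ via $\Ecp$ and then re-applies $\Eorp$ using the weakened branch hypotheses; your ending is marginally shorter, uses each branch IH only once, and is equally valid. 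The one omission is the pair $\Enotp$/$\Enotn$: there the IH yields $(\neg\typ)\PP$ while the required conclusion is $\typ\NN$, so these do not fall under any of the cases you enumerate; however, they yield to exactly the recipe you give for $\Eandp$ (under the extra hypothesis $\typ\PP$ build $(\neg\typ)\NN$ via $\Inotn$ and $\PrCon$, recover $(\neg\typ)\pp$ by $\Ecp$, apply $\Enotp$, and discharge with $\ClStr$), so this is a presentational rather than a mathematical gap.
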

\begin{proof}
By induction on the derivation of $\tctx,\ev \vdash \evtwo$.
The difficult cases are conjunction and disjunction elimination.
\SeeAppendix{See~\rsec{appendix:projection_lemma} in the appendix for the proof.}
\end{proof}
A corollary obtained from iterating the projection lemma is that
if $\ev_1,\hdots,\ev_n \vdash \evtwo$
then $\trunc{\ev_1},\hdots,\trunc{\ev_n} \vdash \trunc{\evtwo}$.
\medskip
 
{\bf Duality Principle.}
The {\em dual} of a pure proposition $\typ$ is written $\typ^\bot$
and defined as:
\[
  \begin{array}{rcl@{\ }rcl}
  \btyp^\bot & \eqdef & \btyp
  &
  (\typ\land\typtwo)^\bot & \eqdef & \typ^\bot \lor \typtwo^\bot 
  \\
  (\typ\lor\typtwo)^\bot & \eqdef & \typ^\bot \land \typtwo^\bot 
  &
  (\neg\typ)^\bot & \eqdef & \neg(\typ^\bot)
  \end{array}
\]
The dual of a proposition $\ev$ is written $\ev^\bot$ and defined as:
\[
  \begin{array}{rcl@{\HS}rcl}
  (\typ\pp)^\bot & \eqdef & (\typ^\bot)\nn
  &
  (\typ\nn)^\bot & \eqdef & (\typ^\bot)\pp
  \\
  (\typ\PP)^\bot & \eqdef & (\typ^\bot)\NN
  &
  (\typ\NN)^\bot & \eqdef & (\typ^\bot)\PP
  \end{array}
\]
The following duality principle is then straightforward to prove
by induction on the derivation of the judgment:

\begin{lemma}
\llem{duality_principle}
If $\ev_1,\hdots,\ev_n \vdash \evtwo$
then $\ev_1^\bot,\hdots,\ev_n^\bot \vdash \evtwo^\bot$.
\end{lemma}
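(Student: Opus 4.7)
The plan is to proceed by straightforward induction on the derivation of $\ev_1,\hdots,\ev_n \vdash \evtwo$, using the fact that the dualization operation $(-)^\bot$ permutes the inference schemes of $\PRK$ among themselves in a predictable way. Before starting the induction, I would record three easy properties of $(-)^\bot$: (i) it is involutive, so $\ev^{\bot\bot} = \ev$; (ii) it preserves strength while flipping sign, so $\ev$ is strong iff $\ev^\bot$ is strong, and in particular $\Abs$ applied to $\ev$ can be mirrored by $\Abs$ applied to $\ev^\bot$; and (iii) it commutes with the opposite and the classical projection, \ie $(\ev^\bot)\OP = (\ev\OP)^\bot$ and $\trunc{\ev^\bot} = (\trunc{\ev})^\bot$. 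None of these requires more than unfolding definitions.

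The core of the proof is then a case analysis on the last rule of the derivation. The cases pair up naturally according to the evident symmetry of the calculus: $\Ax$ and $\Abs$ are self-dual (using property (ii) above); $\Iandp$ dualizes to $\Iorn$ because $(\typ \land \typtwo)^\bot = \typ^\bot \lor \typtwo^\bot$ and $(\typ\PP)^\bot = (\typ^\bot)\NN$, so the hypotheses $\typ\PP,\typtwo\PP$ become $(\typ^\bot)\NN, (\typtwo^\bot)\NN$ and the conclusion $(\typ\land\typtwo)\pp$ becomes $(\typ^\bot \lor \typtwo^\bot)\nn$, fitting exactly the $\Iorn$ scheme. Analogously, $\Eandp$/$\Eorn$, $\Iorp$/$\Iandn$, $\Eorp$/$\Eandn$, $\Inotp$/$\Inotn$, $\Enotp$/$\Enotn$, $\Icp$/$\Icn$, and $\Ecp$/$\Ecn$ are each interchanged by $(-)^\bot$, and each direction of each pair follows from applying the induction hypothesis to the premises and then invoking the dual rule.

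There is no substantive obstacle here; the statement is essentially a syntactic symmetry of the deduction system. The only thing that requires a moment of care is the $\Abs$ rule, where we must check that ``strong'' is preserved under dualization, and the elimination rules with a context extension ($\Eorp$, $\Eandn$, $\Icp$, $\Icn$), where we must verify that extending the context by $\typ\PP$ (say) and then dualizing agrees with extending the dualized context by $(\typ\PP)^\bot = (\typ^\bot)\NN$; this is immediate from the pointwise definition of dualization on a context. Once these routine checks are made, the induction goes through uniformly.
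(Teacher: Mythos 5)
Your proof is correct and matches the paper's approach exactly: the paper also establishes this lemma by a straightforward induction on the derivation, observing that dualization swaps each inference rule with its dual. The auxiliary checks you flag (preservation of strength, commutation with $\OP$ and with context extension) are precisely the routine verifications needed.
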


\section{Kripke Semantics for $\PRK$}
\lsec{prk_kripke_semantics}

In this section, we define a Kripke semantics~(\rdef{kripke_model},
\rdef{kripke_forcing}),
for which system $\PRK$ turns out to be sound~(\rprop{kripke_soundness})
and complete~(\rthm{kripke_completeness}).
Recall that a Kripke model $\kripmod$ in intuitionistic
logic\footnote{See for instance~\cite[Section~5.3]{DBLP:books/daglib/0080654}.}
is given by a set $\wlset$ of elements called {\em worlds},
a partial order $\leq$ on $\wlset$ called the {\em accessibility relation},
and for each world $\wl \in \wlset$ a set $\wlint{\wl}$
of propositional variables
verifying a {\em monotonicity} property,
namely, $\wl \leq \wl'$ implies $\wlint{\wl} \subseteq \wlint{\wl'}$.
A relation of forcing $\kripforce{\wl}{\typ}$ is defined
for each proposition $\typ$ by structural recursion on $\typ$.
In the base case, $\kripforce{\wl}{\btyp}$
is declared to hold
for a propositional variable $\btyp$ whenever $\btyp \in \wlint{\wl}$.

This standard notion of Kripke model is adapted for $\PRK$
by replacing the set $\wlint{\wl}$
with two sets $\wlpos{\wl}$ and $\wlneg{\wl}$
and by imposing an additional condition we call {\em stabilization},
stating that a propositional variable must eventually belong to the
union $\wlpos{\wl} \cup \wlneg{\wl}$,
but never to the intersection $\wlpos{\wl} \cap \wlneg{\wl}$.
The relation of forcing $\kripforce{\wl}{\ev}$ is then defined in such a way
that $\kripforce{\wl}{\btyp\pp}$ is declared to hold if
$\btyp \in \wlpos{\wl}$.
Similarly, $\btyp\nn$ is declared to hold if $\btyp \in \wlneg{\wl}$.
One difficulty that we find
is how to define the forcing relation for a classical proposition
like $\typ\PP$.
The forcing relation for $\typ\PP$ should behave, informally speaking,
like an intuitionistic implication ``$\typ\NN \to \typ\pp$''.
However this does not provide a
{\em bona fide} definition, because
the interpretation of $\typ\PP$ depends on $\typ\NN$,
and the interpretation of $\typ\NN$ depends in turn on $\typ\PP$.
What we do is define the interpretations of $\typ\PP$ and $\typ\NN$
without referring to each other.
A key lemma~(\rlem{rule_of_classical_forcing}) then ensures
that $\typ\PP$ is given the same semantics as an intuitionistic
implication of the form ``$\typ\NN \to \typ\pp$''.

\begin{definition}
\ldef{kripke_model}
A {\em Kripke model} (for $\PRK$)
is a structure $\kripmod = \kripstruct$
where $\wlset = \set{\wl,\wl',\hdots}$ is a set of worlds,
$\wleq$ is a partial order on $\wlset$,
and for each world $\wl \in \wlset$
there are sets $\wlpos{\wl}$ and $\wlneg{\wl}$ of propositional variables,
such that the following conditions hold:
\begin{enumerate}
\item {\bf Monotonicity.}
      If $\wl \wleq \wl'$
      then $\wlpos{\wl} \subseteq \wlpos{\wl'}$
      and $\wlneg{\wl} \subseteq \wlneg{\wl'}$.
\item {\bf Stabilization.}
      For all $\wl \in \wlset$ and all $\btyp$,
      there exists $\wl' \wgeq \wl$
      such that $\btyp \in \wlpos{\wl'} \symdiff \wlneg{\wl'}$.
\end{enumerate}
Note that we write $\wl' \wgeq \wl$ for $\wl \wleq \wl'$,
and $\symdiff$ denotes the symmetric difference on sets,
that is,
$X \symdiff Y =
 (X \setminus Y)
 \cup (Y \setminus X)$.
\end{definition}

The definition of the forcing relation is given by induction on
the following notion of {\em measure} $\#(-)$ of a proposition $\ev$:
\[
  \begin{array}{rcl@{\HS}rcl}
  \#(\typ\pp) & \eqdef & 2 |\typ| &
  \#(\typ\nn) & \eqdef & 2 |\typ| \\
  \#(\typ\PP) & \eqdef & 2 |\typ| + 1 &
  \#(\typ\NN) & \eqdef & 2 |\typ| + 1 
  \end{array}
\]
where $|\typ|$ denotes the {\em size}, \ie the number of symbols,
in the formula $\typ$.
Note in particular that
$\#(\typ\PP) = \#(\typ\NN) > \#(\typ\pp) = \#(\typ\nn)$,
that
$\#((\typ\star\typtwo)\pp) = \#((\typ\star\typtwo)\nn) > \#(\typ\PP) = \#(\typ\NN)$
for $\star \in \set{\land,\lor}$,
and that
$\#((\neg\typ)\pp) = \#((\neg\typ)\nn) > \#(\typ\PP) = \#(\typ\NN)$.

\begin{definition}[Forcing]
\ldef{kripke_forcing}
Given a Kripke model, we define the {\em forcing} relation,
written $\kripforce{\wl}{\ev}$
for each world $\wl \in \wlset$
and each proposition $\ev$,
as follows, by induction on the {\em measure} $\#(\ev)$:
\begin{center}
{\small
\begin{tabular}{l@{\ }l@{\ }l@{\ }l@{\ }l}
  $\kripforce{\wl}{\btyp\pp}$
  & $\iff$ &
  $\btyp \in \wlpos{\wl}$
\\
  $\kripforce{\wl}{\btyp\nn}$
  & $\iff$ &
  $\btyp \in \wlneg{\wl}$
\\
  $\kripforce{\wl}{(\typ\land\typtwo)\pp}$
  & $\iff$ &
  $\kripforce{\wl}{\typ\PP}$
  & \text{and} &
  $\kripforce{\wl}{\typtwo\PP}$
\\
  $\kripforce{\wl}{(\typ\land\typtwo)\nn}$
  & $\iff$ &
  $\kripforce{\wl}{\typ\NN}$
  & \text{or} &
  $\kripforce{\wl}{\typtwo\NN}$
\\
  $\kripforce{\wl}{(\typ\lor\typtwo)\pp}$
  & $\iff$ &
  $\kripforce{\wl}{\typ\PP}$
  & \text{or} &
  $\kripforce{\wl}{\typtwo\PP}$
\\
  $\kripforce{\wl}{(\typ\lor\typtwo)\nn}$
  & $\iff$ &
  $\kripforce{\wl}{\typ\NN}$
  & \text{and} &
  $\kripforce{\wl}{\typtwo\NN}$
\\
  $\kripforce{\wl}{(\neg\typ)\pp}$
  & $\iff$ &
  $\kripforce{\wl}{\typ\NN}$
\\
  $\kripforce{\wl}{(\neg\typ)\nn}$
  & $\iff$ &
  $\kripforce{\wl}{\typ\PP}$
\\
  $\kripforce{\wl}{\typ\PP}$
  & $\iff$ &
  $\kripnotforce{\wl'}{\typ\nn}$
  & \text{for all} & $\wl' \wgeq \wl$
\\
  $\kripforce{\wl}{\typ\NN}$
  & $\iff$ &
  $\kripnotforce{\wl'}{\typ\pp}$
  & \text{for all} & $\wl' \wgeq \wl$
\end{tabular}
}
\end{center}
Furthermore, if $\tctx$ is a (possibly infinite) set of propositions, we write:
\[
{\small
\begin{array}{l@{\ }l}
  \kripforcefull{\wl}{\tctx}
  &
  \iff \text{$\kripforce{\wl}{\ev}$ for every $\ev \in \tctx$}
\\
  \kripforcefull{\tctx}{\ev}
  &
  \iff
  \text{$\kripforcefull{\wl}{\tctx}$ implies $\kripforce{\wl}{\ev}$
        for every $\wl$}
\\
  \kripentails{\tctx}{\ev}
  &
  \iff
  \text{$\kripforcefull{\tctx}{\ev}$ for every Kripke model $\kripmod$}
\end{array}}
\]
\end{definition}

Note that most cases in the definition of forcing do not mention
the accessibility relation, other than for classical propositions.

\begin{example}[Counter-model for the strong excluded middle]
\lexample{counter_model_lem}
There is a Kripke model $\kripmod$ with a world $\wl_0$
such that $\kripnotforce{\wl_0}{(\btyp\lor\neg\btyp)\pp}$.
Indeed, let $\mathcal{P}$ be the set of all propositional variables,
and let $\kripmod$ be the Kripke model such that
$\wlset = \set{\wl_0,\wl_1,\wl_2}$
with $\wl_0 \leq \wl_1$ and $\wl_0 \leq \wl_2$,
where $\wlpos{}$ and $\wlneg{}$ are defined as follows:
\[
  \begin{array}{r|l|l}
  & \wlpos{} & \wlneg{} \\
  \hline
    \wl_0 & \emptyset     & \emptyset \\
    \wl_1 & \mathcal{P}   & \emptyset \\
    \wl_2 & \emptyset     & \mathcal{P} \\
  \end{array}
\]
It is easy to verify that $\kripmod$ is a Kripke model
and that $\kripnotforce{\wl_0}{(\btyp\lor\neg\btyp)\pp}$.
Note, on the other hand, that the classical excluded middle holds,
\ie $\kripforce{\wl_0}{(\btyp\lor\neg\btyp)\PP}$.
\end{example}

Before going on, we introduce typical nomenclature.
If $\tctx$ is a possibly infinite set of propositions,
we say that $\tctx \vdash \evtwo$ holds
whenever the judgment $\tctxtwo \vdash \evtwo$ is derivable in
$\PRK$ for some finite subset $\tctxtwo \subseteq \tctx$.
A set $\tctx$ of propositions is {\em consistent} if
there is a proposition $\ev$ such that $\tctx \nvdash \ev$.
Otherwise, $\tctx$ is {\em inconsistent}.

In the remainder of this section we shall prove that $\PRK$ is sound
and complete with respect to this notion of Kripke model.
\ie that $\tctx \vdash \ev$ holds if and only if
$\kripentails{\tctx}{\ev}$ holds.
We begin by establishing some basic properties of the forcing
relation.
\SeeAppendix{See~\rsec{appendix:properties_forcing}
in the appendix for the proofs.}
\begin{lemma}[Properties of Forcing]
\llem{properties_forcing}
\llem{monotonicity_forcing}
\llem{stabilization_forcing}
\llem{non_contradiction_forcing}
\quad
\begin{enumerate}
\item {\bf Monotonicity.}
  If $\kripforce{\wl}{\ev}$ and $\wl \leq \wl'$
  then $\kripforce{\wl'}{\ev}$.
\item {\bf Stabilization.}
  For every world $\wl$ and every proposition $\ev$,
  there is a world $\wl' \wgeq \wl$
  such that
  either $\kripforce{\wl'}{\ev}$
  or $\kripforce{\wl'}{\ev\OP}$ hold,
  but not both.
\item {\bf Non-contradiction.}
  If $\kripforce{\wl}{\ev}$ then $\kripnotforce{\wl}{\ev\OP}$.
\end{enumerate}
\end{lemma}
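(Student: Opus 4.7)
The plan is to prove all three parts by induction on the measure $\#(\ev)$ defined just above the statement. Monotonicity is self-contained and I would dispatch it first; non-contradiction and stabilization mutually depend on each other for classical modes, so I would prove them by a single simultaneous induction on $\#(\ev)$, exploiting the fact that $\#(\typ\pp)=\#(\typ\nn)<\#(\typ\PP)=\#(\typ\NN)$ to break the circularity.

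For monotonicity, the atomic cases $\btyp\pp$ and $\btyp\nn$ are immediate from the monotonicity clause of \rdef{kripke_model}, the cases for $\land,\lor,\neg$ reduce directly to the inductive hypothesis, and the classical cases $\typ\PP$ and $\typ\NN$ need no induction at all since their forcing is a universal quantification over worlds above, which is preserved by going further up via transitivity of $\wleq$. For non-contradiction, the atomic case uses Kripke-model stabilization: if $\btyp$ lay in $\wlpos{\wl} \cap \wlneg{\wl}$, monotonicity would propagate this to every $\wl'\wgeq\wl$, so no world could satisfy the symmetric-difference condition. The connective cases follow directly from the inductive hypothesis. For stabilization on a conjunction $\ev=(\typ\land\typtwo)\pp$, I would first stabilize $\typ\PP$ at some $\wl_1\wgeq\wl$ using the inductive hypothesis, then stabilize $\typtwo\PP$ at some $\wl_2\wgeq\wl_1$, using monotonicity to preserve whichever side of $\typ$ was decided at $\wl_1$ up through $\wl_2$: either both $\typ\PP$ and $\typtwo\PP$ are forced at $\wl_2$ (yielding $\ev$), or some $\typ_i\NN$ is forced (yielding $\ev\OP$). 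Disjunction and negation proceed analogously, and the atomic case again invokes the Kripke-model stabilization directly.

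The delicate cases are the classical modes. For non-contradiction on $\ev=\typ\PP$, assume both $\kripforce{\wl}{\typ\PP}$ and $\kripforce{\wl}{\typ\NN}$; then by the two classical forcing clauses, neither $\typ\pp$ nor $\typ\nn$ is forced at any $\wl'\wgeq\wl$, contradicting the inductive hypothesis of stabilization applied to $\typ\pp$, which has strictly smaller measure. For stabilization on $\ev=\typ\PP$: if $\kripforce{\wl}{\typ\PP}$ already holds we are done; otherwise by definition some $\wl''\wgeq\wl$ forces $\typ\nn$, and then I would establish $\kripforce{\wl''}{\typ\NN}$ by fixing an arbitrary $\wl'''\wgeq\wl''$, propagating $\typ\nn$ up to $\wl'''$ by monotonicity, and invoking the inductive hypothesis of non-contradiction on $\typ\pp$ to rule out $\kripforce{\wl'''}{\typ\pp}$. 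The ``but not both'' half of stabilization then drops out of non-contradiction at the chosen world. The main obstacle is engineering precisely this interleaving so that each mutually recursive inductive call descends to a proposition of strictly smaller measure, which is exactly what the measure $\#(-)$ is designed to guarantee.
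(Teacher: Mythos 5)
Your proposal is correct, and its backbone---induction on the measure $\#(\ev)$, monotonicity dispatched first as a self-contained induction, and the model's stabilization clause doing the work at atoms---matches the paper's proof. The one genuine difference is the treatment of non-contradiction. You organize non-contradiction and stabilization as a single simultaneous induction, with each calling the other at strictly smaller measure (non-contradiction at $\typ\PP$ via stabilization of $\typ\pp$; the ``not both'' halves of stabilization via non-contradiction of the smaller components), and you correctly identify the measure inequalities that make this well-founded. The paper instead proves stabilization entirely on its own: wherever ``not both'' needs what amounts to non-contradiction, it inlines a short argument (apply the stabilization IH to get a world above where exactly one of the pair holds, then use monotonicity to force both there, contradiction), and only afterwards derives non-contradiction as a two-line corollary of stabilization plus monotonicity, with no induction at all. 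Both routes are sound: the paper's buys a clean, non-mutual lemma dependency at the cost of repeating the same small contradiction argument in several cases, while yours centralizes that argument but obliges you to do the interleaving bookkeeping you flag at the end. Your classical stabilization case (either $\typ\PP$ already holds at $\wl$, or a witness $\wl''$ forcing $\typ\nn$ is upgraded to force $\typ\NN$) also differs in flavor from the paper's, which first stabilizes $\typ\pp$ against $\typ\nn$ and then promotes the winner to the corresponding classical mode; both work.
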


To prove {\bf soundness},
we first need an auxiliary lemma that gives
necessary and sufficient conditions for a classical proposition
to hold.
\SeeAppendix{See~\rsec{appendix:kripke_soundness}
in the appendix for the full proof of soundness.}
\begin{lemma}[Rule of Classical Forcing]
\llem{rule_of_classical_forcing}
\quad
\begin{enumerate}
\item $(\kripforce{\wl}{\typ\PP})$
      if and only if \\
      $(\forall \wl' \wgeq \wl)
       ((\kripforce{\wl'}{\typ\NN}) \implies (\kripforce{\wl'}{\typ\pp}))$.
\item $(\kripforce{\wl}{\typ\NN})$
      if and only if \\
      $(\forall \wl' \wgeq \wl)
       ((\kripforce{\wl'}{\typ\PP}) \implies (\kripforce{\wl'}{\typ\nn}))$.
\end{enumerate}
\end{lemma}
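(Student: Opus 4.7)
The plan is to unfold the definition of classical forcing and reduce each direction to the three basic properties from \rlem{properties_forcing}: monotonicity, non-contradiction, and stabilization. I will prove part (1); part (2) will be symmetric (swap the roles of $\pp$/$\nn$ and $\PP$/$\NN$ throughout). By definition, $\kripforce{\wl}{\typ\PP}$ unfolds to ``$\kripnotforce{\wl'}{\typ\nn}$ for all $\wl' \wgeq \wl$'', so the goal is to show this statement is equivalent to ``for all $\wl' \wgeq \wl$, $\kripforce{\wl'}{\typ\NN}$ implies $\kripforce{\wl'}{\typ\pp}$''.

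For the $(\Rightarrow)$ direction, I fix $\wl' \wgeq \wl$ with $\kripforce{\wl'}{\typ\NN}$ and aim to show $\kripforce{\wl'}{\typ\pp}$. The cleanest route is to show the antecedent is actually impossible, so the implication is vacuous. By stabilization applied to $\typ\pp$ at world $\wl'$, pick $\wl'' \wgeq \wl'$ with exactly one of $\kripforce{\wl''}{\typ\pp}$ or $\kripforce{\wl''}{\typ\nn}$ holding. The assumption $\kripforce{\wl'}{\typ\NN}$ unfolds to $\kripnotforce{\wl''}{\typ\pp}$ for all $\wl'' \wgeq \wl'$, ruling out the first alternative, so $\kripforce{\wl''}{\typ\nn}$. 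But $\wl'' \wgeq \wl' \wgeq \wl$, contradicting the main hypothesis that $\kripnotforce{\wl''}{\typ\nn}$. Hence $\kripforce{\wl'}{\typ\NN}$ cannot hold and the implication is trivially satisfied.

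For the $(\Leftarrow)$ direction, I argue by contradiction: suppose there is $\wl' \wgeq \wl$ with $\kripforce{\wl'}{\typ\nn}$. By monotonicity, $\kripforce{\wl''}{\typ\nn}$ holds for every $\wl'' \wgeq \wl'$, and then non-contradiction (since $(\typ\pp)\OP = \typ\nn$) gives $\kripnotforce{\wl''}{\typ\pp}$ for every such $\wl''$. Unfolding the definition, this says precisely $\kripforce{\wl'}{\typ\NN}$. Applying the hypothesis at $\wl'$ yields $\kripforce{\wl'}{\typ\pp}$, which contradicts non-contradiction against the assumed $\kripforce{\wl'}{\typ\nn}$. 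Therefore $\kripnotforce{\wl'}{\typ\nn}$ for every $\wl' \wgeq \wl$.

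The main subtle point is the forward direction: one might be tempted to derive $\kripforce{\wl'}{\typ\pp}$ directly from $\kripnotforce{\wl'}{\typ\nn}$, but this is not immediate because forcing of a strong atom at $\wl'$ is not the negation of forcing of its opposite at $\wl'$. Stabilization is exactly what bridges the gap, by forcing the excluded middle to hold eventually along any branch. Both directions rely crucially on properties established in \rlem{properties_forcing}, so no new combinatorics on the Kripke structure is needed.
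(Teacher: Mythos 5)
Your proof is correct and follows essentially the same route as the paper's: the backward direction is identical (monotonicity of $\typ\nn$ along $\wl''\wgeq\wl'$, non-contradiction to obtain $\kripforce{\wl'}{\typ\NN}$, then the hypothesis yields $\kripforce{\wl'}{\typ\pp}$, contradicting $\kripforce{\wl'}{\typ\nn}$). In the forward direction the paper reaches the same vacuity conclusion slightly more directly---monotonicity of the classical proposition gives $\kripforce{\wl'}{\typ\PP}$ and non-contradiction then gives $\kripnotforce{\wl'}{\typ\NN}$---whereas you inline a stabilization argument that in effect re-proves that instance of non-contradiction; both are valid.
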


\noindent With these tools at our disposal, it is immediate to prove soundness:
\begin{proposition}[Soundness]
\lprop{kripke_soundness}
If $\tctx \vdash \ev$ then $\kripentails{\tctx}{\ev}$.
\end{proposition}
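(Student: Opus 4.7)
The plan is to proceed by induction on the derivation of $\tctx \vdash \ev$, fixing an arbitrary Kripke model $\kripmod$ and a world $\wl$ with $\kripforcefull{\wl}{\tctx}$, and showing $\kripforce{\wl}{\ev}$. Most cases follow directly by unwinding the definition of forcing and invoking the induction hypothesis; the only subtlety is in the classical rules $\Icp$ and $\Icn$, which require the rule of classical forcing (\rlem{rule_of_classical_forcing}), and in the monotonicity of context-forcing (\rlem{monotonicity_forcing}) when we have to slide to an accessible world.

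For the rule $\Ax$ the claim is immediate. For $\Abs$, the induction hypothesis gives $\kripforce{\wl}{\ev}$ and $\kripforce{\wl}{\ev\OP}$, contradicting the non-contradiction clause of \rlem{non_contradiction_forcing}, so vacuously $\kripforce{\wl}{\evtwo}$. The introduction and elimination rules for the strong affirmation and denial of $\land$, $\lor$ and $\neg$ ($\Iandp$, $\Eandp$, $\Iorp$, $\Iorn$, $\Eorn$, $\Iandn$, $\Inotp$, $\Enotp$, $\Inotn$, $\Enotn$) all translate literally into the corresponding clauses of \rdef{kripke_forcing}, so each follows by a one-line application of the induction hypothesis. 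The disjunction-elimination rules $\Eorp$ and $\Eandn$ require a small case split: from $\kripforce{\wl}{(\typ\lor\typtwo)\pp}$ the forcing clause yields $\kripforce{\wl}{\typ\PP}$ or $\kripforce{\wl}{\typtwo\PP}$, and in either case the hypothesis that the corresponding extended context forces $\ev$ (via the IH on the appropriate subderivation) gives $\kripforce{\wl}{\ev}$.

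The interesting cases are the classical rules. For $\Icp$, given a subderivation of $\tctx,\typ\NN\vdash\typ\pp$, we must show $\kripforce{\wl}{\typ\PP}$. By \rlem{rule_of_classical_forcing}, it suffices to take an arbitrary $\wl' \wgeq \wl$ with $\kripforce{\wl'}{\typ\NN}$ and prove $\kripforce{\wl'}{\typ\pp}$; using monotonicity we have $\kripforcefull{\wl'}{\tctx}$, hence $\kripforcefull{\wl'}{\tctx,\typ\NN}$, and the IH applied at $\wl'$ concludes. The case $\Icn$ is symmetric. For $\Ecp$, the IH applied at $\wl$ yields $\kripforce{\wl}{\typ\PP}$ and $\kripforce{\wl}{\typ\NN}$; instantiating \rlem{rule_of_classical_forcing} with $\wl' = \wl$ gives $\kripforce{\wl}{\typ\pp}$ directly, and $\Ecn$ is symmetric.

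The main obstacle is keeping the interplay between monotonicity and the auxiliary \rlem{rule_of_classical_forcing} straight in the classical introduction cases, since a naive reading would try to prove $\kripforce{\wl}{\typ\PP}$ by direct unfolding of its forcing clause, which refers only to $\typ\nn$, not to $\typ\NN$; it is the rule of classical forcing that bridges this gap and lets the IH for the subderivation of $\tctx,\typ\NN \vdash \typ\pp$ apply. Everything else is bookkeeping.
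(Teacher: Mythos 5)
Your proof is correct and follows essentially the same route as the paper's: induction on the derivation, with non-contradiction (\rlem{non_contradiction_forcing}) handling $\Abs$ and the rule of classical forcing (\rlem{rule_of_classical_forcing}) plus monotonicity handling $\Icp$/$\Icn$. The only (harmless) deviation is in $\Ecp$/$\Ecn$, where the paper concludes vacuously from the contradiction between $\kripforce{\wl}{\typ\PP}$ and $\kripforce{\wl}{\typ\NN}$ via non-contradiction, whereas you instantiate the rule of classical forcing at $\wl'=\wl$ to obtain $\kripforce{\wl}{\typ\pp}$ directly; both are one-line closures of that case.
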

\begin{proof}
By induction on the derivation of $\tctx \vdash \ev$.
The interesting cases are the $\Icp$ and $\Icn$ rules,
which follow from the rule of classical forcing~(\rlem{rule_of_classical_forcing}),
and the $\Abs$, $\Ecp$, and $\Ecn$ rules, which follow from
the property of non-contradiction~(\rlem{non_contradiction_forcing}).
\end{proof}

To prove {\bf completeness}, the methodology that we follow is the
standard one, which proceeds by contraposition assuming
that $\tctx \nvdash \ev$ and building a counter-model.
The counter-model is given by a Kripke model
$\kripmod_0$ and a world $\wl$ such that
$\kripforcefull[\kripmod_0]{\wl}{\tctx}$
but $\kripnotforce[\kripmod_0]{\wl}{\ev}$.
In fact, the choice of the Kripke model $\kripmod_0$ does not depend
on $\tctx$ nor $\ev$.
Rather, $\kripmod_0$ is always chosen to be the {\em canonical}
Kripke model whose worlds
are ``saturated'' sets of propositions
({\em prime theories}, sometimes called {\em disjunctive theories}).
Completeness is obtained by taking $\tctx$ and {\em saturating} it
it to a prime theory $\tctx'$
which then verifies $\kripforcefull[\kripmod_0]{\tctx'}{\tctx}$
but $\kripnotforce[\kripmod_0]{\tctx'}{\ev}$.
\SeeAppendix{In the remainder of this section, the proofs of the
technical lemmas are only sketched;
see~\rsec{appendix:kripke_completeness} in the appendix for the
full proofs.}

\begin{definition}[Prime theory]
A {\em prime theory} is a set of propositions $\tctx$
such that the following hold:
\begin{enumerate}
\item {\bf Closure by deduction.}
  If $\tctx \vdash \ev$ then $\ev \in \tctx$.
\item {\bf Consistency.}
  $\tctx$ is consistent. 
\item {\bf Disjunctive property.} \\
  $\bullet$ If $(\typ \lor \typtwo)\pp \in \tctx$ then either $\typ\PP \in \tctx$ or $\typtwo\PP \in \tctx$. \\
  $\bullet$ If $(\typ \land \typtwo)\nn \in \tctx$ then either $\typ\NN \in \tctx$ or $\typtwo\NN \in \tctx$.
\end{enumerate}
\end{definition}

\begin{lemma}[Saturation]
\llem{kripke_saturation}
Let $\tctx$ be a consistent set of propositions,
and let $\evtwo$ be a proposition such that $\tctx \nvdash \evtwo$.
Then there exists a prime theory $\tctx' \supseteq \tctx$ such that
$\tctx' \nvdash \evtwo$.
\end{lemma}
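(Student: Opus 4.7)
The plan is to establish this via a Lindenbaum-style saturation argument. Since the set of propositional variables is denumerable, so is the set of all propositions; I would fix an enumeration $\ev_1, \ev_2, \ldots$ of every proposition of $\PRK$. I would then define an increasing chain $\tctx = \tctx_0 \subseteq \tctx_1 \subseteq \ldots$ by recursion on $n$, maintaining as invariant the property that $\tctx_n \nvdash \evtwo$. The recipe at step $n$ is to set $\tctx_n \eqdef \tctx_{n-1} \cup \{\ev_n\}$ if this extension still fails to derive $\evtwo$, and $\tctx_n \eqdef \tctx_{n-1}$ otherwise. The saturated set is then $\tctx' \eqdef \bigcup_{n \geq 0} \tctx_n$, which contains $\tctx$ by construction.

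The verification would rest on the standard finiteness principle: since any derivation uses only finitely many hypotheses, $\tctx' \vdash \ev$ holds iff $\tctx_n \vdash \ev$ holds for some $n$. This immediately yields $\tctx' \nvdash \evtwo$, and hence consistency of $\tctx'$. Closure by deduction I would prove by contradiction: given $\tctx' \vdash \ev_n$, if $\ev_n \notin \tctx_n$ then the blocking condition fired at step $n$, so $\tctx_{n-1}, \ev_n \vdash \evtwo$; combining this with $\tctx_m \vdash \ev_n$ for some $m$ (obtained by finiteness) via $\Cut$ at $\tctx_{\max(n, m+1) - 1}$ yields $\evtwo$, contradicting the invariant.

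The hard part, and the reason the invariant is phrased in terms of non-derivability of $\evtwo$ rather than bare consistency, is the disjunctive property. I would argue as follows. Suppose $(\typ\lor\typtwo)\pp \in \tctx'$ but neither $\typ\PP$ nor $\typtwo\PP$ belongs to $\tctx'$, and let $i,j$ be the indices of $\typ\PP$ and $\typtwo\PP$ in the enumeration. By the blocking condition at each of the two steps, $\tctx_{i-1}, \typ\PP \vdash \evtwo$ and $\tctx_{j-1}, \typtwo\PP \vdash \evtwo$. Choosing $N$ large enough so that $\tctx_{N-1}$ extends both weaker contexts and contains the disjunction, rule $\Eorp$ delivers $\tctx_{N-1} \vdash \evtwo$, contradicting the invariant. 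The case $(\typ\land\typtwo)\nn \in \tctx'$ is strictly symmetric, using $\Eandn$ in place of $\Eorp$, and these are the only two situations demanded by the definition of prime theory.
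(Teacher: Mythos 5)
Your proposal is correct, but it follows a different (and more classical) route than the paper. You run the standard Lindenbaum construction: enumerate \emph{all} propositions and add $\ev_n$ whenever doing so keeps $\evtwo$ underivable. This makes closure by deduction essentially automatic (via the blocking condition plus $\Cut$) and concentrates all the work in the disjunctive property, which you discharge by observing that if both $\tctx_{N-1},\typ\PP\vdash\evtwo$ and $\tctx_{N-1},\typtwo\PP\vdash\evtwo$ held then $\Eorp$ (resp.\ $\Eandn$) would give $\tctx_{N-1}\vdash\evtwo$. The paper instead builds the chain by repairing only the \emph{failures of the disjunctive property}: at stage $n$ it locates the first derivable $(\typ\lor\typtwo)\pp$ (or $(\typ\land\typtwo)\nn$) lacking a witness and adds one of $\typ\PP,\typtwo\PP$, using the very same $\Eorp$/$\Eandn$ observation to see that at least one choice preserves $\tctx_n\nvdash\evtwo$; it then takes $\tctx'$ to be the union augmented with all derivable \emph{strong} propositions. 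The price of that leaner construction is that closure by deduction for classical propositions is no longer free: the paper has to convert a derivable $\typ\PP$ into the derivable disjunction $(\typ\lor\typ)\pp$ via $\Iorp$ so that the witnessing process is eventually forced to add $\typ\PP$. So the two proofs rely on the same key logical fact --- that $\Eorp$ and $\Eandn$ conclude an arbitrary proposition, which is what makes a saturation argument relative to $\evtwo$ (rather than relative to inconsistency) go through --- but distribute the bookkeeping differently; your version is arguably simpler to verify, while the paper's adds fewer propositions to the prime theory.
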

\begin{proof}
\SeeAppendix{\rlem{appendix:kripke_saturation} in the appendix.}
\end{proof}

\begin{definition}[Canonical model]
The {\em canonical model} is the structure
$\kripmod_0 = (\wlset_0,\subseteq,\wlpos{},\wlneg{})$, where:
\begin{enumerate}
\item $\wlset_0$ is the set of all prime theories.
\item $\subseteq$ denotes the set-theoretic inclusion between prime theories.
\item $\wlpos{\tctx} = \set{\btyp \ST \btyp\pp \in \tctx}$
      and
      $\wlneg{\tctx} = \set{\btyp \ST \btyp\nn \in \tctx}$.
\end{enumerate}
\end{definition}

\begin{lemma}
The canonical model is a Kripke model.
\end{lemma}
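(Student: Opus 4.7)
The plan is to verify the two structural axioms from \rdef{kripke_model}, since $\subseteq$ is obviously a partial order on the set $\wlset_0$ of prime theories. Monotonicity is immediate from the membership-based definition of $\wlpos{\tctx}$ and $\wlneg{\tctx}$: if $\tctx \subseteq \tctx'$ then $\btyp\pp \in \tctx$ implies $\btyp\pp \in \tctx'$, so $\wlpos{\tctx} \subseteq \wlpos{\tctx'}$, and analogously for $\wlneg{}$. No use of any non-trivial closure property of prime theories is needed here.

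The real work is in establishing stabilization. Given a prime theory $\tctx$ and a propositional variable $\btyp$, I would first prove the key auxiliary claim that at least one of the sets $\tctx \cup \set{\btyp\pp}$ and $\tctx \cup \set{\btyp\nn}$ is consistent. The argument goes by contradiction. If $\tctx \cup \set{\btyp\pp}$ is inconsistent, then in particular $\tctx, \btyp\pp \vdash \btyp\NN$; the projection lemma~(\rlem{projection_lemma}) lifts this to $\tctx, \btyp\PP \vdash \btyp\NN$, and classical strengthening~(\rlem{admissible_rules_logic}) then discharges the hypothesis to give $\tctx \vdash \btyp\NN$. By a symmetric argument, if $\tctx \cup \set{\btyp\nn}$ is also inconsistent, then $\tctx \vdash \btyp\PP$. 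Combining the two via generalized absurdity would make $\tctx$ itself inconsistent, contradicting that $\tctx$ is a prime theory.

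Once this claim is in hand, assume without loss of generality that $\tctx \cup \set{\btyp\pp}$ is consistent, and pick some $\evtwo$ with $\tctx, \btyp\pp \nvdash \evtwo$. Applying the saturation lemma~(\rlem{kripke_saturation}) to $\tctx \cup \set{\btyp\pp}$ produces a prime theory $\tctx' \supseteq \tctx \cup \set{\btyp\pp}$ with $\tctx' \nvdash \evtwo$. Then $\btyp\pp \in \tctx'$, and $\btyp\nn \notin \tctx'$, since otherwise the $\Abs$ rule applied to $\btyp\pp$ and its strong opposite $\btyp\nn$ would make $\tctx'$ derive every proposition, contradicting $\tctx' \nvdash \evtwo$. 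This gives $\btyp \in \wlpos{\tctx'} \setminus \wlneg{\tctx'}$, hence $\btyp \in \wlpos{\tctx'} \symdiff \wlneg{\tctx'}$, as required.

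The main obstacle is the auxiliary claim. One cannot obtain it directly by splitting on excluded middle for $\btyp$, since the strong law $(\btyp \lor \neg\btyp)\pp$ is not valid in $\PRK$ (cf.\ \rexample{counter_model_lem}). The crucial manoeuvre is to move from the strong setting to the classical one by the projection lemma, apply classical strengthening to close the resulting loop $\tctx,\btyp\PP \vdash \btyp\NN$, and only then combine the two halves at the classical level.
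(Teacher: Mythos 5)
Your proof is correct and follows essentially the same route as the paper's: monotonicity is immediate, and stabilization is reduced to the claim that $\tctx\cup\set{\btyp\pp}$ and $\tctx\cup\set{\btyp\nn}$ cannot both be inconsistent, established via the projection lemma and then combined with saturation to produce the required prime theory. The only (harmless) difference is in how that auxiliary claim is discharged: you apply classical strengthening directly to $\tctx,\btyp\PP\vdash\btyp\NN$, whereas the paper's consistent-extension lemma routes through $\Bot\PP$, contraposition, and a cut against the provable $\Bot\NN$.
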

\begin{proof}
\SeeAppendix{\rlem{appendix:canonical_model_is_kripke} in the appendix.}
The difficult part is proving the stabilization property,
which relies on the fact that if $\tctx$ is a consistent
set and $\ev$ is a proposition,
then $\tctx \cup \set{\ev}$ and $\tctx \cup \set{\ev\OP}$
are not both inconsistent.
\end{proof}

\begin{lemma}[Main Semantic Lemma]
\llem{kripke_main_semantic_lemma}
Let $\tctx$ be a prime theory.
Then
$\kripforce[\kripmod_0]{\tctx}{\ev}$ holds in the canonical model
if and only if $\ev \in \tctx$.
\end{lemma}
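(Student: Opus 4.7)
The plan is to proceed by induction on the measure $\#(\ev)$ defined just before \rdef{kripke_forcing}, treating each syntactic form of $\ev$ in turn. The induction is well-founded because every clause in the definition of forcing relates $\kripforce{\tctx}{\ev}$ to forcings of strictly smaller measure; in particular, the clauses for $\typ\PP$ and $\typ\NN$ use only $\typ\nn$ and $\typ\pp$ respectively, and the clauses for $\land,\lor,\neg$ applied to strong-sign propositions use only classical projections of subpropositions.

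In the base case $\ev = \btyp\pp$, forcing unfolds by definition of the canonical model to $\btyp \in \wlpos{\tctx}$, which is exactly $\btyp\pp \in \tctx$; the case $\ev = \btyp\nn$ is symmetric. For each binary connective and each sign, the equivalence reduces, via the IH on the classical projections of the immediate subformulae, to a statement about which of those projections lie in $\tctx$. The ``conjunctive'' forms $(\typ\land\typtwo)\pp$ and $(\typ\lor\typtwo)\nn$ are handled by $\Iandp$/$\Eandp$ and $\Iorn$/$\Eorn$ together with closure under deduction. The ``disjunctive'' forms $(\typ\lor\typtwo)\pp$ and $(\typ\land\typtwo)\nn$ are the point where the \emph{disjunctive property} of the prime theory $\tctx$ is essential: the forward direction of the lemma uses $\Iorp$/$\Iandn$ plus closure, while the converse uses the disjunctive property to split $\tctx$. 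The negation cases are immediate from $\Inotp$/$\Enotp$ and $\Inotn$/$\Enotn$.

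The hard case is $\ev = \typ\PP$ (the case $\typ\NN$ is dual). Unfolding the definition, $\kripforce{\tctx}{\typ\PP}$ is equivalent to the statement that $\kripnotforce{\tctx'}{\typ\nn}$ for every prime theory $\tctx' \supseteq \tctx$; by the IH on $\typ\nn$ this becomes: no prime extension of $\tctx$ contains $\typ\nn$. For the forward direction, assume $\typ\PP \in \tctx$ and suppose some prime $\tctx' \supseteq \tctx$ satisfies $\typ\nn \in \tctx'$; then $\tctx' \vdash \typ\NN$ by $\PrCon$, hence $\tctx' \vdash \typ\pp$ by $\Ecp$, and then $\Abs$ applied to $\typ\pp$ and $\typ\nn$ makes $\tctx'$ inconsistent, contradicting primality. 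For the converse, assume $\typ\PP \notin \tctx$, so $\tctx \nvdash \typ\PP$ by closure. I claim $\tctx \cup \{\typ\nn\}$ is consistent: otherwise it would in particular derive $\typ\PP$, and applying the projection lemma (\rlem{projection_lemma}) to that derivation, using $\trunc{\typ\nn} = \typ\NN$ and $\trunc{\typ\PP} = \typ\PP$, would yield $\tctx, \typ\NN \vdash \typ\PP$; classical strengthening ($\ClStr$) would then give $\tctx \vdash \typ\PP$, contradicting the assumption. Consistency in hand, the saturation lemma (\rlem{kripke_saturation}) supplies a prime theory $\tctx' \supseteq \tctx \cup \{\typ\nn\}$, which is the required witness.

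The main obstacle is precisely this converse of the classical case: one must fabricate a prime extension of $\tctx$ in which $\typ\nn$ holds, and the argument requires simultaneously the projection lemma (to move a hypothetical derivation from $\typ\nn$ to $\typ\NN$), classical strengthening (to discharge the added $\typ\NN$), and saturation (to close the resulting consistent set into a prime theory). Every other case is a direct bookkeeping argument driven by the definition of forcing, the inference rules of $\PRK$, and the defining properties of prime theories.
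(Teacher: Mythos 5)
Your proposal is correct and follows essentially the same route as the paper: induction on the measure $\#(\ev)$, the connective cases discharged by closure under deduction and the disjunctive property, and the classical case handled by combining the projection lemma with saturation to build a prime extension containing $\typ\nn$. The only (harmless) deviation is in the consistency claim for $\tctx\cup\{\typ\nn\}$, where you apply classical strengthening directly to $\tctx,\typ\NN \vdash \typ\PP$, whereas the paper reaches the same conclusion via contraposition and a cut against the provable $\Bot\NN$; both are admissible by \rlem{admissible_rules_logic}.
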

\begin{proof}
\SeeAppendix{\rlem{appendix:kripke_main_semantic_lemma} in the appendix.}
By induction on the measure $\#(\ev)$.
The difficult case is when $\ev$ is a classical proposition,
which requires resorting to the Saturation lemma~(\rlem{kripke_saturation}).
\end{proof}

\begin{theorem}[Completeness]
\lthm{kripke_completeness}
If $\kripentails{\tctx}{\ev}$ then $\tctx \vdash \ev$.
\end{theorem}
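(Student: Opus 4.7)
The plan is the standard Henkin-style contrapositive argument, using the machinery already assembled (Saturation, the canonical model, and the Main Semantic Lemma). First I would assume $\tctx \nvdash \ev$ and construct a Kripke model and a world that witnesses the failure of $\kripentails{\tctx}{\ev}$, so that contraposition yields the theorem.

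Concretely, from $\tctx \nvdash \ev$ I first observe that $\tctx$ must be consistent: if it were inconsistent, then by definition $\tctx \vdash \evtwo$ for every proposition $\evtwo$, contradicting $\tctx \nvdash \ev$. Then I invoke the Saturation lemma (\rlem{kripke_saturation}) on the consistent set $\tctx$ together with the proposition $\ev$ to obtain a prime theory $\tctx'$ with $\tctx \subseteq \tctx'$ and $\tctx' \nvdash \ev$. Since prime theories are closed under deduction, this yields in particular $\ev \notin \tctx'$.

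Now I work inside the canonical model $\kripmod_0$, where $\tctx'$ is a world. The Main Semantic Lemma (\rlem{kripke_main_semantic_lemma}) provides the crucial equivalence $\kripforce[\kripmod_0]{\tctx'}{\evtwo} \iff \evtwo \in \tctx'$ for every proposition $\evtwo$. Applying this to each $\evtwo \in \tctx$, the inclusion $\tctx \subseteq \tctx'$ immediately gives $\kripforce[\kripmod_0]{\tctx'}{\evtwo}$, hence $\kripforcefull[\kripmod_0]{\tctx'}{\tctx}$. Applying the same equivalence to $\ev$ together with $\ev \notin \tctx'$ gives $\kripnotforce[\kripmod_0]{\tctx'}{\ev}$. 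This exhibits a specific Kripke model and world at which $\tctx$ is forced but $\ev$ is not, which is precisely the negation of $\kripforcefull{\tctx}{\ev}$ for $\kripmod_0$, and therefore the negation of $\kripentails{\tctx}{\ev}$.

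There is no real obstacle at this stage, since all the heavy lifting has been done upstream: the subtle stabilization property of $\kripmod_0$, the classical cases of the Main Semantic Lemma (which rely on Saturation), and the construction of prime theories are already available as black boxes. The only thing the proof of completeness itself must do is assemble them in the correct order, which is exactly the three-line argument sketched above.
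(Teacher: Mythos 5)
Your proposal is correct and follows exactly the same route as the paper's proof: contraposition, consistency of $\tctx$, Saturation to obtain a prime theory $\tctx' \supseteq \tctx$ with $\tctx' \nvdash \ev$, and then two applications of the Main Semantic Lemma in the canonical model to conclude $\kripforcefull[\kripmod_0]{\tctx'}{\tctx}$ and $\kripnotforce[\kripmod_0]{\tctx'}{\ev}$. The small extra justifications you supply (why $\tctx$ is consistent, and that closure under deduction converts $\tctx' \nvdash \ev$ into $\ev \notin \tctx'$) are steps the paper leaves implicit but are exactly right.
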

\begin{proof}
The proof is by contraposition,
\ie let $\tctx \nvdash \ev$
and let us show that
there is a Kripke model $\kripmod$
and a world $\wl$ such that $\kripforcefull{\wl}{\tctx}$
but $\kripnotforce{\wl}{\ev}$.
Note that $\tctx$ is consistent, so
by Saturation~(\rlem{kripke_saturation})
there exists a prime theory $\tctx' \supseteq \tctx$
such that $\tctx' \nvdash \ev$.
Note that $\kripforcefull[\kripmod_0]{\tctx'}{\tctx}$
because, by the Main~Semantic~Lemma~(\rlem{kripke_main_semantic_lemma}),
we have that $\kripforce[\kripmod_0]{\tctx'}{\evtwo}$
for every $\evtwo \in \tctx \subseteq \tctx'$.
Moreover, also by the Main~Semantic~Lemma~(\rlem{kripke_main_semantic_lemma}),
we have that $\kripnotforce[\kripmod_0]{\tctx'}{\ev}$ because $\ev \notin \tctx'$.
\end{proof}

\section{Propositions as types: the $\lambdaC$-Calculus}
\lsec{prk_lambdaC}

In this section, we formulate a typed $\lambda$-calculus, called $\lambdaC$,
by deriving a system of term assignment for derivations in $\PRK$,
and furnishing it with reduction rules.
Besides the basic results of confluence~(\rprop{lambdaC_confluent})
and subject reduction~(\rprop{subject_reduction})
the central result in this section is a translation~(\rdef{semF_translation_types},
\rdef{semF_translation})
from $\PRK$ to System~F extended with recursive type constraints,
following Mendler~\cite{mendler1991inductive}.
The translation maps each type $\ev$ of $\PRK$
to a type $\semF{\ev}$,
and each term $\tm$ of type $\ev$ to a term $\semF{\tm}$ of type $\semF{\ev}$.
Recursion is needed in order to be able to translate classical propositions,
which are characterized by the recursive equations discussed
in the introduction, $\typ\PP \approx (\typ\NN \to \typ\pp)$
and its dual $\typ\NN \approx (\typ\PP \to \typ\nn)$.
Moreover the translation is such that each reduction step $\tm \to \tmtwo$
in $\lambdaC$
is simulated in one or more steps $\semF{\tm} \to^+ \semF{\tmtwo}$
in the extended System~F.
This translation provides a {\em syntactical model} for $\lambdaC$ in the
sense of~\cite{DBLP:conf/cpp/BoulierPT17},
and one of its consequences is that $\lambdaC$ is strongly normalizing~(\rthm{lambdaC_canonical}).
This allows us to prove {\em canonicity}~(\rthm{canonicity}),
for which we study an inductive characterization of the set of normal forms.
Finally, we consider an extension $\lambdaCeta$ of the system that
incorporates an extensionality rule for classical proofs~(\rdef{lambdaCeta_calculus},
\rthm{lambdaCeta_canonical}).

Propositions $\ev,\evtwo,\hdots$
are sometimes also called {\em types} in this section.
We assume given a denumerable set of variables
$\var,\vartwo,\varthree,\hdots$.
The set of {\em typing contexts} is defined by the grammar
$\tctx  ::= \emptyctx \mid \tctx,\var:\ev$,
where each variable is assumed to occur at most once in a typing context.
Typing contexts are considered implicitly up to reordering\footnote{Remark
that the type system $\lambdaC$ is a {\em refinement}
of the logical system $\PRK$
because contexts are {\em multisets}, rather than {\em sets}, of
assumptions:
there is no structural rule of contraction in $\lambdaC$.
This means, for example, that in $\lambdaC$ there are
two different derivations of the sequent $\ev,\ev \vdash \ev$ using the $\Ax$ rule,
depending on which one of the two assumptions is used,
whereas in $\PRK$ there is only one such proof.
This is a typical situation in a propositions-as-types setting.}

The set of terms is given by the following abstract syntax.
The letter $i$ ranges over the set $\set{1,2}$.
Some terms are decorated with a plus or a minus sign.
In the grammar we write ``${}^{\pm}$'' to stand
for either ``$\pp$'' or ``$\nn$''.
\[
{\small
\begin{array}{rrll}
  \tm,\tmtwo,\tmthree,\hdots
    & ::= & \var
    & \text{variable}
  \\
    & \mid & \strongabs{\ev}{\tm}{\tmtwo}
    & \text{absurdity}
  \\
    & \mid & \pairpn{\tm}{\tmtwo}
    & \text{$\land\pp$ / $\lor\nn$ introduction}
  \\
    & \mid & \projipn{\tm}
    & \text{$\land\pp$ / $\lor\nn$ elimination}
  \\
    & \mid & \inipn{\tm}
    & \text{$\lor\pp$ / $\land\nn$ introduction}
  \\
    & \mid & \casepn{\tm}{\var:\ev}{\tmtwo}{\vartwo:\evtwo}{\tmthree}
    & \text{$\lor\pp$ / $\land\nn$ elimination}
  \\
    & \mid & \negipn{\tm}
    & \text{$\neg\pp$ / $\neg\nn$ introduction}
  \\
    & \mid & \negepn{\tm}
    & \text{$\neg\pp$ / $\neg\nn$ elimination}
  \\
    & \mid & \claslampn{(\var:\ev)}{\tm}
    & \text{classical introduction}
  \\
    & \mid & \clasappn{\tm}{\tmtwo}
    & \text{classical elimination}
\end{array}
}
\]
The notions of free and bound occurrences of variables are defined as
expected considering that
$\casepn{\tm}{\var:\ev}{\tmtwo}{\vartwo:\evtwo}{\tmthree}$
binds occurrences of $\var$ in $\tmtwo$
and occurrences of $\vartwo$ in $\tmthree$,
whereas $\claslampn{\var:\ev}{\tm}$ binds occurrences of $\var$ in $\tm$.
We implicitly work modulo $\alpha$-renaming of bound variables.
We write $\fv{\tm}$ for the set of free variables of $\tm$,
and $\tm\sub{\var}{\tmtwo}$ for the capture-avoiding substitution
of $\var$ by $\tmtwo$ in $\tm$.
Sometimes we omit type decorations if they are irrelevant or clear
from the context,
for example, we may write $\claslamp{\var}{\tm}$
rather than $\claslamp{(\var:\typ\NN)}{\tm}$,
and $\strongabs{}{\tm}{\tmtwo}$
rather than $\strongabs{\ev}{\tm}{\tmtwo}$.
Sometimes we also omit the name
of unused bound variables, writing ``$\under$'' instead;
\eg if $\var \not\in \fv{\tm}$
we may write $\claslamp{\under}{\tm}$ rather than
$\claslamp{(\var:\typ\NN)}{\tm}$.

\begin{definition}[The $\lambdaC$ type system]
\ldef{lambdaC_type_system}
Typing judgments are of the form $\tctx \vdash \tm : \ev$.
Derivability of judgments is defined inductively
by the following typing rules:
\[
{\small
\indrule{\Ax}{
}{
  \tctx,\var:\ev \vdash \var:\ev
}
\indrule{\Abs}{
  \tctx \vdash \tm : \ev
  \Hs
  \tctx \vdash \tmtwo : \ev\OP
  \Hs
  \text{$\ev$ strong}
}{
  \tctx \vdash \strongabs{\evtwo}{\tm}{\tmtwo} : \evtwo
}
}
\]
\[
{\small
\indrule{\Iandp}{
  \tctx \vdash \tm : \typ\PP
  \Hs
  \tctx \vdash \tmtwo : \typtwo\PP
}{
  \tctx \vdash \pairp{\tm}{\tmtwo} : (\typ \land \typtwo)\pp
}
\indrule{\Iorn}{
  \tctx \vdash \tm : \typ\NN
  \Hs
  \tctx \vdash \tmtwo : \typtwo\NN
}{
  \tctx \vdash \pairn{\tm}{\tmtwo} : (\typ \lor \typtwo)\nn
}
}
\]
\[
{\small
\indrule{\Eandp}{
  \tctx \vdash \tm : (\typ_1 \land \typ_2)\pp
  \Hs i \in \set{1, 2}
}{
  \tctx \vdash \projip{\tm} : \typ_i\PP
}
}
  \] %
  \[ %
{\small
\indrule{\Eorn}{
  \tctx \vdash \tm : (\typ_1 \lor \typ_2)\nn
  \Hs i \in \set{1, 2}
}{
  \tctx \vdash \projin{\tm} : \typ_i\NN
}
}
\]
\[
{\small
\indrule{\Iorp}{
  \tctx \vdash \tm : \typ_i\PP
  \Hs i \in \set{1, 2}
}{
  \tctx \vdash \inip{\tm} : (\typ_1 \lor \typ_2)\pp
}
\indrule{\Iandn}{
  \tctx \vdash \tm : \typ_i\NN
  \Hs i \in \set{1, 2}
}{
  \tctx \vdash \inin{\tm} : (\typ_1 \land \typ_2)\nn
}
}
\]
\[
{\small
\indrule{\Eorp}{
  \tctx \vdash \tm : (\typ \lor \typtwo)\pp
  \Hs
  \tctx, \var:\typ\PP \vdash \tmtwo : \ev
  \Hs
  \tctx, \vartwo:\typtwo\PP \vdash \tmthree : \ev
}{
  \tctx \vdash \casep{\tm}{\var:\typ\PP}{\tmtwo}{\vartwo:\typtwo\PP}{\tmthree}
             : \ev
}
}
  \] %
  \[ %
{\small
\indrule{\Eandn}{
  \tctx \vdash \tm : (\typ \land \typtwo)\nn
  \Hs
  \tctx, \var:\typ\NN \vdash \tmtwo : \ev
  \Hs
  \tctx, \vartwo:\typtwo\NN \vdash \tmthree : \ev
}{
  \tctx \vdash \casen{\tm}{\var:\typ\NN}{\tmtwo}{\vartwo:\typtwo\NN}{\tmthree}
             : \ev
}
}
\]
\[
{\small
\indrule{\Inotp}{
  \tctx \vdash \tm : \typ\NN
}{
  \tctx \vdash \negip{\tm} : (\neg\typ)\pp
}
\indrule{\Inotn}{
  \tctx \vdash \tm : \typ\PP
}{
  \tctx \vdash \negin{\tm} : (\neg\typ)\nn
}
}
\]
\[
{\small
\indrule{\Enotp}{
  \tctx \vdash \tm : (\neg\typ)\pp
}{
  \tctx \vdash \negep{\tm} : \typ\NN
}
\indrule{\Enotn}{
  \tctx \vdash \tm : (\neg\typ)\nn
}{
  \tctx \vdash \negen{\tm} : \typ\PP
}
}
\]
\[
{\small
\indrule{\Icp}{
  \tctx, \var : \typ\NN \vdash \tm : \typ\pp
}{
  \tctx \vdash \claslamp{(\var:\typ\NN)}{\tm} : \typ\PP
}
\indrule{\Icn}{
  \tctx, \var : \typ\PP \vdash \tm : \typ\nn
}{
  \tctx \vdash \claslamp{(\var:\typ\PP)}{\tm} : \typ\NN
}
}
\]
\[
{\small
\indrule{\Ecp}{
  \tctx \vdash \tm : \typ\PP
  \Hs
  \tctx \vdash \tmtwo : \typ\NN
}{
  \tctx \vdash \clasapp{\tm}{\tmtwo} : \typ\pp
}
\indrule{\Ecn}{
  \tctx \vdash \tm : \typ\NN
  \Hs
  \tctx \vdash \tmtwo : \typ\PP
}{
  \tctx \vdash \clasapn{\tm}{\tmtwo} : \typ\nn
}
}
\]
\end{definition}

\bigskip

\begin{remark}
Each typing rule in $\lambdaC$~(\rdef{lambdaC_type_system})
corresponds exactly
to the rule of the same name in $\PRK$~(\rdef{system_prk}).
It is immediate to show that
$\ev_1,\hdots,\ev_n \vdash \evtwo$ is derivable in $\PRK$
if and only if
$\var_1:\ev_1,\hdots,\var_n:\ev_n \vdash \tm : \evtwo$
is derivable in $\lambdaC$ for some term $\tm$.
\end{remark}

We begin by studying properties of $\lambdaC$ from the {\em logical} point
of view, as a type system. In particular, the following lemma adapts
some of the results in~\rlem{admissible_rules_logic}
and~\rexample{lem_and_noncontr} to $\lambdaC$, providing explicit proof
terms for derivations.

\begin{lemma}
\llem{admissible_rules}
\llem{lem_and_noncontr}
\label{lemma:admissible_rules}
The following rules are admissible in $\lambdaC$:
\begin{enumerate}
\item {\bf Weakening} ($\Weakening$):
  If $\tctx \vdash \tm : \ev$
  and $\var \not\in \fv{\tm}$
  then $\tctx, \var:\evtwo \vdash \tm : \ev$.
\item {\bf Cut} ($\Cut$):
  \label{lemma:admissible_rules:subst}
  if $\tctx,\var:\ev \vdash \tm : \evtwo$
  and $\tctx \vdash \tmtwo : \ev$
  then $\tctx \vdash \tm\sub{\var}{\tmtwo} : \evtwo$.
\item {\bf Generalized absurdity} ($\Abs'$):
  if $\tctx \vdash \tm : \ev$
  and $\tctx \vdash \tmtwo : \ev\OP$,
  where $\ev$ is not necessarily strong,
  there is a term $\abs{\evtwo}{\tm}{\tmtwo}$
  such that $\tctx \vdash \abs{\evtwo}{\tm}{\tmtwo} : \evtwo$.
\item {\bf Contraposition} ($\Contrapose$):
  if $\ev$ is classical and
  $\tctx, \var : \ev \vdash \tm : \evtwo$,
  there is a term $\contrapose{\var}{\vartwo}{\tm}$
  such that
  $\tctx, \vartwo : \evtwo\OP \vdash \contrapose{\var}{\vartwo}{\tm} : \ev\OP$.
\item {\bf Excluded middle}:
  there is a term $\lemP{\typ}$
  such that $\tctx \vdash \lemP{\typ} : (\typ \lor \neg\typ)\PP$.
\item {\bf Non-contradiction}:
  there is a term $\lemN{\typ}$
  such that $\tctx \vdash \lemN{\typ} : (\typ \land \neg\typ)\NN$.
\end{enumerate}
\end{lemma}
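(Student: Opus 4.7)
The six items split naturally into two groups: the first four are term-level refinements of the corresponding items in \rlem{admissible_rules_logic}, while the last two are term-level refinements of \rexample{lem_and_noncontr}. In every case the goal is to exhibit an explicit term witness, and the proofs essentially consist of ``decorating'' the pre-existing logical derivations.

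For \textbf{weakening} and \textbf{cut}, my plan is a straightforward induction on the typing derivation $\tctx \vdash \tm : \ev$ (equivalently, on the structure of $\tm$). Weakening just threads the new hypothesis $\var:\evtwo$ through every rule; since $\var \notin \fv{\tm}$ no clash arises. For cut, the substitution $\tm\sub{\var}{\tmtwo}$ is defined as usual on the syntax of terms, and the typing is preserved at each case because the typing rules are syntax-directed and every non-\Ax\ rule commutes with substitution; the \Ax\ case is where $\tmtwo$ is actually inserted.

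For \textbf{generalized absurdity}, I define $\abs{\evtwo}{\tm}{\tmtwo}$ by cases on $\ev$, mirroring the proof of \rlem{admissible_rules_logic}(4). If $\ev$ is strong, set $\abs{\evtwo}{\tm}{\tmtwo} \eqdef \strongabs{\evtwo}{\tm}{\tmtwo}$. If $\ev = \typ\PP$, set $\abs{\evtwo}{\tm}{\tmtwo} \eqdef \strongabs{\evtwo}{\clasapp{\tm}{\tmtwo}}{\clasapn{\tmtwo}{\tm}}$, which is well-typed because $\Ecp$ gives $\clasapp{\tm}{\tmtwo} : \typ\pp$ and $\Ecn$ gives $\clasapn{\tmtwo}{\tm} : \typ\nn$, so $\Abs$ applies. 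The case $\ev = \typ\NN$ is symmetric. For \textbf{contraposition}, I follow the same pattern as in \rlem{admissible_rules_logic}(6): when $\ev = \typ\PP$, set $\contrapose{\var}{\vartwo}{\tm} \eqdef \claslamp{(\var:\typ\PP)}{\abs{\typ\nn}{\tm}{\vartwo}}$, which under $\tctx, \vartwo:\evtwo\OP$ has type $\typ\NN = \ev\OP$ by $\Icn$; the negative case is symmetric.

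For \textbf{excluded middle} and \textbf{non-contradiction}, the plan is simply to read off the proof terms from the two derivations $\derivation$ and its continuation displayed in \rexample{lem_and_noncontr}, assigning variable names to the two assumptions $(\typ\lor\neg\typ)\NN$ and $(\neg\typ)\NN$ and decorating every inference with the corresponding term constructor. Schematically, $\lemP{\typ}$ is of the form
\[
\claslamp{(\vartwo:(\typ\lor\neg\typ)\NN)}{\inip[2]{\claslamp{\under}{\negip{\lemPinner{\typ}{\vartwo}}}}}
\]
where $\lemPinner{\typ}{\vartwo}$ is the inner subterm that, under the additional hypothesis $(\neg\typ)\NN$, derives $\typ\NN$ via $\Eorn$ applied to $\clasapn{\vartwo}{\cdot}$ and the symmetric $\inip[1]$-branch; the exact shape is forced by the derivation in \rexample{lem_and_noncontr}. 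The term $\lemN{\typ}$ is then obtained dually. No genuine logical insight is required beyond bookkeeping.

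The main obstacle is purely notational: the terms for $\lemP{\typ}$ and $\lemN{\typ}$ have a fair amount of nesting (several alternations of $\claslamp$, $\inip$, $\negip$, $\Eorn$/$\Eandn$, and $\Abs'$) and one must be careful to (i) pick fresh variable names for each $\claslamp$-binding, (ii) keep the mode-annotations $\pp,\nn,\PP,\NN$ consistent, and (iii) correctly instantiate the generalized-absurdity shorthand defined in part~(3). Once the bookkeeping is in place, well-typedness follows rule by rule from \rdef{lambdaC_type_system}.
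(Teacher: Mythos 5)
Your proposal follows the paper's proof essentially verbatim: weakening and cut by routine induction on the typing derivation, the same case-split definitions of $\abs{\evtwo}{\tm}{\tmtwo}$ and $\contrapose{\var}{\vartwo}{\tm}$, and $\lemP{\typ}$, $\lemN{\typ}$ obtained by decorating the derivations of \rexample{lem_and_noncontr} with term constructors, which is exactly how the paper's explicit terms arise. The only slip is notational: in your schematic for $\lemP{\typ}$ the second classical abstraction cannot bind ``$\under$'', since its variable of type $(\neg\typ)\NN$ is genuinely used inside the inner subterm (it is the $\vartwo$ occurring in the paper's $\abs{\typ\pp}{\vartwo}{\cdot}$), as your own prose in fact acknowledges.
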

\begin{proof}
{\bf Weakening} and {\bf cut} are routine by induction on the derivation
of the first premise of the rule.
For {\bf generalized absurdity}, it suffices to take:
\[
  \abs{\evtwo}{\tm}{\tmtwo} \eqdef
    \begin{cases}
      \strongabs{\evtwo}{\tm}{\tmtwo}
      & \text{if $\ev$ is strong} \\
      \strongabs{\evtwo}{(\clasapp{\tm}{\tmtwo})}{(\clasapn{\tmtwo}{\tm})}
      & \text{if $\ev = \typ\PP$} \\
      \strongabs{\evtwo}{(\clasapn{\tm}{\tmtwo})}{(\clasapp{\tmtwo}{\tm})}
      & \text{if $\ev = \typ\NN$} \\
    \end{cases}
\]
For {\bf contraposition}, it suffices to take:
\[
  \contrapose{\var}{\vartwo}{\tm} \eqdef
    \begin{cases}
      \claslamn{(\var:\typ\PP)}{
        (\abs{\typ\nn}{
          \tm
        }{ 
          \vartwo
        })
      }
      & \text{if $\ev = \typ\PP$}
    \\
      \claslamn{(\var:\typ\NN)}{
        (\abs{\typ\pp}{
          \tm
        }{ 
          \vartwo
        })
      }
      & \text{if $\ev = \typ\NN$}
    \end{cases}
\]
For {\bf excluded middle}, it suffices to take:
  \[
  \!\!
  {\small
    \begin{array}{r@{\,}c@{\,}l}
    \lemP{\typ} & \eqdef &
      \claslamp{(\var:(\typ\lor\neg\typ)\NN)}{
        \inip[2]{
          \claslamp{(\vartwo:\neg\typ\NN)}{
            \negip{
              \projin[1]{
                \clasapn{
                  \var
                }{
                  \lemPinner{\vartwo}{\typ}
                }
              }
            }
          }
        }
      }
    \\
    \lemPinner{\vartwo}{\typ} & \eqdef &
      \claslamp{(\under:(\typ\lor\neg\typ)\NN)}{
        \inip[1]{
          \claslamp{(\varthree:\typ\NN)}{
            (\abs{
              \typ\pp
            }{
              \vartwo
            }{
              \claslamp{(\under:\neg\typ\NN)}{
                \negip{
                  \varthree
                }
              }
            })
          }
        }
      }
    \end{array}
  }
  \]
Dually, for {\bf non-contradiction}:
  \[
  \!\!
  {\small
    \begin{array}{r@{\,}c@{\,}l}
    \lemN{\typ} & \eqdef &
      \claslamn{(\var:(\typ \land \neg\typ)\PP)}{
        \inin[2]{
          \claslamn{(\vartwo:\neg\typ\PP)}{
            \negin{
              \projip[1]{
                \clasapp{
                  \var
                }{
                  \lemNinner{\vartwo}{\typ}
                }
              }
            }
          }
        }
      }
    \\
    \lemNinner{\vartwo}{\typ} & \eqdef &
      \claslamn{(\under:(\typ \land \neg\typ)\PP)}{
        \inin[1]{
          \claslamn{(\varthree:\typ\PP)}{
            (\abs{
              \typ\nn
            }{
              \vartwo
            }{
              \claslamn{(\under:\neg\typ\PP)}{
                \negin{
                  \varthree
                }
              }
            })
          }
        }
      }
    \end{array}
  }
  \]
\end{proof}

We now turn to studying the {\em computational} properties of $\lambdaC$,
provided with the following notion of reduction:

\begin{definition}[The $\lambdaC$-calculus]
\ldef{the_lambdaC_calculus}
Typable terms of $\lambdaC$ are endowed with the following
rewriting rules, closed under arbitrary contexts.
\[
  {\small
  \begin{array}{rcl@{\HS}l}
    \projipn{\pairpn{\tm_1}{\tm_2}}
    & \toa{\ruleProj} &
    \tm_i
    & \text{if $i \in \set{1,2}$}
  \\
    \casepn{(\inipn{\tm})}{\var}{\tmtwo_1}{\var}{\tmtwo_2}
    & \toa{\ruleCase} &
    \tmtwo_i\sub{\var}{\tm}
    & \text{if $i \in \set{1,2}$}
  \\
    \negepn{(\negipn{\tm})}
    & \toa{\ruleNeg} &
    \tm
  \\
    \clasappn{(\claslampn{\var}{\tm})}{\tmtwo}
    & \toa{\ruleBeta} &
    \tm\sub{\var}{\tmtwo}
  \\
    \strongabs{}{\pairpn{\tm_1}{\tm_2}}{\ininp{\tmtwo}}
    & \toa{\ruleAbsPairInj} &
    \abs{}{\tm_i}{\tmtwo}
    & \text{if $i \in \set{1,2}$}
  \\
    \strongabs{}{\inipn{\tm}}{\pairnp{\tmtwo_1}{\tmtwo_2}}
    & \toa{\ruleAbsInjPair} &
    \abs{}{\tm}{\tmtwo_i}
    & \text{if $i \in \set{1,2}$}
  \\
    \strongabs{}{(\negipn{\tm})}{(\neginp{\tmtwo})}
    & \toa{\ruleAbsNeg} &
    \abs{}{\tm}{\tmtwo}
  \end{array}
  }
\]
If many occurrences of ``$\pm$'' appear in the same expression,
they are all supposed to stand for the same sign (either $+$ or $-$),
and $\mp$ is supposed to stand for the opposite sign.
\end{definition}

\begin{example}
If $\var:\typ\NN \vdash \tm : \typ\pp$
and $\vartwo:\typ\PP \vdash \tmtwo : \typ\nn$
then:
\[
{\small
  \begin{array}{rrl}
  &&
  \strongabs{}{(\negin{(\claslamp{\var}{\tm})})}{(\negip{(\claslamn{\vartwo}{\tmtwo})})}
  \\
  & \longrightarrow &
  \abs{}{(\claslamp{\var}{\tm})}{(\claslamn{\vartwo}{\tmtwo})}
  \\
  & = &
  \strongabs{}{
    (\clasapp{(\claslamp{\var}{\tm})}{(\claslamn{\vartwo}{\tmtwo})})
  }{
    (\clasapn{(\claslamn{\vartwo}{\tmtwo})}{(\claslamp{\var}{\tm})})
  }
  \\
  & \longrightarrow &
  \strongabs{}{
    \tm\sub{\var}{(\claslamn{\vartwo}{\tmtwo})}
  }{
    (\clasapn{(\claslamn{\vartwo}{\tmtwo})}{(\claslamp{\var}{\tm})})
  }
  \\
  & \longrightarrow &
  \strongabs{}{
    \tm\sub{\var}{(\claslamn{\vartwo}{\tmtwo})}
  }{
    \tmtwo\sub{\vartwo}{(\claslamp{\var}{\tm})}
  }
  \end{array}
}
\]
\end{example}
\medskip

A first observation is that $\PRK$'s duality principle~(\rlem{duality_principle})
can be strengthened to obtain a {\bf computational duality principle} for $\lambdaC$.
The proof is immediate given that all typing and reduction rules are symmetric:

\begin{lemma}
If $\tm^\bot$ is the term that results
from flipping all the signs in $\tm$, then
$\tctx \vdash \tm : \ev$ if and only if $\tctx^\bot \vdash \tm^\bot : \ev^\bot$,
and $\tm \toa{} \tmtwo$ if and only if $\tm^\bot \toa{} \tmtwo^\bot$.
\end{lemma}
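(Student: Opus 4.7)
The plan is to first extend the dualization operation $(-)^\bot$, already defined on types and contexts, to an involution on raw terms by structural recursion. Variables are fixed ($\var^\bot = \var$); every constructor has its sign marks flipped and its embedded type annotations dualized. Thus, for example, $(\strongabs{\evtwo}{\tm}{\tmtwo})^\bot = \strongabs{\evtwo^\bot}{\tm^\bot}{\tmtwo^\bot}$, $(\pairpn{\tm}{\tmtwo})^\bot = \pairnp{\tm^\bot}{\tmtwo^\bot}$, $(\claslampn{(\var:\ev)}{\tm})^\bot = \claslamnp{(\var:\ev^\bot)}{\tm^\bot}$, and so on for the remaining constructors. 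Two elementary facts then follow by straightforward induction on terms, and will be used silently throughout: dualization is an involution ($(\tm^\bot)^\bot = \tm$), and it commutes with capture-avoiding substitution, $(\tm\sub{\var}{\tmtwo})^\bot = \tm^\bot\sub{\var}{\tmtwo^\bot}$.

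For the typing equivalence I would prove the forward direction ``if $\tctx \vdash \tm : \ev$ then $\tctx^\bot \vdash \tm^\bot : \ev^\bot$'' by induction on the derivation. Each of the typing rules of \rdef{lambdaC_type_system} is either self-dual (this is the case for $\Ax$ and $\Abs$) or paired with its dual inside the very same system: $\Iandp$ with $\Iorn$, $\Eandp$ with $\Eorn$, $\Iorp$ with $\Iandn$, $\Eorp$ with $\Eandn$, $\Inotp$ with $\Inotn$, $\Enotp$ with $\Enotn$, $\Icp$ with $\Icn$, and $\Ecp$ with $\Ecn$. In every case, applying the induction hypothesis to the premises turns each into its dualized form, and then instantiating the dual rule yields exactly the required conclusion, because the dualization of the term built by the original rule is precisely the term built by the dual rule on the dualized premises. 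The converse direction is then free: if $\tctx^\bot \vdash \tm^\bot : \ev^\bot$, the forward direction applied to that derivation gives $(\tctx^\bot)^\bot \vdash (\tm^\bot)^\bot : (\ev^\bot)^\bot$, which is $\tctx \vdash \tm : \ev$ by involutivity.

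For the reduction equivalence I would proceed by case analysis on the seven rewriting schemes of \rdef{the_lambdaC_calculus}. Each scheme is written with the $\pm/\mp$ convention, so it really encodes two symmetric rule instances; dualization therefore sends a redex of one instance of a scheme to a redex of the other instance of the same scheme, with the contracted right-hand side also matching on the nose. The rules $\ruleProj$, $\ruleCase$ and $\ruleBeta$ are the only ones that introduce a substitution on the right-hand side, and this is precisely where the commutation of $(-)^\bot$ with substitution is used. Since the reduction relation is closed under arbitrary contexts and $(-)^\bot$ is defined compositionally on term structure, head-reduction compatibility lifts to the one-step reduction on all terms, and once more involutivity gives the converse. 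The only real obstacle is the bookkeeping in the initial setup: the action of $(-)^\bot$ on terms must be laid out so that the substitution lemma holds strictly and so that the image of every rewriting rule under dualization lands on another concrete instance of the same scheme; once this is fixed, the rest of the argument is entirely mechanical, which is why the authors call it immediate.
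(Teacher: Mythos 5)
Your proposal is correct and matches the paper's argument: the paper dismisses this lemma as immediate because all typing and reduction rules come in sign-symmetric pairs (or are self-dual), which is exactly the pairing and case analysis you carry out, together with the standard involution and substitution-commutation facts for the term-level dualization.
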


The second computational property that we study is {\bf subject reduction},
also known as {\em type preservation}.
This fundamental property ensures that reduction is well-defined over
the set of typable terms. More precisely:

\begin{proposition}
\lprop{subject_reduction}
If $\tctx \vdash \tm : \ev$
and $\tm \toa{} \tmtwo$, then $\tctx \vdash \tmtwo : \ev$.
\end{proposition}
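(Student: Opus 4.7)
The plan is to induct on the one-step reduction $\tm \toa{} \tmtwo$. Since reduction is the contextual closure of the seven rewriting rules of \rdef{the_lambdaC_calculus}, the congruence cases reduce to an easy structural induction: each typing rule has subterms occurring in unique positions with uniquely determined subtypes, so replacing one subterm by another of the same type yields a valid derivation. The substantive content is therefore to verify preservation at the root for each of the seven rules, by inverting the typing derivation of the left-hand side and building a typing derivation for the right-hand side.

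For the four \emph{local} rules the argument is direct. For $\ruleProj$, inversion on $\projipn{\pairpn{\tm_1}{\tm_2}}$ goes through $\Eandp$/$\Eorn$ and then through $\Iandp$/$\Iorn$, exposing $\tm_i$ already typed at the required component type $\typ_i^{\trunc{(\cdot)}}$. The $\ruleNeg$ rule is analogous, using a double inversion through $\Enotp/\Enotn$ and then $\Inotp/\Inotn$. For $\ruleCase$ and $\ruleBeta$, inversion on $\Eorp/\Eandn$ (resp.\ $\Ecp/\Ecn$) followed by inversion on $\Iorp/\Iandn$ (resp.\ $\Icp/\Icn$) yields both that the body of the binder has the desired type under an extra hypothesis, and that the argument has precisely the type of that hypothesis; the cut/substitution admissible rule from \rlem{admissible_rules}(2) then closes the case.

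The more delicate cases are the three \emph{absurdity-reduction} rules $\ruleAbsPairInj$, $\ruleAbsInjPair$, $\ruleAbsNeg$, where the reduct $\abs{}{\cdot}{\cdot}$ is itself a derived constructor rather than a primitive one. Consider $\ruleAbsPairInj$ as a representative. Inversion on the $\Abs$ rule applied to $\strongabs{\evtwo}{\pairp{\tm_1}{\tm_2}}{\inin{\tmtwo}}$ yields $\tctx \vdash \pairp{\tm_1}{\tm_2} : (\typ_1 \land \typ_2)\pp$ and $\tctx \vdash \inin{\tmtwo} : (\typ_1 \land \typ_2)\nn$. Inverting $\Iandp$ gives $\tctx \vdash \tm_j : \typ_j\PP$ for $j \in \set{1,2}$, while inverting $\Iandn$ gives $\tctx \vdash \tmtwo : \typ_i\NN$ for some $i \in \set{1,2}$ determined by the form of $\inin{\tmtwo}$. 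Since $\typ_i\PP$ is \emph{classical} (not strong), we cannot apply the primitive $\Abs$ rule directly; instead we invoke the generalized absurdity admissible rule of \rlem{admissible_rules}(3), which produces exactly the term $\abs{\evtwo}{\tm_i}{\tmtwo}$ of type $\evtwo$. The cases of $\ruleAbsInjPair$ and $\ruleAbsNeg$ follow by identical reasoning, and in fact the two halves of each rule are related by the computational duality principle, cutting the bookkeeping in half.

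The main obstacle I anticipate is organizing the inversion lemmas cleanly: the $\pm$ notation in the grammar means that many rules cover two symmetric cases, and the correct classical/strong typing of each subterm must be tracked carefully so that the distinction between the primitive $\strongabs{}{\cdot}{\cdot}$ and the derived $\abs{}{\cdot}{\cdot}$ is not glossed over. Once the admissible rules of \rlem{admissible_rules} are available, however, no new logical machinery is required, and subject reduction follows by a case analysis that is routine in structure.
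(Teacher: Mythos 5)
Your proposal is correct and follows essentially the same route as the paper's proof: reduce to the root cases by contextual induction, handle the local rules by inversion (with the admissible cut rule for $\ruleCase$ and $\ruleBeta$), and handle the three absurdity rules by observing that the exposed subterms are classical, so the derived generalized-absurdity rule $\Abs'$ of \rlem{admissible_rules} must be invoked to type the reduct $\abs{\evtwo}{\tm_i}{\tmtwo}$. The paper likewise exploits the sign symmetry to treat only one of each pair of $\pm$ cases, so nothing is missing.
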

\begin{proof}
The core of the proof consists in checking that each rewriting rule
preserves the type of the term.
\SeeAppendix{See~\rsec{appendix:subject_reduction} in the appendix
for the proof.}
\end{proof}

Third, the $\lambdaC$-calculus enjoys {\bf confluence},
the basic property of a rewriting system stating
that given reduction sequences $\tm_0 \to^* \tm_1$
and $\tm_0 \to^* \tm_2$
there must exist a term $\tm_3$ such that
$\tm_1 \to^* \tm_3$ and $\tm_2 \to^* \tm_3$.

\begin{proposition}
\lprop{lambdaC_confluent}
The $\lambdaC$-calculus is confluent.
\end{proposition}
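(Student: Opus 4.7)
The plan is to follow the standard Tait--Martin-L\"of strategy via parallel reduction, since strong normalization is only established later and so Newman's lemma is not yet available. I would first define a parallel reduction relation $\Rrightarrow$ on $\lambdaC$-terms that closes each term constructor congruentially and, in addition, permits the simultaneous contraction of any number of redex occurrences according to the seven rewriting schemes \ruleProj, \ruleCase, \ruleNeg, \ruleBeta, \ruleAbsPairInj, \ruleAbsInjPair, and \ruleAbsNeg. It is immediate that $\to\;\subseteq\;\Rrightarrow\;\subseteq\;\to^{*}$, so $\Rrightarrow^{*}\;=\;\to^{*}$, and confluence of $\to$ thus reduces to the diamond property of $\Rrightarrow$.

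For the diamond property, the cleanest route is Takahashi's complete-development technique: define by induction on $\tm$ the maximal parallel reduct $\tm^{\ast}$ that contracts every redex of $\tm$ at once, and prove the triangle property that whenever $\tm \Rrightarrow \tmtwo$ one has $\tmtwo \Rrightarrow \tm^{\ast}$. The diamond property---and hence confluence---follows as an immediate corollary. The key auxiliary ingredient is a substitution lemma for parallel reduction, namely that $\tm \Rrightarrow \tm'$ and $\tmtwo \Rrightarrow \tmtwo'$ imply $\tm\sub{\var}{\tmtwo} \Rrightarrow \tm'\sub{\var}{\tmtwo'}$, which handles the \ruleBeta\ and \ruleCase\ schemes where contraction performs a substitution.

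The main obstacle I anticipate is bookkeeping rather than a conceptual difficulty. At the level of root constructors the seven left-hand sides are pairwise disjoint---an elimination form sitting on top of its matching introduction, or a strong absurdity $\strongabs{}{-}{-}$ with two specific introduction forms of opposite sign in its arguments---so no genuine critical pair arises between different rules. However, the right-hand sides of the three absurdity schemes use the derived operator $\abs{}{-}{-}$ of~\rlem{admissible_rules}, whose unfolding depends on whether its left argument is strong, positively classical, or negatively classical, and the two classical cases introduce fresh $\clasapsym$-applications that could in principle become new redexes after further parallel reductions. One must therefore verify with some care that $\abs{}{-}{-}$ commutes with parallel reduction of its arguments in each of the three shapes, so that the complete-development construction and the triangle property go through uniformly across the admissible absurdity rule. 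With this checked, the triangle property follows by a routine induction on the derivation of $\tm \Rrightarrow \tmtwo$, and confluence of $\lambdaC$ is established.
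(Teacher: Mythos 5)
Your argument is correct, but it takes a genuinely different route from the paper's. The paper proves confluence in one line: it models $\lambdaC$ as a higher-order rewriting system in the sense of Nipkow and observes that this HRS is orthogonal---left-linear and without critical pairs---so confluence follows from Nipkow's general theorem for orthogonal HRSs. Your Tait--Martin-L\"of/Takahashi development essentially re-proves that meta-theorem for this particular system: parallel reduction, the substitution lemma, and the triangle property against the complete development. The paper's route buys brevity at the cost of an external citation and of the (slightly fiddly) encoding of the signed constructors and binders as HRS constants; yours buys a self-contained, elementary proof at the cost of the bookkeeping you describe. On the obstacle you single out: the duplication and the fresh $\clasapsym$-applications produced by unfolding $\abs{}{\tm}{\tmtwo}$ occur only on the \emph{right-hand sides} of $\ruleAbsPairInj$, $\ruleAbsInjPair$, and $\ruleAbsNeg$, so they cannot create critical pairs (a left-hand-side phenomenon); in your setting they are absorbed by the congruence and substitution lemmas exactly as you say. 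The one point you should make explicit is that the case split in the definition of $\abs{}{\tm}{\tmtwo}$ is driven by the type of $\tm$, so each signed, typed instance of the three absurdity rules has a \emph{fixed} unfolded right-hand side, and that this shape is stable under reduction of the arguments by subject reduction~(\rprop{subject_reduction}), which is proved independently of confluence---so your complete development is well defined and there is no circularity.
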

\begin{proof}
The rewriting system $\lambdaC$ can be modeled
as a higher-order rewriting system (HRS) in the sense of Nipkow\footnote{It
suffices to model it with a single sort $\iota$, with constants such as
$\pi\pp_i : \iota \to \iota$,
$\mathsf{IC}\nn : (\iota \to \iota) \to \iota$,
etc., and rules such as
$\delta\pp(\mathsf{in}^+_1 x)\,f\,g \to f\,x$.
Strictly speaking, two constants for $\strongabs{}{}{}$ are needed,
depending on the signs.}.
This HRS is {\em orthogonal}, \ie left-linear without critical pairs,
which entails that it is confluent~\cite{nipkow1991higher}.
\end{proof}

Our next goal is to prove that $\lambdaC$ enjoys {\bf strong normalization},
that is, that there are no infinite reduction sequences
$\tm_1 \to \tm_2 \to \tm_3 \to \hdots$.
To do so, we give a translation to System~F extended with
{\em recursive type constraints}.

Type constraints are a way to define types
as solutions to recursive equations.
For instance, the type $\mathsf{T}$ of binary trees
is given by $\mathsf{T} \equiv 1 + (\mathsf{T} \times \mathsf{T})$.
In our case, the idea is to define $\typ\PP$ and $\typ\NN$ as
solutions to the mutually recursive equations
$\typ\PP \equiv (\typ\NN \to \typ\pp)$
and $\typ\NN \equiv (\typ\PP \to \typ\nn)$.

We begin by recalling the extended System~F and its relevant properties.
\medskip

{\bf System~F Extended with Recursive Type Constraints.}
In this subsection we recall the definition of
System~F$\extwith{\typeConstraints}$,
\ie System~F parameterized by an {\em arbitrary}
set of recursive type constraints $\typeConstraints$, as formulated
by Mendler~\cite{mendler1991inductive}.

The set of {\em types} in System~F$\extwith{\typeConstraints}$ is given by
$\typ ::= \btyp \mid \typ \to \typ \mid \forall\btyp.\typ$
where $\btyp,\btyptwo,\hdots$ are called {\em base types}.
The set of {\em terms} is given by
$
  \tm ::= \var^\typ
          \mid \lam{\var^\typ}{\tm}  \mid \tm\,\tm
          \mid \lam{\btyp}{\tm} \mid \tm\,\typ
$,
where $\lam{\btyp}{\tm}$ is type abstraction and
$\tm\,\typ$ is type application.
A {\em type constraint} is an equation of the form $\btyp \equiv \typ$.
System~F$\extwith{\typeConstraints}$
is parameterized by a set $\typeConstraints$
of type constraints. 
Each set $\typeConstraints$ of type constraints
induces a notion of equivalence between types,
written $\typ \equiv \typtwo$ and defined as the
congruence generated by $\typeConstraints$.
Typing rules are those of
the usual System~F~\cite[Section~11.3]{girard1989proofs}
extended with a {\em conversion} rule:
\[
{\small
  \indrule{Conv}{
    \tctx \vdash \tm : \typ
    \HS
    \typ \equiv \typtwo
  }{
    \tctx \vdash \tm : \typtwo
  }
}
\]
Variables occurring {\em positively} (resp. {\em negatively}) 
in a type $\typ$ are written $\posvars{\typ}$ (resp. $\negvars{\typ}$)
and defined as usual:
\[
  {\small
  \begin{array}{r@{\ }c@{\ }l@{\hspace{.5cm}}r@{\ }c@{\ }l}
    \posvars{\btyp} & \eqdef & \set{\btyp}
  &
    \negvars{\btyp} & \eqdef & \emptyset
  \\
    \posvars{\typ \to \typtwo} & \eqdef & \negvars{\typ} \cup \posvars{\typtwo}
  &
    \negvars{\typ \to \typtwo} & \eqdef & \posvars{\typ} \cup \negvars{\typtwo}
  \\
    \posvars{\forall\btyp.\typ} & \eqdef & \posvars{\typ} \setminus \set{\btyp}
  &
    \negvars{\forall\btyp.\typ} & \eqdef & \negvars{\typ} \setminus \set{\btyp}
  \end{array}
  }
\]
A set of type constraints $\typeConstraints$
verifies the {\em positivity condition} if
for every type constraint $(\btyp \equiv \typ) \in \typeConstraints$
and every type $\typtwo$ such that $\btyp \equiv \typtwo$
one has that $\btyp \not\in \negvars{\typtwo}$.
Mendler's main result~\cite[Theorem~13]{mendler1991inductive} is:
\begin{theorem}[Mendler, 1991]
\lthm{systemF_SN_Mendler}
If $\typeConstraints$ verifies the positivity condition,
then System~F$\extwith{\typeConstraints}$
is strongly normalizing.
\end{theorem}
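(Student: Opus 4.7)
The plan is to generalize Girard's reducibility method to account for the recursive type constraints, using positivity to secure the existence of the fixed points needed to interpret the recursively defined types. The key conceptual move is to define the interpretation of each type as a point in a complete lattice of reducibility candidates, and to read the positivity condition as a guarantee that the operator computing the interpretation of a constrained type is monotone on this lattice, so that Knaster--Tarski applies.

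First, I would fix the collection $\mathtt{SN}$ of strongly normalizing terms and define a \emph{reducibility candidate} as a subset of $\mathtt{SN}$ closed under reduction and under the usual Girard saturation closure (every neutral term all of whose reducts lie in the candidate is itself in the candidate). Ordered by inclusion, the candidates form a complete lattice. Second, I would define the interpretation $\semF{\typ}_\rho$ of a type $\typ$ relative to a valuation $\rho$ of its free type variables by recursion: arrow and universal types get the standard clauses, and the crucial step is to interpret a type $\btyp$ with constraint $\btyp \equiv \typ \in \typeConstraints$ as the least fixed point of the operator $X \mapsto \semF{\typ}_{\rho[\btyp \mapsto X]}$ on candidates. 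The positivity condition, extended over the full equivalence $\equiv$, is exactly what makes this operator monotone: by structural induction on $\typ$, variables in $\posvars{\typ}$ induce monotone dependencies and those in $\negvars{\typ}$ induce antitone ones, so $\btyp \notin \negvars{\typtwo}$ for every $\typtwo \equiv \btyp$ rules out the antitone case and guarantees monotonicity.

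Third, I would prove the Fundamental Lemma: if $\tctx \vdash \tm : \typ$ and $\rho$ is any valuation assigning candidates to the free type variables of the context, then for every substitution $\sigma$ sending each $\var : \typtwo \in \tctx$ to a term in $\semF{\typtwo}_\rho$, we have $\tm\sigma \in \semF{\typ}_\rho$. All standard cases carry over from System~F; the new obligation is the \rulename{Conv} rule, which reduces to showing that $\typ \equiv \typtwo$ implies $\semF{\typ}_\rho = \semF{\typtwo}_\rho$, provable by induction on the derivation of $\equiv$ using the fixed-point characterization at the constraint steps. Strong normalization of any typable term then follows, since variables always belong to every candidate and candidates are subsets of $\mathtt{SN}$.

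The main obstacle I expect is handling the positivity condition relative to the \emph{full equivalence} generated by $\typeConstraints$, not just with respect to the literal right-hand sides of the constraints: one has to check that positive/negative variable occurrences are preserved when rewriting types along $\equiv$, which amounts to a careful polarity analysis stable under unfolding of constraints. A secondary subtlety is that the interpretation operator acts on candidates rather than arbitrary sets, so monotonicity must be established within the candidate lattice, and one has to verify that the standard candidate constructions (arrow, intersection) preserve candidates so the fixed-point iteration stays inside the lattice.
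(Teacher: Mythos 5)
Note first that the paper does not actually prove this statement: it is Mendler's Theorem~13, imported with a one-line citation, so what you have written is in effect a reconstruction of Mendler's own argument. Your overall plan --- reducibility candidates forming a complete lattice under inclusion, interpretation of constrained type variables as fixed points whose existence is secured by positivity-induced monotonicity, and a fundamental lemma whose only new case is \textsc{Conv} --- is the right strategy and is essentially that of the cited proof.

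There is, however, a concrete gap in the fixed-point step as you state it. The clause ``interpret $\btyp$ with constraint $\btyp \equiv \typ$ as the least fixed point of $X \mapsto \semF{\typ}_{\rho[\btyp \mapsto X]}$'' is only well-defined when $\typ$ mentions no constrained variable other than $\btyp$. The instance the paper actually needs, $\typeConstraintsPosNeg$, is a \emph{mutually} recursive system: the right-hand side of the constraint for $\Pos{\typ}{\typtwo}$ is $\Neg{\typ}{\typtwo} \to \typ$, so your operator refers to an interpretation of $\Neg{\typ}{\typtwo}$ that has not yet been defined. The obvious repair --- a joint operator on the product lattice indexed by all constrained variables --- fails to be monotone: $\Neg{\typ}{\typtwo}$ occurs \emph{negatively} in $\Neg{\typ}{\typtwo} \to \typ$, and the positivity condition does not forbid this, since it only constrains occurrences of a variable $\btyp$ in types equivalent to $\btyp$ itself. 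What the condition does give you is that, after unfolding the mutual recursion modulo $\equiv$ --- e.g.\ $\Pos{\typ}{\typtwo} \equiv (\Pos{\typ}{\typtwo} \to \typtwo) \to \typ$, a non-strictly positive self-occurrence --- each constrained variable becomes a monotone operator in itself alone, to which Knaster--Tarski applies; one must then verify that the resulting simultaneous assignment validates every equation of $\typeConstraints$ (so that the \textsc{Conv} case goes through), which is where the quantification over \emph{all} $\typtwo$ with $\btyp \equiv \typtwo$ is really used. Your closing remark about polarity being stable under unfolding gestures at this, but the well-definedness and monotonicity of the operator you actually write down is the step that fails as stated, and it is precisely the part of Mendler's proof that does the work.
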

\noindent We define the empty ($\tzero$), unit ($\tunit$),
product ($\typ \times \typtwo$), and sum types ($\typ + \typtwo$)
via their usual encodings in System~F
(see for instance~\cite[Section~11.3]{girard1989proofs}).
For example, the product type is defined as
$(\typ \times \typtwo) \eqdef
 \forall\btyp.((\typ \to \typtwo \to \btyp) \to \btyp)$.
with a constructor $\pairF{\tm}{\tmtwo}$
and an eliminator $\projiF{\tm}$.
\SeeAppendix{See~\rsec{appendix:system_f} for a more detailed
description of the Extended System~F.}
\medskip

{\bf System~F Extended with $\typeConstraintsPosNeg$ Constraints.}
In this subsection,
we describe System~F$\extwith{\typeConstraintsPosNeg}$,
an extension of System~F
with a {\em specific} set of recursive type constraints
called $\typeConstraintsPosNeg$.
Given that the set of base types is countably infinite,
we may assume without loss of generality
that, for any two types $\typ,\typtwo$ in System~F
there are two type variables,
called $\Pos{\typ}{\typtwo}$
and $\Neg{\typ}{\typtwo}$.
More precisely, the set of type variables can be partitioned as
$\mathbf{V} \uplus \mathbf{P} \uplus \mathbf{N}$
in such a way that
the propositional variables of $\lambdaC$
are identified with type variables of $\mathbf{V}$,
and there are bijective
mappings $(A,B) \mapsto \Pos{\typ}{\typtwo} \in \mathbf{P}$
and $(A,B) \mapsto \Neg{\typ}{\typtwo} \in \mathbf{N}$.
Note that we do not forbid $\typ$ and $\typtwo$ to have
occurrences of type variables in $\mathbf{P}$ and $\mathbf{N}$.\footnote{An
alternative, perhaps cleaner, presentation would be to define types
inductively as
$\typ,\typtwo,\hdots ::= \btyp \mid \Pos{\typ}{\typtwo} \mid \Neg{\typ}{\typtwo} \mid \typ \to \typtwo \mid \forall\btyp.\typ$.}

System~F$\extwith{\typeConstraintsPosNeg}$
is given by extending System~F
with the set of recursive type constraints $\typeConstraintsPosNeg$,
including the following equations for all types $\typ,\typtwo$:
\[
  \Pos{\typ}{\typtwo}
  \equiv
  (\Neg{\typ}{\typtwo} \to \typ)
  \HS
  \Neg{\typ}{\typtwo}
  \equiv
  (\Pos{\typ}{\typtwo} \to \typtwo)
\]
This extension is in fact {\bf strongly normalizing}:
\begin{corollary}
\lcoro{systemF_SN_posneg}
System~F$\extwith{\typeConstraintsPosNeg}$
is strongly normalizing.
\end{corollary}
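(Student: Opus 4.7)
The plan is to apply Mendler's Theorem~(\rthm{systemF_SN_Mendler}), so the entire proof reduces to verifying that the set of constraints $\typeConstraintsPosNeg$ satisfies the positivity condition. Concretely, for each pair of types $(\typ,\typtwo)$ and each of the two constraints $\Pos{\typ}{\typtwo} \equiv \Neg{\typ}{\typtwo} \to \typ$ and $\Neg{\typ}{\typtwo} \equiv \Pos{\typ}{\typtwo} \to \typtwo$, I need to show that for every type $\typthree$ with $\Pos{\typ}{\typtwo} \equiv \typthree$ (resp. $\Neg{\typ}{\typtwo} \equiv \typthree$) we have $\Pos{\typ}{\typtwo} \notin \negvars{\typthree}$ (resp. $\Neg{\typ}{\typtwo} \notin \negvars{\typthree}$).

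The key step is to establish, by induction on the length of an equivalence derivation between $\Pos{\typ}{\typtwo}$ (or $\Neg{\typ}{\typtwo}$) and $\typthree$, the following stronger joint invariant: for every $\typthree$ with $\Pos{\typ}{\typtwo} \equiv \typthree$, we have $\Pos{\typ}{\typtwo} \notin \negvars{\typthree}$ and $\Neg{\typ}{\typtwo} \notin \posvars{\typthree}$; dually, for every $\typthree$ with $\Neg{\typ}{\typtwo} \equiv \typthree$, we have $\Neg{\typ}{\typtwo} \notin \negvars{\typthree}$ and $\Pos{\typ}{\typtwo} \notin \posvars{\typthree}$. The base case is immediate since $\Pos{\typ}{\typtwo}$ by itself places $\Pos{\typ}{\typtwo}$ only in positive position and does not contain $\Neg{\typ}{\typtwo}$ at all. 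For the inductive step, an elementary unfolding replaces an occurrence of $\Pos{\typ}{\typtwo}$ (in some polarity $\sigma$) by $\Neg{\typ}{\typtwo} \to \typ$, which puts $\Neg{\typ}{\typtwo}$ at polarity $\overline{\sigma}$ and $\typ$ at polarity $\sigma$; and symmetrically for unfolding $\Neg{\typ}{\typtwo}$. Hence if the invariant holds before the unfolding (so $\Pos{\typ}{\typtwo}$ only occurs positively and $\Neg{\typ}{\typtwo}$ only occurs negatively), then after the unfolding the freshly introduced $\Neg{\typ}{\typtwo}$ still occurs negatively and the freshly introduced $\typ$ and $\typtwo$ do not contain $\Pos{\typ}{\typtwo}$ or $\Neg{\typ}{\typtwo}$ as free variables (since these are distinct by the assumed partition of type variables into $\mathbf{V} \uplus \mathbf{P} \uplus \mathbf{N}$ and because $\typ,\typtwo$ are \emph{fixed} subscripts, not the variables themselves). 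Folding is handled analogously since $\equiv$ is symmetric.

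Once the invariant is established, the positivity condition follows immediately, and an application of~\rthm{systemF_SN_Mendler} yields strong normalization of System~F$\extwith{\typeConstraintsPosNeg}$.

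The main obstacle, and the point requiring care, is the subtle observation that even though $\typ$ and $\typtwo$ may themselves mention other variables of the form $\Pos{\typ'}{\typtwo'}$ and $\Neg{\typ'}{\typtwo'}$, the two specific variables $\Pos{\typ}{\typtwo}$ and $\Neg{\typ}{\typtwo}$ whose polarities are being tracked are \emph{distinct} from all variables free in $\typ$ and $\typtwo$ (this is guaranteed by taking $(\typ,\typtwo) \mapsto \Pos{\typ}{\typtwo}$ and $(\typ,\typtwo) \mapsto \Neg{\typ}{\typtwo}$ to be bijective maps into a set $\mathbf{P} \uplus \mathbf{N}$ that is disjoint from the variables appearing in $\typ$ and $\typtwo$ as syntactic components — one must verify this disjointness is compatible with the intended identification of $\lambdaC$ propositions with base types). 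Once this bookkeeping issue is settled, the induction is routine.
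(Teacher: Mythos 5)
Your overall strategy---reduce to Mendler's positivity condition via \rthm{systemF_SN_Mendler} and establish, by induction on the derivation of $\Pos{\typ}{\typtwo} \equiv \typthree$, the joint invariant ``$\Pos{\typ}{\typtwo}$ occurs only positively and $\Neg{\typ}{\typtwo}$ only negatively''---is the same as the paper's, but the inductive step has a genuine gap at exactly the point you flag and then dismiss as bookkeeping. First, the disjointness $\Pos{\typ}{\typtwo} \notin \fv{\typ} \cup \fv{\typtwo}$ is \emph{not} guaranteed by the bijectivity of the maps together with the partition $\mathbf{V} \uplus \mathbf{P} \uplus \mathbf{N}$: nothing in that setup prevents, say, $\Pos{\btyp}{\btyp} = \btyp$ for some $\btyp \in \mathbf{P}$. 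What makes the disjointness true is a well-foundedness argument, which the paper packages as the complexity measure $\compl{-}$ with $\compl{\Pos{\typ}{\typtwo}} = 1 + \compl{\typ} + \compl{\typtwo}$, so that a variable cannot occur inside its own subscripts.

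Second, and more seriously, your inductive step only treats rewrite steps that fold or unfold the two tracked variables $\Pos{\typ}{\typtwo}$ and $\Neg{\typ}{\typtwo}$. The congruence $\equiv$ also permits unfolding any \emph{other} variable $\Pos{\typ'}{\typtwo'}$ or $\Neg{\typ'}{\typtwo'}$ occurring in $\typthree$, and such variables do appear as soon as $\typ$ or $\typtwo$ has been spliced in, since the paper explicitly allows subscripts to mention $\mathbf{P}$- and $\mathbf{N}$-variables. Such a step inserts a copy of $\typ'$ (or $\typtwo'$) at an arbitrary polarity, and your invariant, phrased in terms of $\posvars{\typthree}$ and $\negvars{\typthree}$ alone, says nothing about whether $\Pos{\typ}{\typtwo}$ occurs inside those subscripts; if it did, the unfolding could place it negatively and the induction would not close. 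Ruling this out for every constraint variable reachable along the chain requires strengthening the invariant to track occurrences nested inside subscripts---precisely the paper's weak-occurrence sets $\wposvars{-}$ and $\wnegvars{-}$---combined with the lemma that a weak occurrence of a variable $\btyp$ in $\typ$ forces $\compl{\btyp} \leq \compl{\typ}$, which is what ultimately yields the contradiction. Without these two ingredients the argument is incomplete.
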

\begin{proof}
A corollary of the previous theorem.
It suffices to show that the
recursive type constraints $\typeConstraintsPosNeg$
verify Mendler's positivity condition.
The proof of this fact is slightly technical.
\SeeAppendix{See~\rsec{appendix:positivity_condition} in
the appendix for the proof.}
\end{proof}
\medskip

{\bf Translating $\lambdaC$ to System~F$\extwith{\typeConstraintsPosNeg}$.}
We are now in a position to define the translation from $\lambdaC$
to System~F$\extwith{\typeConstraintsPosNeg}$.
\begin{definition}[Translation of Propositions]
\ldef{semF_translation_types}
A proposition $\ev$ of $\lambdaC$ is translated into a type
$\semF{\ev}$ of System~F$\extwith{\typeConstraintsPosNeg}$,
according to the following
definition, given by induction on the {\em measure} $\#(\ev)$
(defined in \rsec{prk_kripke_semantics}):
\[
{\small
\!\!
  \begin{array}{r@{\ }c@{\ }l}
    \semF{\btyp\pp}
    & \eqdef &
    \btyp
  \\
    \semF{(\typ \land \typtwo)\pp}
    & \eqdef &
    \semF{\typ\PP} \times \semF{\typtwo\PP}
  \\
    \semF{(\typ \lor \typtwo)\pp}
    & \eqdef &
    \semF{\typ\PP} + \semF{\typtwo\PP}
  \\
    \semF{(\neg\typ)\pp}
    & \eqdef &
    \tunit \to \semF{\typ\NN}
  \\
    \semF{\typ\PP}
    & \eqdef &
    \Pos{\semF{\typ\pp}}{\semF{\typ\nn}}
  \end{array}
  \begin{array}{r@{\ }c@{\ }l}
    \semF{\btyp\nn}
    & \eqdef &
    \btyp \to \tzero
  \\
    \semF{(\typ \land \typtwo)\nn}
    & \eqdef &
    \semF{\typ\NN} + \semF{\typtwo\NN}
  \\
    \semF{(\typ \lor \typtwo)\nn}
    & \eqdef &
    \semF{\typ\NN} \times \semF{\typtwo\NN}
  \\
    \semF{(\neg\typ)\nn}
    & \eqdef &
    \tunit \to \semF{\typ\PP}
  \\
    \semF{\typ\NN}
    & \eqdef &
    \Neg{\semF{\typ\pp}}{\semF{\typ\nn}}
  \end{array}
}
\]
Moreover, a typing context
$\tctx = (\var_1:\ev_1,\hdots,\var_n:\ev_n)$
is translated as
$\semF{\tctx} \eqdef (\var_1:\semF{\ev_1},\hdots,\var_n:\semF{\ev_n})$.
\end{definition}
Note that the translation of propositions mimicks
the equations for the realizability interpretation
discussed in the introduction.
In fact, the translation of $\semF{\typ\PP}$ is
$\Pos{\semF{\typ\pp}}{\semF{\typ\nn}}$,
which is equivalent to
$\Neg{\semF{\typ\pp}}{\semF{\typ\nn}} \to \semF{\typ\pp}$
according to the recursive type constraints in $\typeConstraintsPosNeg$,
and this in turn equals
$\semF{\typ\NN} \to \semF{\typ\pp}$,
just as required.
Similarly for the translation of $\typ\NN$.
The translation of $(\neg\typ)\pp$ is $(\tunit \to \semF{\typ\NN})$
rather than just $\semF{\typ\NN}$ for a technical reason, in
order to ensure that each reduction step in $\lambdaC$
is simulated by
{\em at least one} step in System~F$\extwith{\typeConstraintsPosNeg}$.

\begin{definition}[Translation of Terms]
\ldef{semF_translation}
First, we define a family of terms
$\vdash \funabsF{\ev}{\evtwo} : \semF{\ev} \to \semF{\ev\OP} \to \semF{\evtwo}$
in System~F$\extwith{\typeConstraintsPosNeg}$
as follows, by induction on the measure $\#(\ev)$:
\[
{\small
  \begin{array}{r@{\ }c@{\ }l}
    \funabsF{\btyp\pp}{\evtwo}
    & \eqdef &
    \lam{\var\,\vartwo}{
      \abortF{\semF{\evtwo}}{\vartwo\,\var}
    }
  \\
    \funabsF{\btyp\nn}{\evtwo}
    & \eqdef &
    \lam{\var\,\vartwo}{
      \abortF{\semF{\evtwo}}{\var\,\vartwo}
    }
  \\
    \funabsF{(\typ\land\typtwo)\pp}{\evtwo}
    & \eqdef &
    \lam{\var\,\vartwo}{
      \caseF{\vartwo}{
        \varthree
      }{
        \funabsF{\typ\PP}{\evtwo}\,\projiF[1]{\var}\,\varthree
      }{
        \varthree
      }{
        \funabsF{\typtwo\PP}{\evtwo}\,\projiF[2]{\var}\,\varthree
      }
    }
  \\
    \funabsF{(\typ\land\typtwo)\nn}{\evtwo}
    & \eqdef &
    \lam{\var\,\vartwo}{
      \caseF{\var}{
        \varthree
      }{
        \funabsF{\typ\NN}{\evtwo}\,\varthree\,\projiF[1]{\vartwo}
      }{
        \varthree
      }{
        \funabsF{\typtwo\NN}{\evtwo}\,\varthree\,\projiF[2]{\var}
      }
    }
  \\
    \funabsF{(\typ\lor\typtwo)\pp}{\evtwo}
    & \eqdef &
    \lam{\var\,\vartwo}{
      \caseF{\var}{
        \varthree
      }{
        \funabsF{\typ\PP}{\evtwo}\,\var\,\projiF[1]{\vartwo}
      }{
        \varthree
      }{
        \funabsF{\typtwo\PP}{\evtwo}\,\var\,\projiF[2]{\vartwo}
      }
    }
  \\
    \funabsF{(\typ\lor\typtwo)\nn}{\evtwo}
    & \eqdef &
    \lam{\var\,\vartwo}{
      \caseF{\vartwo}{
        \varthree
      }{
        \funabsF{\typ\NN}{\evtwo}\,\projiF[1]{\var}\,\varthree
      }{
        \varthree
      }{
        \funabsF{\typtwo\PP}{\evtwo}\,\projiF[2]{\var}\,\varthree
      }
    }
  \\
    \funabsF{(\neg\typ)\pp}{\evtwo}
    & \eqdef &
    \lam{\var\,\vartwo}{
      \funabsF{\typ\NN}{\evtwo}\,(\var\,\trivF)\,(\vartwo\,\trivF)
    }
  \\
    \funabsF{(\neg\typ)\nn}{\evtwo}
    & \eqdef &
    \lam{\var\,\vartwo}{
      \funabsF{\typ\PP}{\evtwo}\,(\var\,\trivF)\,(\vartwo\,\trivF)
    }
  \\
    \funabsF{\typ\PP}{\evtwo}
    & \eqdef &
    \lam{\var\,\vartwo}{
      \funabsF{\typ\pp}{\evtwo}
        \,(\var\,\vartwo)
        \,(\vartwo\,\var)
    }
  \\
    \funabsF{\typ\NN}{\evtwo}
    & \eqdef &
    \lam{\var\,\vartwo}{
      \funabsF{\typ\nn}{\evtwo}
        \,(\var\,\vartwo)
        \,(\vartwo\,\var)
    }
  \end{array}
}
\]
where: $\abort{\typ}{\tm}$ denotes an inhabitant of $\typ$
whenever $\tm$ is an inhabitant of the empty type;
$\caseF{\tm}{\var}{\tmtwo}{\var}{\tmthree}$ is the eliminator of the sum type;
$\projiF{\tm}$ is the eliminator of the product type;
and $\triv$ denotes the trivial inhabitant of the unit type.
Now each typable term $\tctx \vdash \tm : \ev$ in $\lambdaC$
can be translated into a term $\semF{\tctx} \vdash \semF{\tm} : \semF{\ev}$
of System~F$\extwith{\typeConstraintsPosNeg}$ as follows:
\[
{\small
  \begin{array}{rcll}
    \semF{\var}
    & \eqdef &
    \var
  \\
    \semF{\strongabs{\evtwo}{\tm}{\tmtwo}}
    & \eqdef &
    \funabsF{\ev}{\evtwo}\,\semF{\tm}\,\semF{\tmtwo}
  \\
    && \HS\text{if $\tctx \vdash \tm : \ev$ and $\tctx \vdash \tmtwo : \ev\OP$}
  \\
    \semF{\pairpn{\tm}{\tmtwo}}
    & \eqdef &
    \pairF{\semF{\tm}}{\semF{\tmtwo}}
  \\
    \semF{\projipn{\tm}}
    & \eqdef &
    \projiF{\semF{\tm}}
  \\
    \semF{\inipn{\tm}}
    & \eqdef &
    \iniF{\semF{\tm}}
  \\
    \semF{\casepn{\tm}{(\var:\ev)}{\tmtwo}{(\vartwo:\evtwo)}{\tmthree}}
    & \eqdef &
    \caseF{\semF{\tm}}{
      (\var:\semF{\ev})
    }{
      \semF{\tmtwo}
    }{
      (\vartwo:\semF{\evtwo})
    }{
      \semF{\tmthree}
    }
  \\
    \semF{\negipn{\tm}}
    & \eqdef &
    \lam{\var^{\tunit}}{\semF{\tm}}
    \HS\text{where $\var \not\in \fv{\tm}$}
  \\
    \semF{\negepn{\tm}}
    & \eqdef &
    \semF{\tm}\,\trivF
  \\
    \semF{\claslampn{(\var:\ev)}{\tm}}
    & \eqdef &
    \lam{\var^{\semF{\ev}}}{\semF{\tm}}
  \\
    \semF{\clasappn{\tm}{\tmtwo}}
    & \eqdef &
    \semF{\tm}\,\semF{\tmtwo}
  \\
  \end{array}
}
\]
\end{definition}
It is easy
to check that $\semF{\tctx} \vdash \semF{\tm} : \semF{\ev}$ holds in
System~F$\extwith{\typeConstraintsPosNeg}$ 
by induction on the derivation of the judgment $\tctx \vdash \tm : \ev$
in $\lambdaC$.
Two straightforward properties of the translation are:

\begin{lemma}
\llem{semF_properties}
\textnormal{1.} $\fv{\semF{\tm}} = \fv{\tm}$;
\textnormal{2.} $\semF{\tm\sub{\var}{\tmtwo}} = \semF{\tm}\sub{\var}{\semF{\tmtwo}}$.
\end{lemma}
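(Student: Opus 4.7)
The plan is to prove both parts simultaneously by structural induction on the term $\tm$ (equivalently, on its typing derivation, since the translation of $\strongabs{}{}{}$ is type-directed via the auxiliary family $\funabsF{\ev}{\evtwo}$). Before entering the induction, the key preliminary step will be to establish the following auxiliary fact:

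\medskip\noindent\textbf{(Aux)} For all $\ev,\evtwo$, the term $\funabsF{\ev}{\evtwo}$ is closed, i.e.\ $\fv{\funabsF{\ev}{\evtwo}} = \emptyset$.

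\medskip\noindent
This is shown by induction on the measure $\#(\ev)$ used in \rdef{semF_translation}. In every defining equation of \rdef{semF_translation}, the only free-standing occurrences of variables are $\var,\vartwo,\varthree$, and all of them are introduced by an outer $\lambda$ or by the binders of a $\caseF{}{}{}{}{}$ construct; every recursive occurrence of $\funabsF{-}{-}$ is on a proposition of strictly smaller measure and hence closed by the induction hypothesis. The trivial inhabitant $\triv$ and constants contribute no free variables.

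\medskip
With (Aux) in hand, \textbf{part~1} follows by a routine case analysis on the outermost constructor of $\tm$. The variable case is immediate. For the compositional constructors -- $\pairpn{}{}$, $\projipn{}$, $\inipn{}$, $\casepn{}{}{}{}{}$, $\negepn{}$, $\claslampn{}{}$, $\clasappn{}{}$ -- the translation commutes with the term constructors of System~F$\extwith{\typeConstraintsPosNeg}$ (respecting the bound variables), so the free-variable sets match by direct use of the induction hypothesis. For $\negipn{\tm}$, the translation introduces a fresh abstraction $\lam{\var^{\tunit}}{\semF{\tm}}$ with $\var \notin \fv{\tm}$, hence by i.h.\ $\var \notin \fv{\semF{\tm}}$ and no free variables are lost or gained. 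For $\strongabs{\evtwo}{\tm}{\tmtwo}$, the translation is $\funabsF{\ev}{\evtwo}\,\semF{\tm}\,\semF{\tmtwo}$; by (Aux) the head is closed, so the free variables are $\fv{\semF{\tm}} \cup \fv{\semF{\tmtwo}}$, which equals $\fv{\tm} \cup \fv{\tmtwo} = \fv{\strongabs{\evtwo}{\tm}{\tmtwo}}$ by i.h.

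\medskip
For \textbf{part~2}, the same case analysis applies, working modulo $\alpha$-conversion to keep the bound variables of $\tm$ disjoint from $\fv{\tmtwo} \cup \set{\var}$. Each compositional case reduces to pushing the substitution under the corresponding System~F constructor and appealing to the i.h. The $\negipn{}$ case uses freshness of the $\lambda$-bound variable so that substitution commutes with the abstraction. The $\strongabs{}{}{}$ case is the only one that is not purely structural: we compute
\[
\semF{\strongabs{\evtwo}{\tm}{\tmtwo}\sub{\var}{\tmthree}}
= \funabsF{\ev}{\evtwo}\,\semF{\tm\sub{\var}{\tmthree}}\,\semF{\tmtwo\sub{\var}{\tmthree}},
\]
since term substitution does not alter the type annotation $\ev$ appearing in the auxiliary head. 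Applying the i.h.\ to the two subterms and using (Aux) to see that $(\funabsF{\ev}{\evtwo})\sub{\var}{\semF{\tmthree}} = \funabsF{\ev}{\evtwo}$, we recover $\semF{\strongabs{\evtwo}{\tm}{\tmtwo}}\sub{\var}{\semF{\tmthree}}$, as required.

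\medskip
There is no real obstacle; the only point that requires care is the closedness claim (Aux), which is the single fact that makes the $\strongabs{}{}{}$ case of both parts go through without the substitution or free-variable set leaking into the auxiliary code. Everything else is a mechanical structural induction.
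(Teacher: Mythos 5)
Your proof is correct and matches what the paper intends: the paper states this lemma without proof, deeming it straightforward, and your structural induction is the standard argument. You correctly isolate the one non-trivial point, namely that the combinators $\funabsF{\ev}{\evtwo}$ are closed (proved by induction on the measure $\#(\ev)$), which is exactly what makes the $\strongabs{}{}{}$ case of both parts go through.
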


The key result is the following {\bf simulation} lemma
from which strong normalization follows:
\begin{lemma}
\llem{semF_simulation}
If $\tm \toa{} \tmtwo$ in $\lambdaC$
then $\semF{\tm} \toa{}^+ \semF{\tmtwo}$
in System~F$\extwith{\typeConstraintsPosNeg}$.
\end{lemma}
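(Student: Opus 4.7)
The plan is to prove the simulation lemma by induction on the derivation of $\tm \toa{} \tmtwo$, which decomposes into two levels. Since reduction in $\lambdaC$ is the contextual closure of the seven rewriting rules of \rdef{the_lambdaC_calculus}, I first handle context closure and then the seven base cases. For context closure, I would observe that $\semF{-}$ is a structural homomorphism: in every defining equation of \rdef{semF_translation} the translation of a compound term has the translations of its immediate subterms as subterms (the only nonobvious case, $\strongabs{\evtwo}{\tm}{\tmtwo}$, still exposes $\semF{\tm}$ and $\semF{\tmtwo}$ as subterms of $\funabsF{\ev}{\evtwo}\,\semF{\tm}\,\semF{\tmtwo}$). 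Hence a $\lambdaC$-context $\gctx$ translates to a System~F context $\semF{\gctx}$, and if $\semF{\tm} \toa{}^+ \semF{\tmtwo}$ then $\semF{\gctxof{\tm}} = \semF{\gctx}\ctxof{\semF{\tm}} \toa{}^+ \semF{\gctx}\ctxof{\semF{\tmtwo}} = \semF{\gctxof{\tmtwo}}$ by the congruence of System~F reduction.

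For the base cases corresponding to $\ruleProj$, $\ruleCase$, $\ruleNeg$, and $\ruleBeta$, the translation exposes a corresponding redex at the root, so a single System~F step suffices. For $\ruleProj$, $\semF{\projipn{\pairpn{\tm_1}{\tm_2}}} = \projiF{\pairF{\semF{\tm_1}}{\semF{\tm_2}}} \toa{} \semF{\tm_i}$. For $\ruleCase$ and $\ruleBeta$, the substitution compatibility of $\semF{-}$ (\rlem{semF_properties}(2)) closes the step. For $\ruleNeg$, $\semF{\negepn{\negipn{\tm}}} = (\lam{\var^{\tunit}}{\semF{\tm}})\,\trivF \toa{} \semF{\tm}\sub{\var}{\trivF} = \semF{\tm}$, where the last equality uses freshness of the bound variable $\var \notin \fv{\semF{\tm}} = \fv{\tm}$.

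The genuinely delicate base cases are the three $\strongabs{}{}{}$-rules. Here the LHS translation has the form $\funabsF{\ev}{\evtwo}\,\semF{\tm_L}\,\semF{\tm_R}$, and reducing it requires unfolding the definition of $\funabsF{\ev}{\evtwo}$ from \rdef{semF_translation}. For $\ruleAbsPairInj$ with $\ev = (\typ_1 \land \typ_2)\pp$, two beta steps expose a $\caseF{\iniF[i]{\semF{\tmtwo}}}{\ldots}{\ldots}{\ldots}{\ldots}$ redex; one case step selects the $i$-th branch, and one projection step reduces $\projiF[i]{\pairF{\semF{\tm_1}}{\semF{\tm_2}}}$ to $\semF{\tm_i}$. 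The residual $\funabsF{\typ_i\PP}{\evtwo}\,\semF{\tm_i}\,\semF{\tmtwo}$ is exactly $\semF{\abs{}{\tm_i}{\tmtwo}}$ after one further unfolding of $\funabsF{\typ_i\PP}{\evtwo}$ and two more betas, matching the translation of the explicit form $\strongabs{\evtwo}{\clasapp{\tm_i}{\tmtwo}}{\clasapn{\tmtwo}{\tm_i}}$ given by the definition of generalized absurdity in \rlem{admissible_rules}. The $\ruleAbsInjPair$ case is symmetric, and $\ruleAbsNeg$ proceeds analogously: two betas for the outer abstraction, two more betas reducing the unit-typed thunks $(\lam{\under^{\tunit}}{\semF{\tm}})\,\trivF$ and $(\lam{\under^{\tunit}}{\semF{\tmtwo}})\,\trivF$, yielding $\funabsF{\typ\NN}{\evtwo}\,\semF{\tm}\,\semF{\tmtwo}$, which then matches $\semF{\abs{}{\tm}{\tmtwo}}$ after unfolding $\funabsF{\typ\NN}{\evtwo}$.

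I expect the main obstacle to be the $\ruleAbs*$ cases: bookkeeping of the sign decorations and carefully pairing up each System~F reduction step with a fragment of the unfolded $\funabsF{\ev}{\evtwo}$. A useful sanity check is that the role of the ``thunking'' $\semF{(\neg\typ)\pp} \eqdef \tunit \to \semF{\typ\NN}$, rather than just $\semF{\typ\NN}$, is precisely to guarantee that the $\ruleNeg$ reduction is simulated by at least one step (which in turn is what makes the ``$+$'' in $\toa{}^+$ hold in all cases, and thus enables the transfer of strong normalization from \rcoro{systemF_SN_posneg} to $\lambdaC$).
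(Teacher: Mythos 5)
Your proposal is correct and follows essentially the same route as the paper's proof: case analysis on the rewriting rule applied at the root (with contextual closure dispatched by the homomorphic nature of $\semF{-}$), single derived steps plus \rlem{semF_properties} for $\ruleProj$, $\ruleCase$, $\ruleNeg$, $\ruleBeta$, and unfolding of $\funabsF{\ev}{\evtwo}$ in the three absurdity cases until the result matches $\semF{\abs{\evtwo}{\tm_i}{\tmtwo}}$ via the explicit definition of generalized absurdity. Your closing remark about the $\tunit$-thunking in $\semF{(\neg\typ)\pp}$ being what secures the strict ``$+$'' also matches the paper's stated motivation.
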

\begin{proof}
\SeeAppendix{\rlem{appendix:semF_simulation} in the appendix.}
By case analysis on the rewriting rule used to derive
the step $\tm \toa{} \tmtwo$. The interesting case is
when the rule is applied at the root of the term.
As an illustrative example, consider an instance
of the $\ruleAbsPairInj$ rule,
with $\tctx \vdash \tm_1 : \typ_1\PP$,
and $\tctx \vdash \tm_2 : \typ_2\PP$,
and $\tctx \vdash \tmtwo : \typ_i\NN$ for some $i \in \set{1,2}$.
Then:
\[
{\small
  \begin{array}[b]{ll}
  &
    \semF{\strongabs{\ev}{\pairp{\tm_1}{\tm_2}}{\inin{\tmtwo}}}
  \\
  = &
    \funabsF{(\typ_1 \land \typ_2)\pp}{\ev}
      \,\pairF{\semF{\tm_1}}{\semF{\tm_2}}
      \,\iniF{\semF{\tmtwo}}
  \\
  \toa{}^+ &
    \!\!\!
      \caseFtablex{\iniF{\semF{\tmtwo}}}{
        (\varthree:\semF{\typ_1\NN})
      }{
        \funabsF{\typ_1\PP}{\ev}\,\projiF[1]{ \pairF{\semF{\tm_1}}{\semF{\tm_2}} }\,\varthree
      }{
        (\varthree:\semF{\typ_2\NN})
      }{
        \funabsF{\typ_2\PP}{\ev}\,\projiF[2]{ \pairF{\semF{\tm_1}}{\semF{\tm_2}} }\,\varthree
      }
  \\
  \toa{} &
    \funabsF{\typ_i\PP}{\ev}\,\projiF{ \pairF{\semF{\tm_1}}{\semF{\tm_2}} }\,\semF{\tmtwo}
  \\
  \toa{} &
    \funabsF{\typ_i\PP}{\ev}\,\semF{\tm_i}\,\semF{\tmtwo}
  \\
  \toa{}^+ &
    \funabsF{\typ_i\pp}{\ev}\,(\semF{\tm_i}\,\semF{\tmtwo})
                              (\semF{\tmtwo}\,\semF{\tm_i})
  \\
  = &
    \semF{\strongabs{\ev}{(\clasapp{\tm_i}{\tmtwo})}{(\clasapn{\tm_i}{\tmtwo})}}
  \\
  = &
    \semF{\abs{\ev}{\tm_i}{\tmtwo}}
  \end{array} \qedhere
}
\]
\end{proof}

\begin{theorem}
\lthm{lambdaC_canonical}
The $\lambdaC$-calculus is strongly normalizing.
\end{theorem}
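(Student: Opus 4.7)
The plan is to obtain strong normalization of $\lambdaC$ as an almost immediate consequence of the two major ingredients already in hand: the simulation lemma~(\rlem{semF_simulation}) and the strong normalization of the target calculus, System~F$\extwith{\typeConstraintsPosNeg}$~(\rcoro{systemF_SN_posneg}).

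The argument I would give is by contradiction. Suppose there were an infinite reduction sequence $\tm_1 \toa{} \tm_2 \toa{} \tm_3 \toa{} \hdots$ of typable terms in $\lambdaC$; subject reduction~(\rprop{subject_reduction}) ensures that each $\tm_i$ remains typable, so its translation $\semF{\tm_i}$ is a well-typed term of System~F$\extwith{\typeConstraintsPosNeg}$. Applying the simulation lemma to every step yields $\semF{\tm_i} \toa{}^+ \semF{\tm_{i+1}}$, where crucially each of these reduction segments has length at least one. Concatenating the non-empty segments produces an infinite reduction starting from $\semF{\tm_1}$, contradicting~\rcoro{systemF_SN_posneg}.

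The subtle point worth flagging, and the only place where something could go wrong, is the ``at least one step'' guarantee in~\rlem{semF_simulation}. If the simulation were only up to reflexive closure $\toa{}^{\ast}$, then an infinite sequence of empty simulation segments could fail to yield an infinite reduction in the target. This is precisely why $\semF{(\neg\typ)\pp}$ was defined as $\tunit \to \semF{\typ\NN}$ rather than $\semF{\typ\NN}$, and why $\semF{\negipn{\tm}}$ carries an extra abstraction: the wrapper forces the $\ruleNeg$ rule (and similar otherwise ``silent'' steps) to materialise as a genuine $\beta$-redex in System~F$\extwith{\typeConstraintsPosNeg}$. Since the hard work is entirely absorbed by the earlier lemmas, I do not anticipate any further obstacle; the theorem is essentially a two-line citation of~\rlem{semF_simulation} and~\rcoro{systemF_SN_posneg}.
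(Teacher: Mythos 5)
Your proposal is correct and follows exactly the paper's argument: strong normalization is obtained by combining the simulation lemma~(\rlem{semF_simulation}), which guarantees each $\lambdaC$-step is matched by at least one step in System~F$\extwith{\typeConstraintsPosNeg}$, with strong normalization of the target~(\rcoro{systemF_SN_posneg}). The subtlety you flag about non-empty simulation segments is precisely the one the paper itself highlights when explaining why $\semF{(\neg\typ)\pp}$ is $\tunit \to \semF{\typ\NN}$.
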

\begin{proof}
An easy consequence of~\rlem{semF_simulation}
given that the extended System~F is strongly normalizing
(\rcoro{systemF_SN_posneg}).
\end{proof}
\medskip

{\bf Canonicity.}
In the previous subsections we have shown that the $\lambdaC$-calculus
enjoys subject reduction and strong normalization.
This implies that each typable term $\tm$ reduces to a normal form
$\tm'$ of the same type.
In this subsection, these results are refined
to prove a {\em canonicity} theorem,
stating that each closed, typable term $\tm$
reduces to a {\em canonical} term $\tm'$ of the same type.
For example, canonical terms of type $(\typ\lor\typtwo)\pp$ 
are of the form $\inip{\tm}$. From the logical point of view,
this means that given a strong proof of $(\typ\lor\typtwo)$,
in a context without assumptions, one can always recover
either a classical proof of $\typ$ or a classical proof of $\typtwo$.
This shows that $\PRK$ has a form of disjunctive property.

First we provide an inductive characterization
of the set of {\bf normal forms} of $\lambdaC$.

\begin{definition}[Normal terms]
\ldef{normal_terms}
The sets of {\em normal terms} ($\nf,\hdots$)
and {\em neutral terms} ($\neu,\hdots$)
are defined mutually inductively by:
\[
\begin{array}{rrllllllllll}
  \nf
    & ::=  & \neu
    & \mid & \pairpn{\nf}{\nf}
    & \mid & \inipn{\nf}
    \\
    & \mid & \negipn{\nf}
    & \mid & \claslampn{\var : \ev}{\nf}
\\
\\
  \neu
    & ::=  & \var
    & \mid & \projipn{\neu}
    & \mid & \casepn{\neu}{\var}{\nf}{\var}{\nf}
    \\
    & \mid & \negepn{\neu}
    & \mid & \clasappn{\neu}{\nf}
    \\
    & \mid & \strongabs{\ev}{\neu}{\nf}
    & \mid & \strongabs{\ev}{\nf}{\neu}
\end{array}
\]
\end{definition}

\begin{proposition}
\lprop{characterization_of_normal_terms}
A term is in the grammar of normal terms
if and only if it is a normal form,
\ie it does not reduce in $\lambdaC$.
\end{proposition}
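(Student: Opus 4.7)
I would prove the two implications separately, by induction.

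For the direction ``belongs to the grammar implies does not reduce'', the natural move is a simultaneous induction on the generation of $\nf$ and $\neu$. The induction hypothesis rules out reduction inside subterms, so it is enough to check that no redex sits at the root of any production. The introduction-style normal productions $\pairpn{\cdot}{\cdot}$, $\inipn{\cdot}$, $\negipn{\cdot}$, and $\claslampn{(\var:\ev)}{\cdot}$ never appear as the head of a reduction left-hand side, so nothing fires there. The elimination-style neutral productions $\projipn{\neu}$, $\casepn{\neu}{\var}{\nf}{\var}{\nf}$, $\negepn{\neu}$, and $\clasappn{\neu}{\nf}$ could only reduce at the root if the head $\neu$ were a matching introduction form; but a quick inspection of the grammar of neutrals shows that every neutral has either a variable or an eliminator at the head, never an introduction. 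Finally, in the two absurdity productions $\strongabs{\ev}{\neu}{\nf}$ and $\strongabs{\ev}{\nf}{\neu}$, each of $\ruleAbsPairInj$, $\ruleAbsInjPair$, and $\ruleAbsNeg$ requires both arguments to be specific introduction forms, which is blocked by the neutral argument.

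For the converse direction ``does not reduce implies belongs to the grammar'', I would induct on the structure of $\tm$. Since reduction is closed under arbitrary contexts, every subterm of a normal form is itself normal, so the induction hypothesis places each subterm in $\nf$. It then remains to verify the additional shape constraint imposed on head positions in the neutral productions. For $\projipn{\tm}$: if $\tm$ were an introduction form, typing forces it to be a pair $\pairpn{\cdot}{\cdot}$ of matching sign, and $\ruleProj$ would fire, so $\tm$ must be neutral. Analogous arguments handle $\casepn{\tm}{\var}{\tmtwo}{\vartwo}{\tmthree}$, $\negepn{\tm}$, and $\clasappn{\tm}{\tmtwo}$ via the rules $\ruleCase$, $\ruleNeg$, and $\ruleBeta$ respectively.

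The main obstacle is the absurdity case $\strongabs{\ev}{\tm}{\tmtwo}$: I need to argue that at least one of $\tm$, $\tmtwo$ is neutral. Assume for contradiction both are introduction forms. Typing forces $\tmtwo$ to have the opposite type $\ev\OP$, whose principal connective has the dual shape, so the pair $(\tm,\tmtwo)$ of introductions must be of the form pair/injection, injection/pair, or negation/negation, each matching exactly one of $\ruleAbsPairInj$, $\ruleAbsInjPair$, or $\ruleAbsNeg$. This contradicts normality, so one of the two arguments is neutral and the term fits one of the two absurdity productions of the neutral grammar.
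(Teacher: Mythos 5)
Your proposal is correct and follows essentially the same route as the paper's proof: the forward direction by induction on the grammar derivation, observing that neutrals are never introduction forms so no root redex can fire, and the converse by structural induction, using typing to force the head of each eliminator (and both arguments of an absurdity) into the shape of a redex whenever it fails to be neutral. The only cosmetic difference is that in the absurdity case the paper cases on the strong type $\evtwo$ (explicitly ruling out the atomic case $\btyp^\pm$, for which no introduction form exists), whereas you case directly on the assumed introduction forms; the two organizations are interchangeable.
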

\begin{proof}
Straightforward by induction.
\SeeAppendix{See~\rsec{appendix:characterization_of_normal_forms} in the
appendix for a detailed proof.}
\end{proof}

In order to state a canonicity theorem succintly,
we introduce some nomenclature.
A term is {\em canonical}
if it has any of the following shapes:
\[
  \pairpn{\tm_1}{\tm_2}
  \HS
  \inipn{\tm}
  \HS
  \negipn{\tm}
  \HS
  \claslampn{\var}{\tm}
\]
A typing context is {\em classical} if all the assumptions are
classical, \ie of the form $\typ\PP$ or $\typ\NN$.
Recall that a context is a term $\gctx$ with a single free occurrence of a hole
$\ctxhole$, and that $\gctxof{\tm}$ denotes the capturing substitution
of the term $\tm$ into the hole of $\gctx$.
A {\em case-context} is a context of the form
$
  \casectx ::= \ctxhole
          \mid \casepn{\casectx}{\var}{\tm}{\vartwo}{\tmtwo}
$.
An {\em eliminative context} is a context of the form
$
  \elctx ::= \ctxhole
        \mid \projipn{\elctx}
        \mid \casepn{\elctx}{\var}{\tm}{\vartwo}{\tmtwo}
        \mid \negepn{\elctx}
$.
Note that $\clasappn{\ctxhole}{\tm}$ is not eliminative
and that all case-contexts are eliminative.
An {\em explosion} is a term
of the form $\strongabs{\ev}{\tm}{\tmtwo}$
or of the form $\clasappn{\tm}{\tmtwo}$.
A term is {\em closed} if it has no free variables.
A term is {\em open} if it not closed,
\ie it has at least one free variable.

The following theorem has three parts;
the first one provides guarantees for {\em closed} terms,
whereas the two other ones provide weaker guarantees
for terms typable under an arbitrary classical context.
\begin{theorem}[Canonicity]
\lthm{canonicity}
\quad
\begin{enumerate}
\item
  Let $\vdash \tm : \ev$.
  Then $\tm$ reduces to a canonical term.
\item
  Let $\tctx \vdash \tm : \ev$
  where $\tctx$ is classical and $\ev$ is strong.
  Then either $\tm \toa{}^* \tm'$ where $\tm'$ is canonical
  or $\tm \toa{}^* \casectxof{\tm'}$
  where $\casectx$ is a case-context
  and $\tm'$ is an open explosion.
\item
  Let $\tctx \vdash \tm : \ev$
  where $\tctx$ and $\ev$ are classical.
  Then either $\tm \toa{}^* \claslampn{\var}{\tm'}$
  or $\tm \toa{}^* \elctxof{\tm'}$,
  where $\elctx$ is an eliminative context
  and $\tm'$ is a variable or an open explosion.
\end{enumerate}
\end{theorem}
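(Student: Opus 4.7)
The plan is to first reduce $\tm$ to a normal form $\tm_N$ using strong normalization~(\rthm{lambdaC_canonical}), observing that $\tctx \vdash \tm_N : \ev$ by subject reduction~(\rprop{subject_reduction}), and then analyse the shape of $\tm_N$ using the grammar of~\rdef{normal_terms} made available by~\rprop{characterization_of_normal_terms}. A key auxiliary fact, proved by direct induction on the neutral grammar, is that every neutral term $\neu$ has a non-empty set of free variables: each production for $\neu$ has an immediate neutral subterm whose free variables are not bound by the enclosing construct.

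For part~1, since $\tm$ is closed we have $\fv{\tm_N} = \emptyset$, so by the auxiliary fact $\tm_N$ cannot be neutral, and the grammar of normal terms then forces $\tm_N$ to have one of the four canonical shapes.

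For part~2, if $\tm_N$ is canonical we are done, so suppose $\tm_N = \neu$ is a neutral normal form. I will prove by induction on the structure of $\neu$ the following structural lemma: if $\tctx$ is classical and $\ev$ is strong, then $\neu = \casectxof{\tm'}$ for some case-context $\casectx$ and some open explosion $\tm'$. Several shapes are excluded directly by the typing rules: $\neu = \var$ is impossible since every assumption in a classical context has classical type; $\neu = \projipn{\neu'}$ and $\neu = \negepn{\neu'}$ are impossible since the rules $\Eandp$, $\Eorn$, $\Enotp$, and $\Enotn$ all yield a classical conclusion. The remaining non-case shapes $\clasappn{\neu'}{\nf}$, $\strongabs{\ev}{\neu'}{\nf}$, and $\strongabs{\ev}{\nf}{\neu'}$ are themselves explosions, open by the auxiliary fact, so $\casectx = \ctxhole$ works. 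Finally, if $\neu = \casepn{\neu'}{\var_1}{\nf_1}{\var_2}{\nf_2}$, the scrutinee $\neu'$ has strong type (either $(\typ \lor \typtwo)\pp$ or $(\typ \land \typtwo)\nn$) under the same classical context, so the induction hypothesis gives $\neu' = \casectx'\langle \tm' \rangle$; then $\casepn{\casectx'}{\var_1}{\nf_1}{\var_2}{\nf_2}$ is itself a case-context by the grammar of $\casectx$, and we are done.

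For part~3, if $\tm_N$ is canonical then the only canonical form whose type is classical is $\claslampn{\var}{\nf}$, so $\tm_N$ has that shape. If $\tm_N = \neu$ is neutral, it suffices to prove the more uniform lemma: any neutral normal form in a classical context can be written as $\elctxof{\tm'}$ where $\elctx$ is eliminative and $\tm'$ is either a variable or an open explosion. The induction on $\neu$ is almost mechanical: the base case $\neu = \var$ takes $\elctx = \ctxhole$ and $\tm' = \var$; the eliminator shapes $\projipn{\neu'}$, $\negepn{\neu'}$, and $\casepn{\neu'}{\var_1}{\nf_1}{\var_2}{\nf_2}$ all invoke the induction hypothesis on $\neu'$ and wrap the resulting eliminative context by one more layer, still eliminative by the grammar of $\elctx$; and the remaining shapes $\clasappn{\neu'}{\nf}$, $\strongabs{\ev}{\neu'}{\nf}$, and $\strongabs{\ev}{\nf}{\neu'}$ are themselves open explosions, so $\elctx = \ctxhole$.

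The main obstacle is the typing-driven case analysis in part~2: one must check carefully which neutral shapes produce a strong conclusion (only case, classical application and absurdity do, while projections and negation eliminations do not), and verify that descending into the scrutinee of a case preserves both the ``classical context'' and the ``strong result type'' invariants so that the induction goes through. Parts~1 and~3 then follow from the same normalization-plus-grammar analysis with lighter bookkeeping.
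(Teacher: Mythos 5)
Your proof is correct and follows essentially the same route as the paper's: normalize via strong normalization and subject reduction, invoke the characterization of normal forms, observe that every neutral term is open, and then run a typing-driven induction over the neutral grammar, excluding variables, projections and negation eliminations in part~2 exactly as the paper does. The only (harmless) deviation is in part~3, where you generalize the auxiliary lemma to neutral terms of arbitrary type so that the induction is self-contained, whereas the paper keeps the classical-type restriction and instead invokes the part-2 claim on the strongly-typed scrutinees of projections, cases, and negation eliminations before wrapping them into an eliminative context.
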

\begin{proof}
By subject reduction~(\rprop{subject_reduction})
and strong normalization~(\rthm{lambdaC_canonical})
the term $\tm$ reduces to a normal form $\tm'$.
Moreover, by~\rprop{characterization_of_normal_terms},
we have that $\tm'$ is generated by the grammar of~\rdef{normal_terms}.
The proof then proceeds by induction on the derivation of $\tm'$
in the grammar of normal terms.
\SeeAppendix{See~\rsec{appendix:canonicity} in the appendix for the proof.}
\end{proof}
\medskip

{\bf Extensionality for Classical Proofs.}
To conclude the syntactic study of $\lambdaC$, we discuss that an
extensionality rule, akin to $\eta$-reduction in the $\lambda$-calculus,
may be incorporated to $\lambdaC$, obtaining a calculus $\lambdaCeta$.

\begin{definition}
\ldef{lambdaCeta_calculus}
The $\lambdaCeta$-calculus is defined by extending the $\lambdaC$ calculus
with the following reduction rule:
\[
  \claslampn{\var}{(\clasappn{\tm}{\var})}
  \ \toa{\ruleEta}\ %
  \tm
  \HS\text{if $\var \notin \fv{\tm}$}
\]
\end{definition}

\begin{theorem}
\lthm{lambdaCeta_canonical}
The $\lambdaCeta$-calculus enjoys subject
reduction,
and it is strongly normalizing and confluent.
\end{theorem}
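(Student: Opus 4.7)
\textbf{Subject reduction} is only nontrivial for the new $\ruleEta$ rule. Suppose $\tctx \vdash \claslamp{(\var:\typ\NN)}{(\clasapp{\tm}{\var})} : \typ\PP$ with $\var \notin \fv{\tm}$. Two successive inversions, of the $\Icp$ and $\Ecp$ rules, yield $\tctx, \var:\typ\NN \vdash \tm : \typ\PP$. A routine strengthening lemma, proved by induction on the typing derivation and using $\var \notin \fv{\tm}$, then gives $\tctx \vdash \tm : \typ\PP$, preserving the type. The negative case is symmetric.

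\textbf{Strong normalization.} I would first extend the simulation lemma (\rlem{semF_simulation}) to cover $\ruleEta$. Since $\semF{\claslampn{(\var:\ev)}{(\clasappn{\tm}{\var})}} = \lam{\var^{\semF{\ev}}}{(\semF{\tm}\,\var)}$ and \rlem{semF_properties} gives $\var \notin \fv{\tm}$ iff $\var \notin \fv{\semF{\tm}}$, each $\ruleEta$-step in $\lambdaCeta$ is simulated by a single $\eta$-step in System~F$\extwith{\typeConstraintsPosNeg}$. It therefore suffices to establish that the extended System~F is strongly normalizing under combined $\beta\eta$-reduction. \rcoro{systemF_SN_posneg} provides $\beta$-SN, and since $\eta$-reduction strictly decreases term size it is SN on its own. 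A classical $\eta$-postponement argument --- every mixed $\beta\eta$-derivation can be rearranged so that all $\beta$-steps precede all $\eta$-steps --- then lifts $\beta$-SN to $\beta\eta$-SN.

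\textbf{Confluence.} With SN in hand, Newman's lemma reduces confluence to local confluence. The only new critical pairs introduced by $\ruleEta$ are its trivial self-overlap and its overlap with $\ruleBeta$ at $\claslampn{\var}{(\clasappn{(\claslampn{\vartwo}{\tmtwo})}{\var})}$ (with $\var \notin \fv{\claslampn{\vartwo}{\tmtwo}}$); both one-step reducts of this overlap are $\alpha$-equivalent to $\claslampn{\vartwo}{\tmtwo}$, so the pair is joinable. All remaining critical pairs are inherited from the orthogonal HRS formulation of $\lambdaC$ used in \rprop{lambdaC_confluent}.

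The \textbf{main obstacle} is the $\eta$-postponement step in the SN argument. This property is classical for pure System~F; fortunately, since postponement concerns the \emph{untyped} reduction relation --- whose rules are unaltered by the addition of recursive type constraints, as these only intervene through the $\rulename{Conv}$ typing rule --- the standard proof transfers unchanged to our setting, allowing us to conclude.
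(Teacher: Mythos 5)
Your proof is correct, but for the substantive part---strong normalization---you take a genuinely different route from the paper. The paper proves an $\ruleEta$-postponement lemma \emph{inside} $\lambdaCeta$ itself (\rlem{lambdaCeta_postponement}), which requires a case analysis of how an $\ruleEta$-step can precede each of the seven other rules ($\ruleProj$, $\ruleCase$, $\ruleNeg$, $\ruleBeta$ and the three absurdity rules), and then combines this with \rthm{lambdaC_canonical}. You instead observe that $\semF{\claslampn{\var}{(\clasappn{\tm}{\var})}}$ is literally an $\eta$-redex of System~F$\extwith{\typeConstraintsPosNeg}$, extend the simulation lemma (\rlem{semF_simulation}) to $\ruleEta$, and discharge postponement in the target calculus, where it is a textbook fact about $\lambda$-terms unaffected by the recursive type constraints. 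This buys you a much shorter argument that reuses the existing translation infrastructure; the cost is a dependence on $\beta\eta$-strong normalization of the extended System~F, for which you should state the postponement in its quantitative form (one $\eta$ followed by one $\beta$ yields \emph{at least one} $\beta$ followed by finitely many $\eta$'s), since the bare ``rearrange all $\beta$'s to the front'' phrasing of finite derivations does not by itself rule out an infinite mixed sequence; the paper's own infinite-sequence argument makes exactly this point. Your subject-reduction and confluence arguments essentially coincide with the paper's; the only slip is in enumerating critical pairs: besides the overlap at $\claslampn{\var}{(\clasappn{(\claslampn{\vartwo}{\tmtwo})}{\var})}$, the other genuine $\ruleBeta$/$\ruleEta$ overlap is $\clasappn{(\claslampn{\var}{(\clasappn{\tm}{\var})})}{\tmtwo}$ (an $\eta$-redex in function position of a $\beta$-redex), not a self-overlap of $\ruleEta$; it is, however, joinable in zero steps, so nothing breaks.
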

\begin{proof}
Subject reduction is straightforward,
extending~\rprop{subject_reduction}
with an easy case for the $\ruleEta$ rule.
Local confluence is also straightforward by examining the
critical pairs.
The key lemma to prove strong normalization is that
$\ruleEta$ reduction steps can be postponed after steps of other kinds.
\noindent\SeeAppendix{See~\rsec{appendix:extensionality} for the details.}
\end{proof}

\section{Embedding Classical Logic into $\PRK$}
\lsec{prk_classical_logic}

Intuitionistic logic {\em refines}
classical logic: each intuitionistically valid formula
$\typ$ is also classically valid,
but there may be many classically equivalent
``readings'' of a classical formula
which are not intuitionistically equivalent,
such as $\neg(\typ \land \neg\typtwo)$ and $\neg\typ \lor \typtwo$.
System $\PRK$ refines classical logic in a similar sense.
For example, the classical sequent $\btyp \vdash \btyp$
may be ``read'' in $\PRK$ in various different ways,
such as $\btyp\pp \vdash \btyp\PP$ and $\btyp\PP \vdash \btyp\pp$,
of which the former holds but the latter does not.
In this section we show
that $\PRK$ is {\em conservative}~(\rprop{prk_conservative})
with respect to classical logic,
and that classical logic may be {\em embedded}~(\rthm{prk_embedding})
in $\PRK$.
We also describe the computational behavior of the terms resulting from
this embedding~(\rlem{classical_embedding_computation}).
\medskip

First, we claim that $\PRK$ is a {\bf conservative extension}
of classical logic, \ie if
$\typ\PP_1,\hdots,\typ\PP_n \vdash \typtwo\PP$
holds in $\PRK$
then the sequent $\typ_1,\hdots,\typ_n \vdash \typtwo$
holds in classical logic. In general:
\begin{proposition}
\lprop{prk_conservative}
Define $\classem{\ev}$ as follows:
\[
{\small
  \begin{array}{r@{\ }c@{\ }l@{\HS}r@{\ }c@{\ }l}
  \classem{\typ\PP} & \eqdef & \typ
  &
  \classem{\typ\NN} & \eqdef & \neg\typ
  \\
  \classem{\typ\pp} & \eqdef & \typ
  &
  \classem{\typ\nn} & \eqdef & \neg\typ \\
  \end{array}
}
\]
If the sequent $\ev_1,\hdots,\ev_n \vdash \evtwo$ holds in $\PRK$
then the sequent $\classem{\ev_1},\hdots,\classem{\ev_n} \vdash \classem{\evtwo}$
holds in classical propositional logic.
\end{proposition}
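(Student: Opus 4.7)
The plan is to proceed by a straightforward induction on the derivation of $\ev_1,\hdots,\ev_n \vdash \evtwo$ in $\PRK$, showing that every $\PRK$ inference rule becomes classically valid after applying the translation $\classem{-}$.

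First I would record two auxiliary observations that streamline the case analysis. (i) For every moded proposition $\ev$, the formula $\classem{\ev\OP}$ is classically equivalent to $\neg \classem{\ev}$: in the two cases where $\ev$ is positive this is a strict equality, and in the two cases where $\ev$ is negative it is a matter of double-negation elimination. (ii) $\classem{\trunc{\ev}} = \classem{\ev}$, which will be needed whenever a rule internally shifts strengths. Both facts are immediate from the definition of $\classem{-}$, $\ev\OP$, and $\trunc{\ev}$.

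With these in hand, I would go through the rules of \rdef{system_prk}. The axiom rule translates to classical reflexivity. The rule $\Abs$ maps to classical \emph{ex falso}: from $\classem{\ev}$ and (classically) $\neg\classem{\ev}$ one derives any classical formula. The rules $\Iandp,\Eandp,\Iorp,\Eorp$ translate, after simplifying $\classem{\typ\PP}=\typ$, to the usual classical introduction and elimination rules for $\land$ and $\lor$; dually, $\Iorn,\Eorn,\Iandn$ translate to classically valid statements using De~Morgan and the fact that $\neg\typ$ appears in the conclusion whenever a negative mode appears. The negation rules $\Inotp,\Inotn,\Enotp,\Enotn$ each become one of the four elementary steps $\typ \vdash \neg\neg\typ$ or $\neg\neg\typ \vdash \typ$, all classically valid. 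Finally, $\Icp$ translates to the Peirce-like tautology: from $\classem{\tctx},\neg\typ \vdash \typ$ conclude $\classem{\tctx} \vdash \typ$ (dually for $\Icn$), and $\Ecp,\Ecn$ collapse to internal applications of explosion. In every case the translation of the premises classically entails the translation of the conclusion, and the induction goes through.

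The only mildly nonobvious case is $\Eandn$: the translated premises give $\classem{\tctx} \vdash \neg(\typ \land \typtwo)$ together with $\classem{\tctx},\neg\typ \vdash \classem{\ev}$ and $\classem{\tctx},\neg\typtwo \vdash \classem{\ev}$. Here one has to invoke classical De~Morgan to turn $\neg(\typ \land \typtwo)$ into $\neg\typ \lor \neg\typtwo$ and then apply classical disjunction elimination to obtain $\classem{\tctx} \vdash \classem{\ev}$. This step (and its dual inside $\Icp$/$\Icn$) is precisely where the proof uses that the target logic is classical rather than intuitionistic; elsewhere the translation stays essentially within intuitionistic reasoning.
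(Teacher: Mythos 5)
Your proof is correct and follows essentially the same route as the paper: induction on the $\PRK$ derivation, checking that each inference rule is mapped by $c(-)$ to a classically valid inference, with $\Eandn$ (via De~Morgan) singled out as the representative nontrivial case --- exactly the example the paper gives.
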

\begin{proof}
By induction on the derivation of the judgment,
observing that all the inference rules in $\PRK$
are mapped to classically valid inferences.
For example, for the $\Eandn$ rule,
note that
if $\tctx \vdash \neg(\typ \land \typtwo)$
and $\tctx, \neg\typ \vdash \typthree$
and $\tctx, \neg\typtwo \vdash \typthree$
hold in classical propositional logic then
$\tctx \vdash \typthree$.
\end{proof}

Second, we claim that classical logic may be {\bf embedded} in
$\PRK$, that is:
\begin{theorem}
\lthm{prk_embedding}
If $\typ_1,\hdots,\typ_n \vdash \typtwo$ holds in classical logic
then $\typ\PP_1,\hdots,\typ\PP_n \vdash \typtwo\PP$ holds in $\PRK$.
\end{theorem}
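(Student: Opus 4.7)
The plan is to proceed by induction on the derivation of the classical sequent $\typ_1,\ldots,\typ_n \vdash \typtwo$, using any standard natural deduction formulation of classical propositional logic over $\land,\lor,\neg$ (e.g.\ the intuitionistic rules plus the law of excluded middle as an axiom). For each classical rule I construct a derivation of the $\PP$-decorated sequent in $\PRK$. The systematic bridge between the strong modes ($\pp,\nn$) and the classical ones ($\PP,\NN$) is provided by the admissible principles of \rlem{admissible_rules_logic}: projection of conclusions ($\PrCon$), contraposition ($\Contrapose$), generalized absurdity ($\Abs'$) and classical strengthening ($\ClStr$).

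The introduction rules, together with $\Ax$, are the easy cases. For each of them I apply the corresponding $\PRK$ rule ($\Iandp$, $\Iorp$, $\Inotp$) to obtain a $\pp$-conclusion from the inductive hypotheses, and then lift to $\PP$ by $\PrCon$. For instance, in the $\land$-introduction case the inductive hypotheses yield $\tctx^\PP \vdash \typ\PP$ and $\tctx^\PP \vdash \typtwo\PP$; applying $\Iandp$ gives $\tctx^\PP \vdash (\typ \land \typtwo)\pp$, and $\PrCon$ yields $\tctx^\PP \vdash (\typ \land \typtwo)\PP$. The law of excluded middle is handled separately: its image $\vdash(\typ\lor\neg\typ)\PP$ is exactly the derivation of \rexample{lem_and_noncontr}.

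The elimination rules are more subtle because the $\PRK$ eliminators $\Eandp$, $\Eorp$, $\Enotp$ require the major premise in strong form $\pp$, which the induction does not supply. My general recipe is: to prove $\tctx^\PP \vdash \typthree\PP$, use $\ClStr$ to reduce to $\tctx^\PP,\typthree\NN \vdash \typthree\pp$ or even $\tctx^\PP,\typthree\NN \vdash \typthree\PP$, then build the \emph{strong denial} of the major premise in the enriched context and apply $\Abs'$ against the classical affirmation coming from the induction. For $\land$-elimination, I apply $\Iandn$ to $\typ_i\NN$ to get $(\typ_1\land\typ_2)\nn$, lift to $(\typ_1\land\typ_2)\NN$ by $\PrCon$, and clash with the hypothesis $\tctx^\PP \vdash (\typ_1\land\typ_2)\PP$. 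The negation elimination (ex falso) case is analogous: from an assumption $\typ\PP$ I obtain $(\neg\typ)\nn$ via $\Inotn$, lift by $\PrCon$ and clash with $(\neg\typ)\PP$.

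The main obstacle is the $\lor$-elimination case, where one must orchestrate several admissible principles. Starting from the inductive hypotheses $\tctx^\PP \vdash (\typ_1\lor\typ_2)\PP$ and $\tctx^\PP,\typ_i\PP \vdash \typthree\PP$, the plan is: (i) reduce the goal by $\ClStr$ to the enriched context $\tctx^\PP,\typthree\NN$; (ii) contrapose both branch hypotheses via $\Contrapose$ to obtain $\tctx^\PP,\typthree\NN \vdash \typ_i\NN$ for $i\in\{1,2\}$; (iii) combine them with $\Iorn$ to derive $\tctx^\PP,\typthree\NN \vdash (\typ_1\lor\typ_2)\nn$; (iv) apply $\PrCon$ to lift this to $(\typ_1\lor\typ_2)\NN$; and (v) conclude by $\Abs'$ against the weakened hypothesis $(\typ_1\lor\typ_2)\PP$. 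All other elimination cases follow a simpler variant of the same strategy.
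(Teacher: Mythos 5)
Your overall strategy coincides with the paper's: induction on the classical natural-deduction derivation, mediating between strong and classical modes with the admissible principles of \rlem{admissible_rules_logic}, and discharging the law of excluded middle via \rexample{lem_and_noncontr}. Where you diverge is in how the elimination rules are finished off. The paper recovers the strong form of the major premise with $\Ecp$ and then applies the genuine $\PRK$ eliminator (e.g.\ $\Eorp$ in the disjunction case, $\Eandp$ in the conjunction case); you instead contrapose, rebuild the opposite classical proposition, and close with $\Abs'$, never eliminating at all. Your $\lor$-elimination chain ($\ClStr$, then $\Contrapose$ on each branch to get $\typ_i\NN$, then $\Iorn$, $\PrCon$, and $\Abs'$ against the weakened major premise) does go through, so this is a legitimate alternative for pure derivability; the trade-off is that it yields proof terms that collapse into an absurdity witness, whereas the paper's terms are engineered to have the computational behaviour later recorded in \rlem{classical_embedding_computation}. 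For the theorem as stated this difference is harmless.

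The one genuine gap is negation introduction. You file it among the ``easy'' introduction cases, claiming each is handled by applying the corresponding introduction rule (here $\Inotp$) to the inductive hypothesis and lifting with $\PrCon$. That recipe does not apply: the classical premise is $\tctx,\typ\vdash\Bot$, so the inductive hypothesis is $\tctx\PP,\typ\PP\vdash\Bot\PP$, whereas $\Inotp$ needs $\tctx\PP\vdash\typ\NN$ as its premise. Bridging the two requires exactly the clash machinery you reserve for eliminations: one must invoke the provable $\Bot\NN$ (the law of non-contradiction, the dual half of \rexample{lem_and_noncontr}, with $\Bot\eqdef\btyp_0\land\neg\btyp_0$ since $\PRK$ has no falsity constant), apply $\Abs'$ under the assumption $\typ\PP$ to obtain $\typ\nn$, close with $\Icn$ to get $\typ\NN$, and only then apply $\Inotp$ and $\PrCon$ --- which is precisely what the paper's term $\neglamc{(\var:\typ\PP)}{\tm}$ does. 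The repair is routine with tools you already deploy elsewhere, but as written the case is not covered; you should also state explicitly how $\Bot$ is encoded and note that the explosion rule is discharged by the same clash against $\Bot\NN$.
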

\begin{proof}
The proof is by induction on the proof of the sequent
$\typ_1,\hdots,\typ_n \vdash \typtwo$ in
Gentzen's system of natural deduction for classical logic \textsf{NK},
including
introduction and elimination rules for conjunction,
disjunction, and negation (encoding falsity as the pure
proposition $\Bot \eqdef (\btyp_0 \land \neg\btyp_0)$
for some fixed propositional variable $\btyp_0$),
the explosion principle, and the law of excluded middle.
We build the corresponding proof terms in $\lambdaC$:
\medskip

\noindent{1.} {\bf Conjunction introduction.}
  Let $\tctx \vdash \tm : \typ\PP$ and $\tctx \vdash \tmtwo : \typtwo\PP$.
  Then $\tctx \vdash \pairc{\tm}{\tmtwo} : (\typ \land \typtwo)\PP$
  where:
  \[
    \pairc{\tm}{\tmtwo} \eqdef
    \claslamp{(\under:(\typ\land\typtwo)\NN)}{
      \pairp{\tm}{\tmtwo}
    }
  \]

\noindent{2.} {\bf Conjunction elimination.}
  Let $\tctx \vdash \tm : (\typ_1 \land \typ_2)\PP$.
  Then $\tctx \vdash \projic{\tm} : \typ_i\PP$ where:
  \[
    \projic{\tm} \eqdef
    \claslamp{(\var:\typ_i\NN)}{
      \clasapp{
        \projip{
          \clasapp{
            \tm
          }{
            \claslamn{(\under:(\typ_1 \land \typ_2)\PP)}{\inin{\var}}
          }
        }
      }{
        \var
      }
    }
  \]

\noindent{3.} {\bf Disjunction introduction.}
  Let $\tctx \vdash \tm : \typ_i\PP$.
  Then $\tctx \vdash \inic{\tm} : (\typ_1 \lor \typ_2)\PP$ where:
  \[
    \inic{\tm} \eqdef
    \claslamp{(\under:(\typ_1\lor\typ_2)\NN)}{
      \inip{\tm}
    }
  \]

\noindent{4.} {\bf Disjunction elimination.}
  Let
  $\tctx \vdash \tm : (\typ \lor \typtwo)\PP$
  and $\tctx, \var : \typ\PP \vdash \tmtwo : \typthree\PP$
  and $\tctx, \var : \typtwo\PP \vdash \tmthree : \typthree\PP$.
  Then $\tctx \vdash
        \casec{\tm}{(\var:\typ\PP)}{\tmtwo}{(\var:\typtwo\PP)}{\tmthree}
        : \typthree\PP$,
  where:
  \[
    \claslamp{(\vartwo:\typthree\NN)}{
      \caseptablex{
        (\clasapp{
          \tm
        }{
          \claslamn{(\under:(\typ\lor\typtwo)\PP)}{
            \pairn{
              \contrapose{\var}{\vartwo}{
                \tmtwo
              }
            }{
              \contrapose{\var}{\vartwo}{
                \tmthree
              }
            }
          }
        })
      }{
        (\var : \typ\PP)
      }{
        \clasapp{
          \tmtwo
        }{
          \vartwo
        }
      }{
        (\var : \typtwo\PP)
      }{
        \clasapp{
          \tmthree
        }{
          \vartwo
        }
      }
    }
  \]
  Recall that $\contrapose{\var}{\vartwo}{\tm}$
  stands for the witness of contraposition~(\rlem{admissible_rules}).

\noindent{5.} {\bf Negation introduction.}
  By \rlem{lem_and_noncontr} we have that
  $\tctx \vdash \lemN{\btyp_0} : (\btyp_0 \land \neg\btyp_0)\NN$,
  that is $\tctx \vdash \lemN{\btyp_0} : \Bot\NN$.
  Moreover, suppose that $\tctx, \var:\typ\PP \vdash \tm : \Bot\PP$.
  Then $\tctx \vdash \neglamc{(\var:\typ\PP)}{\tm} : (\neg\typ)\PP$,
  where:
  \[
    \neglamc{(\var:\typ\PP)}{\tm} \eqdef
    \claslamp{(\under:(\neg\typ)\NN)}{
      \negip{
        \claslamn{(\var:\typ\PP)}{
          (\abs{
            \typ\nn
          }{
            \tm
          }{
            \lemN{\btyp_0}
          })
        }
      }
    }
  \]

\noindent{6.} {\bf Negation elimination.}
  Let $\tctx \vdash \tm : (\neg\typ)\PP$
  and $\tctx \vdash \tmtwo : \typ\PP$.
  Then $\tctx \vdash \negapc{\tm}{\tmtwo} : \Bot\PP$,
  where:
  \[
    \negapc{\tm}{\tmtwo} \eqdef
    \abs{
      \Bot\PP
    }{
      \tm
    }{
      \claslamn{(\under:(\neg\typ)\PP)}{
        \negin{
          \tmtwo
        }
      }
    }
  \]

\noindent{7.} {\bf Explosion.}
  Let $\tctx \vdash \tm : \Bot\PP$.
  Then $\tctx \vdash (\abs{\evtwo}{\tm}{\lemN{\btyp_0}}) : \evtwo$.

\noindent{8.} {\bf Excluded middle.}
  It suffices to take $\lemC{\typ} \eqdef \lemP{\typ}$.
  Then by \rlem{lem_and_noncontr},
  $\tctx \vdash \lemC{\typ} : (\typ \lor \neg\typ)\PP$.
\end{proof}
\medskip

Finally, this embedding may be understood as providing
a {\bf computational interpretation} for classical
logic. In fact, besides the introduction and elimination rules
that have been proved above, implication may be defined as an
abbreviation, $(\typ \IMP \typtwo) \eqdef (\neg\typ \lor \typtwo)$,
and witnesses for its introduction rule $\lamc{\var:\typ}{\tm}$ and
its elimination rule $\appc{\tm}{\tmtwo}$ may be defined as follows.
If $\tctx,\var:\typ\PP \vdash \tm : \typtwo\PP$
then $\tctx \vdash \lamc{(\var:\typ)}{\tm} : (\typ \IMP \typtwo)\PP$
where:
\[
{\small
\begin{array}{rcl}
  \lamc{\var}{\tm} & \eqdef &
    \claslamp{(\vartwo:(\typ\IMP\typtwo)\NN)}{
      \inip[2]{
        \tm\sub{\var}{\mathbf{X}_{\vartwo}}
      }
    }
\\
\mathbf{X}_\vartwo & \eqdef &
  \claslamp{(\varthree:\typ\NN)}{
    \clasapp{
      (\negen{(
        \clasapn{
          \mathbf{X'}_{\vartwo,\varthree}
        }{
          \claslamp{(\under:(\neg\typ)\NN)}{
            \negip{
              \varthree
            }
          }
        }
      )})
    }{
      \varthree
    }
  }
\\
\mathbf{X'}_{\vartwo,\varthree} & \eqdef &
  \projip[1]{
    \clasapn{
      \vartwo
    }{
      \claslamp{(\under:(\typ\IMP\typtwo)\NN)}{
        \inip[1]{
          \claslamp{(\under:(\neg\typ)\NN)}{
            \negip{
              \varthree
            }
          }
        }
      }
    }
  }
\end{array}
}
\]
If $\tctx \vdash \tm : (\typ \IMP \typtwo)\PP$
and $\tctx \vdash \tmtwo : \typ\PP$,
then $\tctx \vdash \appc{\tm}{\tmtwo} : \typtwo\PP$,
where:
\[
{\small
\begin{array}{r@{}c@{}l}
  \appc{\tm}{\tmtwo}
  & \eqdef &
  \claslamptable{(\var:\typtwo\NN)}{
    \caseptablex{
      (\clasapp{
        \tm
      }{
        \claslamn{(\under:(\typ\imp\typtwo)\PP)}{
          \pairn{
            (\claslamn{(\under:(\neg\typ)\PP)}{
              \negin{
                \tmtwo
              }
            })
          }{
            \var
          }
        }
      })
    }{
      (\vartwo:(\neg\typ)\PP)
    }{
      \abs{\typtwo\pp}{\tmtwo}{
        \negen{
          (\clasapp{
            \vartwo
          }{
            \claslamn{(\under:(\neg\typ)\PP)}{
              \negin{
                \var
              }
            }
          })
        }
      }
    }{
      (\varthree:\typtwo\PP)
    }{
      \clasapp{\varthree}{\var}
    }
  } 
\end{array}
}
\]

\begin{lemma}
\llem{classical_embedding_computation}
The following hold in $\lambdaCeta$ (with $\ruleEta$ reduction):
\[
\begin{array}{rcl}
  \projic{\pairc{\tm_1}{\tm_2}}
  & \to^* & \tm_i
\\
  \casec{\inic{\tm}}{\var}{\tmtwo_1}{\var}{\tmtwo_2}
  & \to^* &
  \tmtwo_i\sub{\var}{\tm}
\\
  \appc{(\lamc{\var}{\tm})}{\tmtwo}
  & \to^* &
  \tm\sub{\var}{\tmtwo}
\\
  \casec{\lemC{\typ}}{\var}{\tmtwo_1}{\var}{\tmtwo_2}
  & \to^* &
  \claslamp{\vartwo}{
    (\clasapp{
      \tmtwo_2\sub{\var}{\tmtwo^*_1}
    }{
      \vartwo
    })
  }
\end{array}
\]
\hfill where $
  \tmtwo^*_1 :=
    \claslamp{\under}{
      \negip{
        (\claslamn{\var}{
          \abs{}{
            \tmtwo_1
          }{ 
            \vartwo
          }
        })
      }
    }$.
\end{lemma}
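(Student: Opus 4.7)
The plan is to verify each of the four reductions by direct calculation: unfold the definitions of the classical constructors and eliminators from the embedding of \rthm{prk_embedding}, then chain together the basic rewrite rules of $\lambdaCeta$, principally $\ruleBeta$, $\ruleProj$, $\ruleCase$, $\ruleNeg$, together with $\ruleEta$. Throughout, $\lemC{\typ}$ is to be taken as the $\lemP{\typ}$ term of \rlem{admissible_rules} under the definition fixed in the proof of \rthm{prk_embedding}.

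The first three cases follow a common, essentially mechanical pattern. In each, after unfolding, the left-hand side takes the shape $\claslamp{\vartwo}{E[\clasapp{(\claslamp{\under}{C})}{R}]}$, where $C$ is a strong constructor (such as $\pairp{\tm_1}{\tm_2}$ or $\inip{\tm}$) and $R$ is the classical refutation assembled by the eliminator. A $\ruleBeta$ step fires the classical constructor, substituting $R$ for its (unused) bound variable and exposing $C$ as the immediate argument of a matching strong eliminator. This pair contracts by $\ruleProj$ or $\ruleCase$, leaving a residue of the form $\claslamp{\vartwo}{\clasapp{\tm'}{\vartwo}}$ with $\vartwo \notin \fv{\tm'}$, which a final $\ruleEta$ step collapses to $\tm'$. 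The classical application $\appc{(\lamc{\var}{\tm})}{\tmtwo}$ is a longer instance of the same pattern, requiring one to additionally unfold the encoding $\mathbf{X}_\vartwo$ of the bound variable of $\lamc$ and to fire extra $\ruleBeta$, $\ruleNeg$ and $\ruleProj$ steps to peel off the scaffolding surrounding $\tmtwo$.

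The fourth reduction is the main obstacle. After unfolding $\lemP{\typ}$ and its auxiliary $\lemPinner{\vartwo}{\typ}$, one fires a $\ruleBeta$ that supplies $\lemP{\typ}$ with the classical refutation assembled from the contrapositives of $\tmtwo_1$ and $\tmtwo_2$. This instantiation cascades: the outer $\inip[2]{\cdot}$ is eliminated to select the branch carrying $\tmtwo_2$, and in that branch the subterm $\lemPinner{\vartwo}{\typ}$ is activated by a further $\ruleBeta$ and $\ruleNeg/\ruleProj$ to deliver an $\inip[1]{\cdot}$ whose payload is exactly the term $\tmtwo^*_1$ promised in the statement. A final sequence of $\ruleBeta$, $\ruleCase$ and absurdity reductions strips the remaining scaffolding, leaving $\claslamp{\vartwo}{\clasapp{\tmtwo_2\sub{\var}{\tmtwo^*_1}}{\vartwo}}$. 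Notice that this outer $\claslamp{\vartwo}{\cdot}$ cannot be further collapsed by $\ruleEta$, since $\vartwo$ occurs free in $\tmtwo^*_1$ and hence in $\tmtwo_2\sub{\var}{\tmtwo^*_1}$, which also explains why the right-hand side of this case is stated in that form rather than as a bare substitution.

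The principal difficulty is bookkeeping: the reduction chains are long and the sign and type decorations must be tracked carefully so that each contracted redex is indeed a redex. The role of $\ruleEta$ is essential in cases~1, 2, and~3 (but not in case~4), which explains why the identities are stated in $\lambdaCeta$ rather than in $\lambdaC$. Subject reduction (\rprop{subject_reduction}) guarantees that every intermediate term is well-typed, so apart from the clerical effort no further conceptual input is required.
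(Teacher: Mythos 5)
Your overall strategy---unfolding the definitions from \rthm{prk_embedding} and verifying each identity by a direct chain of $\ruleBeta$, $\ruleProj$, $\ruleCase$, $\ruleNeg$ and $\ruleEta$ steps---is exactly the paper's, and your account of the first three cases is accurate, including the observation that the closing $\ruleEta$ step is what forces the statement into $\lambdaCeta$ and that case~4 needs no $\ruleEta$ because $\vartwo$ occurs free in $\tmtwo^*_1$.

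Your description of the fourth case, however, does not correspond to a reduction that exists, and this is the one case where the mechanism matters. You claim that after the second branch is selected, the subterm $\lemPinner{\vartwo}{\typ}$ is ``activated'' by further $\ruleBeta$ and $\ruleNeg/\ruleProj$ steps to deliver an $\inip[1]{\cdot}$ whose payload is $\tmtwo^*_1$. This cannot be right: $\tmtwo^*_1$ contains $\tmtwo_1$, whereas $\lemPinner{\vartwo}{\typ}$ is built only from $\typ$ and contains no occurrence of $\tmtwo_1$, so no amount of reduction of $\lemPinner{\vartwo}{\typ}$ can produce it. In the actual computation $\lemPinner{\vartwo}{\typ}$ is \emph{discarded}: it occurs as the argument in $\clasapn{\tmthree}{\lemPinner{\vartwo}{\typ}}$, where $\tmthree = \claslamn{\under}{\pairn{\contrapose{\var}{\vartwo}{\tmtwo_1}}{\contrapose{\var}{\vartwo}{\tmtwo_2}}}$ is the refutation assembled from the contrapositives of the two case branches; since the bound variable of $\tmthree$ is unused, the $\ruleBeta$ step throws $\lemPinner{\vartwo}{\typ}$ away and exposes the pair of contrapositives. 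The term $\tmtwo^*_1$ then arises as $\claslamp{\under}{\negip{\projin[1]{\pairn{\contrapose{\var}{\vartwo}{\tmtwo_1}}{\contrapose{\var}{\vartwo}{\tmtwo_2}}}}}$ contracting by $\ruleProj$ to $\claslamp{\under}{\negip{\contrapose{\var}{\vartwo}{\tmtwo_1}}}$, which is literally $\tmtwo^*_1$ once the contraposition witness of \rlem{admissible_rules} is unfolded. The full chain is $\ruleBeta$, $\ruleCase$, $\ruleBeta$, $\ruleProj$; there is no $\ruleNeg$ step and there are no absurdity reductions (the $\abs{}{\tmtwo_1}{\vartwo}$ inside $\tmtwo^*_1$ stays frozen under its binder). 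So the endpoint you state is correct, but the route you propose for reaching it would not verify; the case-4 calculation needs to be redone along these lines.
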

\begin{proof}
By calculation.
The last rule describes the behaviour
of the law of excluded middle.
\SeeAppendix{See~\rsec{classical_simulation} in the appendix.}
\end{proof}

\section{Conclusion}
\lsec{prk_conclusion}

This work explores a logical system $\PRK$,
formulated in natural deduction style~(\rdef{system_prk}),
based on a, to the best of our knowledge, new realizability
interpretation for classical logic.
The key idea is that a classical proof of a proposition
can be understood as a transformation from a classical refutation
to a strong proof of the proposition.
We summarize our contributions:
system $\PRK$ has been shown to be {\bf sound}~(\rprop{kripke_soundness})
and {\bf complete}~(\rthm{kripke_completeness}) with respect to a Kripke
semantics.
A calculus $\lambdaC$~(\rdef{lambdaC_type_system}, \rdef{the_lambdaC_calculus}) based on $\PRK$ via the propositions-as-types correspondence,
has been defined.
The calculus enjoys good properties,
the most relevant ones being {\bf confluence}~(\rprop{lambdaC_confluent}),
{\bf subject reduction}~(\rprop{subject_reduction}),
{\bf strong normalization}~(\rthm{lambdaC_canonical}),
and {\bf canonicity}~(\rthm{canonicity}).
Finally, we have shown that $\PRK$ is
a {\bf conservative extension}~(\rprop{prk_conservative}) of classical logic,
and classical logic may be {\bf embedded}~(\rthm{prk_embedding}) in $\PRK$.
This provides a {\bf computational interpretation}~(\rlem{classical_embedding_computation}) for classical logic.
\smallskip

{\bf Future Work.}
It is a natural question whether $\lambdaC$
can be extended to second-order logic.
In fact, formulating such a system is straightforward
by extending the realizability
interpretation described in the introduction with equations:
\[
{\small
  \begin{array}{rcl@{\hspace{.5cm}}rcl}
    (\forall\btyp.\typ)\pp & \approx & \forall\btyp.(\typ\PP)
  &
    (\forall\btyp.\typ)\nn & \approx & \exists\btyp.(\typ\NN)
  \\
    (\exists\btyp.\typ)\pp & \approx & \exists\btyp.(\typ\PP)
  &
    (\exists\btyp.\typ)\nn & \approx & \forall\btyp.(\typ\NN)
  \end{array}
}
\]
For instance, introduction and elimination rules
for positive universal quantification in second-order $\lambdaC$ would be:
\[
{\small
  \indrule{\Iallp}{
    \tctx \vdash \tm : \typ\PP
    \HS
    \btyp \not\in \fv{\tctx}
  }{
    \tctx \vdash \lamtp{\btyp}{\tm} : (\all{\btyp}{\typ})\pp
  }
  \indrule{\Eallp}{
    \tctx \vdash \tm : (\all{\btyp}{\typtwo})\pp
  }{
    \tctx \vdash \apptp{\tm}{\typ} : \typtwo\sub{\btyp}{\typ}\PP
  }
}
\]
From the logical point of view,
the system turns out to be a conservative extension of second-order classical
logic, and from the computational point of view it still enjoys confluence
and subject reduction.
However, the techniques described in this paper do not suffice to prove
strong normalization.
A different normalization proof, possibly based on Tait--Girard's technique
of {\em reducibility candidates}, should be explored.

We have not addressed decision problems,
such as determining the validity of a formula in $\PRK$, corresponding to the
type inhabitation problem for $\lambdaC$.
Unfortunately, $\lambdaC$ does not enjoy the {\em subformula property}.
For a counter\-example consider
$\strongabs{\btyp\pp}{(\strongabs{\btyptwo\pp}{\var}{\vartwo})}{(\strongabs{\btyptwo\nn}{\var}{\vartwo})}$,
which is a normal term of type $\btyp\pp$
under the context $\var:\btyp\pp,\vartwo:\btyp\nn$
such that the unrelated type $\btyptwo\pp$ appears in the derivation.

We have not stated explicitly a computational rule for negation, \ie
for $\negapc{(\neglamc{\var}{\tm})}{\tmtwo}$ in \rlem{classical_embedding_computation}
\SeeAppendix{but see~\rsec{appendix:simulation_of_negation} in the appendix}.
Intriguingly, it does not reduce to $\tm\sub{\var}{\tmtwo}$ in general,
\ie the inference schemes for classical negation that we have
constructed are not consistent with a definition of negation
as $\neg\typ \equiv (\typ \to \bot)$ in intuitionistic logic.
We believe this to be not just an artifact of a faulty construction,
but due to a deeper reason.
\smallskip

{\bf Related Work.}
That classical logic may be embedded in intuitionistic logic has been
known as early as Glivenko's proof of his theorem in the late 1920s.
For a long time, however, the generalized belief seemed to be
that classical proofs had no computational content.
In the late 1980s, Griffin~\cite{griffin1989formulae} remarked that
the type of Felleisen's $\mathcal{C}$ operator
(similar to \texttt{call/cc})
corresponds to Peirce's law $(((\typ \to \typtwo) \to \typ) \to \typ)$.
This sparked research on calculi for classical logic.
Many of these works are based on
classical axioms that behave as {\em control operators}, \ie
operators that can manipulate their computational context.
The literature is abundant on this topic---we limit ourselves
to pointing out some influential works.

Parigot~\cite{lambdamu-parigot} proposes a calculus $\lambda\mu$
based on cut elimination in natural deduction with multiple
conclusions for second-order classical logic.
Its control operator $\mu$ is related with the rule we call contraposition.
The study of $\lambda\mu$ is mature:
topics such as separability~\cite{david2001lambdamu,saurin2005separation,DBLP:conf/csl/Saurin08},
abstract machines~\cite{DBLP:journals/mscs/Groote98},
call-by-need~\cite{pedrot2016classical},
intersection types~\cite{DBLP:journals/lmcs/KesnerV19},
and encodings into linear logic~\cite{DBLP:journals/tcs/Laurent03,DBLP:conf/csl/KesnerBV20} have been developed.

Barbanera and Berardi~\cite{symmetric-Barbanera-berardi} propose
a symmetric $\lambda$-calculus based on a system of natural deduction
including a ``symmetric application'' operator $(\tm \bigstar \tmtwo)$,
closely related to our witness of absurdity ($\strongabs{}{\tm}{\tmtwo}$).
The system of~\cite{symmetric-Barbanera-berardi} is sound and complete
with respect to second-order classical logic,
and it is strongly normalizing, but not confluent.

Curien and Herbelin~\cite{Curien00theduality} derive a
calculus $\bar{\lambda}\mu\tilde{\mu}$
from Gentzen's classical sequent calculus.
This exposes the symmetry between
a program yielding an output and a continuation consuming an input.
The interaction between a program and a continuation,
written $\langle{\tm\mid\tmtwo}\rangle$ is also reminiscent
to our witness of absurdity ($\strongabs{}{\tm}{\tmtwo}$).
Many variants of this system have been studied;
for example, recently, Miquey~\cite{DBLP:journals/toplas/Miquey19} has
extended $\bar{\lambda}\mu\tilde{\mu}$ to incorporate dependent types.

Classical calculi such as
$\lambda\mu$ and $\bar{\lambda}\mu\tilde{\mu}$
are typically translated into the $\lambda$-calculus by means of
continuation-passing style (CPS) translations,
whereas our translation from $\lambdaC$ to the extended System~F
is simpler. Part of the complexity
appears to be factored into the proof that classical inference
schemes hold in $\lambdaC$~(\ie \rlem{classical_embedding_computation}).

The works of Andreoli~\cite{andreoli1992logic}
and Girard~\cite{girard1993unity} in linear logic
introduced the notions of {\em focusing} and {\em polarity},
which allow to formulate linear, intuitionistic,
and classical logic as fragments of a single system (Unified~Logic).
Our notions of positive and negative formulae,
which express affirmation and denial,
should not be confused with the subtler notions of positive and
negative formulae in the sense of polarity.

Krivine~\cite{krivine2009} defines a realizability interpretation
for classical logic using an abstract machine $\lambda_c$
that extends Krivine's abstract machine with further instructions.
This approach is based on the idea that adding logical axioms
corresponds to adding instructions to the machine,
and it has been adapted to provide computational meaning
to reasoning principles such as the
axiom of dependent choice~\cite{DBLP:journals/tcs/Krivine03,herbelin2012constructive}.

Ilik, Lee, and Herbelin~\cite{DBLP:journals/apal/IlikLH10} study a
Kripke semantics for classical logic.
Note that our work in~\rsec{prk_kripke_semantics}
provides a different Kripke semantics for $\PRK$,
and hence for classical logic.
The semantics given in~\cite{DBLP:journals/apal/IlikLH10} and our own
have some similarities, but the relation between them is not obvious.
For example, in~\cite{DBLP:journals/apal/IlikLH10} a Kripke model
involves a relation of ``exploding'' world, which has no counterpart in
our system.

\ifCLASSOPTIONcompsoc
  \section*{Acknowledgments}
\else
  \section*{Acknowledgment}
\fi
This work was partially supported by
project grant ECOS~Sud~A17C01.
The authors would like to thank Eduardo\- Bo\-ne\-lli
and the anonymous reviewers for feedback on an early draft.

\bibliographystyle{IEEEtran}
\bibliography{biblio}

\newpage

\begin{alphasection}

\section{Technical Appendix}

\subsection{Proof of the Projection Lemma~(\rlem{projection_lemma})}
\lsec{appendix:projection_lemma}

\begin{lemma}[Projection]
If $\tctx,\ev \vdash \evtwo$
then $\tctx,\trunc{\ev} \vdash \trunc{\evtwo}$.
\end{lemma}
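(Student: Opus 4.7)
The plan is to proceed by induction on the derivation of $\tctx, \ev \vdash \evtwo$. The bulk of the cases are routine; the real subtlety lies in the elimination rules whose principal premise is a strong compound proposition, where the induction hypothesis only delivers its classical version.

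For the routine cases: the $\Ax$ case is handled by $\Ax$ (if $\evtwo = \ev$) or by $\Ax$ followed by $\PrCon$ (if $\evtwo \in \tctx$). For the $\Abs$ rule, the induction hypothesis yields derivations of $\trunc{\ev_1}$ and $\trunc{\ev_1}\OP$ (using the identity $\trunc{(\ev\OP)} = \trunc{\ev}\OP$), and since these are now classical rather than strong, we appeal to generalized absurdity $\Abs'$. For the introduction rules ($\Iandp$, $\Iorp$, $\Iorn$, $\Iandn$, $\Inotp$, $\Inotn$), one applies the induction hypothesis, fires the same rule to obtain the strong conclusion, and finishes with $\PrCon$. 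For the classical rules ($\Icp$, $\Icn$, $\Ecp$, $\Ecn$), the conclusion is already classical, so $\trunc{-}$ acts as the identity on it and the induction hypothesis composes directly with the same rule.

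The difficult cases are $\Eandp$, $\Eorp$, $\Eandn$, $\Eorn$. Take the representative case $\Eorp$, derived from premises $\tctx, \ev \vdash (\typ \lor \typtwo)\pp$, $\tctx, \ev, \typ\PP \vdash \evthree$, and $\tctx, \ev, \typtwo\PP \vdash \evthree$. The induction hypothesis gives $\tctx, \trunc{\ev} \vdash (\typ \lor \typtwo)\PP$ together with the two conditionals with $\trunc{\evthree}$ in place of $\evthree$, but $\Eorp$ cannot be fired because only the classical disjunction is available. Assume $\trunc{\evthree} = \typthree\PP$ (the case $\typthree\NN$ is dual): by $\Icp$ it suffices to prove $\tctx, \trunc{\ev}, \typthree\NN \vdash \typthree\pp$. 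Contraposing the two conditional IH derivations against $\typthree\NN$ yields $\typ\NN$ and $\typtwo\NN$ in that extended context; $\Iorn$ followed by $\PrCon$ then delivers $(\typ \lor \typtwo)\NN$, which combined with the weakened $(\typ \lor \typtwo)\PP$ gives $(\typ \lor \typtwo)\pp$ by $\Ecp$. Now $\Eorp$ finally fires with the induction-hypothesis conditionals, producing $\typthree\PP$ in each branch, and together with $\typthree\NN$ already in the context, $\Ecp$ yields $\typthree\pp$ as required.

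The remaining hard cases follow the same recipe — contrapose the conditionals, join them to reconstruct the classical dual of the principal premise, promote it to the strong form via $\Ecp$ or $\Ecn$, then apply the original elimination rule — or can be obtained through the duality principle of \rlem{duality_principle}. The main obstacle throughout is precisely this reconstruction step: re-synthesising the strong eliminated proposition from its classical counterpart so that the strong elimination rule becomes applicable. It is this detour through $\Icp$/$\Icn$, contraposition, and $\Ecp$/$\Ecn$ that gives the lemma its genuine content.
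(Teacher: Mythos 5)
Your treatment of the genuinely hard cases is essentially the paper's own proof: the paper packages your outer $\Icp$-plus-final-$\Ecp$ detour as the admissible rule of classical strengthening ($\ClStr$), and otherwise performs exactly the same reconstruction --- contrapose the branch hypotheses, rebuild $(\typ\lor\typtwo)\NN$ via $\Iorn$ and $\PrCon$, recover $(\typ\lor\typtwo)\pp$ by $\Ecp$ against the classical disjunction supplied by the induction hypothesis, and then fire $\Eorp$. So the substance of the lemma is handled correctly and along the same route.

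Two of the cases you dismiss as routine are not quite as you describe them, though both are repaired by the very maneuver you already use. First, for $\Icp$ the induction hypothesis gives $\tctx,\trunc{\ev},\typ\NN \vdash \typ\PP$ rather than $\typ\pp$, so you cannot ``compose directly with the same rule''; you need $\ClStr$ (equivalently, your $\Icp$/$\Ecp$ detour) to discharge the assumption $\typ\NN$, and likewise for $\Icn$. (For $\Ecp$/$\Ecn$ nothing is applied at all: the truncated conclusion is exactly the induction hypothesis of the first premise.) Second, $\Enotp$ and $\Enotn$ are missing from your list of difficult cases even though they fit your own diagnosis: the induction hypothesis yields $(\neg\typ)\PP$ where $\Enotp$ demands $(\neg\typ)\pp$, so one must again assume $\typ\PP$, build $(\neg\typ)\NN$ by $\Inotn$ and $\Icn$, recover $(\neg\typ)\pp$ by $\Ecp$, and close with $\ClStr$ --- exactly as the paper does. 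Neither point threatens the proof, but as literally written those cases would not go through.
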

\begin{proof}
We call $\ev$ the {\em target assumption}.
The proof proceeds by induction on the derivation of $\tctx,\ev \vdash \evtwo$.
We only study the cases with positive signs, the negative cases are symmetric.

\Case{$\Ax$}:
  let $\tctx,\evtwo \vdash \evtwo$.
  There are two cases, depending on whether the target assumption is
  in $\tctx$ or not.
  \begin{enumerate}
  \item
    {\em If the target assumption is in $\tctx$, \ie $\tctx = \tctx',\ev$.}
    Note that we have $\tctx',\trunc{\ev},\evtwo \vdash \evtwo$
    by the $\Ax$ rule.
    By truncating the conclusion (\rlem{projection_of_conclusions})
    we conclude that $\tctx',\trunc{\ev},\evtwo \vdash \trunc{\evtwo}$,
    as required.
  \item
    {\em If the target assumption is $\evtwo$.}
    Then we have that $\tctx,\trunc{\evtwo} \vdash \trunc{\evtwo}$
    by the $\Ax$ rule.
  \end{enumerate}

\Case{$\Abs$}:
  let $\tctx,\ev \vdash \evtwo$ be derived
  from $\tctx,\ev \vdash \evthree$ and $\tctx,\ev \vdash \evthree\OP$
  for some strong proposition $\evthree$.
  By \ih we have that
  $\tctx,\trunc{\ev} \vdash \trunc{\evthree}$ and $\tctx,\trunc{\ev} \vdash \trunc{\evthree\OP}$
  so by the generalized absurdity rule ($\Abs'$) we have that
  $\tctx,\trunc{\ev} \vdash \trunc{\evtwo}$.

\Case{\Iandp}:
  let $\tctx,\ev \vdash (\typ\land\typtwo)\pp$
  be derived from $\tctx,\ev \vdash \typ\PP$ and $\tctx,\ev \vdash \typ\PP$.
  By \ih, $\tctx,\trunc{\ev} \vdash \typ\PP$ and $\tctx,\trunc{\ev} \vdash \typ\PP$.
  By the $\Iandp$ rule,
  $\tctx,\trunc{\ev} \vdash (\typ\land\typtwo)\pp$.
  Projecting the conclusion (\rlem{projection_of_conclusions}),
  $\tctx,\trunc{\ev} \vdash (\typ\land\typtwo)\PP$
  as required.

\Case{$\Eandp$}:
  let $\tctx,\ev \vdash \typ_i\PP$ be derived from
  $\tctx,\ev \vdash (\typ_1\land\typ_2)\pp$.
  Then the proof is of the form:
  \[
    \indruleN{\ClStr}{
      \indruleN{\Eandp}{
        \indruleN{\Ecp}{
          \derivdots{\derivation}
          \HS
          \derivdots{\derivationtwo}
        }{
          \tctx,\trunc{\ev},\typ_i\NN \vdash (\typ_1\land\typ_2)\pp
        }
      }{
        \tctx,\trunc{\ev},\typ_i\NN \vdash \typ_i\PP
      }
    }{
      \tctx,\trunc{\ev} \vdash \typ_i\PP
    }
  \]
  where:
  \[
    \begin{array}{rcl}
    \derivation & \eqdef &
      \indruleNParen{\Weakening}{
        \indruleN{}{\text{\ih}}{\tctx,\trunc{\ev} \vdash (\typ_1\land\typ_2)\PP}
      }{
        \tctx,\trunc{\ev},\typ_i\NN \vdash (\typ_1\land\typ_2)\PP
      }
    \\
    \derivationtwo & \eqdef &
      \indruleNParen{\PrCon}{
        \indruleN{\Iandn}{
          \indruleN{\Ax}{}{\tctx,\trunc{\ev},\typ_i\NN \vdash \typ_i\NN}
        }{
          \tctx,\trunc{\ev},\typ_i\NN \vdash (\typ_1\land\typ_2)\nn
        }
      }{
        \tctx,\trunc{\ev},\typ_i\NN \vdash (\typ_1\land\typ_2)\NN
      }
    \end{array}
  \]

\Case{$\Iorp$}:
  let $\tctx,\ev \vdash (\typ_1 \lor \typ_2)\pp$
  be derived from $\tctx,\ev \vdash \typ_i\PP$.
  By \ih, $\tctx,\trunc{\ev} \vdash \typ_i\PP$.
  By the $\Iorp$ rule, $\tctx,\trunc{\ev} \vdash (\typ_1 \lor \typ_2)\pp$.
  Projecting the conclusion~(\rlem{projection_of_conclusions}),
  $\tctx,\trunc{\ev} \vdash (\typ_1 \lor \typ_2)\PP$.

\Case{$\Eorp$}:
  let $\tctx,\ev \vdash \evtwo$
  be derived from
  $\tctx,\ev \vdash (\typ_1\lor\typ_2)\pp$
  and
  $\tctx,\ev,\typ_i\PP \vdash \evtwo$ for each $i \in \set{1,2}$.
  By \ih,
  $\tctx,\trunc{\ev} \vdash (\typ_1\lor\typ_2)\PP$
  and
  $\tctx,\trunc{\ev},\typ_i\PP \vdash \trunc{\evtwo}$
  for each $i \in \set{1,2}$. Then the proof is of the form:
  \[
    \indruleN{\ClStr}{
      \indruleN{\Eorp}{
        \indruleN{\Ecp}{
          \derivdots{\derivationthree}
          \HS\HS
          \derivdots{\derivationtwo}
          \HS
        }{
          \tctx,\trunc{\ev},\trunc{\evtwo\OP} \vdash (\typ_1\lor\typtwo_2)\pp
        }
        \HS
        \derivdots{\derivation_1} \HS \derivdots{\derivation_2}
        \HS
      }{
        \tctx,\trunc{\ev},\trunc{\evtwo\OP} \vdash \trunc{\evtwo}
      }
    }{
      \tctx,\trunc{\ev} \vdash \trunc{\evtwo}
    }
  \]
  where:
  \[
    \begin{array}{rcl}
    \derivationthree & \eqdef &
      \indruleNParen{\Weakening}{
        \indruleN{}{\text{\ih}}{\tctx,\trunc{\ev} \vdash (\typ_1\lor\typ_2)\PP}
      }{
        \tctx,\trunc{\ev},\trunc{\evtwo\OP} \vdash (\typ_1\lor\typ_2)\PP
      }
    \\
    \\
    \derivationtwo & \eqdef &
      \indruleNParen{\Iorn}{
        \derivdots{\derivationtwo_1} \HS \derivdots{\derivationtwo_2}
      }{
        \tctx,\trunc{\ev},\trunc{\evtwo\OP} \vdash (\typ_1\lor\typ_2)\NN
      }
    \end{array}
  \]
  and for each $i \in \set{1,2}$ the derivations
  $\derivation_i$ and $\derivationtwo_i$ are given by:
  \[
    \begin{array}{rcl}
    \derivation_i & \eqdef &
    \indruleNParen{\Weakening}{
      \indruleN{}{\text{\ih}}{\tctx,\trunc{\ev},\typ_i\PP \vdash \trunc{\evtwo}}
    }{
      \tctx,\trunc{\ev},\trunc{\evtwo\OP},\typ_i\PP \vdash \trunc{\evtwo}
    }
    \\
    \\
    \derivationtwo_i & \eqdef &
    \indruleNParen{\Contrapose}{
      \indruleN{}{\text{\ih}}{\tctx,\trunc{\ev},\typ_i\PP \vdash \trunc{\evtwo}}
    }{
      \tctx,\trunc{\ev},\trunc{\evtwo\OP} \vdash \typ_i\NN
    }
    \end{array}
  \]

\Case{$\Inotp$}:
  let $\tctx,\ev \vdash (\neg\typ)\pp$ be derived from $\tctx,\ev \vdash \typ\NN$.
  By \ih we have that $\tctx,\trunc{\ev} \vdash \typ\NN$.
  By the $\Inotp$ rule,
  $\tctx,\trunc{\ev} \vdash (\neg\typ)\pp$.
  Projecting the conclusion~(\rlem{projection_of_conclusions}),
  $\tctx,\trunc{\ev} \vdash (\neg\typ)\PP$.

\Case{$\Enotp$}:
  let $\tctx,\ev \vdash \typ\NN$ be derived from $\tctx,\ev \vdash (\neg\typ)\pp$.
  Then the proof is of the form:
  \[
  \hspace{-.5cm}
    \indruleN{\ClStr}{
      \indruleN{\Enotp}{
        \indruleN{\Ecp}{
          \derivdots{\derivation}
          \HS
          \derivdots{\derivationtwo}
        }{
          \tctx,\trunc{\ev},\typ\PP \vdash (\neg\typ)\pp
        }
      }{
        \tctx,\trunc{\ev},\typ\PP \vdash \typ\NN
      }
    }{
      \tctx,\trunc{\ev} \vdash \typ\NN
    }
  \]
  where:
  \[
    \begin{array}{rcl}
      \derivation & \eqdef &
        \indruleNParen{\Weakening}{
          \indruleN{}{\text{\ih}}{\tctx,\trunc{\ev} \vdash (\neg\typ)\PP}
        }{
          \tctx,\trunc{\ev},\typ\PP \vdash (\neg\typ)\PP
        }
    \\
    \\
      \derivationtwo & \eqdef &
        \indruleNParen{\Icn}{
          \indruleN{\Inotn}{
            \indruleN{\Ax}{
            }{
              \tctx,\trunc{\ev},\typ\PP,(\neg\typ)\PP \vdash \typ\PP
            }
          }{
            \tctx,\trunc{\ev},\typ\PP,(\neg\typ)\PP \vdash (\neg\typ)\nn
          }
        }{
          \tctx,\trunc{\ev},\typ\PP \vdash (\neg\typ)\NN
        }
    \end{array}
  \]

\Case{$\Icp$}:
  let $\tctx,\ev \vdash \typ\PP$
  be derived from $\tctx,\ev,\typ\NN \vdash \typ\pp$.
  By \ih, $\tctx,\trunc{\ev},\typ\NN \vdash \typ\PP$,
  so by \rlem{classical_strengthening} we have that
  $\tctx,\trunc{\ev} \vdash \typ\PP$.

\Case{$\Ecp$}:
  let $\tctx,\ev \vdash \typ\pp$
  be derived from $\tctx,\ev \vdash \typ\PP$
  and $\tctx,\ev \vdash \typ\NN$.
  Then, in particular, by \ih on the first premise,
  we have $\tctx,\trunc{\ev} \vdash \typ\PP$,
  as required.
\end{proof}

\subsection{Proof of Properties of Forcing~(\rlem{properties_forcing})}
\lsec{appendix:properties_forcing}

\begin{lemma}[Monotonicity of forcing]
\llem{appendix:monotonicity_forcing}
If $\kripforce{\wl}{\ev}$
and $\wl \wleq \wl'$
then $\kripforce{\wl'}{\ev}$.
\end{lemma}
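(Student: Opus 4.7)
The plan is to proceed by induction on the measure $\#(\ev)$, mirroring the inductive definition of the forcing relation in~\rdef{kripke_forcing}. This keeps the induction aligned with the very clauses that define $\kripforce{\cdot}{\cdot}$, so each case of the induction corresponds to one clause of the definition, and the measure $\#(\cdot)$ was arranged precisely so that every sub-forcing invoked on the right-hand side of a clause has strictly smaller measure than the proposition on the left-hand side.

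First I would dispatch the atomic cases $\ev = \btyp\pp$ and $\ev = \btyp\nn$: each reduces immediately to the monotonicity condition imposed on the Kripke model itself in~\rdef{kripke_model}, namely $\wlpos{\wl} \subseteq \wlpos{\wl'}$ and $\wlneg{\wl} \subseteq \wlneg{\wl'}$ whenever $\wl \wleq \wl'$. Next, for each strong compound case---$(\typ \land \typtwo)\pp$, $(\typ \land \typtwo)\nn$, $(\typ \lor \typtwo)\pp$, $(\typ \lor \typtwo)\nn$, $(\neg\typ)\pp$ and $(\neg\typ)\nn$---the forcing clause is a Boolean combination of statements of the form $\kripforce{\wl}{\typ\PP}$ or $\kripforce{\wl}{\typ\NN}$. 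Because $\#(\typ\PP) = \#(\typ\NN) < \#((\typ\star\typtwo)\pp)$ and similarly for negation, the inductive hypothesis applies to lift each such sub-forcing from $\wl$ to $\wl'$, and then I close the Boolean combination back up. All six of these strong compound cases are essentially identical in shape.

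The classical cases $\ev = \typ\PP$ and $\ev = \typ\NN$ are, perhaps counter-intuitively, the easiest: they do not even require the inductive hypothesis. For example, $\kripforce{\wl}{\typ\PP}$ asserts $\kripnotforce{\wl''}{\typ\nn}$ for every $\wl'' \wgeq \wl$; since $\wleq$ is a partial order and hence transitive, any $\wl'' \wgeq \wl'$ is a fortiori above $\wl$, and so the universal statement is preserved verbatim at $\wl'$. The case $\typ\NN$ is symmetric.

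I do not foresee a real obstacle. Every case is either a direct appeal to a structural condition on $\kripmod$ (the atomic and classical cases) or a routine unfolding and re-folding of the definition under an IH of strictly smaller measure (the strong compound cases). The only point worth double-checking is that the measure bookkeeping is genuinely tight enough to justify each recursive call; this is ensured by the inequalities $\#(\typ\PP) = \#(\typ\NN) > \#(\typ\pp) = \#(\typ\nn)$ together with $\#((\typ\star\typtwo)\pp) > \#(\typ\PP)$ and $\#((\neg\typ)\pp) > \#(\typ\PP)$ already noted after the definition of $\#(\cdot)$.
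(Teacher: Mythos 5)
Your proposal is correct and matches the paper's own proof essentially verbatim: induction on the measure $\#(\ev)$, with the atomic cases discharged by the model's monotonicity condition, the strong compound cases by the inductive hypothesis on the (smaller-measure) classical subformulas, and the classical cases by transitivity of $\wleq$ alone, without invoking the inductive hypothesis. Nothing is missing.
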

\begin{proof}
By induction on the measure $\#(\ev)$.
We only check the positive propositions;
the negative cases are dual---\eg the proof for $(\typ\land\typtwo)\nn$
is symmetric to the proof for $(\typ\lor\typtwo)\pp$:
\begin{enumerate}
\item
  {\em Propositional variable, $\ev = \btyp\pp$.}
  Let $\kripforce{\wl}{\btyp\pp}$,
  that is $\btyp \in \wlpos{\wl}$.
  Then by the monotonicity property we have that $\btyp \in \wlpos{\wl'}$,
  so $\kripforce{\wl'}{\btyp\pp}$.
\item
  {\em Conjunction, $\ev = (\typ\land\typtwo)\pp$.}
  Let $\kripforce{\wl}{(\typ\land\typtwo)\pp}$,
  that is
  $\kripforce{\wl}{\typ\PP}$ and $\kripforce{\wl}{\typtwo\PP}$.
  Then by \ih
  $\kripforce{\wl'}{\typ\PP}$ and $\kripforce{\wl'}{\typtwo\PP}$
  so $\kripforce{\wl'}{(\typ\land\typtwo)\pp}$.
\item
  {\em Disjunction, $\ev = (\typ\lor\typtwo)\pp$.}
  Let $\kripforce{\wl}{(\typ\lor\typtwo)\pp}$,
  that is
  $\kripforce{\wl}{\typ\PP}$ or $\kripforce{\wl}{\typtwo\PP}$.
  We consider the two possibilities.
  On one hand, if $\kripforce{\wl}{\typ\PP}$ 
  then by \ih $\kripforce{\wl'}{\typ\PP}$
  so $\kripforce{\wl'}{(\typ\lor\typtwo)\pp}$.
  On the other hand, if $\kripforce{\wl}{\typtwo\PP}$ 
  then by \ih $\kripforce{\wl'}{\typtwo\PP}$
  so $\kripforce{\wl'}{(\typ\lor\typtwo)\pp}$.
\item
  {\em Negation, $\ev = (\neg\typ)\pp$.}
  Let $\kripforce{\wl}{(\neg\typ)\pp}$,
  that is $\kripforce{\wl}{\typ\NN}$.
  Then by \ih $\kripforce{\wl'}{\typ\NN}$
  so $\kripforce{\wl'}{(\neg\typ)\pp}$.
\item
  {\em Classical proposition, $\ev = \typ\PP$.}
  Let $\kripforce{\wl}{\typ\PP}$,
  that is, for every $\wl'' \wgeq \wl$
  we have that $\kripnotforce{\wl''}{\typ\nn}$.
  Our goal is to prove that $\kripforce{\wl'}{\typ\PP}$,
  so let $\wl'' \wgeq \wl'$
  and let us check that $\kripnotforce{\wl''}{\typ\nn}$.
  Indeed, given that $\wl'' \wgeq \wl' \wgeq \wl$
  we have that $\kripnotforce{\wl''}{\typ\nn}$.
\end{enumerate}
\end{proof}

\begin{lemma}[Stabilization of forcing]
\llem{appendix:stabilization_forcing}
For every world $\wl$ and every proposition $\ev$,
there is a world $\wl' \wgeq \wl$
such that
either $\kripforce{\wl'}{\ev}$
or $\kripforce{\wl'}{\ev\OP}$,
but not both.
\end{lemma}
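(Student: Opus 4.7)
The plan is to prove the stabilization property by strong induction on the measure $\#(\ev)$. Since the ``but not both'' clause is essentially the non-contradiction property (item~3 of the same lemma), it is natural to run a simultaneous induction that establishes, at each step, both existence and mutual exclusivity for $\ev$, invoking the IH only for strictly smaller propositions.

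In the base case $\ev = \btyp\pp$ (the case $\btyp\nn$ is symmetric), the Kripke model's own stabilization axiom directly produces a world $\wl' \wgeq \wl$ with $\btyp \in \wlpos{\wl'} \symdiff \wlneg{\wl'}$, which is exactly $\kripforce{\wl'}{\btyp\pp}$ or $\kripforce{\wl'}{\btyp\nn}$ but not both. The companion non-contradiction claim at an arbitrary $\wl$ follows from a monotonicity argument: if $\btyp \in \wlpos{\wl} \cap \wlneg{\wl}$ held, monotonicity would propagate this intersection to every $\wl'' \wgeq \wl$, contradicting the model's stabilization axiom.

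For the composite strong cases, I invoke the IH on the classical subformulas, whose measures are strictly smaller. For $\ev = (\typ \land \typtwo)\pp$ (with duals symmetric and the negation case routine), first apply the IH to $\typ\PP$ at $\wl$ to obtain $\wl_1 \wgeq \wl$ stabilizing $\typ$; if $\kripforce{\wl_1}{\typ\NN}$, take $\wl' = \wl_1$, since then $\ev\OP = (\typ \land \typtwo)\nn$ holds there. Otherwise $\kripforce{\wl_1}{\typ\PP}$, and applying the IH once more at $\wl_1$ to $\typtwo\PP$ yields $\wl_2 \wgeq \wl_1$ stabilizing $\typtwo$; monotonicity preserves $\typ\PP$ up to $\wl_2$, so whichever alternative for $\typtwo$ holds at $\wl_2$ determines whether $\ev$ or $\ev\OP$ holds there. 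Mutual exclusivity at the selected world is inherited from the IH's non-contradiction clause applied to $\typ\PP$ and $\typtwo\PP$.

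The main obstacle is the classical case $\ev = \typ\PP$ (dually $\typ\NN$). If $\kripforce{\wl}{\typ\PP}$ already, set $\wl' = \wl$; otherwise, by the definition of forcing, there is $\wl_1 \wgeq \wl$ with $\kripforce{\wl_1}{\typ\nn}$. The claim is $\kripforce{\wl_1}{\typ\NN}$, which amounts to showing $\kripnotforce{\wl_2}{\typ\pp}$ for every $\wl_2 \wgeq \wl_1$. But monotonicity gives $\kripforce{\wl_2}{\typ\nn}$, so if in addition $\kripforce{\wl_2}{\typ\pp}$ held, the non-contradiction clause of the IH applied to $\typ\pp$ (whose measure is strictly below $\#(\typ\PP)$) would be contradicted. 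This is precisely where the interleaving of stabilization and non-contradiction becomes unavoidable, and careful bookkeeping of $\#(\cdot)$ ensures no circular use of the IH.
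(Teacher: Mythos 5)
Your proof is correct, but it is organized differently from the paper's. The paper proves stabilization alone by induction on $\#(\ev)$ and only afterwards derives non-contradiction as a corollary (stabilization plus monotonicity); as a consequence, whenever the paper needs a ``not both'' fact at an intermediate world (e.g.\ showing $\kripnotforce{\wl_2}{\typ\NN}$ after monotonicity has carried $\typ\PP$ up to $\wl_2$), it must redo a local argument: assume both hold, pick a fresh stabilization world above via the IH, and let monotonicity produce a contradiction there. You instead strengthen the induction hypothesis to carry non-contradiction along, which turns each of those local arguments into a single appeal to the IH; the price is that you must check the dependency between the two clauses at each measure level is acyclic (non-contradiction for $\ev$ uses only the IH at strictly smaller measure, stabilization for $\ev$ may then use non-contradiction for $\ev$ itself), which your bookkeeping remark correctly addresses. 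Your classical case also diverges: rather than first stabilizing $\typ\pp$ at some $\wl'\wgeq\wl$ and then arguing that $\typ\PP$ or $\typ\NN$ holds there, you case-split on whether $\kripforce{\wl}{\typ\PP}$ already holds and, if not, extract a witness $\wl_1$ with $\kripforce{\wl_1}{\typ\nn}$ and promote it to $\kripforce{\wl_1}{\typ\NN}$ via monotonicity and the IH's non-contradiction for $\typ\pp$; this is a legitimate and slightly shorter route. The one place you leave implicit is the ``not both'' verification in the classical case (e.g.\ $\kripnotforce{\wl_1}{\typ\PP}$ follows immediately from $\kripforce{\wl_1}{\typ\nn}$ and reflexivity of $\wleq$), but this is routine and covered by your simultaneous non-contradiction clause.
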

\begin{proof}
By induction on the measure $\#(\ev)$.
We only check the positive propositions;
the negative cases are dual.
\begin{enumerate}
\item
  {\em Propositional variable, $\ev = \btyp\pp$ and $\ev\OP = \btyp\nn$.}
  By the stabilization property,
  there exists $\wl' \wgeq \wl$
  such that $\btyp \in \wlpos{\wl'} \symdiff \wlneg{\wl'}$,
  \ie $\btyp \in \wlpos{\wl'}$ or $\btyp \in \wlneg{\wl'}$ but not both,
  so we consider two cases:
  \begin{enumerate}
  \item
    If $\btyp \in \wlpos{\wl'} \setminus \wlneg{\wl'}$
    then $\kripforce{\wl'}{\btyp\pp}$ and $\kripnotforce{\wl'}{\btyp\nn}$.
  \item
    If $\btyp \in \wlneg{\wl'} \setminus \wlpos{\wl'}$
    then $\kripforce{\wl'}{\btyp\nn}$ and $\kripnotforce{\wl'}{\btyp\pp}$.
  \end{enumerate}
\item
  {\em Conjunction,
       $\ev = (\typ\land\typtwo)\pp$ and $\ev\OP = (\typ\land\typtwo)\nn$.}
  By \ih there is a world $\wl_1 \wgeq \wl$
  such that either $\kripforce{\wl_1}{\typ\PP}$
  or $\kripforce{\wl_1}{\typ\NN}$ but not both,
  so we consider two subcases:
  \begin{enumerate}
  \item
    If $\kripforce{\wl_1}{\typ\PP}$ and $\kripnotforce{\wl_1}{\typ\NN}$,
    then by \ih there is a world $\wl_2 \wgeq \wl_1$
    such that either $\kripforce{\wl_2}{\typtwo\PP}$
    or $\kripforce{\wl_2}{\typtwo\NN}$ but not both,
    so we consider two further subcases:
    \begin{enumerate}
    \item
      If $\kripforce{\wl_2}{\typtwo\PP}$ and $\kripnotforce{\wl_2}{\typtwo\NN}$,
      then we take $\wl' := \wl_2$.
      By monotonicity~(\rlem{appendix:monotonicity_forcing})
      we have that $\kripforce{\wl_2}{\typ\PP}$
      so indeed $\kripforce{\wl_2}{(\typ\land\typtwo)\pp}$.
      We are left to show that $\kripnotforce{\wl_2}{(\typ\land\typtwo)\nn}$.
      We already know that $\kripnotforce{\wl_2}{\typtwo\NN}$,
      so to conclude it suffices to show that $\kripnotforce{\wl_2}{\typ\NN}$.
      Indeed, suppose that $\kripforce{\wl_2}{\typ\NN}$ holds.
      By \ih there exists $\wl_3 \wgeq \wl_2$ such that
      either $\kripforce{\wl_3}{\typ\PP}$ or $\kripforce{\wl_3}{\typ\NN}$
      but {\em not both}.
      However, by monotonicity~(\rlem{appendix:monotonicity_forcing})
      ---given that both
      $\kripforce{\wl_2}{\typ\PP}$ and $\kripforce{\wl_2}{\typ\NN}$ hold---
      we know that
      both $\kripforce{\wl_3}{\typ\PP}$ and $\kripforce{\wl_3}{\typ\NN}$ 
      hold,
      a contradiction.
    \item
      If $\kripforce{\wl_2}{\typtwo\NN}$ and $\kripnotforce{\wl_2}{\typtwo\PP}$,
      then we take $\wl' := \wl_2$,
      and we have that $\kripforce{\wl_2}{(\typ\land\typtwo)\nn}$
      and $\kripnotforce{\wl_2}{(\typ\land\typtwo)\pp}$.
    \end{enumerate}
  \item
    If $\kripforce{\wl_1}{\typ\NN}$ and $\kripnotforce{\wl_1}{\typ\PP}$,
    then we take $\wl' := \wl_1$,
    and we have that $\kripforce{\wl_1}{(\typ\land\typtwo)\nn}$
    and $\kripnotforce{\wl_1}{(\typ\land\typtwo)\pp}$.
  \end{enumerate}
\item
  {\em Disjunction, $\ev = (\typ\lor\typtwo)\pp$ and $\ev\OP = (\typ\lor\typtwo)\nn$.}
  By \ih there is a world $\wl_1 \wgeq \wl$ such that
  either $\kripforce{\wl_1}{\typ\PP}$ or $\kripforce{\wl_1}{\typ\NN}$
  but not both, so we consider two subcases:
  \begin{enumerate}
  \item
    If $\kripforce{\wl_1}{\typ\PP}$ and $\kripnotforce{\wl_1}{\typ\NN}$,
    then we take $\wl' := \wl_1$,
    and we have that $\kripforce{\wl_1}{(\typ\lor\typtwo)\pp}$
    and $\kripnotforce{\wl_1}{(\typ\lor\typtwo)\nn}$.
  \item
    If $\kripforce{\wl_1}{\typ\NN}$ and $\kripnotforce{\wl_1}{\typ\PP}$,
    then by \ih there is a world $\wl_2 \wgeq \wl_1$ such that
    either $\kripforce{\wl_2}{\typtwo\PP}$ or $\kripforce{\wl_2}{\typtwo\NN}$
    but not both, so we consider two further subcases:
    \begin{enumerate}
    \item
      If $\kripforce{\wl_2}{\typtwo\PP}$ and $\kripnotforce{\wl_2}{\typtwo\NN}$,
      then we take $\wl' := \wl_2$, and we have that
      $\kripforce{\wl_2}{(\typ\lor\typtwo)\pp}$
      and
      $\kripnotforce{\wl_2}{(\typ\lor\typtwo)\nn}$.
    \item
      If $\kripforce{\wl_2}{\typtwo\NN}$ and $\kripnotforce{\wl_2}{\typtwo\PP}$,
      then we take $\wl' := \wl_2$.
      By monotonicity~(\rlem{appendix:monotonicity_forcing})
      we have that $\kripforce{\wl_2}{\typ\NN}$
      so indeed $\kripforce{\wl_2}{(\typ\lor\typtwo)\nn}$.
      We are left to show that $\kripnotforce{\wl_2}{(\typ\lor\typtwo)\pp}$.
      We already know that $\kripnotforce{\wl_2}{\typtwo\PP}$,
      so we are left to show that $\kripnotforce{\wl_2}{\typ\PP}$.
      Indeed, suppose that $\kripforce{\wl_2}{\typ\PP}$ holds.
      By \ih there exists $\wl_3 \wgeq \wl_2$ such that
      either $\kripforce{\wl_3}{\typ\PP}$ or $\kripforce{\wl_3}{\typ\NN}$ 
      holds but {\em not both}.
      However, by monotonicity~(\rlem{appendix:monotonicity_forcing})
      ---given that both
      $\kripforce{\wl_2}{\typ\PP}$ and $\kripforce{\wl_2}{\typ\NN}$ hold---
      we know that both
      $\kripforce{\wl_3}{\typ\PP}$ and $\kripforce{\wl_3}{\typ\NN}$ hold,
      a contradiction.
    \end{enumerate}
  \end{enumerate}
\item
  {\em Negation, $\ev = (\neg\typ)\pp$ and $\ev\OP = (\neg\typ)\nn$.}
  By \ih there is a world $\wl' \wgeq \wl$
  such that either $\kripforce{\wl'}{\typ\PP}$ or $\kripforce{\wl'}{\typ\NN}$
  hold but not both, so we consider two cases:
  \begin{enumerate}
  \item
    If $\kripforce{\wl'}{\typ\PP}$ and $\kripnotforce{\wl'}{\typ\NN}$,
    then $\kripforce{\wl'}{(\neg\typ)\nn}$
    and $\kripnotforce{\wl'}{(\neg\typ)\pp}$.
  \item
    If $\kripforce{\wl'}{\typ\NN}$ and $\kripnotforce{\wl'}{\typ\PP}$,
    then $\kripforce{\wl'}{(\neg\typ)\pp}$
    and $\kripnotforce{\wl'}{(\neg\typ)\nn}$.
  \end{enumerate}
\item
  {\em Classical proposition, $\ev = \typ\PP$ and $\ev\OP = \typ\NN$.}
  By \ih there is a world $\wl' \wgeq \wl$ such that
  either $\kripforce{\wl'}{\typ\pp}$ or $\kripforce{\wl'}{\typ\nn}$
  but not both.
  We consider two subcases:
  \begin{enumerate}
  \item
    If $\kripforce{\wl'}{\typ\pp}$ and $\kripnotforce{\wl'}{\typ\nn}$,
    then we claim that
    $\kripforce{\wl'}{\typ\PP}$ and $\kripnotforce{\wl'}{\typ\NN}$.
    Indeed, let us prove each condition:
    \begin{enumerate}
    \item
      In order to show that $\kripforce{\wl'}{\typ\PP}$,
      it suffices to check that given
      $\wl'' \wgeq \wl'$ we have that $\kripnotforce{\wl''}{\typ\nn}$.
      Indeed, suppose that $\kripforce{\wl''}{\typ\nn}$.
      Then by \ih there exists $\wl''' \wgeq \wl''$
      such that either $\kripforce{\wl'''}{\typ\pp}$
      or $\kripforce{\wl'''}{\typ\nn}$ but {\em not both}.
      However, by monotonicity~(\rlem{appendix:monotonicity_forcing})
      ---given that both $\kripforce{\wl'}{\typ\pp}$
         and $\kripforce{\wl''}{\typ\nn}$ hold,
         and $\wl' \wleq \wl'' \wleq \wl'''$---
      we know that both
      $\kripforce{\wl'''}{\typ\pp}$
      and
      $\kripforce{\wl'''}{\typ\nn}$
      hold, a contradiction.
    \item
      In order to show that $\kripnotforce{\wl'}{\typ\NN}$,
      it suffices to note that $\kripforce{\wl'}{\typ\pp}$,
      which contradicts the definition of $\kripforce{\wl'}{\typ\NN}$,
      given that accessibility is reflexive, \ie $\wl' \wleq \wl'$.
    \end{enumerate}
  \item
    If $\kripforce{\wl'}{\typ\nn}$ and $\kripnotforce{\wl'}{\typ\pp}$,
    then we claim that
    $\kripforce{\wl'}{\typ\NN}$ and $\kripnotforce{\wl'}{\typ\PP}$.
    Indeed, let us prove each condition:
    \begin{enumerate}
    \item
      In order to show that $\kripforce{\wl'}{\typ\NN}$,
      it suffices to check that given $\wl'' \wgeq \wl'$
      we have that $\kripnotforce{\wl''}{\typ\pp}$.
      Indeed, suppose that $\kripforce{\wl''}{\typ\pp}$.
      Then by \ih there exists $\wl''' \wgeq \wl''$
      such that either $\kripforce{\wl'''}{\typ\pp}$ and
      $\kripforce{\wl'''}{\typ\nn}$ but {\em not both}.
      However, by monotonicity~(\rlem{appendix:monotonicity_forcing})
      ---given that both
         $\kripforce{\wl''}{\typ\pp}$ and $\kripforce{\wl'}{\typ\nn}$ hold,
         and $\wl' \wleq \wl'' \wleq \wl'''$---
      we know that both $\kripforce{\wl'''}{\typ\pp}$
      and $\kripforce{\wl'''}{\typ\nn}$ hold, a contradiction.
    \item
      In order to show that $\kripnotforce{\wl'}{\typ\PP}$
      it suffices to note that $\kripforce{\wl'}{\typ\nn}$,
      which contradicts the definition of $\kripforce{\wl'}{\typ\PP}$,
      given that accessibility is reflexive, \ie $\wl' \wleq \wl'$.
    \end{enumerate}
  \end{enumerate}
\end{enumerate}
\end{proof}

\begin{lemma}[Non-contradiction of forcing]
\llem{appendix:non_contradiction_forcing}
If $\kripforce{\wl}{\ev}$ then $\kripnotforce{\wl}{\ev\OP}$.
\end{lemma}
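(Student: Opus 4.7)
The plan is to proceed by induction on the measure $\#(\ev)$, handling only the positive cases; the negative ones are symmetric via the involution $\ev \mapsto \ev\OP$.

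For the base case $\ev = \btyp\pp$, I would suppose for contradiction that both $\btyp \in \wlpos{\wl}$ and $\btyp \in \wlneg{\wl}$. By the stabilization property of Kripke models, there is some $\wl' \wgeq \wl$ with $\btyp \in \wlpos{\wl'} \symdiff \wlneg{\wl'}$, so $\btyp$ lies in exactly one of those two sets at $\wl'$. But by monotonicity of the Kripke model, $\btyp$ still lies in both $\wlpos{\wl'}$ and $\wlneg{\wl'}$, contradicting the symmetric-difference condition.

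For a strong compound $\ev$ of the form $(\typ \land \typtwo)\pp$, $(\typ \lor \typtwo)\pp$, or $(\neg\typ)\pp$, I would unfold both $\kripforce{\wl}{\ev}$ and $\kripforce{\wl}{\ev\OP}$ to statements about classical subformulae, which have strictly smaller measure by the inequalities noted after the definition of $\#(-)$. For instance, in the conjunction case, $\kripforce{\wl}{(\typ\land\typtwo)\pp}$ entails $\kripforce{\wl}{\typ\PP}$ and $\kripforce{\wl}{\typtwo\PP}$, while $\kripforce{\wl}{(\typ\land\typtwo)\nn}$ entails $\kripforce{\wl}{\typ\NN}$ or $\kripforce{\wl}{\typtwo\NN}$; either disjunct contradicts the \ih applied to $\typ\PP$ or $\typtwo\PP$. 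The disjunction and negation cases proceed analogously.

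The main obstacle is the classical case $\ev = \typ\PP$: unfolding at $\wl$ alone does not yield an immediate contradiction, since both $\kripforce{\wl}{\typ\PP}$ and $\kripforce{\wl}{\typ\NN}$ are negative statements about strong propositions holding at accessible worlds, and they do not interact at $\wl$ itself. My plan here is to invoke the stabilization lemma (\rlem{stabilization_forcing}), just proved, applied to the strong proposition $\typ\pp$ at $\wl$, which supplies some $\wl' \wgeq \wl$ with either $\kripforce{\wl'}{\typ\pp}$ or $\kripforce{\wl'}{\typ\nn}$. But $\kripforce{\wl}{\typ\PP}$ rules out the latter for every $\wl' \wgeq \wl$, and $\kripforce{\wl}{\typ\NN}$ rules out the former; both alternatives fail, giving the contradiction. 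Notably this case, unlike the others, does not reduce to a recursive appeal to non-contradiction at smaller measure, but genuinely relies on stabilization to reach a world where the underlying strong proposition has been resolved one way or the other.
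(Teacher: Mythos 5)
Your proof is correct, but it takes a genuinely different and considerably longer route than the paper's. The paper derives non-contradiction in one stroke, with no induction at this level: assuming both $\kripforce{\wl}{\ev}$ and $\kripforce{\wl}{\ev\OP}$, it applies the stabilization lemma for forcing directly to $\ev$ at $\wl$ to obtain a world $\wl' \wgeq \wl$ at which \emph{exactly one} of $\ev$, $\ev\OP$ is forced, and then monotonicity of forcing propagates both of them up to $\wl'$, an immediate contradiction. You instead redo an induction on the measure $\#(\ev)$, unfolding the forcing clauses connective by connective and invoking stabilization only at the atomic case (the model-level stabilization property) and at the classical case (the forcing-level stabilization lemma applied to $\typ\pp$). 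Each of your cases checks out --- the measure inequalities do license the inductive appeals at $\typ\PP$ and $\typ\NN$, and your treatment of the classical case is sound, since $\kripforce{\wl}{\typ\PP}$ and $\kripforce{\wl}{\typ\NN}$ jointly forbid both outcomes that stabilization guarantees at some $\wl' \wgeq \wl$ --- and there is no circularity, because the paper's proof of stabilization uses only monotonicity and its own induction hypothesis, never non-contradiction. What your approach costs is that it silently duplicates the case analysis already packaged inside the stabilization lemma; what it buys is a clearer view of \emph{where} stabilization is genuinely indispensable (atoms and classical propositions) versus where mere unfolding of the forcing definition suffices. Your closing remark that the classical case cannot be discharged by a recursive appeal at smaller measure is exactly the observation that motivates the paper's decision to route the entire lemma through stabilization in the first place.
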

\begin{proof}
Suppose that both $\kripforce{\wl}{\ev}$ and $\kripforce{\wl}{\ev\OP}$ hold.
By stabilization~(\rlem{appendix:stabilization_forcing})
there is a world $\wl' \wgeq \wl$
such that
either $\kripforce{\wl'}{\ev}$ or $\kripforce{\wl'}{\ev\OP}$
but {\em not both}.
However, by monotonicity~(\rlem{appendix:monotonicity_forcing})
we know that both
$\kripforce{\wl'}{\ev}$ and $\kripforce{\wl'}{\ev\OP}$ must hold,
a contradiction.
\end{proof}

\subsection{Proof of Soundness of $\PRK$ with respect to the Kripke semantics}
\lsec{appendix:kripke_soundness}

\begin{lemma}[Rule of classical forcing]
\llem{appendix:rule_of_classical_forcing}
\quad
\begin{enumerate}
\item $(\kripforce{\wl}{\typ\PP})$
      if and only if,
      for all $\wl' \wgeq \wl$,
      $(\kripforce{\wl'}{\typ\NN})$ implies
      $(\kripforce{\wl'}{\typ\pp})$.
\item $(\kripforce{\wl}{\typ\NN})$
      if and only if,
      for all $\wl' \wgeq \wl$,
      $(\kripforce{\wl'}{\typ\PP})$
      implies
      $(\kripforce{\wl'}{\typ\nn})$.
\end{enumerate}
\end{lemma}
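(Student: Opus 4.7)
The plan is to prove each direction of part 1 separately, using the Properties of Forcing (\rlem{properties_forcing}) as the main tools; the plan for part 2 is symmetric (by duality) and does not need to be elaborated separately.

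For the $(\Rightarrow)$ direction of part 1, I assume $\kripforce{\wl}{\typ\PP}$, so by definition $\kripnotforce{\wl''}{\typ\nn}$ for every $\wl'' \wgeq \wl$. Fix $\wl' \wgeq \wl$ with $\kripforce{\wl'}{\typ\NN}$. I claim that under these hypotheses the conclusion $\kripforce{\wl'}{\typ\pp}$ follows vacuously, because the two hypotheses are jointly inconsistent. Indeed, by stabilization applied to $\typ\pp$ at $\wl'$, there is $\wl'' \wgeq \wl'$ with either $\kripforce{\wl''}{\typ\pp}$ or $\kripforce{\wl''}{\typ\nn}$. The first case contradicts $\kripforce{\wl'}{\typ\NN}$ (which, by unfolding, gives $\kripnotforce{\wl''}{\typ\pp}$ for every $\wl'' \wgeq \wl'$); the second case contradicts $\kripforce{\wl}{\typ\PP}$ since $\wl'' \wgeq \wl$. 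From this contradiction, any proposition may be concluded, in particular $\kripforce{\wl'}{\typ\pp}$.

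For the $(\Leftarrow)$ direction, I assume that $\kripforce{\wl'}{\typ\NN}$ implies $\kripforce{\wl'}{\typ\pp}$ for every $\wl' \wgeq \wl$, and aim to derive $\kripforce{\wl}{\typ\PP}$, i.e. $\kripnotforce{\wl'}{\typ\nn}$ for all $\wl' \wgeq \wl$. I will argue by contradiction: suppose $\kripforce{\wl'}{\typ\nn}$ for some such $\wl'$. The key step is to upgrade this strong refutation to a classical refutation, i.e. to show $\kripforce{\wl'}{\typ\NN}$. By monotonicity, $\kripforce{\wl''}{\typ\nn}$ for every $\wl'' \wgeq \wl'$, and by non-contradiction $\kripnotforce{\wl''}{\typ\pp}$, which is exactly the definition of $\kripforce{\wl'}{\typ\NN}$. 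Applying the hypothesis then yields $\kripforce{\wl'}{\typ\pp}$, contradicting $\kripforce{\wl'}{\typ\nn}$ by non-contradiction.

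The main subtlety, and the step I expect to require the most care when writing the proof in full, is the forward direction: it is not proved by directly constructing a proof of $\typ\pp$ at $\wl'$, but rather by showing that the antecedent $\kripforce{\wl'}{\typ\NN}$ combined with the global hypothesis $\kripforce{\wl}{\typ\PP}$ is already contradictory. This is what makes essential use of the stabilization property of the Kripke model, which is the non-trivial structural condition distinguishing our semantics from the standard intuitionistic one. Part 2 then follows by the same argument with the roles of $\pp/\PP$ and $\nn/\NN$ swapped throughout.
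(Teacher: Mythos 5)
Your proposal is correct and follows essentially the same route as the paper: the $(\Leftarrow)$ direction is identical (assume $\kripforce{\wl'}{\typ\nn}$, upgrade to $\kripforce{\wl'}{\typ\NN}$ via monotonicity and non-contradiction, then apply the hypothesis to get a contradiction), and the $(\Rightarrow)$ direction establishes the same vacuous implication. The only cosmetic difference is that in the forward direction the paper simply invokes monotonicity plus the already-proved non-contradiction property to falsify the antecedent $\kripforce{\wl'}{\typ\NN}$, whereas you re-derive the needed contradiction from stabilization by hand; both are valid since non-contradiction is itself a consequence of stabilization and monotonicity.
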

\begin{proof}
We only prove the first item.
The second one is symmetric, flipping all the signs.
\begin{itemize}
\item[$(\Rightarrow)$]
  Suppose that $\kripforce{\wl}{\typ\PP}$,
  let $\wl' \wgeq \wl$,
  and let us show that the implication
  $(\kripforce{\wl'}{\typ\NN}) \implies (\kripforce{\wl'}{\typ\pp})$
  holds.
  In fact, the implication holds vacuously,
  given that
  $\kripforce{\wl'}{\typ\PP}$ by monotonicity~(\rlem{monotonicity_forcing}),
  and therefore
  $\kripnotforce{\wl'}{\typ\NN}$ by non-contradiction~(\rlem{non_contradiction_forcing}).
\item[$(\Leftarrow)$]
  Suppose that for every $\wl' \wgeq \wl$
  the implication
  $(\kripforce{\wl'}{\typ\NN}) \implies (\kripforce{\wl'}{\typ\pp})$
  holds.
  Let us show that $\kripforce{\wl}{\typ\PP}$ holds,
  \ie that for every $\wl' \wgeq \wl$
  we have that $\kripnotforce{\wl'}{\typ\nn}$.
  Let $\wl'$ be a world such that $\wl' \wgeq \wl$
  and, by contradiction, suppose that $\kripforce{\wl'}{\typ\nn}$.
  Then by non-contradiction~(\rlem{non_contradiction_forcing}) we
  have that $\kripnotforce{\wl'}{\typ\pp}$.
  Hence, to obtain a contradiction, using the implication of the hypothesis,
  it suffices to show that $\kripforce{\wl'}{\typ\NN}$,
  that is, that for every $\wl'' \wgeq \wl'$ we have that
  $\kripnotforce{\wl''}{\typ\pp}$.
  Indeed, let $\wl'' \wgeq \wl'$.
  By monotonicity~(\rlem{monotonicity_forcing}) $\kripforce{\wl''}{\typ\nn}$,
  so by non-contradiction~(\rlem{non_contradiction_forcing}) $\kripnotforce{\wl''}{\typ\pp}$,
  as required.
\end{itemize}
\end{proof}

\begin{proposition}[Soundness]
If $\tctx \vdash \ev$ is provable in $\PRK$, then $\kripentails{\tctx}{\ev}$.
\end{proposition}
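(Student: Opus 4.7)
The plan is to proceed by induction on the derivation of $\tctx \vdash \ev$. Fix an arbitrary Kripke model $\kripmod$ and a world $\wl$ with $\kripforcefull{\wl}{\tctx}$; the goal in each case is to establish $\kripforce{\wl}{\ev}$. Taking $\wl$ to be universally quantified, the conclusion $\kripforcefull{\tctx}{\ev}$ for the fixed $\kripmod$ follows, and since $\kripmod$ was arbitrary, $\kripentails{\tctx}{\ev}$ holds.

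The base case $\Ax$ is immediate from the assumption $\kripforcefull{\wl}{\tctx}$. The introduction and elimination rules for the connectives ($\Iandp$, $\Eandp$, $\Iorp$, $\Iorn$, $\Inotp$, $\Enotp$, and their duals) follow directly by unfolding the corresponding clause in the definition of the forcing relation. The rule $\Eorp$ additionally requires a case analysis: the induction hypothesis on the first premise yields $\kripforce{\wl}{\typ\PP}$ or $\kripforce{\wl}{\typtwo\PP}$, and in either case the induction hypothesis on the matching branch delivers $\kripforce{\wl}{\ev}$ after observing that the enlarged context is still forced at $\wl$. The case $\Eandn$ is symmetric.

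The genuinely interesting cases are $\Abs$, $\Ecp$/$\Ecn$, and $\Icp$/$\Icn$. For $\Abs$ (and likewise for $\Ecp$ and $\Ecn$), the two induction hypotheses provide $\kripforce{\wl}{\ev}$ and $\kripforce{\wl}{\ev\OP}$ at the same world, which is forbidden by non-contradiction (\rlem{non_contradiction_forcing}); hence this branch of the induction is vacuous and the desired conclusion holds trivially. For the classical introduction rule $\Icp$, whose premise is $\tctx, \typ\NN \vdash \typ\pp$, I would appeal to the Rule of Classical Forcing (\rlem{rule_of_classical_forcing}): to establish $\kripforce{\wl}{\typ\PP}$ it suffices to show that for every $\wl' \wgeq \wl$, $\kripforce{\wl'}{\typ\NN}$ implies $\kripforce{\wl'}{\typ\pp}$. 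So fix such a $\wl'$; by monotonicity of forcing (\rlem{monotonicity_forcing}) we have $\kripforcefull{\wl'}{\tctx}$, hence $\kripforcefull{\wl'}{\tctx,\typ\NN}$, and the induction hypothesis yields $\kripforce{\wl'}{\typ\pp}$. The case $\Icn$ is symmetric.

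I do not anticipate any real obstacle, since the technical content has already been packaged into the supporting lemmas. The one subtle point to watch is that the definition of forcing for a classical proposition ($\kripforce{\wl}{\typ\PP}$ iff $\kripnotforce{\wl'}{\typ\nn}$ for all $\wl' \wgeq \wl$) does not literally match the shape of either $\Icp$ or $\Ecp$: without the Rule of Classical Forcing the classical introduction case would be awkward, and without non-contradiction the classical elimination case would be unclear. With those two ingredients in hand, the induction goes through uniformly.
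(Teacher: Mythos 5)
Your proposal is correct and follows essentially the same route as the paper's proof: induction on the derivation, with the $\Abs$/$\Ecp$/$\Ecn$ cases dispatched vacuously via non-contradiction of forcing, and the $\Icp$/$\Icn$ cases handled by combining monotonicity of forcing with the Rule of Classical Forcing exactly as you describe. The remaining cases are treated as routine in the paper as well, so nothing is missing.
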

\begin{proof}
By induction on the derivation of $\tctx \vdash \ev$.
The axiom rule, and the introduction and elimination rules for
conjunction, disjunction, and negation are straightforward
using the definition of Kripke model.
The interesting cases are the following rules:
\begin{enumerate}
\item
  \indrulename{\Abs}:
  let $\tctx \vdash \evtwo$
  be derived from $\tctx \vdash \ev$ and $\tctx \vdash \ev\OP$
  for some strong proposition $\ev$.
  Suppose that $\kripforcefull{\wl}{\tctx}$ holds in an arbitrary world $\wl$
  under an arbitrary Kripke model $\kripmod$,
  and let us show that $\kripforce{\wl}{\evtwo}$.
  Note that by \ih we have that
  $\kripforce{\wl}{\ev}$ and $\kripforce{\wl}{\ev\OP}$.
  But this is impossible by non-contradiction~(\rlem{non_contradiction_forcing}).
  Hence $\kripforce{\wl}{\evtwo}$.
\item
  \indrulename{\Icp}:
  let $\tctx \vdash \typ\PP$
  be derived from $\tctx,\typ\NN \vdash \typ\pp$.
  Suppose that $\kripforcefull{\wl}{\tctx}$ holds
  in an arbitrary world $\wl$
  under an arbitrary Kripke model $\kripmod$,
  and let us show that $\kripforce{\wl}{\typ\PP}$.
  We claim that for every $\wl' \wgeq \wl$ the implication
  $(\kripforce{\wl'}{\typ\NN}) \implies (\kripforce{\wl'}{\typ\pp})$
  holds.
  Indeed, suppose that $\kripforce{\wl'}{\typ\NN}$.
  Moreover, by monotonicity~(\rlem{monotonicity_forcing}),
  we have that $\kripforcefull{\wl'}{\tctx}$.
  So $\kripforcefull{\wl'}{\tctx,\typ\NN}$ holds.
  Hence by \ih we have that $\kripforce{\wl'}{\typ\pp}$.
  Given that the implication
  $(\kripforce{\wl'}{\typ\NN}) \implies (\kripforce{\wl'}{\typ\pp})$
  holds for all $\wl' \wgeq \wl$,
  using the rule of classical forcing (\rlem{rule_of_classical_forcing})
  we conclude that $\kripforce{\wl}{\typ\PP}$,
  as required.
\item
  \indrulename{\Icn}:
  similar to the \indrulename{\Icp} case.
\item
  \indrulename{\Ecp}, \indrulename{\Ecn}:
  similar to the \indrulename{\Abs} case.
\end{enumerate}
\end{proof}

\subsection{Auxiliary lemmas to prove Completeness of $\PRK$ with respect to the Kripke semantics}
\lsec{appendix:kripke_completeness}

In the following proof we use an encoding of falsity
with the pure proposition $\Bot \eqdef (\btyp_0 \land \neg\btyp_0)$
for some fixed propositional variable $\btyp_0$.
Remark that $\tctx \vdash \Bot\NN$ is provable, being an instance
of the law of non-contradiction~(\rexample{lem_and_noncontr}).

\begin{lemma}[Consistent extension]
\llem{appendix:consistent_extension}
Let $\tctx$ be a consistent set, and let $\ev$ be a proposition.
Then $\tctx \cup \set{\ev}$ and $\tctx \cup \set{\ev\OP}$
are not both inconsistent.
\end{lemma}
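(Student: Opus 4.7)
The plan is to prove the contrapositive: assume that both $\tctx \cup \{\ev\}$ and $\tctx \cup \{\ev\OP\}$ are inconsistent, and derive that $\tctx$ itself is inconsistent, contradicting the hypothesis. The key observation is that although $\ev$ may be either strong or classical, the projection lemma allows a uniform treatment by reducing everything to the classical case, where \ClStr{} (classical strengthening) is available.

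First, inconsistency of $\tctx \cup \{\ev\}$ means it derives any proposition whatsoever, so in particular $\tctx, \ev \vdash \trunc{\ev\OP}$, and symmetrically $\tctx, \ev\OP \vdash \trunc{\ev}$. Next, I apply the projection lemma (\rlem{projection_lemma}) to each of these judgments; using the identity $\trunc{\trunc{\evtwo}} = \trunc{\evtwo}$, the outputs are
\[
  \tctx, \trunc{\ev} \vdash \trunc{\ev\OP}
  \qquad\text{and}\qquad
  \tctx, \trunc{\ev\OP} \vdash \trunc{\ev}.
\]
Crucially, $\trunc{\ev}$ is now classical, and $\trunc{\ev\OP} = (\trunc{\ev})\OP$ (a basic identity established just before \rdef{system_prk}), so each of these is a sequent of the shape hypothesized by classical strengthening.

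Now apply \ClStr{} to each: from $\tctx, \trunc{\ev\OP} \vdash \trunc{\ev}$ with the classical proposition $\trunc{\ev}$ on the right and its opposite on the left, conclude $\tctx \vdash \trunc{\ev}$; symmetrically, $\tctx \vdash \trunc{\ev\OP}$. Finally, generalized absurdity ($\Abs'$ from \rlem{admissible_rules_logic}) applied to these two derivations yields $\tctx \vdash \evtwo$ for any $\evtwo$, so $\tctx$ is inconsistent, contradicting the hypothesis. The only nontrivial step is Step~2 (projection); once that move is made, Steps~3 and~4 are just careful unfolding of the admissible rules established in \rlem{admissible_rules_logic}, so the main conceptual work is packed into the earlier lemmas rather than here.
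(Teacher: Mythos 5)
Your proof is correct, and it rests on the same key tool as the paper's --- the projection lemma --- but the surrounding decomposition differs. The paper instantiates inconsistency at $\Bot\PP$ (with $\Bot \eqdef \btyp_0 \land \neg\btyp_0$), projects, then uses contraposition to move the projected assumption to the right-hand side and cuts against the provable $\Bot\NN$ (an instance of non-contradiction). You instead instantiate each inconsistency directly at the projection of the \emph{other} hypothesis, obtaining $\tctx,\ev \vdash \trunc{\ev\OP}$ and $\tctx,\ev\OP \vdash \trunc{\ev}$, project, and then observe that each resulting sequent has exactly the shape $\tctx,(\trunc{\ev})\OP \vdash \trunc{\ev}$ (respectively with $\ev$ and $\ev\OP$ swapped) required by classical strengthening, which discharges the assumption in one step. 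Both finish with generalized absurdity. Your route is slightly leaner: it avoids the falsity encoding and the appeal to the law of non-contradiction, replacing contraposition-plus-cut with a single use of $\ClStr$; the paper's version has the mild advantage of reusing the $\Bot\PP$/$\Bot\NN$ machinery that it needs elsewhere in the completeness argument anyway. The identities $\trunc{(\ev\OP)} = (\trunc{\ev})\OP$ and $\trunc{\trunc{\ev}} = \trunc{\ev}$ that you invoke are indeed established in the paper, so the $\ClStr$ applications are legitimate.
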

\begin{proof}
Suppose that $\tctx \cup \set{\ev}$
and $\tctx \cup \set{\ev\OP}$ are both inconsistent.
In particular we have that
$\tctx,\ev \vdash \Bot\PP$
and
$\tctx,\ev\OP \vdash \Bot\PP$.
By the projection lemma~(\rlem{projection_lemma})
we have that $\tctx,\trunc{\ev} \vdash \Bot\PP$
and $\tctx,\trunc{\ev\OP} \vdash \Bot\PP$.
Moreover, by contraposition~(\rlem{admissible_rules_logic})
we have that
$\tctx,\Bot\NN \vdash \trunc{\ev\OP}$ and
$\tctx,\Bot\NN \vdash \trunc{\ev}$.
Since $\Bot\NN$ is provable~(\rexample{lem_and_noncontr}),
applying the cut rule~(\rlem{admissible_rules_logic})
we have that $\tctx \vdash \trunc{\ev\OP}$ and $\tctx \vdash \trunc{\ev}$.
The generalized absurdity rule
allows us to derive $\tctx \vdash \evtwo$
for any $\evtwo$ from these two sequents,
so $\tctx$ is inconsistent.
This contradicts the hypothesis that $\tctx$ is consistent.
\end{proof}

\begin{lemma}[Saturation]
\llem{appendix:kripke_saturation}
Let $\tctx$ be a consistent set of propositions,
and let $\evtwo$ be a proposition such that $\tctx \nvdash \evtwo$.
Then there exists a prime theory $\tctx' \supseteq \tctx$ such that
$\tctx' \nvdash \evtwo$.
\end{lemma}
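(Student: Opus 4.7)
The plan is a Lindenbaum-style saturation procedure. I would fix an enumeration $\ev_0, \ev_1, \ev_2, \ldots$ of all moded propositions and build an ascending chain $\tctx = \tctx_0 \subseteq \tctx_1 \subseteq \tctx_2 \subseteq \ldots$ preserving the invariant $\tctx_n \nvdash \evtwo$ at every stage; the base case holds by hypothesis. At stage~$n$, if $\tctx_n \cup \set{\ev_n} \vdash \evtwo$ then I set $\tctx_{n+1} := \tctx_n$; otherwise I set $\tctx_{n+1} := \tctx_n \cup \set{\ev_n}$, with an extra commitment when $\ev_n$ has a disjunctive shape, as described below. The saturation is then the union $\tctx' := \bigcup_{n \in \Nat} \tctx_n$, which evidently satisfies $\tctx' \supseteq \tctx$.

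The crux is enforcing the disjunctive property during the construction. Suppose at stage $n$ we successfully add $\ev_n = (\typ \lor \typtwo)\pp$ to $\tctx_{n+1}$. I then further extend $\tctx_{n+1}$ by whichever of $\typ\PP, \typtwo\PP$ preserves the invariant; at least one choice must work, because otherwise $\tctx_{n+1},\typ\PP \vdash \evtwo$ and $\tctx_{n+1},\typtwo\PP \vdash \evtwo$ would, together with $(\typ\lor\typtwo)\pp \in \tctx_{n+1}$, yield $\tctx_{n+1} \vdash \evtwo$ by a single application of the $\Eorp$ rule, contradicting the invariant. Dually, when $\ev_n = (\typ \land \typtwo)\nn$ is added, I commit to one of $\typ\NN, \typtwo\NN$ using the same argument together with the $\Eandn$ rule. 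Note that the committed disjunct is always a \emph{classical} proposition, and the definition of prime theory only imposes the disjunctive property on strong modes; hence no recursive cascade of further commitments is needed.

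Finally I would verify that $\tctx'$ is a prime theory. Non-derivability $\tctx' \nvdash \evtwo$ holds because any hypothetical derivation from $\tctx'$ uses only finitely many assumptions, hence factors through some $\tctx_n$. Closure by deduction is argued by contradiction: if $\tctx' \vdash \ev_n$ but $\ev_n \notin \tctx'$, then stage $n$ must have rejected $\ev_n$, meaning $\tctx_n,\ev_n \vdash \evtwo$; by weakening this lifts to $\tctx',\ev_n \vdash \evtwo$, and combining with $\tctx' \vdash \ev_n$ via the cut rule of~\rlem{admissible_rules_logic} gives $\tctx' \vdash \evtwo$, a contradiction. Consistency is immediate from $\tctx' \nvdash \evtwo$ since an inconsistent set proves everything. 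The disjunctive property is guaranteed directly by the construction: every strong disjunction or conjunctive-denial that ever enters $\tctx'$ does so at a definite stage $n$, at which one of its disjuncts is also added.

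The main obstacle is the justification of the disjunctive commitment, since everything else is either bookkeeping or an appeal to the admissible weakening and cut of~\rlem{admissible_rules_logic}. Once one sees that $\Eorp$ and $\Eandn$ let one ``distribute'' a potential derivation of $\evtwo$ across the two disjuncts, the construction goes through cleanly, and no analogous step is required for classical disjunctions $(\typ\lor\typtwo)\PP$ or classical conjunctions, which are untouched by the prime-theory definition.
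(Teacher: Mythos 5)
Your proof is correct, and it rests on the same key insight as the paper's proof --- namely that if $\tctx \vdash (\typ\lor\typtwo)\pp$ while both $\tctx,\typ\PP \vdash \evtwo$ and $\tctx,\typtwo\PP \vdash \evtwo$, then $\Eorp$ (dually $\Eandn$) yields $\tctx \vdash \evtwo$, so at least one disjunct can be committed without breaking the invariant --- but the surrounding construction is genuinely different. You run a maximal Lindenbaum extension: every proposition in the enumeration is tentatively added whenever doing so preserves $\nvdash \evtwo$, so closure by deduction falls out uniformly from maximality plus weakening and cut, and the disjunctive property is secured by an inline commitment at the stage where a strong disjunction (or conjunctive denial) is accepted. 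The paper instead only adds disjuncts of \emph{provable} disjunctive-shaped propositions during the iteration, obtaining a set $\tctx_\omega$, and then defines $\tctx'$ by adjoining all strong propositions provable from $\tctx_\omega$; closure by deduction for \emph{classical} propositions then needs a separate trick: from $\tctx_\omega \vdash \typ\PP$ one derives $\tctx_\omega \vdash (\typ\lor\typ)\pp$ by $\Iorp[1]$, so the iteration eventually treats $(\typ\lor\typ)\pp$ and is forced to put $\typ\PP$ itself into some $\tctx_n$. Your route avoids that trick entirely, at the cost of a larger intermediate construction; both are sound. One small point to make explicit: a strong disjunction may already belong to the initial $\tctx_0$ before any commitment has been made; this is harmless, since the enumeration eventually reaches it, the acceptance test passes trivially, and the commitment is then performed --- but your phrase ``enters $\tctx'$ at a definite stage'' should be read so as to cover that case.
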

\begin{proof}
Consider an enumeration of all propositions $(\ev_1,\ev_2,\hdots)$.
We build a sequence of sets
$\tctx = \tctx_0 \subseteq \tctx_1 \subseteq \tctx_2 \subseteq \hdots$,
with the invariant that $\tctx_n \nvdash \evtwo$ for all $n \geq 0$,
according to the following construction.

In the $n$-th step, suppose that $\tctx_1,\hdots,\tctx_n$ have already been
constructed, and consider the first proposition $\ev$ in the enumeration
such that $\tctx_n \vdash \ev$ but the disjunctive property fails for $\ev$,
that is,
either $\ev$ is of the form $(\typ \lor \typtwo)\pp$ with $\typ\PP,\typtwo\PP \notin \tctx_n$
or $\ev$ is of the form $(\typ \land \typtwo)\nn$ with $\typ\NN,\typtwo\NN \notin \tctx_n$.
There are two subcases:
\begin{enumerate}
\item
  If $\ev = (\typ \lor \typtwo)\pp$ with $\typ\PP,\typtwo\PP \notin \tctx_n$,
  note that $\tctx_n,\typ\PP \vdash \evtwo$ and $\tctx_n,\typtwo\PP \vdash \evtwo$
  cannot both hold simultaneously.
  Indeed, if both $\tctx_n,\typ\PP \vdash \evtwo$ and $\tctx_n,\typtwo\PP \vdash \evtwo$ hold,
  given that also $\tctx_n \vdash (\typ\lor\typtwo)\pp$,
  applying $\Eorp$ we would have $\tctx_n \vdash \evtwo$, contradicting the hypothesis.
  Hence we may define $\tctx_{n+1}$ as follows:
  \[
    \tctx_{n+1} \eqdef
      \begin{cases}
        \tctx_n \cup \set{\typ\PP}    & \text{if $\tctx_n,\typ\PP \nvdash \evtwo$} \\
        \tctx_n \cup \set{\typtwo\PP} & \text{otherwise} \\
      \end{cases}
  \]
  Note that,
  in the second case, $\tctx_n,\typtwo\PP \nvdash \evtwo$ holds.
\item
  If $\ev = (\typ \land \typtwo)\nn$ with $\typ\NN,\typtwo\NN \notin \tctx_n$,
  the construction is similar,
  defining $\tctx_{n+1}$ as
  either $\tctx_n \cup \set{\typ\NN}$
  or $\tctx_n \cup \set{\typtwo\NN}$.
\end{enumerate}
Now we define $\tctx_\omega$ and $\tctx'$ as follows:
\[
  \begin{array}{rcl}
  \tctx_\omega & \eqdef & \bigcup_{n \in \Nat} \tctx_n \\
  \tctx' & \eqdef & \tctx_\omega
                    \cup \set{\typ^{\pm} \ST \tctx_\omega \vdash \typ^{\pm}}
  \end{array}
\]
Note that $\tctx \subseteq \tctx_\omega \subseteq \tctx'$.
Moreover, we claim that $\tctx'$ is a prime theory:

{\bf Closure by deduction.}
  Let $\tctx' \vdash \ev$, and let us show that $\ev \in \tctx'$.
  Since all assumptions in $\tctx'$ of the form $\typ^{\pm}$ are provable
  from $\tctx_\omega$, this means that $\tctx_\omega \vdash \ev$
  by the cut rule~(\rlem{admissible_rules}).
  We consider four subcases,
  depending on whether $\ev$ is a strong/classical proof/refutation.
  We only study the positive cases; the negative cases are symmetric: 
  \begin{enumerate}
  \item {\em Strong proof, \ie $\ev = \typ\pp$.}
    Then $\tctx_\omega \vdash \typ\pp$
    so $\typ\pp \in \tctx'$ by definition of $\tctx'$.
  \item {\em Classical proof, \ie $\ev = \typ\PP$.}
    Then $\tctx_\omega \vdash \typ\PP$
    so in particular $\tctx_\omega \vdash (\typ \lor \typ)\pp$
    applying the $\Iorp[1]$ rule.
    Then there is an $n_0$ such that
    $\tctx_n \vdash (\typ \lor \typ)\pp$ for all $n \geq n_0$.
    Then it cannot be the case that $\typ\PP \notin \tctx_n$ for all $n \geq n_0$,
    because the proposition $(\typ \lor \typ)\pp$
    must be eventually treated by
    the construction of $(\tctx_n)_{n \in \Nat}$ above.
    This means that there is an $n \geq n_0$ 
    such that $\typ\PP \in \tctx_n$,
    and therefore $\typ\PP \in \tctx_\omega \subseteq \tctx'$,
    as required.
  \end{enumerate}

{\bf Consistency.}
  It suffices to note that $\tctx' \nvdash \evtwo$.
  Indeed, suppose that $\tctx' \vdash \evtwo$.
  Then $\tctx_\omega \vdash \evtwo$
  by the cut rule~(\rlem{admissible_rules}),
  so there exists an $n_0$ such that $\tctx_n \vdash \evtwo$ for all $n \geq n_0$.
  This contradicts the invariant of
  the construction of $(\tctx_n)_{n \in \Nat}$ above.

{\bf Disjunctive property.}
  We consider only the positive case. The negative case is symmetric.
  Suppose that $\tctx' \vdash (\typ \lor \typtwo)\pp$.
  Then $\tctx_\omega \vdash (\typ \lor \typtwo)\pp$
  by the cut rule~(\rlem{admissible_rules}),
  so there exists an $n_0$ such that $\tctx_n \vdash (\typ \lor \typtwo)\pp$ for all $n \geq n_0$.
  Then it cannot be the case that $\typ\PP,\typtwo\PP \notin \tctx_n$ for all $n \geq n_0$,
  because the proposition $(\typ \lor \typtwo)\pp$
  must be eventually treated by
  the construction of $(\tctx_n)_{n \in \Nat}$ above.
  This means that there is an $n \geq n_0$ 
  such that either $\typ\PP \in \tctx_n$ or $\typtwo\PP \in \tctx_n$,
  and therefore we have that
  either $\typ\PP \in \tctx_\omega \subseteq \tctx'$,
  or $\typtwo\PP \in \tctx_\omega \subseteq \tctx'$, as required.

Finally, note that $\tctx' \nvdash \evtwo$,
as has already been shown in the proof of consistency above.
\end{proof}

\begin{definition}[Canonical model]
The {\em canonical model} is the structure
$\kripmod_0 = (\wlset_0,\subseteq,\wlpos{},\wlneg{})$:
\begin{enumerate}
\item $\wlset_0$ is the set of all prime theories,
      \ie $\wlset_0 \eqdef \set{\tctx \ST \text{$\tctx$ is prime}}$.
\item $\subseteq$ is the set-theoretic inclusion between prime theories.
\item $\wlpos{\tctx} = \set{\btyp \ST \btyp\pp \in \tctx}$
      and
      $\wlneg{\tctx} = \set{\btyp \ST \btyp\nn \in \tctx}$.
\end{enumerate}
\end{definition}

\begin{lemma}
\llem{appendix:canonical_model_is_kripke}
The canonical model is a Kripke model.
\end{lemma}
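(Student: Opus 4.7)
The plan is to verify the three required properties of a Kripke model for $\kripmod_0$: that $\subseteq$ is a partial order on $\wlset_0$, that monotonicity holds, and that stabilization holds. The first is immediate since set-theoretic inclusion is always a partial order. Monotonicity is also essentially by construction: if $\tctx \subseteq \tctx'$ and $\btyp\pp \in \tctx$, then $\btyp\pp \in \tctx'$, so $\wlpos{\tctx} \subseteq \wlpos{\tctx'}$; similarly for $\wlneg{}$. So the real content lies in stabilization.

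For stabilization, fix a prime theory $\tctx$ and a propositional variable $\btyp$. I need to produce a prime theory $\tctx' \supseteq \tctx$ with $\btyp \in \wlpos{\tctx'} \symdiff \wlneg{\tctx'}$, i.e.\ exactly one of $\btyp\pp \in \tctx'$ and $\btyp\nn \in \tctx'$. Since $\tctx$ is consistent (it is prime), the consistent extension lemma (\rlem{appendix:consistent_extension}) applied to $\ev = \btyp\pp$ (whose opposite is $\btyp\nn$) tells us that at least one of $\tctx \cup \set{\btyp\pp}$ and $\tctx \cup \set{\btyp\nn}$ is consistent. By symmetry, assume without loss of generality that $\tctx \cup \set{\btyp\pp}$ is consistent.

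Next I claim $\tctx \cup \set{\btyp\pp} \nvdash \btyp\nn$: otherwise, combining this with the axiom $\tctx,\btyp\pp \vdash \btyp\pp$ via the absurdity rule $\Abs$ (applicable because $\btyp\pp$ is strong) would let us derive every proposition, contradicting consistency. Now apply the Saturation Lemma (\rlem{appendix:kripke_saturation}) to the consistent set $\tctx \cup \set{\btyp\pp}$ with target proposition $\btyp\nn$. This produces a prime theory $\tctx' \supseteq \tctx \cup \set{\btyp\pp}$ with $\tctx' \nvdash \btyp\nn$. Closure by deduction forces $\btyp\pp \in \tctx'$, while $\tctx' \nvdash \btyp\nn$ combined with closure by deduction forces $\btyp\nn \notin \tctx'$. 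Hence $\btyp \in \wlpos{\tctx'} \setminus \wlneg{\tctx'} \subseteq \wlpos{\tctx'} \symdiff \wlneg{\tctx'}$, as required.

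The main obstacle, as the paper already indicates, is the stabilization condition, and the subtlety is entirely packaged into the two lemmas being invoked: the consistent extension lemma (which in turn rests crucially on the projection lemma and on contraposition in order to show that we cannot simultaneously refute both $\ev$ and $\ev\OP$ from a consistent theory), and the Saturation lemma (which is the usual Lindenbaum-style construction, but with the added obligation of preserving the disjunctive property for both positive disjunctions and negative conjunctions). Once those are in hand, the argument for the canonical model is the short deduction given above.
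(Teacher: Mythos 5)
Your proof is correct and follows essentially the same route as the paper's: monotonicity by construction, and stabilization via the consistent extension lemma to get one of $\tctx\cup\set{\btyp\pp}$, $\tctx\cup\set{\btyp\nn}$ consistent, followed by the Saturation lemma with target $\btyp\nn$ (resp.\ $\btyp\pp$) to obtain the required prime theory. The only difference is cosmetic: you make explicit the $\Abs$-rule argument for $\tctx,\btyp\pp\nvdash\btyp\nn$ that the paper leaves implicit.
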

\begin{proof}
Let us check the two required properties.
{\bf Monotonicity}
is immediate, since if $\tctx \subseteq \tctx'$ 
then $\btyp^{\pm} \in \tctx$ implies $\btyp^{\pm} \in \tctx'$.
For {\bf stabilization},
let $\tctx$ be a prime theory and let $\btyp$ be a propositional variable.
First note that $\tctx \cup \set{\btyp\pp}$
and $\tctx \cup \set{\btyp\nn}$ cannot both be inconsistent,
by the consistent extension lemma~(\rlem{appendix:consistent_extension}).
We consider two subcases,
depending on whether $\tctx \cup \set{\btyp\pp}$ is consistent:
\begin{enumerate}
\item {\em If $\tctx \cup \set{\btyp\pp}$ is consistent.}
  Then $\tctx,\btyp\pp \nvdash \btyp\nn$
  because $\tctx,\btyp\pp \vdash \btyp\nn$ would make
  the set $\tctx\cup\set{\btyp\pp}$ inconsistent.
  Then by saturation~(\rlem{appendix:kripke_saturation})
  there is a prime theory $\tctx' \supseteq \tctx \cup \set{\btyp\pp}$
  such that $\tctx' \nvdash \btyp\nn$.
  Hence we have that $\tctx' \supseteq \tctx$
  with $\btyp \in \wlpos{\tctx'} \setminus \wlneg{\tctx'}$.
\item {\em Otherwise, so $\tctx \cup \set{\btyp\nn}$ is consistent.}
  Similarly as in the previous case,
  we have that $\tctx,\btyp\nn \nvdash \btyp\pp$,
  so by saturation~(\rlem{appendix:kripke_saturation})
  there is a prime theory $\tctx' \supseteq \tctx \cup \set{\btyp\nn}$
  such that $\tctx' \nvdash \btyp\pp$,
  and this implies that $\btyp \in \wlneg{\tctx'} \setminus \wlpos{\tctx'}$.
\end{enumerate}
\end{proof}

\begin{lemma}[Main Semantic Lemma]
\llem{appendix:kripke_main_semantic_lemma}
Let $\tctx$ be a prime theory.
Then
$\kripforce[\kripmod_0]{\tctx}{\ev}$ holds in the canonical model
if and only if $\ev \in \tctx$.
\end{lemma}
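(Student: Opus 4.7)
The plan is induction on the measure $\#(\ev)$ from \rsec{prk_kripke_semantics}, treating base, strong compound, and classical cases in turn. The base cases $\ev = \btyp\pp$ and $\ev = \btyp\nn$ are immediate: by definition $\kripforce[\kripmod_0]{\tctx}{\btyp\pp}$ unfolds to $\btyp \in \wlpos{\tctx}$, which is literally $\btyp\pp \in \tctx$, and symmetrically for $\btyp\nn$.

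For the strong compound cases, the key observation is that each clause of \rdef{kripke_forcing} for a strong proposition refers only to sub-propositions of strictly smaller measure (e.g., $\#(\typ\PP), \#(\typtwo\PP) < \#((\typ \land \typtwo)\pp)$, and $\#(\typ\NN) < \#((\neg\typ)\pp)$), so the inductive hypothesis applies. Each direction of the biconditional then follows by combining the IH with the obvious intro/elim rule of $\PRK$ and closure by deduction of $\tctx$. The only subtle subcases are the disjunction-shaped ones, $(\typ \lor \typtwo)\pp$ and $(\typ \land \typtwo)\nn$: to pass from $(\typ \lor \typtwo)\pp \in \tctx$ to forcing, the disjunctive property of the prime theory $\tctx$ is precisely what is needed to split into $\typ\PP \in \tctx$ or $\typtwo\PP \in \tctx$, after which the IH yields the desired disjunction of forcings.

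The main obstacle is the classical case $\ev = \typ\PP$ (the case $\typ\NN$ being symmetric by flipping signs). Unfolding the forcing clause and applying the IH to $\typ\nn$, whose measure is strictly smaller, the goal becomes: $\typ\PP \in \tctx$ if and only if no prime theory $\tctx' \supseteq \tctx$ contains $\typ\nn$. The $(\Rightarrow)$ direction I would handle by contradiction: if $\typ\nn \in \tctx'$ while $\typ\PP \in \tctx \subseteq \tctx'$, upgrading $\typ\nn$ to $\typ\NN$ via projection of conclusions and then invoking generalized absurdity collapses $\tctx'$, contradicting its consistency.

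The $(\Leftarrow)$ direction is where the real work lies. Assuming $\typ\PP \notin \tctx$, hence $\tctx \nvdash \typ\PP$ by closure, my plan is to build the required $\tctx'$ by first showing that $\tctx \cup \set{\typ\nn}$ is consistent, and then applying the Saturation lemma~(\rlem{kripke_saturation}) to extend it to a prime theory containing $\typ\nn$. Consistency is the crux: if $\tctx, \typ\nn \vdash \typ\PP$, then the Projection Lemma~(\rlem{projection_lemma}) yields $\tctx, \typ\NN \vdash \typ\PP$, and classical strengthening then collapses this to $\tctx \vdash \typ\PP$, contradicting the hypothesis. This is the step where the Projection Lemma does the heavy lifting---without it, there would be no obvious way to connect the absence of $\typ\PP$ from $\tctx$ with the existence of an accessible world that refutes it.
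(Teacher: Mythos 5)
Your proof is correct and follows the paper's argument essentially step for step: induction on $\#(\ev)$, the intro/elim rules plus closure by deduction for the strong cases, the disjunctive property of prime theories for the $\lor\pp/\land\nn$ cases, and, for the classical case, reducing to ``$\typ\PP \in \tctx$ iff no prime extension contains $\typ\nn$'' and then using the Projection Lemma together with Saturation to build an extension containing $\typ\nn$ when $\typ\PP \notin \tctx$. The only (harmless) deviation is in discharging consistency of $\tctx\cup\{\typ\nn\}$: you instantiate the assumed inconsistency at $\typ\PP$ and close with classical strengthening, whereas the paper instantiates at $\Bot\PP$ and detours through contraposition and a cut against the provable $\Bot\NN$; both hinge on the same application of the Projection Lemma.
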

\begin{proof}
We proceed by induction on the measure $\#(\ev)$.
We only study the positive cases, the negative cases are symmetric.
\smallskip

{\bf Propositional variable, $\ev = \btyp\pp$.}
  \[
    \kripforce[\kripmod_0]{\tctx}{\btyp\pp}
    \iff
    \btyp \in \wlpos{\tctx}
    \iff
    \btyp\pp \in \tctx
  \]

{\bf Strong conjunction, $\ev = (\typ \land \typtwo)\pp$.}
  \[
  \begin{array}{rcll}
  &&
    \kripforce[\kripmod_0]{\tctx}{(\typ \land \typtwo)\pp}
  \\
  & \iff &
    \kripforce[\kripmod_0]{\tctx}{\typ\PP} \text{ and }
    \kripforce[\kripmod_0]{\tctx}{\typtwo\PP}
  \\
  & \iff &
    \typ\PP \in \tctx \text{ and } \typtwo\PP \in \tctx
    & \text{by \ih}
  \\
  & \iff &
    (\typ \land \typtwo)\pp \in \tctx
  \end{array}
  \]
  The last equivalence uses the
  fact that $\tctx$ is closed by deduction,
  using rule $\Iandp$ for the ``only if'' direction
  and rules $\Eandp[1],\Eandp[2]$ for the ``if'' direction.
  \smallskip

{\bf Strong disjunction, $\ev = (\typ \lor \typtwo)\pp$.}
  \[
  \begin{array}{rcll}
  &&
    \kripforce[\kripmod_0]{\tctx}{(\typ \lor \typtwo)\pp}
  \\
  & \iff &
    \kripforce[\kripmod_0]{\tctx}{\typ\PP}
    \text{ or }
    \kripforce[\kripmod_0]{\tctx}{\typtwo\PP}
  \\
    & \iff &
    \typ\PP \in \tctx
    \text{ or }
    \typtwo\PP \in \tctx
    & \text{by \ih}
  \\
    & \iff &
    (\typ \lor \typtwo)\pp \in \tctx
  \end{array}
  \]
  The last equivalence uses the fact that
  $\tctx$ is a prime theory,
  using rules $\Iorp[1]$ and $\Iorp[2]$
  for the ``only if'' direction, and the fact that $\tctx$ is disjunctive
  for the ``if'' direction.
  \smallskip

{\bf Strong negation, $\ev = (\neg\typ)\pp$.}
  \[
  \begin{array}{rcll}
    \kripforce[\kripmod_0]{\tctx}{(\neg\typ)\pp}
  & \iff &
    \kripforce[\kripmod_0]{\tctx}{\typ\NN}
  \\
  & \iff &
    \typ\NN \in \tctx
    & \text{by \ih}
  \\
  & \iff &
    (\neg\typ)\pp \in \tctx
  \end{array}
  \]
  The last equivalence uses the fact that
  $\tctx$ is closed by deduction,
  using rule $\Inotp$ for the ``only if'' direction
  and rule $\Enotp$ for the ``if'' direction.
  \smallskip

{\bf Classical proposition, $\ev = \typ\PP$.}
  \[
  \begin{array}{rcll}
    \kripforce[\kripmod_0]{\tctx}{\typ\PP}
  & \iff &
    \forall \tctx' \supseteq \tctx,\ %
    \kripnotforce[\kripmod_0]{\tctx'}{\typ\nn}
  \\
  & \iff &
    \forall \tctx' \supseteq \tctx,\ %
    \typ\nn \notin \tctx'
    & \text{by \ih}
  \\
  & \iff &
    \typ\PP \in \tctx
  \end{array}
  \]
  Note that $\tctx'$ does not vary over arbitrary sets of propositions,
  but only over prime theories.
  To justify the last equivalence, we prove each implication separately:
  \begin{itemize}
  \item[$(\Rightarrow)$]
    We show the contrapositive.
    Let $\typ\PP \notin \tctx$
    and let us show that there is a prime theory $\tctx' \supseteq \tctx$
    such that $\typ\nn \in \tctx'$.
    First we claim that $\tctx\cup\set{\typ\nn}$ is consistent.
    \begin{itemize}
    \item[] {\em Proof of the claim.}
      Suppose by contradiction that $\tctx\cup\set{\typ\nn}$ is inconsistent.
      Then in particular $\tctx,\typ\nn \vdash \Bot\PP$.
      (Recall that we encode falsity
      as $\Bot \eqdef (\btyp_0 \land \neg\btyp_0)$).
      By the projection lemma~(\rlem{projection_lemma})
      we have that $\tctx,\typ\NN \vdash \Bot\PP$.
      By contraposition~(\rlem{admissible_rules_logic})
      $\tctx,\Bot\NN \vdash \typ\PP$.
      Since $\Bot\NN$ is provable~(\rexample{lem_and_noncontr}),
      by the cut rule~(\rlem{admissible_rules_logic})
      we have that $\tctx \vdash \typ\PP$.
      But $\tctx$ is closed by deduction, so $\typ\PP \in \tctx$.
      This contradicts the fact that $\typ\PP \notin \tctx$
      and concludes the proof of the claim.
    \end{itemize}
    Now since $\tctx\cup\set{\typ\nn}$ is consistent,
    by saturation~(\rlem{appendix:kripke_saturation}),
    we may extend it
    to a prime theory $\tctx' \supseteq \tctx\cup\set{\typ\nn}$.
    This concludes this case.
  \item[$(\Leftarrow)$]
    Suppose that $\typ\PP \in \tctx$, and let $\tctx' \supseteq \tctx$
    such that $\typ\nn \in \tctx'$.
    Then since $\tctx'$ is closed by deduction, using the $\Icp$ rule
    we have that $\typ\NN \in \tctx'$.
    Since $\tctx'$ contains both $\typ\PP$ and $\typ\NN$,
    using the generalized absurdity rule we may
    derive an arbitrary proposition from $\tctx'$,
    which means that $\tctx'$ is inconsistent,
    contradicting the fact that $\tctx'$ is a prime theory.
  \end{itemize}
\end{proof}


\subsection{Proof of Subject~Reduction of $\lambdaC$~(\rprop{subject_reduction})}
\lsec{appendix:subject_reduction}

\begin{proposition}[Subject reduction]
\lprop{appendix:subject_reduction}
If $\tctx \vdash \tm : \ev$
and $\tm \toa{} \tmtwo$, then $\tctx \vdash \tmtwo : \ev$.
\end{proposition}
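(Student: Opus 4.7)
The plan is to prove subject reduction by induction on the derivation of the reduction step $\tm \toa{} \tmtwo$. Since the reduction relation is defined as the closure under arbitrary contexts of the seven base rewriting rules, the induction splits into (a) the \emph{root} cases, where the reduction is a direct instance of one of the seven rules, and (b) the \emph{contextual} cases, where the step takes place strictly inside a subterm. The contextual cases are routine: one inverts the typing rule that formed the enclosing constructor and applies the inductive hypothesis to the affected subterm, reassembling the same typing rule with the reduced subterm in place. All the work is therefore in checking that each of the seven base rules preserves the type.

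For the root cases, the method is always the same: \emph{invert} the typing rules that must have been used to derive $\tctx \vdash \tm : \ev$, extract from this inversion the types of every subcomponent, and then re-type the reduct. For the $\ruleProj$ and $\ruleNeg$ rules this is immediate from inversion of $\Iandp$/$\Iorn$ or $\Inotp$/$\Inotn$, since the reduct is literally one of the components. For $\ruleCase$ and $\ruleBeta$ the reduct is a substitution $\tmtwo\sub{\var}{\tm}$, so after inversion one applies the admissible cut (substitution) rule from \rlem{admissible_rules}~(\ref{lemma:admissible_rules:subst}) to close the case.

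The interesting cases are the three absurdity rules $\ruleAbsPairInj$, $\ruleAbsInjPair$, $\ruleAbsNeg$, because their reducts use the \emph{generalized} absurdity witness $\abs{\evtwo}{-}{-}$ from \rlem{admissible_rules}, rather than the primitive strong absurdity $\strongabs{\evtwo}{-}{-}$. Consider $\ruleAbsPairInj$ in the positive orientation: inversion of $\Iandp$ and $\Iandn$ on the two premises of $\Abs$ gives $\tctx\vdash \tm_1:\typ_1\PP$, $\tctx\vdash \tm_2:\typ_2\PP$ and $\tctx\vdash \tmtwo:\typ_i\NN$, while the strong type being inhabited was $(\typ_1\land\typ_2)\pp$; hence the reduct $\abs{\evtwo}{\tm_i}{\tmtwo}$ must combine a term of type $\typ_i\PP$ with one of type $\typ_i\NN = (\typ_i\PP)\OP$, which is precisely the classical instance handled by the generalized absurdity lemma. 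The negative orientations of all three rules, as well as $\ruleAbsInjPair$ and $\ruleAbsNeg$, are treated symmetrically, with the generalized absurdity instantiated at a classical (resp.\ strong, for $\ruleAbsNeg$ where $(\neg\typ)^{\pm}$ is strong on both sides) type.

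The main obstacle, and the reason this proposition is not entirely mechanical, is exactly this shift between strong and classical typing on the two sides of $\strongabs{}{-}{-}$: the source of an absurdity step combines strong inhabitants of a compound proposition, whereas the reduct combines inhabitants of classical propositions (the components), so one cannot reuse the primitive $\Abs$ rule and must appeal to the admissible generalized absurdity. Once this observation is made, each of the seven cases reduces to a short typing derivation, and the contextual closure step uses only a straightforward structural induction together with the standard inversion principles for the typing rules of \rdef{lambdaC_type_system}.
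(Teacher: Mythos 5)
Your proposal follows essentially the same route as the paper's proof: induction on the position of the redex (the paper phrases this as induction on the surrounding context $\gctx$), with the root cases handled by inversion of the typing rules, the admissible $\Cut$ rule for $\ruleCase$ and $\ruleBeta$, and the admissible generalized absurdity rule $\Abs'$ for the three absurdity rules. One small correction: in the $\ruleAbsNeg$ case the generalized absurdity in the reduct is instantiated at the \emph{classical} types $\typ\NN$ and $\typ\PP$ obtained by inverting $\Inotp$ and $\Inotn$ --- the strong types $(\neg\typ)^{\pm}$ are the types of the redex's premises, not of the reduct's components --- so this case, like the other two, lands in the classical branch of $\Abs'$. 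Since \rlem{admissible_rules} covers strong and classical instances uniformly, this slip does not affect the validity of the argument.
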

\begin{proof}
Since reduction is closed under arbitrary contexts,
the term on the left hand side is of the form
$\gctxof{\tm_0}$ and it reduces to $\gctxof{\tm_1}$
contracting the redex $\tm_0$.
We proceed by induction on the context $\gctx$ under which the rewriting
step takes place.
The interesting case is when the context is empty.
All other cases are easy by resorting to the \ih.
We proceed by case analysis on each of the reduction rules.
Note that each rule actually stands for two rules, depending on the
instantiations of the signs.
We write only one of these cases; if the signs are flipped the proof is
symmetric.
We use the admissible typing rules
$\Cut$ and $\Abs'$~(\rlem{admissible_rules}).
\medskip

\noindent$\bullet$ $\ruleProj$:
let $i \in \set{1,2}$. We have:
    \[
    {\small
    \indrule{\Eandp}{
      \indrule{\Iandp}{
        \indrule{}{
          \typingDerivation_1
        }{
          \tctx \vdash \tm_1 : \typ_1\PP
        }
        \HS
        \indrule{}{
          \typingDerivation_2
        }{
          \tctx \vdash \tm_2 : \typ_2\PP
        }
      }{
        \tctx \vdash \pairp{\tm_1}{\tm_2} : (\typ_1 \land \typ_2)\pp
      }
    }{
        \tctx \vdash \tm_i : \typ_i\PP
    }
    }
    \]
Then:
    \[
    {\small
      \prooftree 
      \typingDerivation_i
      \justifies \tctx \vdash \tm_i : \typ\PP
      \thickness=0.05em
      \endprooftree
    }
    \]

\noindent$\bullet$ $\ruleCase$:
    let $i \in \set{1,2}$. We have:
    \[
    {\small
    \indrule{\Eorp}{
      \indrule{\Iorp}{
        \indrule{}{
          \typingDerivation
        }{
          \tctx \vdash \tm : \typ_i\PP
        }
      }{
        \tctx \vdash \inip{\tm} : (\typ_1 \lor \typ_2)\pp
      }
      \HS
      \typingDerivation_1
      \HS
      \typingDerivation_2
    }{
      \tctx \vdash \casep{(\inip{\tm})}{\var}{\tmtwo_1}{\var}{\tmtwo_2} : \ev
    }
    }
    \]
    where for each $j \in \set{1,2}$,
    the derivation $\typingDerivation_j$ is:
    \[
      \indrule{}{
        \typingDerivation'_j
      }{
        \tctx, \var : \typ_j\PP \vdash \tmtwo_j : \ev
      }
    \]
    Then:
    \[
    {\small
        \prooftree 
        \[
          \typingDerivation'_i
          \justifies \tctx, \var : \typ_i\PP \vdash \tmtwo_i : \ev
        \]
        \[
          \typingDerivation
          \justifies \tctx \vdash \tm : \typ_i\PP
        \]
        \justifies \tctx \vdash \tmtwo_i\sub{\var}{\tm} : \ev
        \using{\indrulename{\Cut}}
        \thickness=0.05em
        \endprooftree
    }
    \]

\noindent$\bullet$ $\ruleNeg$:
  We have:
    \[
    {\small
      \prooftree 
      \[
        \[
          \typingDerivation
          \justifies \tctx \vdash \tm : \typ\NN
        \]
        \justifies \tctx \vdash \negip{\tm} : (\lnot\typ)\pp
        \using{\Inotp}
        \thickness=0.05em
      \]
      \justifies \tctx \vdash \negep{(\negip{\tm})} : \typ\NN 
      \thickness=0.05em
      \using{\indrulename{\Enotp}}
      \endprooftree
    }
    \]
    Then:
    \[
    {\small
      \prooftree 
      \typingDerivation
      \justifies \tctx \vdash \tm : \typ\NN
      \thickness=0.05em
      \endprooftree
    }
    \]

\noindent$\bullet$ $\ruleBeta$:
    we have:
    \[
    {\small
      \prooftree 
      \[
        \[
          \typingDerivation
          \justifies \tctx, \var : \typ\NN \vdash \tm : \typ\pp
        \]
        \justifies \tctx \vdash \claslamp{\var}{\tm} : \typ\PP
        \using{\Icp}
        \thickness=0.05em
      \]
      \[
        \typingDerivation'
        \justifies \tctx \vdash \tmtwo : \typ\NN
      \]
      \justifies \tctx \vdash \clasapp{(\claslamp{\var}{\tm})}{\tmtwo} : \typ\pp 
      \thickness=0.05em
      \using{\indrulename{\Ecp}}
      \endprooftree
    }
    \]
    Then:
    \[
    {\small
      \prooftree 
        \[
          \typingDerivation
          \justifies \tctx, \var : \typ\NN \vdash \tm : \typ\pp
        \]
        \[
          \typingDerivation'
          \justifies \tctx \vdash \tmtwo : \typ\NN
        \]
        \justifies \tctx \vdash \tm\sub{\var}{\tmtwo} : \typ\pp
        \using{\indrulename{\Cut}}
        \thickness=0.05em
        \endprooftree
    }
    \]
\noindent$\bullet$ $\ruleAbsPairInj$:
    we have:
    \[
    {\footnotesize
      \prooftree 
      \[
        \[
          \typingDerivation_1
          \justifies \tctx \vdash \tm_1 : \typ_1\PP
        \]
        \[
          \typingDerivation_2
          \justifies \tctx \vdash \tm_2 : \typ_2\PP
        \]
        \justifies \tctx \vdash \pairp{\tm_1}{\tm_2} : (\typ_1 \land \typ_2)\pp
        \using{\Iandp}
        \thickness=0.05em
      \]
      \[
        \[
          \typingDerivation'
          \justifies \tctx \vdash \tmtwo : \typ_i\NN
        \]
        \justifies \tctx \vdash \inin{\tmtwo} : (\typ_1 \land \typ_2)\nn
        \using{\indrulename{\Iandn}}
      \]
      \justifies \tctx \vdash \strongabs{\ev}{\pairp{\tm_1}{\tm_2}}{\inin{\tmtwo}} : \ev
      \thickness=0.05em
      \using{\indrulename{\Abs}}
      \endprooftree
    }
    \]
    Then:
    \[
    {\small
      \prooftree 
        \[
          \typingDerivation_i
          \justifies \tctx \vdash \tm_i : \typ_i\PP
        \]
        \[
          \typingDerivation'
          \justifies \tctx \vdash \tmtwo : \typ_i\NN
        \]
        \justifies \tctx \vdash \abs{\ev}{\tm_i}{\tmtwo} : \ev
        \using{\indrulename{\Abs'}}
        \thickness=0.05em
        \endprooftree
    }
    \]

\noindent$\bullet$ $\ruleAbsInjPair$:
    similar to the previous case.

\noindent$\bullet$ $\ruleAbsNeg$:
    \[
    {\small
      \prooftree 
      \[
        \[
          \typingDerivation
          \justifies \tctx \vdash \tm : \typ\NN 
        \]
        \justifies \tctx \vdash \negip{\tm} : (\lnot\typ)\pp
        \using{\indrulename{\Inotp}}
      \]
      \[
        \[
          \typingDerivation'
          \justifies \tctx \vdash \tmtwo : \typ\PP 
        \]
        \justifies \tctx \vdash \negin{\tmtwo} : (\lnot\typ)\nn
        \using{\indrulename{\Inotn}}
      \]
      \justifies \tctx \vdash \strongabs{\ev}{(\negip{\tm})}{(\negin{\tmtwo})} : \ev
      \thickness=0.05em
      \using{\indrulename{\Abs}}
      \endprooftree
    }
    \]
    Then:
    \[
    {\small
      \prooftree 
      \[
        \typingDerivation
        \justifies \tctx \vdash \tm : \typ\NN 
      \]
      \[
        \typingDerivation'
        \justifies \tctx \vdash \tmtwo : \typ\PP 
      \]
      \justifies \tctx \vdash \abs{\ev}{\tm}{\tmtwo} : \ev
      \thickness=0.05em
      \using{\indrulename{\Abs'}}
      \endprooftree
    }
    \]
\end{proof}

\subsection{Proof of the Positivity Condition for~\rcoro{systemF_SN_posneg}}
\lsec{appendix:positivity_condition}

\begin{definition}
Recall that the set of type constraints $\typeConstraintsPosNeg$
is given by all equations of the following form, for all
types $\typ,\typtwo$ of System~F:
\[
  \Pos{\typ}{\typtwo}
  \equiv
  (\Neg{\typ}{\typtwo} \to \typ)
  \HS
  \Neg{\typ}{\typtwo}
  \equiv
  (\Pos{\typ}{\typtwo} \to \typtwo)
\]
\end{definition}

\begin{proposition}
The set of type constraints $\typeConstraintsPosNeg$
verifies Mendler's positivity condition
(stated in the body of the paper,
and also in the appendix in \rdef{positivity}).
\end{proposition}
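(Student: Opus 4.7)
Fix a constraint in $\typeConstraintsPosNeg$, say $\Pos{\typ_0}{\typtwo_0} \equiv \Neg{\typ_0}{\typtwo_0} \to \typ_0$, and write $P := \Pos{\typ_0}{\typtwo_0}$ and $N := \Neg{\typ_0}{\typtwo_0}$; by symmetry it suffices to show that $P \notin \negvars{\tau}$ for every $\tau \equiv P$. The plan is to run a joint induction using the two complementary predicates $\Phi(\tau) := (P \notin \negvars{\tau}) \land (N \notin \posvars{\tau})$ and $\Psi(\tau) := (P \notin \posvars{\tau}) \land (N \notin \negvars{\tau})$. A direct computation on the definitions of $\posvars$ and $\negvars$ gives the recursive identities $\Phi(\tau_1 \to \tau_2) = \Psi(\tau_1) \land \Phi(\tau_2)$ and $\Psi(\tau_1 \to \tau_2) = \Phi(\tau_1) \land \Psi(\tau_2)$. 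The base case $\Phi(P)$ holds because $P$ and $N$ live in the disjoint partitions $\mathbf{P}$ and $\mathbf{N}$; the side conditions $\Phi(\typ_0)$, $\Psi(\typ_0)$, $\Phi(\typtwo_0)$, $\Psi(\typtwo_0)$ all follow from the observation that, under the inductive presentation of types, $\typ_0$ and $\typtwo_0$ are strict subterms of $P$ and hence cannot syntactically mention $P$ or $N$ at all. A short calculation then shows that the distinguished rewrite $P \equiv N \to \typ_0$ preserves both invariants, the biconditional $\Phi(P) \iff \Phi(N \to \typ_0)$ reducing exactly to $\Phi(\typ_0)$; the dual axiom for $N$ is symmetric.

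The key step is to show that each other basic rewrite $\Pos{X}{Y} \equiv \Neg{X}{Y} \to X$ with $(X,Y) \neq (\typ_0, \typtwo_0)$, applied in some sub-context of a type $\tau \equiv P$, also preserves $\Phi$ and $\Psi$. In isolation this can fail whenever $X$ contains $P$ or $N$ at an unfavourable position, so I would prove a preparatory lemma: every variable $\Pos{X}{Y}$ or $\Neg{X}{Y}$ appearing in any type equivalent to $P$ has $X$ and $Y$ free of occurrences of $P$ and $N$. The proof defines the ``reachable set'' $V^*$ as the smallest set containing $\set{P, N}$ and closed under the clause ``if $\Pos{X}{Y}$ or $\Neg{X}{Y}$ belongs to $V^*$, then so does its companion and every variable occurring in $X$ or $Y$''. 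An easy induction on the derivation of $\equiv$ shows that every variable appearing in a $\tau \equiv P$ lies in $V^*$, since unfoldings introduce only variables from $V^*$ and foldings re-introduce variables already present. Combined with the size inequality $|X|, |Y| < |P|$ valid for every $\Pos{X}{Y} \in V^* \setminus \set{P, N}$, this rules out any occurrence of $P$ or $N$ inside such $X$ or $Y$, so $\Phi(X)$ and $\Psi(X)$ both hold and the rewrite is $\Phi$/$\Psi$-preservation-neutral.

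The main obstacle is precisely this size/well-foundedness argument, and it is essential: if one relaxed the inductive structure of types, the positivity condition would actually fail (letting $\typ_0$ contain $P$ negatively would break the invariant at the very first unfolding). The inductive presentation of types in System~F$\extwith{\typeConstraintsPosNeg}$ mentioned in the paper rescues us, since it forces $|\typ_0|, |\typtwo_0| < |P|$ and propagates this strict inequality transitively through every $\Pos{X}{Y}$ reachable from $\typ_0$ or $\typtwo_0$ via the subterm relation. Once the preparatory lemma is in hand, a straightforward induction on the derivation of $\tau \equiv P$ lifts the local preservation of $\Phi$ to a global statement, delivering $\Phi(\tau)$ and in particular $P \notin \negvars{\tau}$, as required.
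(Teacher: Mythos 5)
Your proposal is correct in outline and reaches the same conclusion by a genuinely different decomposition. The paper also fixes a disjunctive invariant of the shape ``$\Pos{X}{Y}$ occurs positively or $\Neg{X}{Y}$ occurs negatively'' and shows it is preserved by the congruence $\equiv$, but it does so with \emph{weak} occurrence sets $\wposvars{\cdot},\wnegvars{\cdot}$ that recurse into the indices of $\Pos{X}{Y}$ and $\Neg{X}{Y}$ (e.g.\ $\wposvars{\Pos{\typ}{\typtwo}} = \set{\Pos{\typ}{\typtwo}}\cup\wposvars{\typ}\cup\wnegvars{\typtwo}$). Because the variable is ``pre-unfolded'' in this way, the invariance holds \emph{unconditionally} for every constraint instance, with no restriction on $X$ and $Y$; the complexity measure $\compl{\cdot}$ is then invoked only once, at the very end, to show that $\Pos{\typ}{\typtwo}$ cannot occur weakly inside its own indices. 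You instead keep the ordinary $\posvars{\cdot},\negvars{\cdot}$, which makes your biconditional for a constraint instance $\Pos{X}{Y}\equiv\Neg{X}{Y}\to X$ reduce to $\Phi(X)$ resp.\ $\Psi(X)$ --- conditions that genuinely fail for arbitrary $X$ --- and you repair this by restricting to the reachable set $V^*$ via the preparatory lemma and the size inequality. Both proofs ultimately rest on the same well-foundedness of the inductive presentation of $\Pos{X}{Y}$; your route pays for the simpler occurrence sets with the extra reachability lemma, while the paper pays for the unconditional invariance with the heavier definition of weak occurrence.

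One point to state more carefully: your final induction cannot literally be ``on the derivation of $\tau\equiv P$,'' because the premise of the congruence rule, $\sigma\equiv\sigma'$ with conclusion $\typthree\sub{\btyp}{\sigma}\equiv\typthree\sub{\btyp}{\sigma'}$, relates types that are in general \emph{not} equivalent to $P$. You must therefore prove the relativized invariance ``if $\sigma\equiv\sigma'$ and $\fv{\sigma}\subseteq V^*$ then $\fv{\sigma'}\subseteq V^*$, $\Phi(\sigma)\iff\Phi(\sigma')$ and $\Psi(\sigma)\iff\Psi(\sigma')$'' as a single mutual induction on derivations of $\equiv$, observing that when $\btyp$ actually occurs in $\typthree$ the substituted parts inherit $\fv{\sigma}\subseteq V^*$ from the conclusion, and that the case $\btyp\notin\fv{\typthree}$ is vacuous. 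With that adjustment (and the routine computation that $\Phi$ and $\Psi$ commute with the substitution decomposition of $\posvars$ and $\negvars$), your argument goes through.
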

\begin{proof}
Define the {\em complexity} of a type as follows:
\[
  \begin{array}{rcl@{\hspace{1cm}}rcl}
    \compl{\btyp}               & \eqdef & 1 \text{\HS if $\btyp \in \mathbf{V}$}
  \\
    \compl{\Pos{\typ}{\typtwo}}
    = \compl{\Neg{\typ}{\typtwo}}
    = \compl{\typ \to \typtwo}
    & \eqdef & 1 + \compl{\typ} + \compl{\typtwo}
  \\
    \compl{\forall\btyp.\typ}   & \eqdef & 1 + \compl{\typ}
  \end{array}
\]
Recall that $\posvars{\typ}$ (resp. $\negvars{\typ}$)
stand for the set of type variables occurring
positively (resp. negatively) in a given type $\typ$.
Moreover, the set of type variables occurring
{\em weakly positively}
(resp. {\em weakly negatively}) 
in $\typ$
are written $\wposvars{\typ}$ (resp. $\wnegvars{\typ}$)
and defined as follows:
\[
  \begin{array}{rcl}
    \wposvars{\btyp} & \eqdef & \set{\btyp}
                                \text{\HS if $\btyp \in \mathbf{V}$}
  \\
    \wposvars{\Pos{\typ}{\typtwo}} & \eqdef &
                                \set{\Pos{\typ}{\typtwo}}
                                \cup \wposvars{\typ} \cup \wnegvars{\typtwo}
  \\
    \wposvars{\Neg{\typ}{\typtwo}} & \eqdef &
                                \set{\Neg{\typ}{\typtwo}}
                                \cup \wnegvars{\typ} \cup \wposvars{\typtwo}
  \\
    \wposvars{\typ \to \typtwo} & \eqdef & \wnegvars{\typ} \cup \wposvars{\typtwo}
  \\
    \wposvars{\forall\btyp.\typ} & \eqdef & \wposvars{\typ} \setminus \set{\btyp}
  \end{array}
\]
\[
  \begin{array}{rcl}
  \\
    \wnegvars{\btyp} & \eqdef & \emptyset
                                \text{\HS if $\btyp \in \mathbf{V}$}
  \\
    \wnegvars{\Pos{\typ}{\typtwo}} & \eqdef &
                                \wnegvars{\typ} \cup \wposvars{\typtwo}
  \\
    \wnegvars{\Neg{\typ}{\typtwo}} & \eqdef &
                                \wposvars{\typ} \cup \wnegvars{\typtwo}
  \\
    \wnegvars{\typ \to \typtwo} & \eqdef & \wposvars{\typ} \cup \wnegvars{\typtwo}
  \\
    \wnegvars{\forall\btyp.\typ} & \eqdef & \wnegvars{\typ} \setminus \set{\btyp}
  \end{array}
\]
It is easy to check that $\posvars{\typ} \subseteq \wposvars{\typ}$
and $\negvars{\typ} \subseteq \wnegvars{\typ}$
by simultaneous induction on $\typ$.
It is also easy to check that
if $\btyp \in \wposvars{\typ} \cup \wnegvars{\typ}$
then $\compl{\btyp} \leq \compl{\typ}$,
by induction on $\typ$.
Moreover, let $X, Y$ be types.
A type $\typ$ is said to be {\em $(X,Y)$-positive}
if $\Pos{X}{Y} \in \wposvars{\typ}$
or $\Neg{X}{Y} \in \wnegvars{\typ}$.
Symmetrically,
a type $\typ$ is said to be {\em $(X,Y)$-negative}
if $\Pos{X}{Y} \in \wnegvars{\typ}$
or $\Neg{X}{Y} \in \wposvars{\typ}$.
It is straightforward to prove the following
{\bf invariant} for the equivalence
$\typ \equiv \typtwo$ between types induced by the recursive type constraints,
by induction on the derivation of $\typ \equiv \typtwo$.
\begin{enumerate}
\item
  If $\typ \equiv \typtwo$, then
  $\typ$ is $(X,Y)$-positive if and only if $\typtwo$ is $(X,Y)$-positive.
\item
  If $\typ \equiv \typtwo$, then
  $\typ$ is $(X,Y)$-negative if and only if $\typtwo$ is $(X,Y)$-negative.
\end{enumerate}
To prove Mendler's positivity condition, we must check that given
any type variable $\btyp$
of the form $\Pos{\typ}{\typtwo}$ or
of the form $\Neg{\typ}{\typtwo}$,
then whenever $\btyp \equiv \typthree$
we have that $\btyp$ does not occur negatively in $\typthree$.
We consider two cases, depending on whether
$\btyp = \Pos{\typ}{\typtwo}$
or
$\btyp = \Neg{\typ}{\typtwo}$:
\begin{enumerate}
\item
  Let $\Pos{\typ}{\typtwo} \equiv \typthree$
  and suppose that $\Pos{\typ}{\typtwo} \in \negvars{\typthree}$.
  Then we have that $\Pos{\typ}{\typtwo} \in \wnegvars{\typthree}$,
  so $\typthree$ is $(\typ,\typtwo)$-negative.
  By the invariant, $\Pos{\typ}{\typtwo}$ is also $(\typ,\typtwo)$-negative,
  so either $\Pos{\typ}{\typtwo} \in \wnegvars{\Pos{\typ}{\typtwo}}$
  or $\Neg{\typ}{\typtwo} \in \wposvars{\Pos{\typ}{\typtwo}}$.
  Both conditions are impossible, indeed:
  \begin{enumerate}
  \item
    Suppose that $\Pos{\typ}{\typtwo} \in \wnegvars{\Pos{\typ}{\typtwo}}$.
    Then, given that $\Pos{\typ}{\typtwo}$ does not occur weakly negatively
    at the root of $\Pos{\typ}{\typtwo}$, so it
    must occur either inside $\typ$ or inside $\typtwo$,
    so $\compl{\Pos{\typ}{\typtwo}} < \compl{\Pos{\typ}{\typtwo}}$,
    which is a contradiction.
  \item
    Suppose that $\Neg{\typ}{\typtwo} \in \wnegvars{\Pos{\typ}{\typtwo}}$.
    Then, again, $\Neg{\typ}{\typtwo}$
    must occur either inside $\typ$ or inside $\typtwo$,
    so $\compl{\Neg{\typ}{\typtwo}} < \compl{\Pos{\typ}{\typtwo}}$,
    which is a contradiction.
  \end{enumerate}
\item
  If $\Neg{\typ}{\typtwo} \equiv \typthree$ then, symmetrically as above,
  we have that $\Neg{\typ}{\typtwo} \notin \negvars{\typthree}$.
\end{enumerate}
\end{proof}

\subsection{Proof of the Simulation Lemma for the Translation from $\PRK$
            to the Extended System~F}

\begin{lemma}
\llem{appendix:semF_simulation}
If $\tm \toa{} \tmtwo$ in $\lambdaC$
then $\semF{\tm} \toa{}^+ \semF{\tmtwo}$
in System~F extended with $\typeConstraintsPosNeg$.
\end{lemma}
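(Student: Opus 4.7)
The plan is to argue by induction on the general context under which the reduction step occurs. Since the translation of terms is compositional in the term constructors (inspecting \rdef{semF_translation}, each clause translates $\tm$ in terms of $\semF{\tm_i}$ for the immediate subterms of $\tm$) and since reduction in the extended System~F is closed under arbitrary contexts, the inductive step is immediate: if $\tm \toa{} \tmtwo$ takes place inside some subterm and the \ih{} yields $\semF{\tm'} \toa{}^+ \semF{\tmtwo'}$ for that subterm, then the outer translation produces a reduction sequence of the same length in System~F. So the real work is in the base case, where the contracted redex sits at the root.

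At the root I would proceed by case analysis on the seven rewriting rules. The four ``computational'' rules $\ruleProj$, $\ruleCase$, $\ruleNeg$, $\ruleBeta$ are simulated by exactly one step in the extended System~F, by direct calculation. For instance, for $\ruleBeta$ we compute
\[
  \semF{\clasappn{(\claslampn{\var}{\tm})}{\tmtwo}}
  \;=\;
  (\lam{\var^{\semF{\ev}}}{\semF{\tm}})\,\semF{\tmtwo}
  \;\toa{}\;
  \semF{\tm}\sub{\var}{\semF{\tmtwo}}
  \;=\;
  \semF{\tm\sub{\var}{\tmtwo}},
\]
the last equality being \rlem{semF_properties}.2. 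For $\ruleNeg$, the translation $\semF{\negipn{\tm}} = \lam{\var^{\tunit}}{\semF{\tm}}$ is precisely what guarantees a beta step when $\semF{\negepn{(\negipn{\tm})}} = (\lam{\var^{\tunit}}{\semF{\tm}})\,\trivF$; this is the technical reason flagged after \rdef{semF_translation_types} for why the translation of $(\neg\typ)\pp$ carries the auxiliary $\tunit \to -$. The cases of $\ruleProj$ and $\ruleCase$ similarly reduce to a single projection/case step in System~F, using \rlem{semF_properties}.2 for $\ruleCase$.

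The more delicate cases are the three absurdity rules $\ruleAbsPairInj$, $\ruleAbsInjPair$, $\ruleAbsNeg$. Here I would unfold the recursive definition of $\funabsF{\ev}{\evtwo}$. The example already worked out in the paper for $\ruleAbsPairInj$ is the template: after applying the outer $\funabsF{(\typ_1\land\typ_2)\pp}{\ev}$, one beta-reduces against the paired argument and the injected argument; a $\ruleCase$ step selects the $i$-th branch, a projection step extracts $\semF{\tm_i}$ from the pair, and finally the classical $\funabsF{\typ_i\PP}{\ev}$ unfolds to yield $\funabsF{\typ_i\pp}{\ev}$ applied to $(\semF{\tm_i}\,\semF{\tmtwo})$ and $(\semF{\tmtwo}\,\semF{\tm_i})$, which matches $\semF{\abs{\ev}{\tm_i}{\tmtwo}}$ via the definition of the generalized absurdity witness in~\rlem{admissible_rules}. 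The case $\ruleAbsInjPair$ is symmetric with the roles of pair and injection swapped, and $\ruleAbsNeg$ is simpler: each of the two $\negipn$/$\neginp$ translations contributes a $\tunit$-application that beta-reduces, and then the classical clause of $\funabsF{\typ\PP}{\ev}$ (resp.\ $\funabsF{\typ\NN}{\ev}$) produces the required $\abs{\ev}{\tm}{\tmtwo}$.

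The main obstacle will be the bookkeeping for the two absurdity-on-pair/injection rules: one must carefully follow the unfolding of $\funabsF{(\typ_1\star\typ_2)^{\pm}}{\ev}$ and, crucially, invoke the recursive type constraints $\Pos{\semF{\typ\pp}}{\semF{\typ\nn}} \equiv \semF{\typ\NN} \to \semF{\typ\pp}$ (and its dual) so that the target of the simulation is well-typed; the \rulename{Conv} rule of the extended System~F is what lets the unfolded $\funabsF{\typ\PP}{\ev}$ appear where the generalized absurdity witness is expected. Apart from this, no fresh ideas are needed beyond the calculation already shown, and every rule produces strictly more than zero steps, so the resulting simulation is in $\toa{}^+$ as required, which is the property subsequently used to transfer strong normalization from the extended System~F to $\lambdaC$.
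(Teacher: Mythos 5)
Your proposal matches the paper's proof: the same reduction to the root-redex case (contextual closure being immediate from compositionality of the translation), the same case analysis on the seven rewriting rules with one-or-more simulating steps for the computational rules via \rlem{semF_properties}, and the same unfolding of $\funabsF{\ev}{\evtwo}$ for the three absurdity rules, following exactly the $\ruleAbsPairInj$ calculation already displayed in the paper. The only cosmetic imprecision is calling the $\ruleProj$/$\ruleCase$ simulations ``exactly one step'' when the Church encodings of pairs and sums actually expend several beta steps, but this does not affect the $\toa{}^+$ conclusion.
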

\begin{proof}
By case analysis on the rewriting rule used to derive
the step $\tm \toa{} \tmtwo$.
Note that showing contextual closure is immediate, so we only
study the cases in which the rewriting rule is applied at the root:

\noindent$\bullet$ $\ruleProj$:
  $
    \semF{\projipn{\pairpn{\tm_1}{\tm_2}}}
  =
    \projiF{\pairF{\semF{\tm_1}}{\semF{\tm_2}}}
  \toa{}
    \semF{\tm_i}
  $.
\medskip

\noindent$\bullet$ $\ruleCase$:
  \[
    \begin{array}{rcll}
      &&
      \semF{\casepn{\inipn{\tm}}{(\var:\ev)}{\tmtwo_1}{(\var:\evtwo)}{\tmtwo_2}}
    \\
      & = &
      \caseF{\iniF{\semF{\tm}}}{(\var:\semF{\ev})}{\semF{\tmtwo_1}}{(\var:\semF{\evtwo})}{\semF{\tmtwo_2}}
    \\
      & \to &
      \semF{\tmtwo_i}\sub{\var}{\semF{\tm}}
    \\
      & = &
      \semF{\tmtwo_i\sub{\var}{\semF{\tm}}}
      \\
      && \HS\text{by \rlem{semF_properties}}
    \end{array}
  \]
\medskip

\noindent$\bullet$ $\ruleNeg$:
  $
    \semF{\negepn{\negipn{\tm}}}
  =
    (\lam{\var^\tunit}{\semF{\tm}})\,\trivF
  \toa{}
    \semF{\tm}\sub{\var}{\trivF}
  =
    \semF{\tm}
  $
  by \rlem{semF_properties}, since $\var \not\in \fv{\tm}$
  by definition of $\semF{\negipn{\tm}}$.
\medskip

\noindent$\bullet$ $\ruleBeta$:
  $
    \semF{\clasappn{(\claslampn{(\var:\ev)}{\tm})}{\tmtwo}}
    =
    (\lam{\var^{\semF{\ev}}}{\semF{\tm}})\,\semF{\tmtwo}
    \to
    \semF{\tm}\sub{\var}{\semF{\tmtwo}}
    =
    \semF{\tm\sub{\var}{\tmtwo}}
  $
  by \rlem{semF_properties}.
\medskip

\noindent$\bullet$ $\ruleAbsPairInj$:
  we consider two subcases, depending on the signs:
  \begin{enumerate}
  \item
    Let $\vdash \tm_1 : \typ_1\PP$,
        $\vdash \tm_2 : \typ_2\PP$,
    and $\vdash \tmtwo : \typ_i\NN$ for some $i \in \set{1,2}$.
    Then:
    \[
      \begin{array}{ll}
      &
        \semF{\strongabs{\ev}{\pairp{\tm_1}{\tm_2}}{\inin{\tmtwo}}}
      \\
      = &
        \funabsF{(\typ_1 \land \typ_2)\pp}{\ev}
          \,\pairF{\semF{\tm_1}}{\semF{\tm_2}}
          \,\iniF{\semF{\tmtwo}}
      \\
      \toa{}^+ &
          \caseFtable{\iniF{\semF{\tmtwo}}}{
            (\varthree:\semF{\typ_1\NN})
          }{
            \funabsF{\typ_1\PP}{\ev}\,\projiF[1]{ \pairF{\semF{\tm_1}}{\semF{\tm_2}} }\,\varthree
          }{
            (\varthree:\semF{\typ_2\NN})
          }{
            \funabsF{\typ_2\PP}{\ev}\,\projiF[2]{ \pairF{\semF{\tm_1}}{\semF{\tm_2}} }\,\varthree
          }
      \\
      & \HS\text{by definition of $\funabsF{(\typ_1 \land \typ_2)\pp}{\ev}$}
      \\
      \toa{} &
        \funabsF{\typ_i\PP}{\ev}\,\projiF{ \pairF{\semF{\tm_1}}{\semF{\tm_2}} }\,\semF{\tmtwo}
      \\
      \toa{} &
        \funabsF{\typ_i\PP}{\ev}\,\semF{\tm_i}\,\semF{\tmtwo}
      \\
      \toa{}^+ &
        \funabsF{\typ_i\pp}{\ev}\,(\semF{\tm_i}\,\semF{\tmtwo})
                                  (\semF{\tmtwo}\,\semF{\tm_i})
        \\&\HS\text{by definition of $\funabsF{\typ_i\PP}{\ev}$}
      \\
      = &
        \semF{\strongabs{\ev}{(\clasapp{\tm_i}{\tmtwo})}{(\clasapn{\tm_i}{\tmtwo})}}
      \\
      = &
        \semF{\abs{\ev}{\tm_i}{\tmtwo}}
      \end{array}
    \]
  \item
    Let $\tctx \vdash \tm_1 : \typ_1\NN$,
        $\tctx \vdash \tm_2 : \typ_2\NN$,
    and $\tctx \vdash \tmtwo : \typ_i\PP$ for some $i \in \set{1,2}$.
    Then, symmetrically as for the previous case,
    $
      \semF{\strongabs{\ev}{\pairn{\tm_1}{\tm_2}}{\inip{\tmtwo}}}
      \toa{}^+
      \semF{\abs{\ev}{\tm_i}{\tmtwo}}
    $.
  \end{enumerate}
\medskip

\noindent$\bullet$ $\ruleAbsInjPair$:
  symmetric to the previous case.
\medskip

\noindent$\bullet$ $\ruleAbsNeg$:
  we consider two subcases, depending on the signs:
  \begin{enumerate}
  \item
    Let $\tctx \vdash \tm : \typ\NN$ and $\tctx \vdash \tmtwo : \typ\PP$.
    Then:
    \[
      \begin{array}{rcll}
      &&
        \semF{\strongabs{\ev}{(\negip{\tm})}{(\negin{\tmtwo})}}
      \\
      & = &
        \funabsF{(\neg\typ)\pp}{\ev}
        (\lam{\var^{\tunit}}{\semF{\tm}})
        (\lam{\vartwo^{\tunit}}{\semF{\tmtwo}})
        \\&& \HS\text{where $\var \not\in \fv{\tm}$,
                      $\vartwo \not\in \fv{\tmtwo}$}
      \\
      & \toa{}^+ &
        \funabsF{\typ\NN}{\ev}
        ((\lam{\var^{\tunit}}{\semF{\tm}})\,\trivF)
        ((\lam{\vartwo^{\tunit}}{\semF{\tmtwo}})\,\trivF)
        \\&& \HS\text{by definition of $\funabsF{(\neg\typ)\pp}{\ev}$}
      \\
      & \toa{}^+ &
        \funabsF{\typ\NN}{\ev}\,\semF{\tm}\,\semF{\tmtwo}
      \\
      & \toa{}^+ &
        \funabsF{\typ\nn}{\ev}\,(\semF{\tm}\,\semF{\tmtwo})\,(\semF{\tmtwo}\,\semF{\tm})
        \\&& \HS\text{by definition of $\funabsF{\typ\NN}{\ev}$}
      \\
      & = &
        \semF{\strongabs{\ev}{(\clasapn{\tm}{\tmtwo})}{(\clasapp{\tmtwo}{\tm})}}
      \\
      & = &
        \semF{\abs{\ev}{\tm}{\tmtwo}}
      \end{array}
    \]
  \item
    Let $\tctx \vdash \tm : \typ\PP$ and $\tctx \vdash \tmtwo : \typ\NN$.
    Then, symmetrically as for
    the previous case:
    $
      \semF{\strongabs{\ev}{(\negin{\tm})}{(\negip{\tmtwo})}}
      \toa{}^+
      \semF{\abs{\ev}{\tm}{\tmtwo}}
    $.
  \end{enumerate}
\end{proof}

\subsection{Proof of Characterization of Normal Forms~(\rprop{characterization_of_normal_terms})}
\lsec{appendix:characterization_of_normal_forms}

\begin{proposition}
\lprop{appendix:characterization_of_normal_terms}
A term is normal if and only if it does not reduce in $\lambdaC$.
\end{proposition}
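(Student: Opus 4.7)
The plan is to prove both directions of the equivalence separately. For the forward direction (every term generated by the grammar is a normal form), I would proceed by a straightforward mutual induction on the derivations of $\nf$ and $\neu$. In each case, one must verify that no reduction rule from \rdef{the_lambdaC_calculus} applies at the root and, using the induction hypothesis, that no redex occurs in any subterm. The critical structural observation is that every rewrite rule requires a destructor shape applied to a matching constructor (e.g.\ $\ruleProj$ needs $\projipn{\cdot}$ sitting on $\pairpn{\cdot}{\cdot}$, $\ruleNeg$ needs $\negepn{\cdot}$ on $\negipn{\cdot}$, $\ruleBeta$ needs $\clasappn{\cdot}{\cdot}$ with a $\claslampn{\cdot}{\cdot}$ on the left), or $\strongabs{}{\cdot}{\cdot}$ applied to two dual constructors. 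In the neutral grammar, every destructor is placed on a $\neu$ subterm, which is never a constructor; and $\strongabs{}{\cdot}{\cdot}$ has at least one neutral component, so the two arguments cannot both be constructors.

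For the backward direction (every non-reducing term lies in the grammar), I would proceed by induction on the structure of the term $\tm$. Since reduction is closed under contexts, no subterm reduces, so the induction hypothesis places every immediate subterm of $\tm$ in the grammar. Constructor forms $\pairpn{\nf}{\nf}$, $\inipn{\nf}$, $\negipn{\nf}$, $\claslampn{\var{:}\ev}{\nf}$, and variables are immediately in $\nf$ or $\neu$. For a destructor applied to a normal subterm, the subterm is either neutral (and we are done) or a constructor; in the latter case, typing of the destructor (recall \rdef{the_lambdaC_calculus} restricts to typable terms) forces this constructor to be exactly the matching one—for example, $\projipn{\cdot}$ requires a subterm of pair type, which among constructors is only $\pairpn{\cdot}{\cdot}$—so the whole term would be a redex, contradicting the non-reducibility assumption. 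The $\strongabs{\ev}{\tm}{\tmtwo}$ case is analogous: if both $\tm$ and $\tmtwo$ were constructors, the opposite types $\ev$ and $\ev\OP$ would force them into one of the three configurations triggering $\ruleAbsPairInj$, $\ruleAbsInjPair$, or $\ruleAbsNeg$, hence at least one of $\tm,\tmtwo$ must be neutral and the term lands in $\neu$.

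The main obstacle is precisely this destructor/absurdity case in the backward direction: one needs the typing information to certify that, among all constructor shapes a normal subterm could take, only the redex-forming one is type-compatible. Without typing, terms like $\projipn{\inipn{\tm}}$ would be irreducible yet outside the grammar. The cleanest way to thread typing through the proof is to perform the induction simultaneously on $\tm$ and its typing derivation, so that at each destructor and at each $\strongabs{}{}{}$ the relevant subterms come with types that automatically rule out the non-redex constructor configurations. All remaining cases are routine and require no extra ideas.
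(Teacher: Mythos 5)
Your proposal is correct and follows essentially the same route as the paper's proof: induction on the grammar derivation for the forward direction (observing that neutral subterms are never constructors, so no rule fires at the root), and structural induction with typing information for the backward direction (using the type of the subterm under a destructor, or the dual strong types under $\strongabs{}{}{}$, to force the matching constructor and hence a redex). Your observation that typing is indispensable --- e.g.\ that an ill-typed term like $\projipn{\inipn{\tm}}$ would otherwise be irreducible yet outside the grammar --- is exactly the point the paper's case analysis relies on.
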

\begin{proof}
$(\Rightarrow)$
Let $\tm$ be a normal term,
and let us check that it is a $\toa{}$-normal form.
We proceed by induction on the derivation that $\tm$ is a normal term.

The cases corresponding to introduction rules are straightforward by \ih.
For example, if $\tm = \pairpn{\nf_1}{\nf_2}$,
then by \ih $\nf_1$ and $\nf_2$ have no $\toa{}$-redexes.
Moreover, there are no rules involving a pair $\pairpn{-}{-}$ at the
root, so $\pairpn{\nf_1}{\nf_2}$ is in $\toa{}$-normal form.

The cases corresponding to elimination rules and the absurdity rule
are also straightforward by \ih,
observing that there cannot be a redex at the root.
For example, if $\tm = \projipn{\neu}$, then by \ih
$\neu$ has no $\toa{}$-redexes.
Moreover, the only rule involving a projection $\projipn{-}$ at the
root is $\ruleProj$,
which would require that $\neu = \pairpn{\tm_1}{\tm_2}$.
But this is impossible ---as can be checked by exhaustive case analysis on
$\neu$---, so $\tm$ is in $\toa{}$-normal form. 
\medskip

\noindent $(\Leftarrow)$
  Let $\tm$ be a $\toa{}$-normal form,
  let us check that it is a normal term.
  We proceed by induction on the structure of the term $\tm$:
  \begin{enumerate}
  \item {\em Variable, $\var$:} it is a neutral term.
  \item {\em Absurdity, $\strongabs{\ev}{\tm}{\tmtwo}$:}
    by \ih, $\tm$ and $\tmtwo$ are normal terms.
    If either $\tm$ or $\tmtwo$ is a neutral term, we are done.
    We are left to analyze the case in which they are not neutral terms,
    \ie both $\tm$ and $\tmtwo$ are built using introduction rules.
    Note that the types of $\tm$ and $\tmtwo$ are $\evtwo$ and $\evtwo\OP$
    respectively, for some strong type $\evtwo$.
    We proceed by case analysis on the form of the proposition $\evtwo$.
    There are four cases:
    \begin{enumerate}
    \item {\em Proof/refutation of a propositional variable, $\evtwo = \btyp^\pm$.}
      This case is impossible, since $\tm$ only may be of one of the following forms:
      $\pairpn{\nf}{\nf}$, $\inipn{\nf}$, $\claslampn{\var : \ev}{\nf}$, or
      $\negipn{\nf}$, none of which are of type $\btyp^\pm$.
    \item {\em Proof of a conjunction, $\evtwo = (\typ\land\typtwo)\pp$
               or refutation of a disjunction $\evtwo = (\typ\lor\typtwo)\nn$.}
      Then $\tm$ is of the form $\pairpn{\tm_1}{\tm_2}$
      and $\tmtwo$ is of the form $\ininp{\tmtwo'}$ for some $i \in \set{1,2}$,
      so the rule $\ruleAbsPairInj$ may be applied at the root, contradicting
      the hypothesis that the term is $\to{}$-normal.
    \item {\em Disjunction, $\evtwo = (\typ\land\typtwo)^\pm$.}
      Then $\tm$ is of the form $\inipn{\tmtwo'}$ for some $i \in \set{1,2}$
      and $\tmtwo$ is of the form $\pairnp{\tm_1}{\tm_2}$,
      so the rule $\ruleAbsInjPair$ may be applied at the root, contradicting
      the hypothesis that the term is $\to{}$-normal.
    \item {\em Negation, $\evtwo = (\neg\typ)^\pm$.}
      Then $\tm$ is of the form $\negipn{\tm'}$
      and $\tmtwo$ is of the form $\neginp{\tmtwo'}$,
      so the rule $\ruleAbsNeg$ may be applied at the root, contradicting
      the hypothesis that the term is $\to{}$-normal.
    \end{enumerate}
  \item {\em Pair, $\pairpn{\tm}{\tmtwo}$:}
    by \ih, $\tm$ and $\tmtwo$ are normal terms, so $\pairpn{\tm}{\tmtwo}$
    is also a normal term.
  \item {\em Projection, $\projipn{\tm}$:}
    by \ih, $\tm$ is a normal term.
    It suffices to show that $\tm$ is neutral.
    Indeed, if $\tm$ is a normal but not neutral term,
    then since the type of $\tm$ may be
    either of the form $(\typ\land\typtwo)\pp$ or of the form $(\typ\lor\typtwo)\nn$,
    we have that $\tm$ is of the form $\pairpn{\tmtwo}{\tmthree}$.
    Then the rule $\ruleProj$ may be applied at the root, contradicting
    the hypothesis that the term is $\to{}$-normal.
  \item {\em Injection, $\inipn{\tm}$:}
    by \ih, $\tm$ is a normal term, so $\inipn{\tm}$ is also normal.
  \item {\em Case, $\casepn{\tm}{\var}{\tmtwo}{\var}{\tmthree}$:}
    by \ih $\tm$, $\tmtwo$ and $\tmthree$ are normal terms.
    It suffices to show that $\tm$ is neutral.
    Indeed, if $\tm$ is a normal but not neutral term,
    then since the type of $\tm$ may be
    either of the form $(\typ\lor\typtwo)\pp$ or of the form $(\typ\land\typtwo)\nn$,
    we have that $\tm$ is of the form $\inipn{\tm'}$ for some $i \in \set{1,2}$.
    Then the rule $\ruleCase$ may be applied at the root, contradicting
    the hypothesis that the term is $\to{}$-normal.
  \item {\em Negation introduction, $\negipn{\tm}$:}
    by \ih, $\tm$ is a normal term. Then $\negipn{\tm}$ is also normal.
  \item {\em Negation elimination, $\negepn{\tm}$:}
    by \ih, $\tm$ is a normal term.
    It suffices to show that $\tm$ is neutral.
    Indeed, if $\tm$ is a normal but not neutral term,
    then since the type of $\tm$ is of the form $(\neg\typ)^\pm$,
    then $\tm$ is of the form $\negipn{\tm'}$.
    Then the rule $\ruleNeg$ may be applied at the root, contradicting
    the hypothesis that the term is $\to{}$-normal.
  \item {\em Classical introduction, $\claslampn{\var : \ev}{\tm}$:}
    by \ih, $\tm$ is a normal term, so $\claslampn{\var : \ev}{\tm}$
    is also normal.
  \item {\em Classical elimination, $\clasappn{\tm}{\tmtwo}$:}
    by \ih, $\tm$ and $\tmtwo$ are normal terms.
    It suffices to show that $\tm$ is neutral.
    Indeed, if $\tm$ is a normal but not neutral term,
    then since the type of $\tm$ may be
    either of the form $\typ\PP$ or of the form $\typ\NN$,
    we have that $\tm$ is of the form $\claslampn{\var}{\tm'}$.
    Then the rule $\ruleBeta$ may be applied at the root, contradicting
    the hypothesis that the term is $\to{}$-normal.
  \end{enumerate}
\end{proof}

\subsection{Proof of Canonicity~(\rthm{canonicity})}
\lsec{appendix:canonicity}

We give a slightly different statement of
Canonicity, adding the additional
hypothesis that $\tm$ is already a normal form.
This addition comes at no loss of generality,
given that $\lambdaC$ enjoys subject reduction~(\rprop{subject_reduction})
and strong normalization~(\rthm{lambdaC_canonical}).

\begin{theorem}[Canonicity]
\lthm{appendix:canonicity}
\quad
\begin{enumerate}
\item
  Let $\vdash \tm : \ev$ where $\tm$ is a normal form.
  Then $\tm$ is canonical.
\item
  Let $\tctx \vdash \tm : \typ^\pm$
  where $\tctx$ is classical and $\tm$ is a normal form.
  Then either $\tm$ is canonical
  or $\tm$ is of the form $\casectxof{\tm'}$
  where $\casectx$ is a case-context
  and $\tm'$ is an open explosion.
\item
  Let $\tctx \vdash \tm : \typ\PP$
  or $\tctx \vdash \tm : \typ\NN$,
  where $\tctx$ is classical and $\tm$ is a normal form.
  Then either $\tm = \claslampn{\var}{\tm'}$
  or $\tm = \elctxof{\tm'}$,
  where $\elctx$ is an eliminative context
  and $\tm'$ is a variable or an open explosion.
\end{enumerate}
\end{theorem}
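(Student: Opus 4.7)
The plan is to exploit the fact, already noted in the theorem's sketch, that by subject reduction (\rprop{subject_reduction}) and strong normalization (\rthm{lambdaC_canonical}) together with the characterization of normal forms (\rprop{characterization_of_normal_terms}), it suffices to treat the case where $\tm$ is generated by the grammar of \rdef{normal_terms}. The argument then proceeds by mutual structural induction on normal and neutral terms, with the three parts of the theorem proved simultaneously; within the induction, one dispatches on the shape of the conclusion type $\ev$ (strong vs.\ classical).

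A useful preliminary observation is that every neutral term contains at least one free variable. This follows by a straightforward induction on the neutral grammar: either the term is $\var$ itself, or every production wraps a strictly smaller neutral subterm to which the induction hypothesis applies, propagating a free variable up. Consequently, no closed term is neutral, and this immediately settles part~1: a closed normal term cannot be neutral, hence it must be produced by one of the four introduction clauses $\pairpn{-}{-}$, $\inipn{-}$, $\negipn{-}$, $\claslampn{-}{-}$, each of which is canonical.

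For parts 2 and 3, proceed by induction on the normal-term derivation. Among the introduction clauses, $\pairpn{-}{-}$, $\inipn{-}$, $\negipn{-}$ produce strong types and are canonical (handling part 2), while $\claslampn{\var}{-}$ produces a classical type and matches the first alternative of part 3. For neutral conclusions the shape of the type can be read off the production: $\var$, $\projipn{-}$, $\negepn{-}$, and $\casepn{-}{-}{-}{-}{-}$ with classical branches yield classical types and so cannot occur in part~2; conversely $\clasappn{-}{-}$ yields a strong type and cannot occur in part~3; while $\strongabs{-}{-}{-}$ can inhabit any type. Each case is then handled as follows. The explosions $\strongabs{\ev}{-}{-}$ and $\clasappn{-}{-}$ are themselves open (by the preliminary observation), so they match with $\casectx = \ctxhole$ (part~2) or $\elctx = \ctxhole$ (part~3). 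The variable $\var$ has classical type (since $\tctx$ is classical) and matches part~3 with $\elctx = \ctxhole$. For $\projipn{\neu}$ and $\negepn{\neu}$ (part~3), the scrutinee $\neu$ has a strong type, so the part-2 induction hypothesis yields $\neu = \casectxof{\tm'}$ with $\tm'$ an open explosion, and the outer eliminator is absorbed into a larger eliminative context. Similarly, for $\casepn{\neu}{-}{-}{-}{-}$ the scrutinee has strong type and decomposes as $\casectxof{\tm'}$ by the part-2 hypothesis, and the outer case builds a larger case-context (if the overall conclusion is strong) or a larger eliminative context (if it is classical).

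The main obstacle is ensuring the mutual references between parts 2 and 3 are well-founded: proving part~3 for a classical-elimination term such as $\projipn{\neu}$ or $\casepn{\neu}{\cdots}{\cdots}{\cdots}{\cdots}$ appeals to part~2 on the strictly smaller neutral scrutinee $\neu$, and no circularity arises since each appeal strictly decreases the structural size. Apart from this bookkeeping and the tedium of enumerating the grammar productions against the sign and strength of $\ev$, each case reduces to direct inspection.
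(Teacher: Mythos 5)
Your proposal is correct and follows essentially the same route as the paper's proof: reduce to the grammar of normal forms via \rprop{characterization\_of\_normal\_terms}, observe that every neutral term has a free variable (settling part~1), and then analyse the neutral productions against the sign and strength of the conclusion type, with parts~2 and~3 feeding into each other on strictly smaller neutral scrutinees. The only cosmetic differences are that you discard the $\clasappn{-}{-}$ case of part~3 as type-impossible where the paper handles it (vacuously), and you present the argument as one simultaneous induction where the paper establishes the part-2 claim first and then derives part~3 from it.
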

\begin{proof}
\quad
\begin{enumerate}
\item
  Let $\vdash \tm : \ev$ where $\tm$ is a normal form.
  Note,
  by induction on the formation rules for neutral terms~(\rdef{normal_terms})
  that a neutral term must have at least one free variable.
  But $\tm$ is typed in the empty typing context, so it must be closed.
  Hence $\tm$ is not a neutral term, so by
  \rprop{characterization_of_normal_terms}, it must be canonical.
\item
  Let $\tctx \vdash \tm : \ev$ where $\tctx$ is classical
  and $\tm$ is a normal form.
  By \rprop{characterization_of_normal_terms} either $\tm$ is canonical
  or it is a neutral term. If $\tm$ is canonical we are done.
  If $\tm$ is a neutral term it suffices to show the following claim,
  namely that if $\tctx \vdash \tm : \typtwo^\pm$ is a derivable
  judgment such that $\tctx$ is classical
  and $\tm$ is a neutral term,
  then $\tm$ is of the form $\tm = \casectxof{\tm'}$,
  where $\casectx$ is a case-context
  and $\tm'$ is an open explosion.
  We proceed by induction on the formation rules
  for neutral terms~(\rdef{normal_terms}):
  \begin{enumerate}
  \item {\em Variable, $\tm = \var$.}
    this case is impossible, given that $\tctx$ is assumed to be classical,
    so $\tctx \vdash \var : \ev$
    where $\ev$ must be of the form $\typthree\PP$ or $\typthree\NN$,
    hence $\ev$ cannot be of the form $\typtwo^\pm$.
  \item {\em Projection, $\projipn{\neu}$:}
    this case is impossible, as $\tctx \vdash \projipn{\neu} : \ev$
    where $\ev$ must be of the form $\typthree\PP$ or $\typthree\NN$,
    hence $\ev$ cannot be of the form $\typtwo^\pm$.
  \item {\em Case, $\casepn{\neu}{\var}{\nf_1}{\var}{\nf_2}$:}
    by inversion of the typing rules we have that
    either $\tctx \vdash \neu : (\typ \lor \typtwo)\pp$
    or $\tctx \vdash \neu : (\typ \land \typtwo)\nn$.
    In both cases we may apply the \ih to conclude that
    $\neu$ is of the form
    $\neu = \casectxof{\tm'}$ where $\casectx$ is a case-context
    and $\tm'$ is an open explosion.
    Therefore $\tm = \casepn{(\casectxof{\tm'})}{\var}{\nf_1}{\var}{\nf_2}$
    where now
    $\casepn{(\casectx)}{\var}{\nf_1}{\var}{\nf_2}$
    is a case-context.
  \item {\em Classical elimination, $\clasappn{\neu}{\nf}$:}
    then $\tm$ is an explosion
    under the empty case-context.
    Moreover, $\neu$ must have at least one free variable
    so $\tm$ is indeed an open explosion.
  \item {\em Negation elimination, $\negepn{\neu}$:}
    this case is impossible, as $\tctx \vdash \negepn{\neu} : \ev$
    where $\ev$ must be of the form $\typthree\PP$ or $\typthree\NN$,
    hence $\ev$ cannot be of the form $\typtwo^\pm$.
  \item {\em Absurdity, $\strongabs{}{\neu}{\nf}$ or $\strongabs{}{\nf}{\neu}$:}
    then $\tm$ is an explosion
    under the empty case-context.
    Moreover, $\neu$ must have at least one free variable
    so $\tm$ is indeed an open explosion.
  \end{enumerate}
\item
  Let $\tctx \vdash \tm : \typ\PP$
  or $\tctx \vdash \tm : \typ\NN$,
  where $\tctx$ is classical and $\tm$ is a normal form.
  By \rprop{characterization_of_normal_terms} either $\tm$ is canonical
  or it is a neutral term.
  If $\tm$ is canonical, then by the constraints on its type it
  must be of the form $\tm = \claslampn{\var}{\tm'}$, so we are done.
  If $\tm$ is neutral, it suffices to show the following claim
  namely that if $\tctx \vdash \tm : \ev$
  is a derivable judgment, with $\ev \in \set{\typtwo\PP,\typtwo\NN}$,
  such that $\tctx$ is classical
  and $\tm$ is a neutral term,
  then $\tm$ is of the form $\tm = \elctxof{\tm'}$,
  where $\elctx$ is an eliminative context
  and $\tm'$ is a variable or an open explosion.
  We proceed by induction on the formation rules
  for neutral terms~(\rdef{normal_terms}):
  \begin{enumerate}
  \item {\em Variable, $\tm = \var$.}
    immediate, as $\tm$ is a variable under the empty eliminative context.
  \item {\em Projection, $\projipn{\neu}$:}
    by inversion of the typing rules, we have that
    either $\tctx \vdash \neu : (\typ \land \typtwo)\pp$
    or $\tctx \vdash \neu : (\typ \lor \typtwo)\nn$.
    In both cases we may apply the second item of this lemma
    to conclude that $\neu$ is of the form $\neu = \casectxof{\tm'}$
    where $\casectx$ is a case-context and $\tm'$ is an open explosion.
    Therefore $\tm = \projipn{\casectxof{\tm'}}$,
    where now $\projipn{\casectx}$ is an eliminative context.
  \item {\em Case, $\casepn{\neu}{\var}{\nf_1}{\var}{\nf_2}$:}
    by inversion of the typing rules, we have that
    either $\tctx \vdash \neu : (\typ \lor \typtwo)\pp$
    or $\tctx \vdash \neu : (\typ \land \typtwo)\nn$.
    In both cases we may apply the second item of
    this lemma to conclude that
    $\neu$ is of the form $\neu = \casectxof{\tm'}$
    where $\casectx$ is an eliminative context
    and $\tm'$ is an open explosion.
    Therefore
    $\tm = \casepn{(\casectxof{\tm'})}{\var}{\nf_1}{\var}{\nf_2}$,
    where now $\casepn{(\casectx)}{\var}{\nf_1}{\var}{\nf_2}$
    is an eliminative context.
  \item {\em Classical elimination, $\clasappn{\neu}{\nf}$:}
    then $\tm$ is an explosion
    under the empty eliminative context.
    Moreover, $\neu$ must have at least one free variable
    so $\tm$ is indeed an open explosion.
  \item {\em Negation elimination, $\negepn{\neu}$:}
    by inversion of the typing rules, we have that
    $\tctx \vdash \neu : (\neg\typ)^\pm$.
    By the second item of this lemma,
    $\neu$ is of the form $\neu = \casectxof{\tm'}$
    where $\casectx$ is a case-context and $\tm'$ is an open explosion.
    Therefore $\tm = \negepn{\casectxof{\tm'}}$,
    where now $\negepn{\casectx}$ is an eliminative context.
  \item {\em Absurdity, $\strongabs{}{\neu}{\nf}$ or $\strongabs{}{\nf}{\neu}$:}
    then $\tm$ is an explosion
    under the empty eliminative context.
    Moreover, $\neu$ must have at least one free variable
    so $\tm$ is indeed an open explosion.
  \end{enumerate}
\end{enumerate}
\end{proof}

\subsection{Proof that $\lambdaCeta$ is Strongly Normalizing and Confluent~(\rthm{lambdaCeta_canonical})}
\lsec{appendix:extensionality}

\begin{lemma}[Local confluence]
\llem{lambdaCeta_local_confluence}
The $\lambdaCeta$-calculus has the weak Church--Rosser property.
\end{lemma}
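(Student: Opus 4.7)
The plan is to prove local confluence by a critical-pair analysis. Since $\lambdaC$ is orthogonal, as invoked in the proof of Proposition~\ref{prop:lambdaC_confluent}, all new critical pairs in $\lambdaCeta$ must involve the new rule $\ruleEta$. First I would observe that the LHS of $\ruleEta$, namely $\claslampn{\var}{(\clasappn{\tm}{\var})}$, is built entirely from the symbols $\claslampn{}{}$ and $\clasappn{}{}$. Since none of $\ruleProj$, $\ruleCase$, $\ruleNeg$, $\ruleAbsPairInj$, $\ruleAbsInjPair$, $\ruleAbsNeg$ have patterns involving either of these constructors, none of these rules can overlap with $\ruleEta$. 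A similar inspection shows that $\ruleEta$ does not overlap nontrivially with itself: any two $\ruleEta$-redexes in a single term are either disjoint or properly nested without sharing symbol positions.

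Thus the only critical pairs to check are between $\ruleEta$ and $\ruleBeta$, whose LHS is $\clasappn{(\claslampn{\vartwo}{\tmtwo})}{\tmthree}$. Two overlaps arise: (a) the subterm $\clasappn{\tm}{\var}$ inside the LHS of $\ruleEta$ is itself a $\ruleBeta$-redex, which happens when $\tm = \claslampn{\vartwo}{\tm'}$; and (b) the subterm $\claslampn{\vartwo}{\tmtwo}$ inside the LHS of $\ruleBeta$ is itself an $\ruleEta$-redex, which happens when $\tmtwo = \clasappn{\tm'}{\vartwo}$ with $\vartwo \notin \fv{\tm'}$.

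For each case I would verify that the two reducts are $\alpha$-equivalent, closing the critical pair in zero steps. In case (a), contracting $\claslampn{\var}{\clasappn{(\claslampn{\vartwo}{\tm'})}{\var}}$ at the root via $\ruleEta$ yields $\claslampn{\vartwo}{\tm'}$, while contracting the inner $\ruleBeta$-redex yields $\claslampn{\var}{\tm'\sub{\vartwo}{\var}}$; these agree up to $\alpha$-renaming because the $\ruleEta$ side condition $\var \notin \fv{\claslampn{\vartwo}{\tm'}}$ gives $\var \notin \fv{\tm'}$. In case (b), contracting $\clasappn{(\claslampn{\vartwo}{\clasappn{\tm'}{\vartwo}})}{\tmthree}$ via $\ruleBeta$ yields $(\clasappn{\tm'}{\vartwo})\sub{\vartwo}{\tmthree}$, which equals $\clasappn{\tm'}{\tmthree}$ since $\vartwo \notin \fv{\tm'}$; contracting the inner $\ruleEta$-redex yields the same $\clasappn{\tm'}{\tmthree}$.

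Having joined all critical pairs, local confluence follows from the standard critical pair lemma for higher-order rewriting systems, together with the fact that the rules of $\lambdaC$ are themselves left-linear. The only point requiring care is tracking the freshness side condition on $\ruleEta$ through the critical pair analysis so that the $\alpha$-equivalences go through; beyond that, the argument is routine and contains no real obstacle.
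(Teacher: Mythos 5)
Your proposal is correct and follows essentially the same route as the paper: it identifies exactly the same two overlaps between $\ruleBeta$ and $\ruleEta$ (the $\ruleEta$-redex whose body is a $\ruleBeta$-redex, and the $\ruleBeta$-redex whose abstraction is an $\ruleEta$-redex), closes both in zero steps up to $\alpha$-equivalence using the freshness side condition, and observes that no other rule can overlap with $\ruleEta$ since only $\ruleBeta$ mentions $\claslampn{}{}$ or $\clasappn{}{}$. The only cosmetic difference is that you package the conclusion via the higher-order critical pair lemma, whereas the paper argues directly by induction on the term and case analysis on the relative positions of the two redexes.
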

\begin{proof}
Let $\tm_0 \toa{} \tm_1$ and $\tm_0 \toa{} \tm_2$,
and let us show that the diagram can be closed,
\ie that there is a term $\tm_3$ such that
$\tm_1 \rto \tm_3$ and $\tm_2 \rto \tm_3$.
The proof is by induction on $\tm_0$ and by case analysis on the
relative positions of the steps
$\tm_0 \toa{} \tm_1$ and $\tm_0 \toa{} \tm_2$.
Most cases are straightforward by resorting to the \ih.
We study only the interesting cases,
when the patterns of the redexes overlap.
There are two such cases:
\begin{enumerate}
\item $\ruleBeta$/$\ruleEta$:
  Let $\var \notin \fv{\tm}$.
  The overlap involves a step
  $\clasappn{(\claslampn{\var}{\clasappn{\tm}{\var}})}{\tmtwo}
   \toa{\ruleBeta}
   \clasappn{\tm}{\tmtwo}$
  and a step
  $\clasappn{(\claslampn{\var}{\clasappn{\tm}{\var}})}{\tmtwo}
   \toa{\ruleEta}
   \clasappn{\tm}{\tmtwo}$,
  so the diagram is trivially closed in zero rewriting steps.
\item $\ruleEta$/$\ruleBeta$:
  Let $\var \notin \fv{\tm}$.
  The overlap involves a step
  $\claslampn{\var}{\clasappn{(\claslampn{\vartwo}{\tm})}{\var}}
   \toa{\ruleEta} \claslampn{\vartwo}{\tm}$
  and a step
  $\claslampn{\var}{\clasappn{(\claslampn{\vartwo}{\tm})}{\var}}
   \toa{\ruleBeta}
   \claslampn{\var}{\tm\sub{\vartwo}{\var}}$.
  Note that the targets of the steps are $\alpha$-equivalent,
  so the diagram is trivially closed in zero rewriting steps.
\end{enumerate}
\end{proof}

\begin{lemma}[Properties of reduction in $\lambdaCeta$]
\llem{lambdaCeta_properties}
\quad
\begin{enumerate}
\item
  {\em Reduction does not create free variables.}
  If $\tm \to \tm'$ then $\fv{\tm} \supseteq \fv{\tm'}$.
\item
  {\em Substitution (I).}
  Let $\tctx,\var:\typ \vdash \tm : \typtwo$
  and $\tctx \vdash \tmtwo : \typ$.
  If $\tm \to \tm'$
  then $\tm\sub{\var}{\tmtwo} \to \tm'\sub{\var}{\tmtwo}$.
\item
  {\em Substitution (II).}
  Let $\tctx,\var:\typ \vdash \tm : \typtwo$
  and $\tctx \vdash \tmtwo : \typ$.
  If $\tmtwo \to \tmtwo'$
  then $\tm\sub{\var}{\tmtwo} \rto \tm\sub{\var}{\tmtwo'}$.
\item
  {\em Substitution (III).}
  Let $\tctx,\var:\typ \vdash \tm : \typtwo$
  and $\tctx \vdash \tmtwo : \typ$.
  If $\tm \rto \tm'$ and $\tmtwo \rto \tmtwo'$
  then $\tm\sub{\var}{\tmtwo} \rto \tm'\sub{\var}{\tmtwo'}$.
\end{enumerate}
\end{lemma}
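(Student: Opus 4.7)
The plan is to prove the four parts in order, since each relies on the preceding ones. For part 1, I would proceed by induction on the derivation of $\tm \to \tm'$, reducing (via the straightforward contextual cases) to checking root reductions rule by rule. The non-trivial rules are the substituting ones. For $\ruleBeta$ and $\ruleCase$, the inclusion follows from the standard fact that $\fv{\tm\sub{\var}{\tmtwo}} \subseteq (\fv{\tm}\setminus\set{\var}) \cup \fv{\tmtwo}$; for $\ruleProj$, $\ruleNeg$, $\ruleAbsPairInj$, $\ruleAbsInjPair$ and $\ruleAbsNeg$ the contractum is a subterm (or combination of subterms) of the redex. The $\ruleEta$ case is the most specific: by definition the side condition $\var \notin \fv{\tm}$ gives $\fv{\claslampn{\var}{(\clasappn{\tm}{\var})}} = \fv{\tm}$, so equality holds (hence inclusion).

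For part 2, I would again proceed by induction on the derivation of $\tm \to \tm'$; the inductive cases are routine and only the root case requires work. For each rewriting rule I verify that substitution commutes with the contraction, which reduces to the standard substitution composition identity $\tm\sub{\vartwo}{\tmthree}\sub{\var}{\tmtwo} = \tm\sub{\var}{\tmtwo}\sub{\vartwo}{\tmthree\sub{\var}{\tmtwo}}$, valid whenever $\vartwo \notin \fv{\tmtwo}$, which can be arranged by $\alpha$-renaming. For $\ruleEta$, I need the side condition $\var \notin \fv{\tm}$ to persist after substitution, which is guaranteed by working up to $\alpha$-equivalence: renaming the bound variable in the redex apart from any variable occurring in $\tmtwo$ ensures that the instantiated redex $\claslampn{\var}{(\clasappn{\tm\sub{\var'}{\tmtwo}}{\var})}$ is again an $\ruleEta$-redex reducing to $\tm\sub{\var'}{\tmtwo}$.

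For part 3, I would induct on the structure of $\tm$. When $\tm = \var$ we use a single step $\tmtwo \to \tmtwo'$; when $\tm$ is any other variable we use zero steps. For each compound constructor, by IH each immediate subterm $\tm_i$ satisfies $\tm_i\sub{\var}{\tmtwo} \rto \tm_i\sub{\var}{\tmtwo'}$, and concatenating these reductions (appropriately reindexed and contextually closed) yields the required reduction on the whole term; typability is preserved throughout by subject reduction (Proposition~\ref{prop:subject_reduction}). Part 4 then follows immediately by chaining: iterate part 2 to obtain $\tm\sub{\var}{\tmtwo} \rto \tm'\sub{\var}{\tmtwo}$ (one step at a time along the reduction $\tm \rto \tm'$), then apply part 3 to obtain $\tm'\sub{\var}{\tmtwo} \rto \tm'\sub{\var}{\tmtwo'}$.

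The main obstacle I anticipate is in part 2, specifically handling the interaction between substitution and the $\ruleEta$ side condition. Strictly speaking the naive formulation could fail if the bound variable of $\claslampn{\var}{\ldots}$ happens to coincide with variables free in the substituend $\tmtwo$, so the $\alpha$-renaming convention must be invoked carefully; once one commits to the Barendregt convention that bound variables in $\tm$ are chosen disjoint from $\var$ and $\fv{\tmtwo}$, the step goes through cleanly. The remaining subtlety is a minor bookkeeping point in part 3 for the $\ruleCase$ and $\ruleBeta$-like contexts: several occurrences of $\var$ may appear, so the reduction $\tm\sub{\var}{\tmtwo} \rto \tm\sub{\var}{\tmtwo'}$ in general has length equal to the number of free occurrences of $\var$ in $\tm$ multiplied by the length of $\tmtwo \rto \tmtwo'$.
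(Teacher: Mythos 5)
Your proposal is correct and follows essentially the same route as the paper, which proves items 1--3 by induction on the term (equivalently, on the derivation of the reduction step) and item 4 by combining the previous two items along the reduction sequences; your chaining argument for part 4 is just an unrolled form of the paper's induction on the sum of the lengths. The points you flag---the $\alpha$-renaming needed to preserve the $\ruleEta$ side condition under substitution, and the multiplicity of steps in part 3 coming from multiple free occurrences of $\var$---are exactly the right details to be careful about.
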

\begin{proof}
Items 1., 2., and 3. are by induction on $\tm$.
Item 4. is by induction on the sum of the lengths of the sequences
$\tm \rto \tm'$ and $\tmtwo \rto \tmtwo'$,
resorting to the two previous items.
\end{proof}

\begin{lemma}[Postponement of $\ruleEta$ steps]
\llem{lambdaCeta_postponement}
Let $\tm \toa{\ruleEta} \tmtwo \toa{\ruleAnon} \tmthree$
where $\ruleAnon$ is a rewriting rule other than $\ruleEta$.
Then there exists a term $\tmtwo'$ such that
$\tm \ptoa{\ruleAnon} \tmtwo' \rtoa{\ruleEta} \tmthree$.
\end{lemma}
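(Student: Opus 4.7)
The plan is to proceed by a case analysis on the relative positions of the two contracted redexes. Write $\tm = \gctxof{R}$ where $R = \claslampn{\var}{(\clasappn{\tmtwo_0}{\var})}$ with $\var \notin \fv{\tmtwo_0}$, so that $\tmtwo = \gctxof{\tmtwo_0}$, and let $S$ denote the $\ruleAnon$-redex contracted in $\tmtwo \toa{\ruleAnon} \tmthree$. I split into three cases depending on whether the position of $S$ is disjoint from, contained in, or strictly contains the position of the hole of $\gctx$.

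If the two positions are disjoint, the steps trivially commute. If $S$ lies within $\tmtwo_0$, then $S$ is already a subterm of $R$ in $\tm$; contracting it first yields $\gctxof{\claslampn{\var}{(\clasappn{\tmtwo_0'}{\var})}}$, which remains an $\ruleEta$-redex because $\var \notin \fv{\tmtwo_0'}$ by \rlem{lambdaCeta_properties}, and a final $\ruleEta$ step reaches $\tmthree$. The interesting case is when $S$ strictly contains the hole, and I further distinguish whether the hole of $\gctx$ falls at a \emph{free} position of the pattern of $S$ (where any term fits) or at a \emph{key} position (where the pattern requires a specific constructor). In the free-position sub-case, $S$ remains a redex when $R$ is placed in the hole, so the corresponding redex in $\tm$ can be contracted by $\ruleAnon$ first; this may duplicate $R$ if $\ruleAnon$ performs substitution, but one $\ruleEta$ step per residual copy of $R$ then yields $\tmthree$, using the substitution clauses of \rlem{lambdaCeta_properties}.

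The hard part will be the \emph{key-position} sub-case, where the $\ruleEta$ contraction appears to be what exposes the constructor needed by $\ruleAnon$. Inspecting the rewriting rules, however, one sees that the subterm at a key position has in every case a \emph{strong} type, except for $\ruleBeta$, whose key position (the abstraction being applied) has \emph{classical} type. Since $\tmtwo_0$ is necessarily of classical type, the only possibility is $\ruleAnon = \ruleBeta$ with $\tmtwo_0 = \claslampn{\var'}{\tm'}$ and $\gctx$ of the form $\gctxtwo\ctxof{\clasappn{\ctxhole}{\tmthree_0}}$. Crucially, $\tm$ itself then contains an outer $\ruleBeta$-redex $\clasappn{R}{\tmthree_0}$, namely the $\eta$-expansion applied to $\tmthree_0$; its contractum is $\gctxtwoof{\clasappn{\tmtwo_0}{\tmthree_0}} = \tmtwo$, thanks to the side condition $\var \notin \fv{\tmtwo_0}$. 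A second $\ruleBeta$ step (the one given in the hypothesis) then reaches $\tmthree$, so $\tm \ptoa{\ruleBeta} \tmthree$ in two $\ruleBeta$ steps with no trailing $\ruleEta$ step needed, concluding the proof.
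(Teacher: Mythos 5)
Your proof is correct and follows essentially the same route as the paper's: a case analysis on the relative positions of the two redexes, using the fact that reduction does not create free variables and the substitution properties of \rlem{lambdaCeta_properties}, with the crucial observation that an $\ruleEta$-redex has classical type and therefore cannot occupy a key position of any rule except $\ruleBeta$, whose function-position overlap is resolved by two $\ruleBeta$ steps and no trailing $\ruleEta$ step. The only difference is organizational: you state the typing argument once uniformly over all key positions, whereas the paper verifies it rule by rule.
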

\begin{proof}
By induction on $\tm$.
If the $\ruleEta$ step and the $\ruleAnon$ step
are not reduction steps at the root, it is immediate to conclude,
resorting to the \ih when appropriate.

If the $\ruleEta$ step is at the root,
then the first step is of the form
$\tm = \claslampn{\var}{(\clasappn{\tmtwo}{\var})}
 \toa{\ruleEta} \tmtwo$,
where $\var \notin \fv{\tmtwo}$.
Taking $\tmtwo' := \claslampn{\var}{(\clasappn{\tmthree}{\var})}$
we have that
$\tm = \claslampn{\var}{(\clasappn{\tmtwo}{\var})}
 \toa{\ruleAnon} \claslampn{\var}{(\clasappn{\tmthree}{\var})}
 \toa{\ruleEta} \tmthree$,
so we are done.
For the last reduction step, we use
the fact that reduction does not create
free variables~(\rlem{lambdaCeta_properties}).

Otherwise, we have that the $\ruleEta$ step is {\em not} at the root
and the $\ruleAnon$ step is at the root.
Then we proceed by case analysis, depending on the kind of rule applied.
We only study the positive cases (the negative cases are symmetric):
\begin{enumerate}
\item $\ruleProj$:
  then we have that
  $\tm \toa{\ruleEta} \tmtwo = \projip{\pairp{\tmtwo_1}{\tmtwo_2}}
       \toa{\ruleProj} \tmtwo_i$.
  Recall that the $\ruleEta$ step is not at the root of $\tm$.
  Moreover, it cannot be the case that $\tm = \projip{\tm'}$
  and the $\ruleEta$ step is at the root of $\tm'$,
  because the type of $\tm'$ must be of the form $(\typ \land \typtwo)\pp$
  but the $\ruleEta$ rule can only be applied on a term constructed with a
  $\claslampn{-}{-}$, whose type is classical.
  This means that $\tm$ must be of the form $\projip{\pairp{\tm_1}{\tm_2}}$
  and that the $\ruleEta$ step is
  either internal to $\tm_1$
  or internal to $\tm_2$,
  which
  implies that $\tm_1 \rtoa{\ruleEta} \tmtwo_1$
  and $\tm_2 \rtoa{\ruleEta} \tmtwo_2$.
  Taking $\tmtwo' := \tm_i$
  we have that
  $\tm = \projip{\pairp{\tm_1}{\tm_2}}
    \toa{\ruleProj} \tm_i
    \rtoa{\ruleEta} \tmtwo_i$,
  as required.
\item $\ruleCase$:
  then we have that
  $\tm
   \toa{\ruleEta} \tmtwo
   = \casep{\inip{\tmtwo_0}}{\vartwo}{\tmtwo_1}{\vartwo}{\tmtwo_2}
   \toa{\ruleCase} \tmtwo_i\sub{\vartwo}{\tmtwo_0}$.
  Recall that the $\ruleEta$ step is not at the root of $\tm$.
  Moreover, it cannot be the case that
  $\tm = \casep{\tm'}{\vartwo}{\tmtwo_1}{\vartwo}{\tmtwo_2}$
  and the $\ruleEta$ step is at the root of $\tm'$,
  because the type of $\tm'$ must be of the form $(\typ \lor \typtwo)\pp$,
  but the $\ruleEta$ rule can only be applied on a term constructed with a
  $\claslampn{-}{-}$, whose type is classical.
  This means that $\tm$ must be of the form
  $\casep{\inip{\tm_0}}{\vartwo}{\tm_1}{\vartwo}{\tm_2}$
  and that the $\ruleEta$-step is
  either internal to $\tm_0$, or internal to $\tm_1$, or internal to $\tm_2$,
  which implies that
  $\tm_0 \rtoa{\ruleEta} \tmtwo_0$
  and $\tm_1 \rtoa{\ruleEta} \tmtwo_1$
  and $\tm_2 \rtoa{\ruleEta} \tmtwo_2$.
  Taking $\tmtwo' := \tm_i\sub{\vartwo}{\tm_0}$
  we have that
  $\tm = \casep{\inip{\tm_0}}{\vartwo}{\tm_1}{\vartwo}{\tm_2}
   \toa{\ruleCase} \tm_i\sub{\vartwo}{\tm_0}
   \rtoa{\ruleEta} \tmtwo_i\sub{\vartwo}{\tmtwo_0}$
  resorting to~\rlem{lambdaCeta_properties} for the last step.
\item $\ruleNeg$:
  then we have that
  $\tm \toa{\ruleEta} \negep{(\negip{\tmtwo_1})} \toa{\ruleNeg} \tmtwo_1$.
  Recall that the $\ruleEta$-reduction step is not at the root of $\tm$.
  Moreover, it cannot be the case that $\tm = \negep{\tm'}$
  and the $\ruleEta$-reduction step is at the root of $\tm'$,
  because the type of $\tm'$ must be of the form $(\neg\typ)\pp$
  but the $\ruleEta$ rule can only be applied on a term constructed with a
  $\claslampn{-}{-}$, whose type is classical.
  This means that $\tm$ must be of the form
  $\negep{(\negip{\tm_1})}$
  and that the $\ruleEta$ step is internal to $\tm_1$,
  \ie $\tm_1 \toa{\ruleEta} \tmtwo_1$.
  Then taking $\tmtwo' := \tm_1$
  we have that
  $\tm = \negep{(\negip{\tm_1})} \toa{\ruleNeg} \tm_1 \toa{\ruleEta} \tmtwo_1$
  as required.
\item $\ruleBeta$:
  then we have that
  $\tm
   \toa{\ruleEta} \clasapp{(\claslamp{\vartwo}{\tmtwo_1})}{\tmtwo_2}
   \toa{\ruleBeta} \tmtwo_1\sub{\vartwo}{\tmtwo_2}$.
  Recall that the $\ruleEta$ step is not at the root of $\tm$.
  There are three cases, depending on the position of the $\ruleEta$-step:
  \begin{enumerate}
  \item
    {\em Immediately to the left of the application.}
    That is, $\tm = \clasapp{\tm'}{\tmtwo_2}$
    and the $\ruleEta$ step is at the root of $\tm'$,
    \ie $\tm' \toa{\ruleEta} \claslamp{\vartwo}{\tmtwo_1}$
    is a reduction step at the root.
    Then
    $\tm' = \claslamp{\var}{(\clasapp{(\claslamp{\vartwo}{\tmtwo_1})}{\var})}$.
    Hence taking $\tmtwo' := \tmtwo_1\sub{\vartwo}{\tmtwo_2}$
    we have that
    \[
      \begin{array}{cl}
      &
      \tm = \clasapp{(\claslamp{\var}{(\clasapp{(\claslamp{\vartwo}{\tmtwo_1})}{\var})})}{\tmtwo_2}
      \\
      \toa{\ruleBeta} &
      \clasapp{(\claslamp{\vartwo}{\tmtwo_1})}{\tmtwo_2}
      \\
      \toa{\ruleBeta} &
      \tmtwo_1\sub{\vartwo}{\tmtwo_2}
      \end{array}
    \]
    using two $\ruleBeta$ steps and no $\ruleEta$ steps.
  \item
    {\em Inside the abstraction.}
    That is, $\tm = \clasapp{(\claslamp{\vartwo}{\tm_1})}{\tmtwo_2}$
    with $\tm_1 \toa{\ruleEta} \tmtwo_1$.
    Then taking $\tmtwo' := \tm_1\sub{\vartwo}{\tmtwo_2}$
    we have that
    $\tm = \clasapp{(\claslamp{\vartwo}{\tm_1})}{\tmtwo_2}
     \toa{\ruleBeta} \tm_1\sub{\vartwo}{\tmtwo_2}
     \toa{\ruleEta} \tmtwo_1\sub{\vartwo}{\tmtwo_2}$
    resorting to~\rlem{lambdaCeta_properties} for the last step.
  \item
    {\em To the right of the application.}
    That is, $\tm = \clasapp{(\claslamp{\vartwo}{\tmtwo_1})}{\tm_2}$
    with $\tm_2 \toa{\ruleEta} \tmtwo_2$.
    Then taking $\tmtwo' := \tmtwo_1\sub{\vartwo}{\tm_2}$
    we have that
    $\tm = \clasapp{(\claslamp{\vartwo}{\tmtwo_1})}{\tm_2}
     \toa{\ruleBeta} \tmtwo_1\sub{\vartwo}{\tm_2}
     \rtoa{\ruleEta} \tmtwo_1\sub{\vartwo}{\tmtwo_2}$
    resorting to~\rlem{lambdaCeta_properties} for the last step.
  \end{enumerate}
\item $\ruleAbsPairInj$:
  then we have that
  $\tm \toa{\ruleEta} \strongabs{}{\pairp{\tmtwo_1}{\tmtwo_2}}{\inin{\tmtwo_3}}
       \toa{\ruleAbsPairInj} \abs{}{\tmtwo_i}{\tmtwo_3}$.
  Recall that the $\ruleEta$ step is not at the root of $\tm$.
  Moreover, it cannot be the case that
  $\tm = \strongabs{}{\tm'}{\inin{\tmtwo_3}}$
  and the $\ruleEta$ step is at the root of $\tm'$,
  because the type of $\tm'$ must be of the form $(\typ \land \typtwo)\pp$,
  but the $\ruleEta$ rule can only be applied on a term constructed with a
  $\claslampn{-}{-}$, whose type is classical.
  For similar reasons, 
  it cannot be the case that $\tm = \strongabs{}{\pairp{\tmtwo_1}{\tmtwo_2}}{\tm'}$
  with the $\ruleEta$ step is at the root of $\tm'$,
  because then the type of $\tm'$ must be of the form $(\typ \land \typtwo)\nn$.
  This means that $\tm$ must be of the form
  $\strongabs{}{\pairp{\tm_1}{\tm_2}}{\inin{\tm_3}}$
  and that the $\ruleEta$ step is
  either internal to $\tm_1$, or internal to $\tm_2$, or internal to $\tm_3$.
  This implies that
  $\tm_1 \rtoa{\ruleEta} \tmtwo_1$
  and $\tm_2 \rtoa{\ruleEta} \tmtwo_2$
  and $\tm_3 \rtoa{\ruleEta} \tmtwo_3$.
  Taking $\tmtwo' := \abs{}{\tm_i}{\tm_3}$
  we have that
    $\tm = \strongabs{}{\pairp{\tm_1}{\tm_2}}{\inin{\tm_3}}
     \toa{\ruleAbsPairInj} \abs{}{\tm_i}{\tm_3}
     = \strongabs{}{(\clasapp{\tm_i}{\tm_3})}{(\clasapn{\tm_3}{\tm_i})}
     \rtoa{\ruleEta}
     \strongabs{}{(\clasapp{\tmtwo_i}{\tmtwo_3})}{(\clasapn{\tmtwo_3}{\tmtwo_i})}
     = \abs{}{\tmtwo_i}{\tmtwo_3}$.
\item $\ruleAbsInjPair$:
  Symmetric to the previous case.
\item $\ruleAbsNeg$:
  then we have that
  $\tm
   \toa{\ruleEta} \strongabs{}{(\negip{\tmtwo_1})}{(\negin{\tmtwo_2})}
   \toa{\ruleAbsNeg} \abs{}{\tmtwo_1}{\tmtwo_2}$.
  Recall that the $\ruleEta$ step is not at the root of $\tm$.
  Moreover, it cannot be the case that
  $\tm = \strongabs{}{\tm'}{(\negin{\tmtwo_2})}$
  and the $\ruleEta$ step is at the root of $\tm'$,
  because the type of $\tm'$ must be of the form $(\neg\typ)\pp$,
  but the $\ruleEta$ rule can only be applied on a term constructed with a
  $\claslampn{-}{-}$, whose type is classical.
  For similar reasons,
  it cannot be the case that
  $\tm = \strongabs{}{\negip{\tmtwo_1}}{\tm'}$
  with the $\ruleEta$ step is at the root of $\tm'$,
  because then the type of $\tm'$ must be of the form $(\neg\typ)\nn$.
  This means that $\tm$ must be of the form
  $\strongabs{}{(\negip{\tm_1})}{(\negin{\tm_2})}$ 
  and that the $\ruleEta$ step is either internal to $\tm_1$
  or internal to $\tm_2$.
  This implies that $\tm_1 \rtoa{\ruleEta} \tmtwo_1$
  and $\tm_2 \rtoa{\ruleEta} \tmtwo_2$.
  Taking $\tmtwo' := \abs{}{\tm_1}{\tm_2}$
  we have that
  $\tm = \strongabs{}{(\negip{\tm_1})}{(\negin{\tm_2})}
   \toa{\ruleAbsNeg} \abs{}{\tm_1}{\tm_2}
   = \strongabs{}{(\clasapp{\tm_1}{\tm_2})}{(\clasapp{\tm_2}{\tm_1})}
   \rtoa{\ruleEta}
   \strongabs{}{(\clasapp{\tmtwo_1}{\tmtwo_2})}{(\clasapp{\tmtwo_2}{\tmtwo_1})}
   = \abs{}{\tmtwo_1}{\tmtwo_2}$.
\end{enumerate}
\end{proof}

\begin{theorem}
\lthm{appendix:lambdaCeta_canonical}
The $\lambdaCeta$-calculus is strongly normalizing and confluent.
\end{theorem}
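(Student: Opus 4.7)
The plan is to leverage the three ingredients already in place: subject reduction, local confluence (\rlem{lambdaCeta_local_confluence}), and the postponement lemma (\rlem{lambdaCeta_postponement}), together with the strong normalization of $\lambdaC$ (\rthm{lambdaC_canonical}).

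\textbf{Subject reduction.} I would first extend \rprop{subject_reduction} with the $\ruleEta$ case. Suppose $\claslampn{\var}{(\clasappn{\tm}{\var})} : \typ\PP$ with $\var \notin \fv{\tm}$. Inverting the typing rules yields $\tctx, \var:\typ\NN \vdash \clasappn{\tm}{\var} : \typ\pp$ and, in turn, $\tctx, \var:\typ\NN \vdash \tm : \typ\PP$. Since $\var \notin \fv{\tm}$, by the usual strengthening we get $\tctx \vdash \tm : \typ\PP$, as required; the negative case is symmetric.

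\textbf{Strong normalization.} Here I follow the standard recipe for adding a well-behaved extensional rule: if $\ruleEta$ can be postponed past every other rule, then strong normalization of the non-$\ruleEta$ fragment lifts to the whole system. Concretely, I would argue by contradiction: suppose there is an infinite $\lambdaCeta$-reduction $\tm_0 \to \tm_1 \to \tm_2 \to \hdots$. If it contains infinitely many non-$\ruleEta$ steps, iteratively applying \rlem{lambdaCeta_postponement} to permute each $\ruleEta$ step rightwards past the next non-$\ruleEta$ step ``in the limit'' extracts an infinite reduction in $\lambdaC$ starting from $\tm_0$, contradicting \rthm{lambdaC_canonical}. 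The infinite permutation step is the delicate point and would be justified by König's lemma on the tree of reducts. Otherwise, there are only finitely many non-$\ruleEta$ steps, so after some index all steps are $\ruleEta$; but $\ruleEta$ strictly decreases the size of the term, so this is impossible.

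\textbf{Confluence.} Given strong normalization, confluence follows from local confluence by Newman's lemma. Local confluence is already \rlem{lambdaCeta_local_confluence}.

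\textbf{Main obstacle.} The routine part is subject reduction and the application of Newman's lemma; the delicate part is the formal argument that postponement of single $\ruleEta$ steps (\rlem{lambdaCeta_postponement}) really gives strong normalization. The cleanest way to phrase it, which I would use, is the following: define $\to_{\ruleAnon}$ as the union of all rewriting rules other than $\ruleEta$, and show by induction on the $\to_{\ruleAnon}$-termination height of a term $\tm$ (which is finite by \rthm{lambdaC_canonical}) that every $\lambdaCeta$-reduction from $\tm$ is finite. The inductive step applies postponement finitely many times to shift an $\ruleEta$ prefix past the first $\to_{\ruleAnon}$-step, then invokes the \ih on the reduct, combined with the fact that pure $\ruleEta$-reduction is terminating by size. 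This avoids the need to take any ``limit'' and is straightforward to formalize.
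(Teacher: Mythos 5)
Your proposal is correct and follows essentially the same route as the paper: strong normalization is obtained from postponement of $\ruleEta$ steps (\rlem{lambdaCeta_postponement}) combined with strong normalization of $\lambdaC$ (\rthm{lambdaC_canonical}) and termination of pure $\ruleEta$ by size, and confluence then follows from local confluence (\rlem{lambdaCeta_local_confluence}) via Newman's lemma. Your preferred well-founded induction on the $\toa{\neg\ruleEta}$-termination height is just a repackaging of the paper's argument, which instead repeatedly extends the non-$\ruleEta$ prefix of a hypothetical infinite reduction; both hinge on the same step of pushing a finite block of $\ruleEta$ steps past one non-$\ruleEta$ step by iterating the single-step postponement lemma.
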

\begin{proof}
Strong normalization follows from
postponement of the $\ruleEta$ rule~(\rlem{lambdaCeta_postponement})
and strong normalization
of the calculus without $\ruleEta$~(\rthm{lambdaC_canonical})
by the usual rewriting techniques.

More precisely, let us write $\toa{\neg\ruleEta}$ for reduction
not using $\ruleEta$, that is,
$\toa{\neg\ruleEta} \eqdef (\toa{\,} \setminus \toa{\ruleEta})$.
Suppose there is an infinite reduction sequence
$\tm_1 \to \tm_2 \to \tm_3 \hdots$ in $\lambdaCeta$.
Let $\tm_1 \rtoa{\neg\ruleEta} \tm_i$
be the longest prefix of the sequence whose
steps are not $\ruleEta$ steps.
This prefix cannot be infinite given that $\lambdaC$ is strongly normalizing.
Let $\tm_i \rtoa{\ruleEta} \tm_{i+n}$ be the longest sequence of $\ruleEta$
steps starting on $\tm_i$. This sequence cannot be infinite given that
an $\ruleEta$ step decreases the size of the term.
Now there must be a step $\tm_{i+n} \toa{\neg\ruleEta} \tm_{i+n+1}$.
Applying the postponement lemma~(\rlem{lambdaCeta_postponement})
$n$ times, we obtain an infinite sequence of the form
$\tm_1 \rtoa{\neg\ruleEta} \tm_i \toa{\neg\ruleEta} \tm'_{i+1} \hdots$.
By repeatedly applying this argument, we may build an infinite sequence
of $\toa{\neg\ruleEta}$ steps, contradicting the fact that
$\lambdaC$ is strongly normalizing.

Confluence of $\lambdaCeta$ follows from the fact that it is
strongly normalizing and locally confluent~(\rlem{lambdaCeta_local_confluence}),
resorting to Newman's Lemma~\cite[Theorem~1.2.1]{terese}.
\end{proof}

\subsection{Computation Rules for the Embedding of Classical Logic into $\PRK$}
\lsec{classical_simulation}

The statements of all of the following lemmas are in $\lambdaCeta$
(with $\ruleEta$ reduction).
\medskip

\subsubsection{Simulation of conjunction}

\begin{definition}[Conjunction introduction]
Let $\tctx \vdash \tm : \typ\PP$ and $\tctx \vdash \tmtwo : \typtwo\PP$.
Then $\tctx \vdash \pairc{\tm}{\tmtwo} : (\typ \land \typtwo)\PP$
where:
\[
  \pairc{\tm}{\tmtwo} \eqdef
  \claslamp{(\under:(\typ\land\typtwo)\NN)}{
    \pairp{\tm}{\tmtwo}
  }
\]
\end{definition}

\begin{definition}[Conjunction elimination]
Let $\tctx \vdash \tm : (\typ_1 \land \typ_2)\PP$.
Then $\tctx \vdash \projic{\tm} : \typ_i\PP$ where:
\[
  \projic{\tm} \eqdef
  \claslamp{(\var:\typ_i\NN)}{
    \clasapp{
      \projip{
        \clasapp{
          \tm
        }{
          \claslamn{(\under:(\typ_1 \land \typ_2)\PP)}{\inin{\var}}
        }
      }
    }{
      \var
    }
  }
\]
\end{definition}

\begin{lemma}
  $\projic{\pairc{\tm_1}{\tm_2}} \rto \tm_i$
\end{lemma}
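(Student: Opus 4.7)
The plan is a direct calculation: unfold both macro definitions in $\projic{\pairc{\tm_1}{\tm_2}}$ and then apply reduction rules in a fixed order.

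First I would expand $\pairc{\tm_1}{\tm_2}$ to $\claslamp{\under}{\pairp{\tm_1}{\tm_2}}$ and substitute it inside $\projic{-}$. This yields
\[
  \claslamp{(\var:\typ_i\NN)}{\clasapp{\projip{\clasapp{(\claslamp{\under}{\pairp{\tm_1}{\tm_2}})}{(\claslamn{\under}{\inin{\var}})}}}{\var}}.
\]
Next I would fire the innermost $\ruleBeta$ redex: the underscore binder is unused, so the result is $\pairp{\tm_1}{\tm_2}$, giving
$\claslamp{\var}{\clasapp{\projip{\pairp{\tm_1}{\tm_2}}}{\var}}$.
Then one $\ruleProj$ step rewrites $\projip{\pairp{\tm_1}{\tm_2}}$ to $\tm_i$, leaving $\claslamp{\var}{\clasapp{\tm_i}{\var}}$.

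Finally I would invoke $\ruleEta$ to collapse this to $\tm_i$. This step requires $\var \notin \fv{\tm_i}$, which holds by $\alpha$-conversion since $\var$ is the freshly chosen binder introduced by the definition of $\projic{-}$ while $\tm_i$ is a term from the ambient context. There is no real obstacle here; the only point worth noting is that the $\ruleEta$ step is essential (without it, one would only obtain $\projic{\pairc{\tm_1}{\tm_2}} \rto \claslamp{\var}{\clasapp{\tm_i}{\var}}$), which is why the lemma is stated in $\lambdaCeta$ rather than in $\lambdaC$.
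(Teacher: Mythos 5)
Your proof is correct and follows exactly the same route as the paper's own calculation: unfold the two macros, fire the inner $\ruleBeta$ redex (whose binder is the unused underscore), apply $\ruleProj$, and finish with $\ruleEta$. Your remark that the final $\ruleEta$ step is what forces the lemma to live in $\lambdaCeta$ rather than $\lambdaC$ is accurate and matches the paper's framing.
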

\begin{proof}
  \[
  {\small
  \begin{array}{rcll}
  &&
    \projic{\pairc{\tm_1}{\tm_2}}
  \\
  & = &
    \claslamp{\var:\typ_i\NN}{
      \clasapp{
        \projip{
          \clasapp{
            (\claslamp{\under}{\pairp{\tm_1}{\tm_2}})
          }{
            \claslamn{\under}{\inin{\var}}
          }
        }
      }{
        \var
      }
    }
  \\
  & \toa{\ruleBeta} &
    \claslamp{\var:\typ_i\NN}{
      \clasapp{
        \projip{\pairp{\tm_1}{\tm_2}}
      }{
        \var
      }
    }
  \\
  & \toa{\ruleProj} &
    \claslamp{\var:\typ_i\NN}{
      \clasapp{
        \tm_i
      }{
        \var
      }
    }
  \\
  & \toa{\ruleEta} &
    \tm_i
  \end{array}
  }
  \]
\end{proof}

\subsubsection{Simulation of disjunction}

\begin{definition}[Disjunction introduction]
Let $\tctx \vdash \tm : \typ_i\PP$.
Then $\tctx \vdash \inic{\tm} : (\typ_1 \lor \typ_2)\PP$ where:
\[
  \inic{\tm} \eqdef
  \claslamp{(\under:(\typ_1\lor\typ_2)\NN)}{
    \inip{\tm}
  }
\]
\end{definition}

\begin{definition}[Disjunction elimination]
Let
$\tctx \vdash \tm : (\typ \lor \typtwo)\PP$
and $\tctx, \var : \typ\PP \vdash \tmtwo : \typthree\PP$
and $\tctx, \var : \typtwo\PP \vdash \tmthree : \typthree\PP$.
Then $\tctx \vdash
      \casec{\tm}{(\var:\typ\PP)}{\tmtwo}{(\var:\typtwo\PP)}{\tmthree}
      : \typthree\PP$,
where:
\[
  \claslamp{(\vartwo:\typthree\NN)}{
    \caseptablex{
      (\clasapp{
        \tm
      }{
        \claslamn{(\under:(\typ\lor\typtwo)\PP)}{
          \pairn{
            \contrapose{\var}{\vartwo}{
              \tmtwo
            }
          }{
            \contrapose{\var}{\vartwo}{
              \tmthree
            }
          }
        }
      })
    }{
      (\var : \typ\PP)
    }{
      \clasapp{
        \tmtwo
      }{
        \vartwo
      }
    }{
      (\var : \typtwo\PP)
    }{
      \clasapp{
        \tmthree
      }{
        \vartwo
      }
    }
  }
\]
\end{definition}

\begin{lemma}
  $\casec{\inic{\tm_i}}{\var}{\tmtwo_1}{\var}{\tmtwo_2}
   \rto \tmtwo_i\sub{\var}{\tm}$
\end{lemma}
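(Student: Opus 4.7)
The plan is to perform a direct calculation, analogous to the proof of the conjunction simulation lemma given just above. The target term, after unfolding the definitions of $\casec{-}{-}{-}{-}{-}$ and $\inic[i]{-}$, is of the form
\[
  \claslamp{(\vartwo:\typthree\NN)}{
    \caseFtablex{
      (\clasapp{
        (\claslamp{\under}{\inip[i]{\tm}})
      }{
        \claslamn{\under}{
          \pairn{\contrapose{\var}{\vartwo}{\tmtwo_1}}{\contrapose{\var}{\vartwo}{\tmtwo_2}}
        }
      })
    }{
      (\var:\typ\PP)
    }{
      \clasapp{\tmtwo_1}{\vartwo}
    }{
      (\var:\typtwo\PP)
    }{
      \clasapp{\tmtwo_2}{\vartwo}
    }
  }.
\]
The key observation is that the scrutinee of the inner $\casesym\pp$ contains a classical redex: the argument $\claslamn{\under}{\pair{\cdots}{\cdots}}$ is supplied to the abstraction $\claslamp{\under}{\inip[i]{\tm}}$, whose bound variable does not occur in the body.

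First I would fire the $\ruleBeta$ step on that application, producing the scrutinee $\inip[i]{\tm}$. At this point the inner case is a $\ruleCase$ redex: applying the corresponding reduction rule yields the $i$-th branch substituted for $\var$, namely $(\clasapp{\tmtwo_i}{\vartwo})\sub{\var}{\tm}$. Assuming, by the usual Barendregt convention, that $\vartwo$ is chosen fresh and does not occur in $\tm$, this pushes the substitution inward to give $\clasapp{\tmtwo_i\sub{\var}{\tm}}{\vartwo}$. The whole term is then $\claslamp{\vartwo}{(\clasapp{\tmtwo_i\sub{\var}{\tm}}{\vartwo})}$.

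Finally I would apply the $\ruleEta$ rule of \rdef{lambdaCeta_calculus}: since $\vartwo \notin \fv{\tmtwo_i\sub{\var}{\tm}}$ (for the same freshness reason used above, combined with the fact that $\vartwo$ is bound in $\casec{-}{-}{-}{-}{-}$ and is therefore not free in either $\tmtwo_i$ or $\tm$), this reduction produces exactly $\tmtwo_i\sub{\var}{\tm}$, as required. No step is technically subtle; the only thing to be careful about is making the freshness assumptions on $\vartwo$ explicit so that the substitution commutes past $\clasapp{-}{\vartwo}$ and the $\ruleEta$ step is applicable. The whole calculation is strictly parallel to the three-step sequence $\ruleBeta$, $\ruleProj$, $\ruleEta$ used in the proof of the conjunction case.
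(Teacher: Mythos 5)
Your proposal is correct and follows exactly the same route as the paper's own calculation: unfold the definitions of $\casec{-}{-}{-}{-}{-}$ and $\inic{-}$, then reduce by $\ruleBeta$ (discarding the unused argument of the classical abstraction), $\ruleCase$ (selecting the $i$-th branch and substituting $\tm$ for $\var$), and finally $\ruleEta$ to strip the outer $\claslamp{\vartwo}{(\clasapp{-}{\vartwo})}$. Your explicit remarks on the freshness of $\vartwo$ are implicit in the paper's proof but harmless and correct.
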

\begin{proof}
  \[
  {\small
  \begin{array}{rcll}
  &&
    \casec{\inic{\tm_i}}{\var}{\tmtwo_1}{\var}{\tmtwo_2}
  \\
  & = &
    \claslamptable{(\vartwo:\typthree\NN)}{
      \caseptablex{
        \clasapptable{
          \claslamp{(\under:(\typ_1\lor\typ_2)\NN)}{
            \inip{\tm}
          }
        }{
          \claslamn{(\under:(\typ\lor\typtwo)\PP)}{
            \pairn{
              \contrapose{\var}{\vartwo}{
                \tmtwo_1
              }
            }{
              \contrapose{\var}{\vartwo}{
                \tmtwo_2
              }
            }
          }
        }
      }{
        (\var : \typ\PP)
      }{
        \clasapp{
          \tmtwo_1
        }{
          \vartwo
        }
      }{
        (\var : \typtwo\PP)
      }{
        \clasapp{
          \tmtwo_2
        }{
          \vartwo
        }
      }
    }
  \\
  & \toa{\ruleBeta} &
    \claslamp{(\vartwo:\typthree\NN)}{
      \casep{
        \inip{\tm}
      }{
        (\var : \typ\PP)
      }{
        \clasapp{
          \tmtwo_1
        }{
          \vartwo
        }
      }{
        (\var : \typtwo\PP)
      }{
        \clasapp{
          \tmtwo_2
        }{
          \vartwo
        }
      }
    }
  \\
  & \toa{\ruleCase} &
    \claslamp{(\vartwo:\typthree\NN)}{
      \clasapp{\tmtwo_i\sub{\var}{\tm}}{\vartwo}
    }
  \\
  & \toa{\ruleEta} &
    \tmtwo_i\sub{\var}{\tm}
  \end{array}
  }
  \]
\end{proof}

\subsubsection{Simulation of negation}
\lsec{appendix:simulation_of_negation}

\begin{definition}[Negation introduction]
By \rlem{lem_and_noncontr} we have that
$\tctx \vdash \lemN{\btyp_0} : (\btyp_0 \land \neg\btyp_0)\NN$,
that is $\tctx \vdash \lemN{\btyp_0} : \Bot\NN$.
Moreover, suppose that $\tctx, \var:\typ\PP \vdash \tm : \Bot\PP$.
Then $\tctx \vdash \neglamc{(\var:\typ\PP)}{\tm} : (\neg\typ)\PP$,
where:
\[
  \neglamc{(\var:\typ\PP)}{\tm} \eqdef
  \claslamp{(\under:(\neg\typ)\NN)}{
    \negip{
      \claslamn{(\var:\typ\PP)}{
        (\abs{
          \typ\nn
        }{
          \tm
        }{
          \lemN{\btyp_0}
        })
      }
    }
  }
\]
\end{definition}

\begin{definition}[Negation elimination]
Let $\tctx \vdash \tm : (\neg\typ)\PP$
and $\tctx \vdash \tmtwo : \typ\PP$.
Then $\tctx \vdash \negapc{\tm}{\tmtwo} : \Bot\PP$,
where:
\[
  \negapc{\tm}{\tmtwo} \eqdef
  \abs{
    \Bot\PP
  }{
    \tm
  }{
    \claslamn{(\under:(\neg\typ)\PP)}{
      \negin{
        \tmtwo
      }
    }
  }
\]
\end{definition}

\begin{lemma}
  $\negapc{(\neglamc{\var}{\tm})}{\tmtwo} \rto 
    \strongabs{}{
      (\abs{}{
        \tm\sub{\var}{\tmtwo}
      }{
        \lemN{\btyp_0}
      })
    }{
      (\clasapn{
        \tmtwo
      }{
        (\claslamn{\var}{
          (\abs{}{
            \tm
          }{
            \lemN{\btyp_0}
          })
        })
      })
    }$
\end{lemma}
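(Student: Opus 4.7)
The plan is to prove this by direct calculation, unfolding the macro definitions of $\negapc{}{}$, $\neglamc{}{}$, and the generalized absurdity $\abs{}{}{}$ step by step, and applying the reduction rules from \rdef{the_lambdaC_calculus}.

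First I would compute $\negapc{(\neglamc{\var}{\tm})}{\tmtwo}$ by unfolding $\negapc$ to $\abs{\Bot\PP}{(\neglamc{\var}{\tm})}{\claslamn{\under}{\negin{\tmtwo}}}$. Since $\neglamc{\var}{\tm}$ has classical positive type $(\neg\typ)\PP$, by the $\typ\PP$ case of the definition of $\abs{}{}{}$ in~\rlem{admissible_rules}, this expression equals a $\strongabs{}{}{}$ whose two components are $\clasapp{(\neglamc{\var}{\tm})}{(\claslamn{\under}{\negin{\tmtwo}})}$ and $\clasapn{(\claslamn{\under}{\negin{\tmtwo}})}{(\neglamc{\var}{\tm})}$. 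Each of these is a $\ruleBeta$-redex, since $\neglamc{\var}{\tm}$ is itself a classical abstraction $\claslamp{\under}{\dots}$ and so is the other argument: applying $\ruleBeta$ to both contracts the two outer $\under$-binders, yielding $\strongabs{}{(\negip{P})}{(\negin{\tmtwo})}$ where $P \eqdef \claslamn{\var}{(\abs{\typ\nn}{\tm}{\lemN{\btyp_0}})}$.

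Next I would apply the $\ruleAbsNeg$ rule, which matches the pattern $\strongabs{}{(\negipn{\cdot})}{(\neginp{\cdot})}$ and contracts it to $\abs{}{P}{\tmtwo}$. Because $P : \typ\NN$ (classical negative) and $\tmtwo : \typ\PP$, this time the $\typ\NN$ case of the definition of $\abs{}{}{}$ applies, and the macro expands to $\strongabs{}{(\clasapn{P}{\tmtwo})}{(\clasapp{\tmtwo}{P})}$. A final $\ruleBeta$ step on the first component contracts $\clasapn{(\claslamn{\var}{(\abs{\typ\nn}{\tm}{\lemN{\btyp_0}})})}{\tmtwo}$ to $(\abs{\typ\nn}{\tm}{\lemN{\btyp_0}})\sub{\var}{\tmtwo}$, which, because $\lemN{\btyp_0}$ is closed, simplifies to $\abs{\typ\nn}{\tm\sub{\var}{\tmtwo}}{\lemN{\btyp_0}}$, matching the first argument of the target $\strongabs{}{}{}$.

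There is no real obstacle: the proof is bookkeeping. The only point that requires attention is that the definition of $\abs{}{}{}$ branches on the sign of the classical type of its first argument, so the macro is unfolded in two different ways at the two occurrences (once with $(\neg\typ)\PP$ and once with $\typ\NN$), and one has to take care to track which case applies. (As a minor remark, the calculation actually ends in $\clasapp{\tmtwo}{P}$ for the second component of the outer $\strongabs{}{}{}$, as forced by the typing constraint that $\tmtwo : \typ\PP$ cannot be the first argument of a $\clasapn$.)
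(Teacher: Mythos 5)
Your calculation is correct and is essentially identical to the paper's own proof: unfold $\negapc{\cdot}{\cdot}$ and the $(\neg\typ)\PP$ case of the $\abs{}{}{}$ macro, fire the two outer $\ruleBeta$-redexes to reach $\strongabs{}{(\negip{P})}{(\negin{\tmtwo})}$ with $P = \claslamn{\var}{(\abs{}{\tm}{\lemN{\btyp_0}})}$, apply $\ruleAbsNeg$, unfold the macro once more, and finish with a last $\ruleBeta$ step. Your parenthetical remark is also right: since $P : \typ\NN$ and $\tmtwo : \typ\PP$, the unfolding of $\abs{}{P}{\tmtwo}$ is $\strongabs{}{(\clasapn{P}{\tmtwo})}{(\clasapp{\tmtwo}{P})}$, so the second component of the reduct is $\clasapp{\tmtwo}{P}$; the $\clasapn{\tmtwo}{P}$ appearing in the statement (and in the paper's displayed calculation, which applies the $\typ\PP$ branch of the macro at that point) is a sign slip.
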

\begin{proof}
\[
{\small
\begin{array}{rl}
  & \negapc{(\neglamc{\var}{\tm})}{\tmtwo}
\\
\\
= & \abs{
      \Bot\PP
    }{
      (\claslamp{\under}{
        \negip{
          \claslamn{\var}{
            (\abs{}{
              \tm
            }{
              \lemN{\btyp_0}
            })
          }
        }
      })
    }{
      \claslamn{\under}{
        \negin{
          \tmtwo
        }
      }
    }
\\
\\
= & \strongabstable{}{
      (\clasapp{
        (\claslamp{\under}{
          \negip{
            \claslamn{\var}{
              (\abs{}{
                \tm
              }{
                \lemN{\btyp_0}
              })
            }
          }
        })
      }{
        \claslamn{\under}{
          \negin{
            \tmtwo
          }
        }
      })
    }{
      (\clasapn{
        \claslamn{\under}{
          \negin{
            \tmtwo
          }
        }
      }{
        (\claslamp{\under}{
          \negip{
            \claslamn{\var}{
              (\abs{}{
                \tm
              }{
                \lemN{\btyp_0}
              })
            }
          }
        })
      })
    }
\\
\\
\toa{\ruleBeta}(2) &
  \strongabs{}{
      (\negip{
        \claslamn{\var}{
          (\abs{}{
            \tm
          }{
            \lemN{\btyp_0}
          })
        }
      })
    }{
      (\negin{\tmtwo})
    }
\\
\\
\toa{\ruleAbsNeg} &
  \strongabs{}{
    (\clasapp{
      (\claslamn{\var}{
        (\abs{}{
          \tm
        }{
          \lemN{\btyp_0}
        })
      })
    }{
      \tmtwo
    })
  }{
    (\clasapn{
      \tmtwo
    }{
      (\claslamn{\var}{
        (\abs{}{
          \tm
        }{
          \lemN{\btyp_0}
        })
      })
    })
  }
\\
\\
\toa{\ruleBeta} &
  \strongabs{}{
    (\abs{}{
      \tm\sub{\var}{\tmtwo}
    }{
      \lemN{\btyp_0}
    })
  }{
    (\clasapn{
      \tmtwo
    }{
      (\claslamn{\var}{
        (\abs{}{
          \tm
        }{
          \lemN{\btyp_0}
        })
      })
    })
  }
\end{array}
}
\]
\end{proof}

\subsubsection{Simulation of implication}

Define implication $\typ \IMP \typtwo$ as an abbreviation of
$\neg\typ \lor \typtwo$.
\medskip

\begin{definition}[Implication introduction]
If $\tctx,\var:\typ\PP \vdash \tm : \typtwo\PP$
then $\tctx \vdash \lamc{(\var:\typ)}{\tm} : (\typ \IMP \typtwo)\PP$
where:
\[
{\small
\begin{array}{rcl}
  \lamc{\var}{\tm} & \eqdef &
    \claslamp{(\vartwo:(\typ\IMP\typtwo)\NN)}{
      \inip[2]{
        \tm\sub{\var}{\mathbf{X}_{\vartwo}}
      }
    }
\\
\mathbf{X}_\vartwo & \eqdef &
  \claslamp{(\varthree:\typ\NN)}{
    \clasapp{
      (\negen{(
        \clasapn{
          \mathbf{X'}_{\vartwo,\varthree}
        }{
          \claslamp{(\under:(\neg\typ)\NN)}{
            \negip{
              \varthree
            }
          }
        }
      )})
    }{
      \varthree
    }
  }
\\
\mathbf{X'}_{\vartwo,\varthree} & \eqdef &
  \projip[1]{
    \clasapn{
      \vartwo
    }{
      \claslamp{(\under:(\typ\IMP\typtwo)\NN)}{
        \inip[1]{
          \claslamp{(\under:(\neg\typ)\NN)}{
            \negip{
              \varthree
            }
          }
        }
      }
    }
  }
\end{array}
}
\]
\end{definition}

\medskip
\begin{definition}[Implication elimination]
If $\tctx \vdash \tm : (\typ \IMP \typtwo)\PP$
and $\tctx \vdash \tmtwo : \typ\PP$,
then $\tctx \vdash \appc{\tm}{\tmtwo} : \typtwo\PP$,
where:
\[
{\small
\begin{array}{r@{}c@{}l}
  \appc{\tm}{\tmtwo}
  & \eqdef &
  \claslamptable{(\var:\typtwo\NN)}{
    \caseptablex{
      (\clasapp{
        \tm
      }{
        \claslamn{(\under:(\typ\imp\typtwo)\PP)}{
          \pairn{
            (\claslamn{(\under:(\neg\typ)\PP)}{
              \negin{
                \tmtwo
              }
            })
          }{
            \var
          }
        }
      })
    }{
      (\vartwo:(\neg\typ)\PP)
    }{
      \abs{\typtwo\pp}{\tmtwo}{
        \negen{
          (\clasapp{
            \vartwo
          }{
            \claslamn{(\under:(\neg\typ)\PP)}{
              \negin{
                \var
              }
            }
          })
        }
      }
    }{
      (\varthree:\typtwo\PP)
    }{
      \clasapp{\varthree}{\var}
    }
  } 
\end{array}
}
\]
\end{definition}
\medskip

The following lemma is the computational rule for implication:

\begin{lemma}
$\appc{(\lamc{\var}{\tm})}{\tmtwo} \rto \tm\sub{\var}{\tmtwo}$
\end{lemma}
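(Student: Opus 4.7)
The proof is purely computational: a short reduction trace tracking the encoded classical $\beta$. Let $z$ denote the fresh variable bound by the outer $\claslamp$ of $\appc{}{}$, and set $W := \claslamn{\under}{\pairn{(\claslamn{\under}{\negin{\tmtwo}})}{z}}$ for the ``refutation'' that the applicator constructs. Unfolding both definitions,
$$
\appc{(\lamc{\var}{\tm})}{\tmtwo}
\;=\;
\claslamp{z}{\casep{\clasapp{(\claslamp{\vartwo}{\inip[2]{\tm\sub{\var}{\mathbf{X}_\vartwo}}})}{W}}{\vartwo}{\cdots}{\varthree}{\clasapp{\varthree}{z}}},
$$
where the omitted left branch is irrelevant. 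A $\ruleBeta$ step fires the inner $\clasapp$, substituting $W$ for $\vartwo$ and yielding $\inip[2]{\tm\sub{\var}{\mathbf{X}_W}}$; a $\ruleCase$ step then selects the right branch, leaving $\claslamp{z}{\clasapp{(\tm\sub{\var}{\mathbf{X}_W})}{z}}$.

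The heart of the argument is to check that $\mathbf{X}_W \rto \tmtwo$. Unfolding $\mathbf{X}_W$, its inner piece $\mathbf{X}'_{W,\varthree}$ contains $\clasapn{W}{\claslamp{\under}{\inip[1]{\claslamp{\under}{\negip{\varthree}}}}}$, which $\ruleBeta$-reduces to the pair $\pairn{(\claslamn{\under}{\negin{\tmtwo}})}{z}$; a $\ruleProj$ step then extracts its first component $\claslamn{\under}{\negin{\tmtwo}}$. A further $\ruleBeta$ collapses $\clasapn{(\claslamn{\under}{\negin{\tmtwo}})}{\claslamp{\under}{\negip{\varthree}}}$ to $\negin{\tmtwo}$, and a $\ruleNeg$ step reduces the surrounding $\negen{\negin{\tmtwo}}$ to $\tmtwo$. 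Hence the body of $\mathbf{X}_W$ reduces to $\clasapp{\tmtwo}{\varthree}$, so $\mathbf{X}_W \rto \claslamp{\varthree}{\clasapp{\tmtwo}{\varthree}}$, and a final $\ruleEta$ step produces $\tmtwo$.

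By the substitution-compatibility of $\rto$~(\rlem{lambdaCeta_properties}), we then have $\tm\sub{\var}{\mathbf{X}_W} \rto \tm\sub{\var}{\tmtwo}$, and therefore
$$
\claslamp{z}{\clasapp{(\tm\sub{\var}{\mathbf{X}_W})}{z}}
\;\rto\;
\claslamp{z}{\clasapp{(\tm\sub{\var}{\tmtwo})}{z}}
\;\toa{\ruleEta}\;
\tm\sub{\var}{\tmtwo},
$$
the last step being legitimate because $z$ is fresh, hence $z \notin \fv{\tm\sub{\var}{\tmtwo}}$. The whole argument is a finite reduction trace; the only ``obstacle'' is bookkeeping of the $\pm$ decorations and confirming that both applications of $\ruleEta$ are genuine, which follows routinely from $\alpha$-conversion. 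No normalization or inductive machinery is required.
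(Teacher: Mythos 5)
Your proof is correct and follows essentially the same route as the paper's: the same witness $W$ (the paper's $\tmthree$), the same reduction trace through $\mathbf{X}'_{W,\varthree}$ and $\mathbf{X}_W \rto \tmtwo$, the same appeal to substitution-compatibility of $\rto$, and the same two closing $\ruleEta$ steps.
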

\begin{proof}

First let
$\tmthree =
      \claslamn{\under}{
        \pairn{
          (\claslamn{\under}{
            \negin{
              \tmtwo
            }
          })
        }{
          \var'
        }
      }$
and note that:
\[
{\small
  \begin{array}{r@{\ }c@{\ }l}
  &&
    \mathbf{X'}_{\tmthree,\varthree}
  \\
  & = &
    \projip[1]{
      \clasapn{
        \tmthree
      }{
        \claslamp{(\under:(\typ\IMP\typtwo)\NN)}{
          \inip[1]{
            \claslamp{(\under:(\neg\typ)\NN)}{
              \negip{
                \varthree
              }
            }
          }
        }
      }
    }
  \\
  & \toa{\ruleBeta} &
    \projip[1]{
      \pairn{
        (\claslamn{\under}{
          \negin{
            \tmtwo
          }
        })
      }{
        \var'
      }
    }
  \\
  & \toa{\ruleProj} &
    \claslamn{\under}{
      \negin{
        \tmtwo
      }
    }
  \end{array}
}
\]
Hence:
\[
{\small
  \begin{array}{r@{\ }c@{\ }l}
  &&
    \mathbf{X}_{\tmthree}
  \\
  & = &
    \claslamp{(\varthree:\typ\NN)}{
      \clasapp{
        (\negen{(
          \clasapn{
            \mathbf{X'}_{\tmthree,\varthree}
          }{
            \claslamp{(\under:(\neg\typ)\NN)}{
              \negip{
                \varthree
              }
            }
          }
        )})
      }{
        \varthree
      }
    }
  \\
  & \rto &
    \claslamp{(\varthree:\typ\NN)}{
      \clasapp{
        (\negen{(
          \clasapn{
            (\claslamn{\under}{
              \negin{
                \tmtwo
              }
            })
          }{
            \claslamp{(\under:(\neg\typ)\NN)}{
              \negip{
                \varthree
              }
            }
          }
        )})
      }{
        \varthree
      }
    }
  \\
  & \toa{\ruleBeta} &
    \claslamp{(\varthree:\typ\NN)}{
      \clasapp{
        (\negen{(
              \negin{
                \tmtwo
              }
        )})
      }{
        \varthree
      }
    }
  \\
  & \toa{\ruleNeg} &
    \claslamp{(\varthree:\typ\NN)}{
      \clasapp{
        \tmtwo
      }{
        \varthree
      }
    }
  \\
  & \toa{\ruleEta} &
    \tmtwo
  \end{array}
}
\]
Hence:
\[
{\small
  \begin{array}{r@{\ }c@{\ }l}
  & &
  \appc{(\lamc{\var}{\tm})}{\tmtwo}
  \\
  & = &
  \claslamptable{\var'}{
    \caseptablex{
      \clasapptable{
        (\claslamp{\vartwo}{
            \inip[2]{
              \tm\sub{\var}{\mathbf{X}_{\vartwo}}
            }
          })
      }{
        \claslamn{\under}{
          \pairn{
            (\claslamn{\under}{
              \negin{
                \tmtwo
              }
            })
          }{
            \var'
          }
        }
      }
    }{
      \vartwo'
    }{
      \abs{\typtwo\pp}{\tmtwo}{
        \negen{
          (\clasapp{
            \vartwo'
          }{
            \claslamn{\under}{
              \negin{
                \var'
              }
            }
          })
        }
      }
    }{
      \varthree'
    }{
      \clasapp{\varthree'}{\var'}
    }
  } 
  \\
  & \toa{\ruleBeta} &
  \claslamptable{\var'}{
    \caseptablex{
      \inip[2]{
        \tm\sub{\var}{\mathbf{X}_{\tmthree}}
      }
    }{
      \vartwo'
    }{
      \abs{\typtwo\pp}{\tmtwo}{
        \negen{
          (\clasapp{
            \vartwo'
          }{
            \claslamn{\under}{
              \negin{
                \var'
              }
            }
          })
        }
      }
    }{
      \varthree'
    }{
      \clasapp{\varthree'}{\var'}
    }
  } 
  \\
  & \toa{\ruleCase} &
  \claslamp{\var'}{
    \clasapp{
      \tm\sub{\var}{\mathbf{X}_{\tmthree}}
    }{\var'}
  } 
  \\
  & \rto &
  \claslamp{\var'}{
    \clasapp{
      \tm\sub{\var}{\tmtwo}
    }{\var'}
  } 
  \\
  & \toa{\ruleEta} &
    \tm\sub{\var}{\tmtwo}
  \end{array}
}
\]
\end{proof}

\subsubsection{Computational content of the law of excluded middle}

\begin{lemma}
\quad
\[
{\small
  \begin{array}{ll}
  &
  \casec{\lemC{\typ}}{\var}{\tmtwo_1}{\var}{\tmtwo_2}
  \\
  \rto
  &
  \claslamp{\vartwo}{
    \clasapp{
      \tmtwo_2\sub{\var}{
        \claslamp{\under}{
          \negip{
            (\claslamn{\var}{
              \abs{}{
                \tmtwo_1
              }{ 
                \vartwo
              }
            })
          }
        }
      }
    }{
      \vartwo
    }
  }
  \end{array}
}
\]
\end{lemma}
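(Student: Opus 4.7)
The plan is to prove the reduction by direct calculation in $\lambdaCeta$, unfolding the definitions of $\lemC{\typ} = \lemP{\typ}$, of $\casec{-}{\var}{-}{\var}{-}$, and of the contraposition witness $\contrapose{\var}{\vartwo}{-}$ from \rlem{admissible_rules}, and then firing the $\ruleBeta$, $\ruleCase$, $\ruleProj$ and $\ruleNeg$ redexes that become visible. There is nothing deep to do beyond carefully chaining reduction steps; the earlier simulation lemmas in \rsec{classical_simulation} follow the same template and provide the model.

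First I would expand the outer $\casec{-}{-}{-}{-}{-}$, which introduces a fresh binder $\claslamp{\vartwo}{-}$ and creates a $\casep{\clasapp{\lemC{\typ}}{\tmthree}}{\var}{\clasapp{\tmtwo_1}{\vartwo}}{\var}{\clasapp{\tmtwo_2}{\vartwo}}$, where $\tmthree = \claslamn{\under}{\pairn{\contrapose{\var}{\vartwo}{\tmtwo_1}}{\contrapose{\var}{\vartwo}{\tmtwo_2}}}$ packages the two contrapositions. Next I expand $\lemC{\typ}$ to $\claslamp{\var}{\inip[2]{\negip{\projin[1]{\clasapn{\var}{\lemPinner{\vartwo'}{\typ}}}}}}$ (renaming its inner binder to $\vartwo'$), which fires a $\ruleBeta$ step against $\tmthree$. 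The body then becomes $\casep{\inip[2]{\cdots}}{\var}{\cdots}{\var}{\cdots}$, so a $\ruleCase$ step selects the second branch and substitutes the inner $\claslamp{\vartwo'}{\negip{\projin[1]{\clasapn{\tmthree}{\lemPinner{\vartwo'}{\typ}}}}}$ for $\var$ inside $\clasapp{\tmtwo_2}{\vartwo}$.

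Second I would simplify that substituted term in place. Firing $\clasapn{\tmthree}{\lemPinner{\vartwo'}{\typ}}$ by $\ruleBeta$ exposes the $\pairn{\contrapose{\var}{\vartwo}{\tmtwo_1}}{\contrapose{\var}{\vartwo}{\tmtwo_2}}$, and then $\projin[1]$ selects the first component by $\ruleProj$, giving $\contrapose{\var}{\vartwo}{\tmtwo_1} = \claslamn{\var}{\abs{}{\tmtwo_1}{\vartwo}}$. Wrapping this under $\negip{-}$ and the outer $\claslamp{\under}{-}$ produces exactly $\tmtwo^*_1 = \claslamp{\under}{\negip{(\claslamn{\var}{\abs{}{\tmtwo_1}{\vartwo}})}}$. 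The surrounding context is then $\claslamp{\vartwo}{\clasapp{\tmtwo_2\sub{\var}{\tmtwo^*_1}}{\vartwo}}$, which is the displayed target.

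The main obstacle is not any rewriting insight but bookkeeping: one must $\alpha$-rename the bound variables introduced by $\casec{}{}{}{}{}$, by $\lemP{\typ}$, and by $\lemPinner{-}{\typ}$ so that they do not clash with the free $\vartwo,\var$ appearing in $\tmtwo_1,\tmtwo_2$, and one must keep straight which occurrences of $\var$ come from $\lemP{\typ}$ versus which come from the original case branches. Once the expansions are aligned, the computation is a finite chain of $\ruleBeta$, $\ruleCase$, $\ruleProj$ steps completely analogous to the conjunction and disjunction computations already worked out in \rsec{classical_simulation}, and the verification is mechanical.
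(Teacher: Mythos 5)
Your proposal is correct and follows essentially the same route as the paper's own proof: unfold $\casec{}{}{}{}{}$ and $\lemC{\typ}$, fire a $\ruleBeta$ against the packaged contrapositions $\tmthree$, let $\ruleCase$ select the second branch (since $\lemP{\typ}$ produces $\inip[2]{\cdot}$), and then simplify the substituted term via $\ruleBeta$ on $\clasapn{\tmthree}{\lemPinner{\vartwo}{\typ}}$ followed by $\ruleProj$, yielding exactly $\tmtwo^*_1$. The only trivial discrepancy is that no $\ruleNeg$ step actually occurs in this particular computation, contrary to your opening list of rules.
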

\begin{proof}
Recall that $\lemC{\typ} = \lemP{\typ}$, where:
\[
  {\footnotesize
    \begin{array}{r@{\,}c@{\,}l}
    \lemP{\typ} & \eqdef &
      \claslamp{(\var:(\typ\lor\neg\typ)\NN)}{
        \inip[2]{
          \claslamp{(\vartwo:\neg\typ\NN)}{
            \negip{
              \projin[1]{
                \clasapn{
                  \var
                }{
                  \lemPinner{\vartwo}{\typ}
                }
              }
            }
          }
        }
      }
    \\
    \lemPinner{\vartwo}{\typ} & \eqdef &
      \claslamp{(\under:(\typ\lor\neg\typ)\NN)}{
        \inip[1]{
          \claslamp{(\varthree:\typ\NN)}{
            (\abs{
              \typ\pp
            }{
              \vartwo
            }{
              \claslamp{(\under:\neg\typ\NN)}{
                \negip{
                  \varthree
                }
              }
            })
          }
        }
      }
    \end{array}
  }
\]
Let $\tmthree = \claslamn{\under}{
                  \pairn{
                    \contrapose{\var'}{\vartwo'}{
                      \tmtwo_1
                    }
                  }{
                    \contrapose{\var'}{\vartwo'}{
                      \tmtwo_2
                    }
                  }
                }$.
Then:
  \[
  {\small
  \begin{array}{r@{\ }c@{\ }l}
  &&
    \casec{\lemC{\typ}}{\var'}{\tmtwo_1}{\var'}{\tmtwo_2}
  \\
  & = &
    \claslamptable{\vartwo'}{
      \caseptablex{
        \clasapptable{
          \claslamp{\var}{
            \inip[2]{
              \claslamp{\vartwo}{
                \negip{
                  \projin[1]{
                    \clasapn{
                      \var
                    }{
                      \lemPinner{\vartwo}{\typ}
                    }
                  }
                }
              }
            }
          }
        }{
          \tmthree
        }
      }{
        \var'
      }{
        \clasapp{
          \tmtwo_1
        }{
          \vartwo'
        }
      }{
        \var'
      }{
        \clasapp{
          \tmtwo_2
        }{
          \vartwo'
        }
      }
    }
  \\
  & \toa{\ruleBeta} &
    \claslamptable{\vartwo'}{
      \caseptablex{
         \inip[2]{
           \claslamp{\vartwo}{
             \negip{
               \projin[1]{
                 \clasapn{
                   \tmthree
                 }{
                   \lemPinner{\vartwo}{\typ}
                 }
               }
             }
           }
         }
      }{
        \var'
      }{
        \clasapp{
          \tmtwo_1
        }{
          \vartwo'
        }
      }{
        \var'
      }{
        \clasapp{
          \tmtwo_2
        }{
          \vartwo'
        }
      }
    }
  \\
  & \toa{\ruleCase} &
    \claslamp{\vartwo'}{
      \clasapp{
        \tmtwo_2\sub{\var'}{
          \claslamp{\vartwo}{
            \negip{
              \projin[1]{
                \clasapn{
                  \tmthree
                }{
                  \lemPinner{\vartwo}{\typ}
                }
              }
            }
          }
        }
      }{
        \vartwo'
      }
    }
  \\
  & \toa{\ruleBeta} &
    \claslamp{\vartwo'}{
      \clasapp{
        \tmtwo_2\sub{\var'}{
          \claslamp{\under}{
            \negip{
              \projin[1]{
                \pairn{
                  \contrapose{\var'}{\vartwo'}{
                    \tmtwo_1
                  }
                }{
                  \contrapose{\var'}{\vartwo'}{
                    \tmtwo_2
                  }
                }
              }
            }
          }
        }
      }{
        \vartwo'
      }
    }
  \\
  & \toa{\ruleProj} &
    \claslamp{\vartwo'}{
      \clasapp{
        \tmtwo_2\sub{\var'}{
          \claslamp{\under}{
            \negip{
              \contrapose{\var'}{\vartwo'}{
                \tmtwo_1
              }
            }
          }
        }
      }{
        \vartwo'
      }
    }
  \\
  & = &
    \claslamp{\vartwo'}{
      \clasapp{
        \tmtwo_2\sub{\var'}{
          \claslamp{\under}{
            \negip{
              (\claslamn{\var'}{
                \abs{}{
                  \tmtwo_1
                }{ 
                  \vartwo'
                }
              })
            }
          }
        }
      }{
        \vartwo'
      }
    }
  \\
  \end{array}
  }
  \]
\end{proof}

\newpage
\section{Formal Systems}

\subsection{System~F Extended with Recursive Type Constraints}
\lsec{appendix:system_f}

\begin{definition}[System~F$\extwith{\typeConstraints}$]
The set of {\em types} is given by:
\[
  \typ,\typtwo,\hdots ::= \btyp \mid \typ \to \typtwo \mid \forall\btyp.\typ
\]
The set of {\em terms} is given by:
\[
  \tm,\tmtwo,\hdots
      ::= \var
          \mid \lam{\var^\typ}{\tm}  \mid \tm\,\tmtwo
          \mid \lam{\btyp}{\tm} \mid \tm\,\typ
\]
we omit type annotations over variables when clear from the context.
A {\em type constraint} is an equation of the form $\btyp \equiv \typ$.
Each set $\typeConstraints$ of type constraints
induces a notion of equivalence between types,
written $\typ \equiv \typtwo$ and defined as the
congruence generated by $\typeConstraints$.
More precisely:
\[
{\small
  \indrule{constr}{
    (\typ \equiv \typtwo) \in \typeConstraints
  }{
    \typ \equiv \typtwo
  }
  \indrule{refl}{}{
    \typ \equiv \typ
  }
  \indrule{sym}{
    \typ \equiv \typtwo
  }{
    \typtwo \equiv \typ
  }
}
\]
\[
{\small
  \indrule{trans}{
    \typ \equiv \typtwo
    \HS
    \typtwo \equiv \typthree
  }{
    \typ \equiv \typthree
  }
  \indrule{cong}{
    \typ \equiv \typtwo
  }{
    \typthree\sub{\btyp}{\typ} \equiv \typthree\sub{\btyp}{\typtwo}
  }
}
\]
We suppose that $\typeConstraints$ is fixed.
Typing judgments are of the form $\tctx \vdash \tm : \typ$.
\[
{\small
  \indrule{\rulename{Ax}}{}{
    \tctx,\var:\typ \vdash \var : \typ
  }
  \indrule{Conv}{
    \tctx \vdash \tm : \typ
    \HS
    \typ \equiv \typtwo
  }{
    \tctx \vdash \tm : \typtwo
  }
}
\]
\[
{\small
  \indrule{\rulename{I$\rightarrow$}}{
    \tctx,\var:\typ \vdash \tm : \typtwo
  }{
    \tctx \vdash \lam{\var^\typ}{\tm} : \typ \to \typtwo
  }
  \indrule{E$\rightarrow$}{
    \tctx \vdash \tm : \typ \to \typtwo
    \HS
    \tctx \vdash \tmtwo : \typ
  }{
    \tctx \vdash \tm\,\tmtwo : \typtwo
  }
}
\]
\[
{\small
  \indrule{I$\forall$}{
    \tctx \vdash \tm : \typ
    \HS
    \btyp \notin \fv{\tctx}
  }{
    \tctx \vdash \lam{\btyp}{\tm} : \forall\btyp.\typ
  }
  \indrule{E$\forall$}{
    \tctx \vdash \tm : \forall\btyp.\typ
  }{
    \tctx \vdash \tm\,\typtwo : \typ\sub{\btyp}{\typtwo}
  }
}
\]
Reduction is defined as the closure by arbitrary contexts of the following
rewriting rules:
\[
  \begin{array}{rcl}
    (\lam{\var}{\tm})\,\tmtwo & \to & \tm\sub{\var}{\tmtwo} \\
    (\lam{\btyp}{\tm})\,\typ  & \to & \tm\sub{\btyp}{\typ} \\
  \end{array}
\]
\end{definition}

\begin{definition}[Positive/negative occurrences]
The set of type variables occurring positively (resp. negatively) 
in a type $\typ$
are written $\posvars{\typ}$ (resp. $\negvars{\typ}$) and defined by:
\[
{\small
  \begin{array}{r@{\ }c@{\ }l@{\hspace{.5cm}}r@{\ }c@{\ }l}
    \posvars{\btyp} & \eqdef & \set{\btyp}
  &
    \negvars{\btyp} & \eqdef & \emptyset
  \\
    \posvars{\typ \to \typtwo} & \eqdef & \negvars{\typ} \cup \posvars{\typtwo}
  &
    \negvars{\typ \to \typtwo} & \eqdef & \posvars{\typ} \cup \negvars{\typtwo}
  \\
    \posvars{\forall\btyp.\typ} & \eqdef & \posvars{\typ} \setminus \set{\btyp}
  &
    \negvars{\forall\btyp.\typ} & \eqdef & \negvars{\typ} \setminus \set{\btyp}
  \end{array}
}
\]
\end{definition}

\begin{definition}[Positivity condition]
\ldef{positivity}
A set of type constraints $\typeConstraints$
verifies the {\em positivity condition} if
for every type constraint $(\btyp \equiv \typ) \in \typeConstraints$
and every type $\typtwo$ such that $\btyp \equiv \typtwo$
one has that $\btyp \not\in \negvars{\typtwo}$.
\end{definition}

\begin{theorem}[Mendler]
\lthm{appendix:systemF_SN_Mendler}
If $\typeConstraints$ verifies the positivity condition,
then System~F$\extwith{\typeConstraints}$
is strongly normalizing.
\end{theorem}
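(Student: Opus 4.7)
The plan is to adapt the reducibility candidates method of Tait--Girard, in the spirit of Mendler's original argument. I would define the collection $\mathcal{R}$ of reducibility candidates as sets of strongly normalizing terms closed under the usual conditions (CR1--CR3 in Girard's terminology), and interpret each type $\typ$ as a candidate $\semF{\typ}_\rho \in \mathcal{R}$ parameterized by a valuation $\rho$ assigning a candidate to each type variable. Arrow types would be interpreted by the standard candidate function space, and $\forall\btyp.\typ$ by the intersection $\bigcap_{R \in \mathcal{R}} \semF{\typ}_{\rho[\btyp \mapsto R]}$. Once the interpretation is defined, I would prove a compatibility (adequacy) lemma of the shape: if $\tctx \vdash \tm : \typ$ and every term variable $\var_i$ is replaced by an element of the interpretation of its type, then the result lies in $\semF{\typ}_\rho$. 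Since every variable inhabits every candidate, the identity substitution then places $\tm$ itself in $\semF{\typ}_\rho \subseteq \mathit{SN}$.

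The main obstacle is making $\semF{\cdot}_\rho$ well-defined in the presence of the recursive type constraints $\typeConstraints$. A constraint $\btyp \equiv \typ$ forces $\semF{\btyp}_\rho = \semF{\typ}_\rho$, which cannot be obtained by a naive structural induction since $\btyp$ may reappear inside $\typ$ (and inside types equivalent to $\typ$). Here the positivity condition becomes essential: because no constrained variable occurs negatively in any type equivalent to it, the induced type-level functional $F: R \mapsto \semF{\typ}_{\rho[\btyp \mapsto R]}$ is monotone with respect to set inclusion on $\mathcal{R}$, and $\mathcal{R}$ under inclusion forms a complete lattice. I would then set $\semF{\btyp}_\rho$ to be a fixed point of $F$ provided by Knaster--Tarski, and show by induction on the derivation of $\typ \equiv \typtwo$ that $\semF{\typ}_\rho = \semF{\typtwo}_\rho$, so that the conversion rule is semantically sound. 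Some care is needed to stratify the construction so that the fixed point for each equivalence class of constrained variables is built before interpreting types that mention it; Mendler's formulation of positivity, which takes into account all types equivalent to $\btyp$ (not merely the right-hand side of the declared constraint), is designed precisely to make this stratification coherent.

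With the interpretation in place, the remainder follows the standard Tait--Girard template. The cases for $\to$-introduction, $\to$-elimination, $\forall$-introduction, and $\forall$-elimination mirror Girard's treatment of plain System~F, using the usual saturation properties of candidates to handle $\beta$- and type-$\beta$-redexes. The conversion rule is immediate from the invariance of the interpretation under $\equiv$. The substitution lemma for type variables, $\semF{\typ\sub{\btyp}{\typtwo}}_\rho = \semF{\typ}_{\rho[\btyp \mapsto \semF{\typtwo}_\rho]}$, is proved by induction on $\typ$ as usual, with the recursive-variable case handled by the fixed-point characterization. Strong normalization of System~F$\extwith{\typeConstraints}$ then follows by adequacy.
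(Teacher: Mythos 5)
First, a point of context: the paper does not prove this statement. It is Mendler's theorem, imported as a black box, and the paper's entire proof is the citation to Mendler's Theorem~13; the only related thing the paper actually proves is that its specific constraint set $\typeConstraintsPosNeg$ satisfies the positivity hypothesis. So you are reconstructing an external result rather than an argument the paper carries out, and your sketch has to be judged on its own.

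Judged on its own, the overall shape (candidates, adequacy, fixed points for constrained variables, invariance under $\equiv$ for the conversion rule) is the right family of ideas, but the step where you actually build the interpretation has a genuine hole. You propose to interpret a constrained variable $\btyp$ with $\btyp \equiv \typ$ as a Knaster--Tarski fixed point of $F : R \mapsto \semF{\typ}_{\rho[\btyp \mapsto R]}$, claiming monotonicity from positivity, and to ``stratify'' so that the other constrained variables occurring in $\typ$ are interpreted first. This fails in exactly the instance the paper needs: the constraints $\Pos{\typ}{\typtwo} \equiv \Neg{\typ}{\typtwo} \to \typ$ and $\Neg{\typ}{\typtwo} \equiv \Pos{\typ}{\typtwo} \to \typtwo$ are mutually recursive, so no dependency stratification exists, and $F$ as written is not even well-defined because $\rho$ cannot interpret $\Neg{\typ}{\typtwo}$ independently of the candidate being solved for. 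The obvious repair --- a simultaneous fixed point on the product lattice --- does not go through naively either: each of the two variables occurs \emph{negatively} in the other's right-hand side, so the joint one-step operator is antitone in the cross components and Knaster--Tarski under componentwise inclusion does not apply. Mendler's positivity condition deliberately does not assert that right-hand sides are syntactically positive; it asserts that each constrained variable occurs non-negatively in every type equivalent to \emph{itself}, after arbitrary unfolding. What that buys is monotonicity of the self-dependency only (e.g.\ the map sending $R$ to the interpretation of $(\Pos{\typ}{\typtwo}\to\typtwo)\to\typ$ with $\Pos{\typ}{\typtwo}$ read as $R$ is monotone, even though the one-step operator is not), and converting that into a coherent simultaneous interpretation of all constrained variables --- by an appropriate choice of order-reversals per component, or by working with the unfolded composite operator --- is precisely the technical content of Mendler's proof that your stratification claim papers over. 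The remaining ingredients of your sketch (the standard candidate clauses for $\to$ and $\forall$, adequacy, the type-substitution lemma, soundness of \rulename{Conv} once the fixed-point equations hold) are routine and unobjectionable.
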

\begin{proof}
See~\cite[Theorem~13]{mendler1991inductive}.
\end{proof}

{\bf Abbreviations.}
We define the following standard abbreviations for types:
\[
  \begin{array}{rcl}
    \tunit              & \eqdef & \forall\btyp.(\btyp \to \btyp) \\
    \tzero              & \eqdef & \forall\btyp.\btyp \\
    \neg\typ            & \eqdef & \typ \to \tzero \\
    \typ \times \typtwo & \eqdef &
      \forall\btyp.((\typ \to \typtwo \to \btyp) \to \btyp) \\
    \typ + \typtwo & \eqdef &
      \forall\btyp.((\typ \to \btyp) \to (\typtwo \to \btyp) \to \btyp) \\
  \end{array}
\]
And the following terms.
We omit the typing contexts for succintness:
\[
{\small
  \begin{array}{rcllll}
  \trivF
    & \eqdef
    & \lam{\btyp}{\lam{\var^\btyp}{\var}}
    \\
    & : & \tunit
  \\
  \\
  \abortF{\typ}{\tm}
    & \eqdef
    & \tm\,\typ
    \\
    & :
    & \typ
    \\&&\text{if $\tm : \tzero$}
  \\
  \\
  \pairF{\tm}{\tmtwo}
    & \eqdef
    & \lam{\btyp}{\lam{f^{\typ \to \typtwo \to \btyp}}{f\,\tm\,\tmtwo}}
    \\
    & :
    & \typ \times \typtwo
    \\&&\text{if $\tm : \typ$ and $\tmtwo : \typtwo$}
  \\
  \\
  \projiF{\tm}
    & \eqdef
    & \tm\,\typ_i\,(\lam{x_1^{\typ_1}}{\lam{x_2^{\typ_2}}{x_i}})
    \\
    & :
    & \typ_i
    \\&&\text{if $\tm : \typ_1 \times \typ_2$}
  \\
  \\
  \iniF{\tm}
    & \eqdef
    & \lam{\btyp}{
        \lam{f_1^{\typ_1 \to \btyp}}{\lam{f_2^{\typ_2 \to \btyp}}{f_i\,\tm}}
       }
    \\
    & :
    & \typ_1 + \typ_2
    \\&&\text{if $\tm : \typ_i$ and $i \in \set{1,2}$}
  \\
  \\
  \caseF{\tm}{\var:\typ_1}{\tmtwo_1}{\var:\typ_2}{\tmtwo_2}
    & \eqdef
    & \tm\,\typtwo
         \,(\lam{\var^{\typ_1}}{\tmtwo_1})
         \,(\lam{\var^{\typ_2}}{\tmtwo_2})
    \\
    & :
    & \typtwo
    \\
    &&
    \text{if $\tm : \typ_1 + \typ_2$ and
             $\tmtwo_i : \typtwo$}
    \\
    &&
    \text{for each $i \in \set{1,2}$}\hspace{-6cm}
  \end{array}
}
\]

\end{alphasection}

\end{document}